\documentclass[11pt]{article}
\usepackage{fullpage}

\usepackage{geometry}
\geometry{letterpaper, portrait, margin=1 in}
\usepackage[utf8]{inputenc}
\usepackage{authblk}

\usepackage{graphicx}
\usepackage{amsmath} 
\usepackage{amssymb} 
\usepackage{amsthm} 
\usepackage{xcolor}
\usepackage{hyperref}
\usepackage{enumitem}
\usepackage{thm-restate}
\usepackage{comment}
\usepackage{algorithm}
\usepackage{mdframed} 
\usepackage{multirow}
\usepackage{makecell}
\usepackage{color, colortbl}

\hypersetup{
    colorlinks=true,
    citecolor=blue,
    linkcolor=red,
    filecolor=cyan,
    urlcolor=black,
}
\mdfdefinestyle{figstyle}{ %
  linecolor=black!7, %
  backgroundcolor=black!7, %
  innertopmargin=10pt, %
  innerleftmargin=\textwidth, %
  innerrightmargin=\textwidth, %
  innerbottommargin=5pt %
}


\newtheorem{theorem}{Theorem}[section]
\newtheorem{lemma}[theorem]{Lemma}
\newtheorem{claim}[theorem]{Claim}

\newtheorem{corollary}[theorem]{Corollary}
\newtheorem{definition}[theorem]{Definition}
\newtheorem{observation}{Observation}
\newtheorem{remark}{Remark}

\newtheorem{openproblem}{Open Problem}
\newtheorem{conjecture}{Conjecture}

\newtheorem{notation}{Notation}

\newcommand{\fH}{{\sf H}}

\newcommand{\fX}{{\sf X}}

\newcommand{\fI}{{\sf I}}

\newcommand{\fCNOT}{{\sf CNOT}}
\newcommand{\fToffoli}{{\sf Toffoli}}

\newcommand{\fcSWAP}{{\text{c-}\sf SWAP}}
\newcommand{\cS}{{\mathcal{S}}}


\newcommand*{\bra}[1]{\langle#1|}
\newcommand*{\ket}[1]{|#1\rangle}
\newcommand*{\opro}[2]{|#1\rangle\langle#2|}
\newcommand*{\ipro}[2]{\langle #1|#2\rangle}
\newcommand{\g}{\mathcal{G}} 
\newcommand{\Circ}{\mathcal{C}} 

\newcommand{\N}{\mathbb{N}} 
\newcommand{\T}{\mathsf{tt}} 
\newcommand{\C}{\mathbb{C}} 
\newcommand{\E}{\mathop{\mathbb{E}}} 
\newcommand{\A}{\mathcal{A}} 

\newcommand{\class}[1]{\mathsf{#1}}

\newcommand{\MCSP}{\mathsf{MCSP}}
\newcommand{\MQCSP}{\mathsf{MQCSP}}
\newcommand{\UMCSP}{\mathsf{UMCSP}}
\newcommand{\SMCSP}{\mathsf{SMCSP}}

\newcommand{\mBMCSP}{\mathsf{multiMQCSP}}
\newcommand{\negl}{\mathsf{negl}} 
\newcommand{\mcsps}{{\mathsf{MCSP}^\star}}
\newcommand{\mqcsps}{{\mathsf{MQCSP}^\star}}
\newcommand{\iO}{{i{\cal O}}}

\DeclareMathOperator{\poly}{\mathsf{poly}}
\DeclareMathOperator{\polylog}{\mathsf{polylog}}


\usepackage{mathtools}
\DeclarePairedDelimiter\ceil{\lceil}{\rceil}


\newif\ifdraft
\newif\ifarxiv
\arxivtrue
\drafttrue

\definecolor{mygreen}{RGB}{0,180,180}
\definecolor{myblue}{RGB}{0,0,180}
\definecolor{mycrimson}{RGB}{165,28,48}

\newcommand{\nai}[1]{\ifarxiv{\color{mygreen}[Nai: #1]}\fi}

\renewcommand{\thefootnote}{\fnsymbol{footnote}}

\usepackage{algorithmicx}
\usepackage{algorithm}
\usepackage[noend]{algpseudocode}
\newcounter{algsubstate}
\renewcommand{\thealgsubstate}{\alph{algsubstate}}

\algnewcommand\algorithmicinput{\textbf{Input:}}
\algnewcommand\Input{\item[\algorithmicinput]}

\algnewcommand\algorithmicoutput{\textbf{Output:}}
\algnewcommand\Output{\item[\algorithmicoutput]}

\algnewcommand\algorithmicgoal{\textbf{Goal:}}
\algnewcommand\Goal{\item[\algorithmicgoal]}

\makeatletter
\newcounter{phase}[algorithm]
\newlength{\phaserulewidth}

\makeatother

\title{Quantum Meets the Minimum Circuit Size Problem} 

\ifdraft
\author[1]{Nai-Hui Chia\thanks{\texttt{naichia@iu.edu}.}}
\author[2]{Chi-Ning Chou\thanks{\texttt{chiningchou@g.harvard.edu}.}}
\author[3]{Jiayu Zhang\thanks{\texttt{jyz16@bu.edu}.}}
\author[4]{Ruizhe Zhang\thanks{\texttt{ruizhe@utexas.edu}.}}
\affil[1]{Luddy School of Informatics, Computing, and Engineering, Indiana University Bloomington}
\affil[2]{School of Engineering and Applied Sciences, Harvard University}
\affil[3]{Department of Computer Science, Boston University}
\affil[4]{Department of Computer Science, The University of Texas at Austin}
\else
\author{}
\fi
\date{}

\begin{document}

\begin{titlepage}
\maketitle
\begin{abstract}
In this work, we initiate the study of the Minimum Circuit Size Problem (MCSP) in the quantum setting. MCSP is a problem to compute the circuit complexity of Boolean functions. It is a fascinating problem in complexity theory --- its hardness is mysterious, and a better understanding of its hardness can have surprising implications to many fields in computer science.  

We first define and investigate the basic complexity-theoretic properties of minimum quantum circuit size problems for three natural objects: Boolean functions, unitaries, and quantum states. We show that these problems are not trivially in NP but in QCMA (or have QCMA protocols). Next, we explore the relations between the three quantum MCSPs and their variants. We discover that some reductions that are not known for classical MCSP exist for quantum MCSPs for unitaries and states, e.g., search-to-decision reductions and self-reductions. Finally, we systematically generalize results known for classical MCSP to the quantum setting (including quantum cryptography, quantum learning theory, quantum circuit lower bounds, and quantum fine-grained complexity) and also find new connections to tomography and quantum gravity. Due to the fundamental differences between classical and quantum circuits, most of our results require extra care and reveal properties and phenomena unique to the quantum setting.
Our findings could be of interest for future studies, and we post several open problems for further exploration along this direction.

\end{abstract}
\thispagestyle{empty}
\end{titlepage}


\renewcommand{\thefootnote}{\arabic{footnote}}
\setcounter{footnote}{0}
\ifdraft
\newpage
{
  \hypersetup{linkcolor=black}
  \tableofcontents
}
\newpage
\fi
\section{Introduction} 
The Minimum Circuit Size Problem ($\MCSP$) is one of the central computational problems in complexity theory. Given the truth table of a Boolean function $f:\{0,1\}^n\rightarrow \{0,1\}$ and a size parameter $s$ (in unary) as inputs, $\MCSP$ asks whether there exists a circuit of size at most $s$ for $f$. While $\MCSP$ has been studied as early as the 1950s in the Russian cybernetics program~\cite{trakhtenbrot1984survey}, its complexity remains mysterious: we do not know whether it is in $\class{P}$ or $\class{NP}$-hard. Meanwhile, besides being a natural computational problem, in recent years, researchers have discovered many surprising connections of $\MCSP$ to other areas such as cryptography~\cite{RR97}, learning theory~\cite{CIKK16}, circuit complexity~\cite{KC00}, average-case complexity~\cite{hirahara2018non}, and others. 


Quantum computing is of growing interest, 
with applications to cryptography~\cite{shor1994algorithms}, machine learning~\cite{biamonte2017quantum}, and complexity theory~\cite{ji2020mip}, etc. Inspired by the great success of $\MCSP$ in classical computation and the flourishing of quantum computers, we propose a new research program of studying quantum computation through the lens of~$\MCSP$. We envision $\MCSP$ as a central problem that connects different quantum computation applications and provides deeper insights into the complexity-theoretic foundation of quantum circuits.



\subsection{The classical \texorpdfstring{$\MCSP$}{MCSP} and its connections to other problems}\label{sec:intro MCSP}

It is immediate that $\MCSP\in\class{NP}$ because the input size is $2^n$ so one can verify if a circuit (given as the certificate/proof) computes the input truth table in time $2^{O(n)}$. However, there is no consensus on the complexity status of this problem -- $\MCSP$ could be in $\class{P}$, $\class{NP}$-complete, or $\class{NP}$-intermediate. Several works~\cite{murray2017non,KC00} showed 
negative evidence for proving the $\class{NP}$-hardness of $\MCSP$ using standard reduction techniques. 
We also do not know whether there is an algorithm better than brute force search (see Perebor conjecture for $\MCSP$~\cite{trakhtenbrot1984survey}) or whether there is a search-to-decision reduction or a self-reduction\footnote{Roughly, a problem is self-reducible if one can solve the problem with size $n$ by algorithms for smaller size.} for $\MCSP$\footnote{It is worth noting that every $\class{NP}$-complete problem has search-to-decision reductions and self-reductions.}. On the other hand, several variants of $\MCSP$ are $\class{NP}$-hard under either deterministic reductions~\cite{masek79, hirahara2018np} or randomized reductions~\cite{Ilango2019AC0pLB, ILO20}. 



Researchers have discovered many surprising connections of $\MCSP$ to other fields in Theoretical Computer Science including cryptography, learning theory, and circuit lower bounds. To name a few, Razborov and Rudich~\cite{RR97} related natural properties against $\class{P/poly}$ 
with circuit lower bounds and pseudorandomness. 
Kabanets and Cai~\cite{KC00} showed that $\MCSP\in\class{P}$ implies new circuit lower bounds, and that $\MCSP\in\class{BPP}$ implies that any one-way function can be inverted. Allender and Das~\cite{AB14} related the complexity class $\class{SZK}$ (Statistical Zero Knowledge) to $\MCSP$.
Carmosino et al.~\cite{CIKK16} showed that $\MCSP\in\class{BPP}$ gives efficient PAC-learning algorithms.
Impagliazzo et al.~\cite{impagliazzo2018power} showed that the existence of indistinguishable obfuscation implies that $\class{SAT}$ reduces to $\MCSP$ under a randomized reduction.
Hirahara~\cite{hirahara2018non} showed that if an approximation version of $\MCSP$ is $\class{NP}$-hard, then the average-case and worst-case hardness of $\class{NP}$ are equivalent.
Arunachalam et al.~\cite{agg20} proved that $\MCSP\in\class{BQP}$ implies new circuit lower bounds.
All these results indicate that the $\class{MCSP}$ serves as a ``hub'' that connects many fundamental problems in different fields. Therefore, a deeper understanding of this problem could lead to significant progress in Theoretical Computer Science.

\subsection{Main results and technical overview}

In this work, we consider three different natural objects that a quantum circuit can compute: Boolean functions, unitaries, and quantum states. We start with giving the informal definitions of the minimum circuit size problem for each of them. See Section~\ref{sec:main-QMCSP} and Section~\ref{sec:quantum obj} for the formal definitions.

\begin{definition}[$\MQCSP$, informal]
Given the truth table of a Boolean function $f$ and a size parameter $s$ in unary, decide if there exists a quantum circuit $C$ which has size at most $s$ and uses at most $s$ ancilla qubits such that $C$ computes $f$ with high probability. 
\end{definition}

\begin{definition}[$\UMCSP$, informal]
Given the full description of a $2^n$-dimensional unitary matrix $U$ and a size parameter $s$ in unary, decide if there exists a quantum circuit $C$ which has size at most $s$ and uses at most $s$ ancilla qubits such that $C$ and $U$ are close\footnote{We say $C$ and $U$ are close if $|(\bra{\psi}\otimes I)U^{\dag}C(\ket{\psi}\ket{0})|$ is large for all $\ket{\psi}$. }.
\end{definition}

\begin{definition}[$\SMCSP$, informal]
\label{def:smcsp_informal}
Let $\ket{\psi}$ be an $n$-qubit state. Given size parameters $s$ and $n$ in unary and access to arbitrarily many copies of $\ket{\psi}$ (or the classical description of $\ket{\psi}$), decide if there exists a quantum circuit $C$ which has size at most $s$ using at most $s$ ancilla qubits such that $C\ket{0^n}$ and $\ket{\psi}$ are close in terms of fidelity.
\end{definition}

In the rest of this subsection, we first discuss several challenges and difficulties we encountered in the study of MCSP when moving from the classical setting to the quantum setting. Next, we give an overview of all the results and techniques. In particular, we focus on both interpreting the new connections we establish as well as the technical subtleties when quantizing the previous works in the classical setting. For a quick summary of the results, please take a look at Table~\ref{tab:UMCSP SMCSP}.


\subsubsection{Challenges and difficulties when moving to the quantum setting}
In the following, we summarize several fundamental properties of quantum circuits, unitaries, and quantum states that induce problems and difficulties that would not appear in the classical setting. 

\paragraph{Quantum computation is generally random and erroneous.} It is natural to consider quantum circuits that approximate (rather than exactly computing) the desired unitary. One immediate consequence is that we have to define the quantum $\MCSP$s as promise problems (with respect to the error)\footnote{The definitions above are not promise problems for simplicity. Check Section~\ref{sec:main-QMCSP} and~\ref{sec:quantum obj} for formal definitions.}, which is more challenging to deal with. Moreover, since unitaries and quantum states are specified by complex numbers, we also need to properly tackle the precision issue. These quantum properties make generalizing classical results to the quantum setting non-trivial. For instance, some classical analyses (see~\cite{agg20} for an example) rely on the fact that the classical circuits are deterministic after the random string is made public, while any intermediate computation of a quantum circuit is inherently not deterministic. 
\vspace{-2mm}
    
\vspace{-2mm}    
\paragraph{Quantum circuits are reversible.} 
This follows from the fact that every quantum gate is reversible. While this seems to be a restriction for quantum circuits, we observe that this enables search-to-decision reductions for $\UMCSP$ and $\SMCSP$. Note that the existence of such reduction is a longstanding open question for classical $\MCSP$. This suggests that quantum $\MCSP$s can provide a new angle to leverage the reversibility of quantum circuits.

\vspace{-2mm}
\paragraph{The introduction of ancilla qubits.}
As quantum circuits are reversible, every intermediate computation has to happen on the input qubits. Thus, it is very common to introduce \textit{ancilla qubits} which are extra qubits initialized to all zero and can be regarded as additional registers for intermediate computation. Ancilla qubits introduce complications in quantum $\MCSP$s. First, the quantum circuit complexity of an object could be very different when the allowed number of ancilla qubits is different. Second, the classical simulation time of a quantum circuit scales exponentially in the number of input qubits plus the number of ancilla qubits. Namely, when the number of ancilla qubits is super-linear, classical simulations would require super-polynomial time\footnote{The running time is measured with respect to the size of the truth table or the size of the unitary/quantum state.}. An immediate consequence is that, unlike classical $\MCSP$, $\MQCSP$ is not trivially in $\class{NP}$ when allowing a super-linear number of ancilla qubits. In addition, the output of quantum circuits on ancilla qubits can be arbitrary quantum states in general. This property makes certain reductions for quantum $\MCSP$s fail when considering many ancilla qubits.


\vspace{-2mm}

\paragraph{Various universal quantum gate sets.} The choice of the gate set affects the circuit complexity of the given Boolean functions (and unitaries and states). There are various universal quantum gate sets, and transforming from one to the other results in additional polylogarithmic overhead to the circuit complexity by the Solovay-Kitaev Theorem. We note that when considering certain hardness results, the choice of the gate set might matter. Take the approximate self-reduction for $\SMCSP$ (in Theorem~\ref{thm:informal_reduction}) as an example, we start from constructing such reductions for a particular gate set. We then generalize the result to an arbitrary gate set via the Solovay-Kitaev Theorem; however, it introduces additional overhead to the approximation ratio. Another example is proving NP-hardness for multi-output $\MQCSP$, where we show that the problem is $\class{NP}$-hard when considering particular gate sets, and it is still open whether the problem is $\class{NP}$-hard for all universal gate sets. 
\vspace{-2mm}

\subsubsection{The Hardness of \texorpdfstring{$\MQCSP$}{MQCSP} and cryptography} 
\label{sec:techintro_mqcsp_crypto}

We start with stating the hardness results of $\MQCSP$ and its implications in cryptography. 
\begin{theorem}[Informal]
\label{thm:informal_mqcsp_crypto}
\mbox{}
\begin{enumerate}
    \item $\MQCSP$ is in $\class{QCMA}\subseteq \class{QMA}$. 
    \item If $\class{MQCSP}$ can be solved in quantum polynomial time, then quantum-secure one-way function ($\class{qOWF}$) does not exist.
    \item If one can solve $\MQCSP$ efficiently, then all problems in $\class{SZK}$ have efficient algorithms.    
    \item Suppose that quantum-secure indistinguishability obfuscator ($\iO$) for polynomial-size circuits exists. Then, $\class{MQCSP}\in \class{BQP}$ implies $\class{NP}\subseteq \class{coRQP}$\footnote{$
    \class{coRQP}$ is a complexity class of quantumly solvable problems with perfect soundness and bounded-error completeness.}. 
    \item Multiple-output $\MQCSP$ (under a gate set with some natural properties) is $\class{NP}$-hard under randomized reductions.
\end{enumerate}

\end{theorem}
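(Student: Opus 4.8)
The plan is to prove the five items essentially independently, in each case importing the corresponding classical argument --- Kabanets--Cai~\cite{KC00}, Allender--Das~\cite{AB14}, Impagliazzo--Kabanets--Volkovich~\cite{impagliazzo2018power}, and Ilango--Loff--Oliveira~\cite{ILO20} --- and repairing the places where quantum features (inherent randomness, reversibility, ancillae, and the choice of universal gate set) break the classical reasoning. For item~1 I would exhibit a $\class{QCMA}$ protocol whose classical witness is the gate-by-gate description of a candidate circuit $C$ with $|C|\le s$ using $\le s$ ancillae. The verifier reconstructs $C$ and, for each input $x\in\{0,1\}^n$, runs $C$ on $\ket{x}\ket{0^{s}}$ a number of times polynomial in $n$, estimates the probability that $C$ outputs $f(x)$ by a Chernoff bound, and accepts iff all $2^n$ estimates land on the correct side of the promise gap; since the truth table already has size $2^n$, this costs time $\poly(2^n,s)$, i.e.\ polynomial in the input, and $\class{QCMA}\subseteq\class{QMA}$ is standard. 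The point worth stressing is \emph{why} this is $\class{QCMA}$ rather than $\class{NP}$: when $s$ is super-linear in $n$ the circuit uses super-linearly many ancillae, so a \emph{classical} simulation of $C$ needs $2^{\omega(n)}$ time, and only a genuine quantum verifier stays polynomial.

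For items~2 and~3 I would quantize the classical implications. For item~2, assume $\MQCSP\in\class{BQP}$ and that a quantum-secure one-way function exists; the latter yields a quantum-secure pseudorandom generator $G$, and strings in the image of $G$ have a small \emph{classical} circuit, hence (with only polynomial overhead) a small \emph{quantum} circuit, whereas a uniformly random truth table has quantum circuit complexity $2^{\Omega(n)}$ by counting gate-by-gate descriptions. A $\class{BQP}$ algorithm for $\MQCSP$ with the right size threshold then distinguishes $G$'s output from uniform, a contradiction, so $\class{qOWF}$ cannot exist. For item~3 I would follow Allender--Das: use the $\class{SZK}$-completeness of Entropy/Statistical-Difference together with the fact that a sample from a distribution produced by a small \emph{classical} sampler has quantum circuit complexity close to its min-entropy, so $\MQCSP$ as an oracle lets one estimate entropies of samplable distributions and decide the complete problem; the reduction is classical-randomized making $\MQCSP$ queries, hence $\MQCSP\in\class{BQP}$ gives $\class{SZK}\subseteq\class{BQP}$. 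The care required throughout is precision/promise bookkeeping (complex amplitudes, approximate circuits) and checking that the counting bound survives the ``$\le s$ ancillae'' convention.

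For item~4 I would adapt Impagliazzo--Kabanets--Volkovich: given a $\class{SAT}$ instance $\phi$, construct the IKV-style circuit family that is functionally trivial when $\phi$ is unsatisfiable and encodes a hidden-witness/point-function gadget when $\phi$ is satisfiable, and obfuscate it with a \emph{quantum-secure} $\iO$. Indistinguishability against quantum polynomial-time distinguishers forces the obfuscated circuit to have quantum circuit complexity above the chosen threshold in the unsatisfiable case --- otherwise a $\class{BQP}$ algorithm for $\MQCSP$ would itself be a quantum distinguisher between the two obfuscated families --- while a small circuit plainly exists in the satisfiable case; so $\MQCSP\in\class{BQP}$ decides $\class{SAT}$ with one-sided error, and tracking which side of the error we are on yields $\class{NP}\subseteq\class{coRQP}$ (perfect soundness, bounded-error completeness). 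The subtlety is that $\iO$ hides functionality, not size, so the complexity lower bound must be argued purely through indistinguishability against quantum adversaries, never by directly claiming a worst-case bound on obfuscated size.

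For item~5 I would start from the $\class{NP}$-hardness (under randomized reductions) of multi-output $\MCSP$ and transfer it to multi-output $\MQCSP$. Fixing a universal gate set with the stated ``natural properties'' (containing a classical-reversible-universal subset such as $\{\fToffoli,\fCNOT,\fX\}$ and behaving well with respect to the promise) makes the easy direction --- a classical size-$t$ circuit yields a quantum size-$O(t)$ circuit --- immediate; the heart, and what I expect to be the main obstacle of the whole theorem, is the converse on the Ilango--Loff--Oliveira hard instances: one must show that quantum gates plus ancillae do not let the minimum circuit size drop below the classical value on those structured multi-output functions. The likely route is that the ILO instances are built from generic/incompressible truth tables, so even quantum circuits cannot beat the classical bound by more than a constant factor, together with a careful accounting of the ancilla ``garbage'' (which, unlike the classical case, need not be uncomputed), and finally pushing the construction through the Solovay--Kitaev theorem for an arbitrary gate set --- which costs the stated overhead in the approximation parameter and is exactly why the all-gate-sets version remains open. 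Secondary difficulties, recurring in items~2--4, are again the fidelity/precision bookkeeping, since quantum circuit size need not be monotone under the reductions that work classically.
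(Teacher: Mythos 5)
Items~1, 2, and~5 of your proposal follow essentially the same route as the paper (the $\class{QCMA}$ verifier that re-runs the witness circuit $\poly(2^n)$ times and Chernoff-bounds each of the $2^n$ inputs --- note the repetition count must be $\poly(2^n)$, not $\poly(n)$, since the promise gap $\alpha-\beta$ may be as small as $1/\poly(2^n)$; the qOWF $\Rightarrow$ qPRG $\Rightarrow$ truth-table distinguisher chain; and a from-scratch redo of the Ilango et al.\ set-cover reduction in the quantum gate model, with the control/target-wire accounting and the incompressibility counting argument). Item~3 is the same Allender--Das framework, though the paper instantiates it with the $\class{SZK}$-complete problem $\class{PIID}$ and uses the $\MQCSP$-based \emph{inverter} from item~2 to find a preimage under $C_1$ of a sample $C_0(x)$, rather than estimating entropies; both are legitimate readings of ``follow Allender--Das.''

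Item~4, however, has a genuine gap: the mechanism you describe would not work. You propose that indistinguishability ``forces the obfuscated circuit to have quantum circuit complexity above the chosen threshold in the unsatisfiable case,'' so that $\MQCSP$ acts as a distinguisher between the two obfuscated families. But in \emph{both} the satisfiable and unsatisfiable cases the obfuscation is a polynomial-size circuit produced by a PPT machine, so there is no complexity gap for $\MQCSP$ to detect --- and it is not even specified which truth table would be handed to $\MQCSP$. The actual argument (in the paper, following Impagliazzo et al.) routes entirely through item~2: treat $f_C(r):=\iO(C;r)$ as a candidate $\class{qOWF}$, so that $\MQCSP\in\class{BQP}$ yields an inverter $\A_{inv}$ succeeding with noticeable probability $p$. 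Given a $\class{SAT}$ instance $C$ of size $s$, compute $\hat{C}=\iO(C,r)$, run $\A_{inv}$ \emph{for the canonical unsatisfiable circuit} $\bot_s$ on input $\hat{C}$ to get $r'$, and reject iff $\iO(\bot_s,r')=\hat{C}$. If $C$ is unsatisfiable then $C\equiv\bot_s$, so by indistinguishability the inverter's noticeable success probability transfers and the check fires with probability $\Omega(p)$; if $C$ is satisfiable then functionality preservation guarantees $\iO(C,r)\neq\iO(\bot_s,r')$ for every $r'$, so the algorithm \emph{never} falsely rejects. This one-sidedness is precisely what places $\class{NP}$ in $\class{coRQP}$ after amplification; a symmetric ``distinguisher'' argument would give only two-sided error and would not even get off the ground for the reason above.
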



We have discussed why $\MQCSP$ is not trivially in $\class{NP}$ earlier.  So, it is natural to wonder what can be a tighter upper bound for $\MQCSP$. Instead of considering classical verifier, we allow the verifier to check the given witness circuit quantumly and thus are able to prove that $\MQCSP$ is in $\class{QCMA}$ (which is a quantum analogue of $\class{MA}$ allowing efficient quantum verifiers but classical witness).

For item $2$ -- $5$, we study whether some hard problems reduce to $\MQCSP$. Classically, many results use the fact that an $\MCSP$ oracle can break certain \emph{pseudorandom generators} to show reductions from hard problems to $\MCSP$.  A distinguisher can break a pseudorandom generator by viewing that the string is a truth table of some Boolean function and using the $\MCSP$ oracle to decide if the function has small circuit complexity\footnote{If the truth table is truly random, it corresponds to a random function and must have large circuit complexity with high probability. }. We generalize this idea to the quantum setting by observing that if the Boolean function has small classical circuit complexity, then its quantum circuit complexity is also small. It is worth noting that the second result implies efficient algorithms for some lattice problems if $\MQCSP$ is in $\class{BQP}$. 


For item $5$, we generalize the recent breakthrough of Ilango et al.~\cite{ILO20} on the $\class{NP}$-hardness of $\MCSP$. We note that the formal theorem statement depends on the gate set choices of $\MQCSP$. To prove this theorem, we follow the proof ideas in ~\cite{ILO20} and overcome some additional obstacles that appear in the quantum world. The new obstacle comes from (i) the quantum gate set is different from the one in the classical case; (ii) in the quantum world, we need to deal with error terms. We carefully handle these issues and extend the proof to the quantum setting.

\subsubsection{\texorpdfstring{$\MQCSP$}{MQCSP} and learning theory}
A central learning theory setting is (approximately) reconstructing a circuit for an unknown function given a limited number of samples. Learning Boolean functions in the classical setting was extensively studied (see, for example, a survey by Hellerstein and Servedio~\cite{hellerstein2007pac}); however, relatively few explorations have been made under the quantum setting. There are two natural quantum extensions: (i) learning a quantum circuit and (ii) adding quantumness in the learning algorithm. We study both scenarios and provide generic connections between $\MQCSP$ and the two settings
\vspace{-2mm}
\paragraph{PAC learning for quantum circuits.}
Probabilistic approximately correct (PAC) learning~\cite{valiant1984theory} is a standard theoretical framework in learning theory. There are several variants, but for simplicity, we focus on the query model where a classical learning algorithm can query an unknown $n$-bit Boolean function $f$ on inputs $x_1,\dots,x_m\in\{0,1\}^n$ and aim to output a circuit approximating $f$ with high probability. To have efficient PAC learning algorithms for polynomial-size quantum circuits, we show that it is necessary and sufficient to have efficient algorithms for $\MQCSP$ or its variants.

\begin{theorem}[Informal]\label{thm:pac_informal}
The existence of an efficient PAC learning algorithm for $\class{BQP/poly}$ is equivalent to the existence of an efficient randomized algorithm for $\MQCSP$.
\end{theorem}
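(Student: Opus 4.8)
The plan is to establish both implications by transplanting the correspondence between natural properties and PAC learning of Carmosino et al.~\cite{CIKK16} to the quantum circuit model, while carefully tracking the parameters that are special to the quantum setting: the number of ancilla qubits, the gate-set precision, and the approximation error/gap that is built into $\MQCSP$ as a promise problem.

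\emph{From $\MQCSP$ to learning.} Suppose $\MQCSP$ --- or, what suffices and follows from it, its polynomial-gap promise version --- has an efficient randomized algorithm. First I would use it to build a Razborov--Rudich natural property~\cite{RR97} useful against polynomial-size quantum circuits: the predicate ``the given truth table has quantum circuit complexity greater than $s$'' is computable in randomized polynomial time by a single call to the $\MQCSP$ algorithm; it is false on every function computed by a size-$s$ quantum circuit by definition; and it holds for a $1-2^{-\Omega(N)}$ fraction of all $N$-bit truth tables ($N=2^{n}$), since the number of distinct functions realized by quantum circuits of size at most $s$ using at most $s$ ancilla qubits, each gate drawn from the fixed universal gate set (up to the relevant precision), is only $2^{\poly(n)}$ when $s=\poly(n)$, far below $2^{N}$. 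With this natural property in hand I would run the remainder of the CIKK pipeline essentially unchanged: from membership-query access to an unknown $f\in\class{BQP/poly}$, form the Nisan--Wigderson generator $\mathrm{NW}^{f}$ for a suitable combinatorial design; each of its output strings is the truth table of a function obtained by composing the polynomial-size quantum circuit for $f$ with coordinate projections, hence has small quantum circuit complexity, so the natural-property test distinguishes $\mathrm{NW}^{f}$ from uniform with advantage near $1$; finally apply the Nisan--Wigderson / Impagliazzo--Wigderson reconstruction to convert this distinguisher into a hypothesis that $\epsilon$-approximates $f$ under the uniform distribution. The resulting hypothesis is a randomized \emph{classical} circuit that invokes the $\MQCSP$-based distinguisher as a subroutine --- a legitimate PAC hypothesis --- and choosing the design parameters as a function of $\epsilon$ yields the claimed efficiency.

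\emph{From learning to $\MQCSP$.} For the converse, suppose $L$ is an efficient PAC learner for $\class{BQP/poly}$ with membership queries over the uniform distribution. Given an instance $(\mathrm{tt}(f),s)$ of (the gap version of) $\MQCSP$ on $n$ input bits, I would run $L$ with target-size parameter $s$ and a sufficiently small accuracy parameter $\epsilon$, answering each of $L$'s membership queries directly from the truth table $\mathrm{tt}(f)$, obtain the hypothesis $h$, compute $\Pr_{x\in\{0,1\}^{n}}[h(x)\ne f(x)]$ by brute force over all $2^{n}$ inputs, and accept iff this quantity is at most $\epsilon$. In a YES instance, $f$ is computed by a quantum circuit of size at most $s$, so $L$ must with high probability return an $h$ that $\epsilon$-approximates $f$ and the check passes; in a NO instance, the promise guarantees that no quantum circuit of size at most $s'$ approximates $f$, and since $|h|\le\poly(n,s,1/\epsilon)\le s'$ for an appropriate polynomial gap, $h$ fails the check and we reject. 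Hence $L$ decides $\MQCSP$ with a polynomial gap; the ungapped version follows by the usual device of searching over $s$, subject to the same self-reduction caveats noted earlier for $\MQCSP$.

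\emph{Main obstacle.} The hard part is the first direction --- making the CIKK machinery robust to precisely the quantum features highlighted among the challenges above. Concretely: the largeness count must be carried out for quantum circuits \emph{with ancilla qubits and finite-precision gates}, so that having true quantum circuit complexity greater than $s$ remains a rare event with the right quantitative bounds; one must check that the reconstruction step is genuinely black-box in $f$ --- it only ever queries $f$ at chosen points, so it does not care that $f$ is specified by a quantum rather than a classical circuit --- and that it outputs a circuit in the model the PAC definition expects; and the running-time accounting must be arranged so that ``efficient'' has the same meaning on the two sides of the equivalence, exactly as in the classical case. I expect everything else to be a careful but essentially routine translation.
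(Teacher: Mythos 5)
Your proposal is correct and takes essentially the same route as the paper's proof of Theorem~\ref{thm:PAC learning and MQCSP}: the forward direction instantiates the Carmosino et al.\ learning-from-a-natural-property pipeline with the natural property supplied by the $\MQCSP$ algorithm (largeness via the quantum-circuit counting bound), and the converse runs the learner with membership queries answered from the truth table and then verifies the hypothesis (the paper samples $\poly$ many points and repeats the learner to boost $\delta$, whereas you evaluate exhaustively, an immaterial difference). The one point you gloss over is that classically evaluating a quantum hypothesis using $\poly(n)$ ancilla qubits costs $2^{\poly(n)}$ time, which is exactly why the paper's formal converse places the gap problem in $\class{BQP}$ in general and in $\class{BPP}$ only when the ancilla count is $O(n)$.
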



\vspace{-2mm}
\paragraph{Quantum learning.}
In the past two decades, there has been increased interest in quantum learning (see a survey by Arunachalam and de Wolf~\cite{arunachalam2017guest}) due to the success of machine learning and quantum computing. While there have been interesting quantum speed-ups for specific learning problems such as Principal Component Analysis~\cite{lloyd2014quantum} and quantum recommendation system~\cite{kerenidis2017quantum}, it is unclear whether the quantumness can provide a generic speed-up in learning theory. A recent result of Arunachalam et al.~\cite{agg20} suggested that this might be difficult by showing that the existence of efficient quantum learning algorithms for a circuit class would imply a breakthrough circuit lower bound. We further generalize their result by showing the equivalence of efficient quantum PAC learning and the non-trivial upper bound for $\MQCSP$.

\begin{theorem}[Informal]\label{thm:learning_informal}
The existence of efficient quantum learning algorithms for PAC learning a circuit class $\class{C}$ is equivalent to the existence of efficient quantum algorithms for $\class{C}$-$\MQCSP$\footnote{$\class{C}$-$\MQCSP$ is $\MQCSP$ with respect to circuit class $\class{C}$.}.
\end{theorem}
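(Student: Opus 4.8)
The plan is to prove the two implications separately, refining the argument behind Theorem~\ref{thm:pac_informal} and following the classical template that links $\MCSP$-style algorithms (equivalently, constructive and useful natural properties) to PAC learning~\cite{CIKK16}, together with its quantum adaptation~\cite{agg20}, while carefully tracking the approximation and sampling issues that quantum circuits introduce. Throughout, ``efficient'' for $\class{C}$-$\MQCSP$ means running time polynomial in the truth-table length $2^n$, whereas ``efficient'' for a PAC learner means time polynomial in $n$, the $\class{C}$-size parameter, $1/\epsilon$, and $1/\delta$; reconciling these two regimes is the crux of the argument.

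\emph{From $\class{C}$-$\MQCSP$ to learning.} An efficient quantum algorithm for $\class{C}$-$\MQCSP$ is a \emph{constructive} property that is \emph{useful} against $\class{C}$: it runs in time $\poly(2^n)$ and rejects every truth table of small $\class{C}$-complexity, while a uniformly random function has large $\class{C}$-complexity with overwhelming probability, so the property is also \emph{large}. I would then invoke the learning-from-a-natural-property reconstruction of~\cite{CIKK16}: given membership-query access to an unknown $f\in\class{C}[s]$ on $n$ inputs, restrict $f$ to roughly $n^{\gamma}$ of its variables for a small constant $\gamma>0$, feed the Nisan--Wigderson generator built from this sub-function (with the $\class{C}$-$\MQCSP$ algorithm used as the distinguisher) into the Impagliazzo--Wigderson hardness-to-learning argument, and thereby output, with high probability, a circuit $\epsilon$-approximating $f$. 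Since the distinguisher is now a $\class{BQP}$ subroutine and the surrounding reconstruction is classical, the resulting learner is a polynomial-time \emph{quantum} algorithm. The one genuinely quantum subtlety is that a membership query to $f\in\class{C}\subseteq\class{BQP/poly}$ returns the correct bit only with bounded error, which I would remove by repeating each query a polylogarithmic number of times and taking a majority before using the answer.

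\emph{From learning to $\class{C}$-$\MQCSP$.} Given a truth table $T$ of $f:\{0,1\}^n\to\{0,1\}$ and a size bound $s$, I would run the quantum PAC learner with its membership oracle answered from $T$, setting $\epsilon = 2^{-(n+1)}$ and $\delta = 1/3$; the learner's running time $\poly(n,s,1/\epsilon,1/\delta)$ is then $\poly(2^n)$. If $f\in\class{C}[s]$, then with probability at least $2/3$ the learner returns a circuit $C$ that agrees with $f$ on every input, since $\epsilon<2^{-n}$ forces the approximation to be exact; running $C$ quantumly on all $2^n$ inputs, with majority amplification, verifies this and reads off $|C|$, which is at most the learner's running time $T=\poly(n,s)$. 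Hence every YES instance of $\class{C}$-$\MQCSP$ with parameter $s$ is accepted and every instance admitting no circuit of size $\le T$ is rejected, i.e.\ we decide the gap/promise version of $\class{C}$-$\MQCSP$ matching the formal statement (and, when the learner is proper with output size $s$, the exact version).

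I expect the forward direction to be the main obstacle: one must port the~\cite{CIKK16} reconstruction to the quantum world so that (i) the sole quantum component is the $\class{C}$-$\MQCSP$ oracle call, the rest being classical; (ii) the bounded-error responses of the $\class{BQP/poly}$ membership oracle do not corrupt the Nisan--Wigderson distinguishing step; and (iii) the parameters align so that an algorithm polynomial in the $2^n$-bit truth table yields a learner polynomial in $n$ rather than in $2^n$ --- which is exactly why the sub-sampling to $n^{\gamma}$ variables and the generator are needed, instead of the naive ``query the whole truth table and call the oracle'' strategy.
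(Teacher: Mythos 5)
Your overall template matches the paper's: both directions go through the ``learning from a natural property'' paradigm of~\cite{CIKK16}, and the converse runs the learner against the truth table. But in the forward direction you have located the quantum difficulty in the wrong place, and the plan as written would not go through. You claim that ``the sole quantum component is the $\class{C}$-$\MQCSP$ oracle call, the rest being classical,'' and that the only subtlety is bounded-error membership queries. Neither is right. The membership oracle is to the Boolean function $f$ itself and is exact, so majority-voting over repeated queries addresses a non-issue. The real obstacle is that once the distinguisher $D$ is a quantum circuit, every object produced downstream of it is a quantum circuit: the Nisan--Wigderson reconstruction embeds $D$ into a predictor for $g$ that succeeds only with some probability over the reconstruction's own measurements, and the subsequent Goldreich--Levin decoding and direct-product (IJKW) amplification must compose these inherently probabilistic quantum circuits. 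Naive composition of bounded-error quantum subroutines does not behave like classical composition, which is exactly why the paper imports the composable quantum versions of the NW reconstruction, Goldreich--Levin, and IJKW lemmas from~\cite{agg20} rather than reusing the classical reconstruction with a quantum oracle call. Your proposal also grafts on the $n^{\gamma}$-variable restriction from the quasi-polynomial classical PAC argument; the paper's quantum-learning direction instead applies the direct product and Goldreich--Levin directly to $f$ (building $g(x_1,\dots,x_k,r)=\oplus_i r_i f(x_i)$) and obtains a polynomial-time learner.

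In the converse direction your argument is essentially sound but rests on a stronger hypothesis than the paper uses: setting $\epsilon=2^{-(n+1)}$ requires the learner to accept arbitrarily small accuracy parameters, whereas the formal theorem assumes only an $(\epsilon,\delta)$-learner for some \emph{constants} $\epsilon,\delta\in(0,1/2)$. The paper instead runs the learner as-is, samples $O(\log(1/\delta)/\epsilon^2)$ random inputs, and empirically tests the hypothesis's agreement with the truth table; this is what forces the conclusion to be the gap version $\class{C}$-$\class{MCSP}[\poly(n),\omega(\poly(n)),\tau]$ rather than the exact problem. You should either weaken your conclusion to the gap version via such a sampling test, or make explicit that you are assuming a learner whose running time is polynomial in $1/\epsilon$ for all $\epsilon$.
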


The proof idea is to quantize the ``learning from a natural property'' paradigm of~\cite{CIKK16}. Briefly speaking, the converse direction ``algorithms for $\MQCSP$ imply learning algorithms'' follows from the idea that one can use the Boolean function (the object to be learned) to construct a PRG with the property that breaking the PRG implies a reconstructing algorithm for $f$. Then, since an algorithm for $\MQCSP$ can break PRG, we obtain an algorithm for $f$. Another direction follows from the observation that we can still apply the learning algorithm given the truth table of the function. Specifically, for Theorem~\ref{thm:pac_informal}, it turns out that the converse direction is straightforward because $\class{P/poly}\subset\class{BQP/poly}$ while the forward direction requires the number of ancilla bits to be $O(n)$ due to the overhead from a classical simulation for quantum circuits. For Theorem~\ref{thm:learning_informal}, the difficulty lies in the fact that a quantum circuit is \textit{inherently random} and one cannot arbitrarily compose quantum circuits as their wishes. To circumvent these issues, we invoke the techniques in~\cite{agg20} which built up composable tools for \textit{reconstructing} a circuit from a quantum distinguisher. See Theorem~\ref{thm:quantum learning MCSP}, Theorem~\ref{thm:PAC learning and MQCSP}, and Section~\ref{sec:main learning} for more details.

\subsubsection{\texorpdfstring{$\MQCSP$}{MQCSP} and quantum circuit lower bounds}

The classical $\MCSP$ is tightly connected to circuit lower bounds. We generalize the results of Oliveira and Santhanam~\cite{os16}, Arunachalam et al.~\cite{agg20}, and Kabanets and Cai~\cite{KC00} to $\MQCSP$.

\begin{theorem}[Informal]\label{thm:informal_circuit_lb}
Suppose that $\class{MQCSP}\in \class{BQP}$. Then 
\begin{enumerate}
    \item $\class{BQE}\not\subset\class{BQC}[n^k]$ for any constant $k\in\mathbb{N}$\footnote{$\class{BQC}[n^k]$ is the complexity class for problems that can be solved by $O(n^k)$-size quantum circuits with bounded fan-in, and $\class{BQE}$ in the set of problems that can be solved in $2^{O(n)}$ time by quantum computers. Previously, Aaronson \cite{aar06} showed that $\mathsf{P}^{\mathsf{PP}}\not\subset \class{BQC}[n^k]$ unconditionally. However, the relations between $\class{P^{PP}}$, $\class{BQE}$, and $\class{BQP^{QCMA}}$ are still unclear.
    }; and
    \item $\class{BQP^{QCMA}}\not\subset\class{BQC}[n^k]$ for any constant $k\in\mathbb{N}$.
\end{enumerate}
\end{theorem}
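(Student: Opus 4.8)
\emph{The two items call for different (though related) routes.} For item~2 I would run a quantum version of Kannan's ``lexicographically first hard truth table'' argument, using $\MQCSP\in\class{BQP}$ only to collapse one quantifier level. For item~1 I would generalize the Oliveira--Santhanam implication ``algorithm for $\MCSP$ $\Rightarrow$ natural property $\Rightarrow$ $\class{BPE}$ circuit lower bound'', where the delicate point is landing inside $\class{BQE}$ rather than $\class{BQEXP}$. Throughout, the quantum-specific care promised in the introduction shows up as ancilla bookkeeping, handling the approximate/promise nature of $\MQCSP$, and dealing with the internal randomness of the assumed $\class{BQP}$ algorithm.

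\emph{Item~2.} Fix $k$ and put $\ell=\ell(n)=\lceil(k+1)\log n\rceil$, so $2^{\ell}=\poly(n)$. A counting argument --- count size-$\le n^{k}$ quantum circuits over the fixed gate set with the ancilla budget of $\class{BQC}[n^{k}]$ enlarged by the $n-\ell$ ``dummy'' input wires, compare against $2^{2^{\ell}}$, and note the $\MQCSP$ error $\epsilon<1/2$ costs only a factor $2^{H(\epsilon)2^{\ell}}$ --- shows that some $\ell$-bit Boolean function still has quantum circuit complexity $>n^{k}$. Let $g_{n}$ be the lexicographically first such function and define $L_{k}(x)=g_{|x|}(x_{1}\cdots x_{\ell(|x|)})$; fixing the ignored input bits to $0$ turns any size-$n^{k}$ circuit for $L_{k}$ on length $n$ into one for $g_{n}$, so $L_{k}\notin\class{BQC}[n^{k}]$. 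To place $L_{k}\in\class{BQP}^{\class{QCMA}}$, reconstruct the truth table of $g_{n}$ bit by bit: given a prefix $p$, the question ``does $p$ extend to a $2^{\ell}$-bit string $T$ for which the assumed $\class{BQP}$ algorithm outputs \textsf{no} on $(T,n^{k})$?'' is a $\class{QCMA}$ query, with polynomial-length classical witness $T$ checked by one run of that algorithm; $2^{\ell}=\poly(n)$ adaptive such queries recover $g_{n}$. The last \textsf{no} answer certifies that $g_{n}$ has no bounded-error size-$n^{k}$ circuit (else $g_n$ would lie on the \textsf{yes} side of the $\MQCSP$ promise), which is exactly where $\MQCSP\in\class{BQP}$ is used: without it, ``quantum circuit complexity $>n^{k}$'' is a $\mathrm{co}\text{-}\class{QCMA}$ predicate and one only reaches $\class{BQP}$ with a \emph{nested} quantum oracle. (One also invokes $\MQCSP\in\class{QCMA}$ from Theorem~\ref{thm:informal_mqcsp_crypto} to keep the oracle/witness types well defined.)

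\emph{Item~1.} The hypothesis gives a $\class{BQP}$-natural property useful against $\class{BQC}[n^{k}]$ --- ``the given $2^{n}$-bit string is the truth table of a function of quantum circuit complexity $>n^{k}$'', which has density $1-2^{-\Omega(2^{n})}$, is decided in $\poly(2^{n})$ quantum time by the $\MQCSP$ algorithm, and rejects every $\class{BQC}[n^{k}]$ function (for the search/extraction step one may also use the learner of Theorem~\ref{thm:pac_informal}). Feeding it an explicit (or random) truth table produces, in $2^{O(n)}$ time, a function outside $\class{BQC}[n^{k}]$ with high probability. The main obstacle --- and where I expect the real work to be --- is upgrading this to a single language \emph{in} $\class{BQE}$ (not $\class{BQEXP}$, and not $\class{BQE}$ with a nondeterministic oracle): one reuses the Oliveira--Santhanam win--win, either some $\class{E}$-language is already worst-case hard (done, since $n^{k}\ll 2^{n}$), or every $\class{E}$-language has small quantum circuits, and in the latter case the $\MQCSP$ algorithm lets a $\class{BQE}$ machine \emph{extract} those circuits and diagonalize against the sub-exponentially many candidates --- crucially, a $\class{BQE}$ machine can \emph{execute} a size-$n^{k}$ circuit together with its $n^{k}$ ancillas in $\poly(n)$ time, whereas a classical simulation would cost $2^{n^{\Theta(k)}}$ and force $\class{BQEXP}$. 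Making this argument go through quantumly then requires re-checking reversibility, the exact ancilla-qubit conventions of $\class{BQC}[n^{k}]$ versus $\MQCSP$, the approximate/promise nature of $\MQCSP$, and that the internal randomness of the $\class{BQP}$ algorithm can be amplified and union-bounded over the $2^{O(n)}$ invocations used in the construction.
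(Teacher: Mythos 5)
Your item~2 is essentially the paper's argument. The paper (Theorem~\ref{thm:ckt lb from MQCSP in BQP max}) constructs the \emph{maximum}-complexity truth table at full input length $n$ inside $\class{BPE}^{\class{QCMA}}$ by prefix-extension queries --- ``does this prefix extend to a truth table of complexity $\geq s$?'' is a $\class{QCMA}$ question precisely because the witness truth table can be verified by the assumed $\class{BQP}$ algorithm for $\MQCSP$ --- and then pads down; you run the same prefix search directly at input length $\ell=O(\log n)$ for the lex-first function of complexity $>n^k$, which is the padded view of the same computation. Both treatments leave the promise-gap issue (what the oracle does on truth tables falling strictly between the yes and no thresholds) at the same level of informality, so I would not count that against you.

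Item~1 is where there is a genuine gap. You correctly identify the obstacle (a natural property only certifies that a \emph{random} truth table is hard; one must land a \emph{uniform} hard language inside $\class{BQE}$, not $\class{BQE}$ with a nondeterministic oracle), but your proposed win--win is cased on the wrong axis and its second branch does not close. The dichotomy ``either some $\class{E}$-language lacks size-$n^k$ quantum circuits, or every $\class{E}$-language has them'' gives you nothing in the second branch: that hypothesis is non-uniform, and ``extract those circuits and diagonalize'' is not an algorithm --- a PAC learner or $\MQCSP$ oracle can reconstruct a circuit for a function you already compute, but to diagonalize you must \emph{find} a truth table refuted by all $2^{O(n^k\log n)}$ candidate circuits, and that search is exactly the $\class{QCMA}$-oracle step from item~2 that item~1 must avoid. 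The paper's Theorem~\ref{thm:ckt lb from MQCSP in BQP fixed} instead cases on whether $\class{PSPACE}\subseteq\class{BQSUBEXP}$. If yes, it diagonalizes inside $\class{PSPACE}$ against $\class{BQC}[n^k]$ via Claim~\ref{clm:qc_diag} (which rests on $\class{BQC}[n^k]\subseteq\class{DSPACE}[n^{2k}]/n^{2k}$, Claim~\ref{clm:bqc_in_dspace}, plus a nonuniform space hierarchy), and the hypothesis places that language in $\class{BQE}$. If no, the hard $\class{PSPACE}$ language feeds the conditional PRG against uniform quantum computation of~\cite{agg20} (Theorem~\ref{thm:bqp_PRG}); the $\MQCSP$ algorithm would distinguish its outputs from random unless almost all outputs are hard truth tables, and --- this is the step your sketch is missing --- the generator is evaluable in time $2^{O(\ell)}$, so the language $L^{\mathcal{G}}(w,y)=\textsf{fnc}(G(w))(y)$ is in $\class{E}\subseteq\class{BQE}$ and inherits the hardness. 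Without the PRG (or some other deterministic $2^{O(n)}$-time source of hard truth tables), the natural-property density argument alone cannot deliver item~1.
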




For item $1$, we use $\MQCSP$ to construct a $\class{BQP}$-natural property against quantum circuit classes. Then, with a quantum-secure pseudorandom generator, we can use a ``win-win argument'' to show that $\class{BQE}\not\subset\class{BQC}[n^k]$ for any $k>0$.  The proof mainly follows from \cite{agg20,os16}. However, we extend their proofs to the quantum natural properties against \emph{quantum} circuit classes. One technical contribution is a diagonalization lemma for quantum circuits.


For item $2$, we follow the idea in~\cite{KC00} to show that the maximum quantum circuit complexity problem\footnote{The problem is, given $1^n$, ask for a Boolean function $f:\{0,1\}^n\rightarrow \{0,1\}$ that has the maximum complexity.} can be solved in exponential time with a $\class{QCMA}$ oracle. The main difference from the classical case is that we require a $\class{QCMA}$ oracle instead of an $\class{NP}$ one, which follows from the fact that we assume $\MQCSP$ is in $\class{BQP}$\footnote{Along this line, the result still holds if we consider $\MCSP\in \class{BQP}$ and maximum classical circuit complexity.}. Then, the statement follows from the standard padding argument. 

Another aspect of quantum circuit complexity is \emph{hardness amplification}. Kabanets and Cai \cite{KC00} showed that $\MCSP$ can be used as an amplifier to generate many hard Boolean functions. In this part, we show that with an $\MQCSP$ oracle, given one quantum extremely hard Boolean function, there is an efficient quantum algorithm that outputs many quantum-hard functions.

\begin{theorem}[Hardness amplification by $\MQCSP$, informal]\label{thm:amp_intro}
Assume $\class{MQCSP}\in \class{BQP}$. There exists a $\class{BQP}$ algorithm that, given the truth table of a Boolean function with quantum circuit complexity $2^{\Omega(n)}$, outputs $2^{\Omega(n)}$ Boolean functions with $m=\Omega(n)$ variables such that each function has quantum circuit complexity greater than $2^m/(c+1)m$ for $c$ some constant. 
\end{theorem}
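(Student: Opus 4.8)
The plan is to quantize the Nisan--Wigderson-style argument behind the classical amplification of Kabanets and Cai. Write $f:\{0,1\}^n\to\{0,1\}$ for the given function, whose quantum circuit complexity is $\ge 2^{\epsilon_0 n}$ for a fixed $\epsilon_0>0$. First I would replace $f$ by an average-case hard proxy: a quantumly-implementable worst-case-to-average-case hardness amplification (encode $f$ by a locally list-decodable code, in the style of Sudan--Trevisan--Vadhan) yields $\hat f$ on $n'=O(n)$ bits such that any quantum circuit agreeing with $\hat f$ on a $\tfrac12+\delta$ fraction of inputs converts into a quantum circuit for $f$ only $\poly(1/\delta)$ larger; in particular no quantum circuit of size $2^{\Omega(n)}$ computes $\hat f$ on more than a $\tfrac12+2^{-\Omega(n)}$ fraction of inputs. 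Then fix $m=\Omega(n)$ --- a small constant fraction of $n$ to be pinned down --- put $N=2^m$, and take an NW design $S_1,\dots,S_N\subseteq[\ell]$ with $|S_i|=n'$, $|S_i\cap S_j|\le m$, and $\ell=O(n)$. The generator $G(z)=(\hat f(z|_{S_1}),\dots,\hat f(z|_{S_N}))\in\{0,1\}^N$ over seeds $z\in\{0,1\}^\ell$ has $2^\ell=2^{\Omega(n)}$ outputs, and I interpret each $G(z)$ as the truth table of an $m$-bit Boolean function $g_z$.

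The heart of the argument is that, for all but a $2^{-\Omega(n)}$ fraction of seeds $z$, the function $g_z$ has quantum circuit complexity $>2^m/((c{+}1)m)$. This rests on two facts. (i) A counting bound: a quantum circuit of size $s$ over a fixed universal gate set using $\le s$ ancillas is specified by $2^{O(s\log s)}$ bits, so with $s^*=2^m/((c{+}1)m)$ the number of $m$-bit functions admitting a size-$\le s^*$ circuit within the soundness error is $2^{O(2^m/(c{+}1))}\cdot 2^{H(\eta)2^m}$, where the second factor accounts for the (constant) error slack in the promise definition of $\MQCSP$; for $c$ large enough this is $2^{(1-\Omega(1))2^m}$, so a uniformly random $m$-bit function is (certifiably) hard except with probability $2^{-\Omega(2^m)}$. (ii) Since $\MQCSP\in\class{BQP}$, there is a $\poly(2^m)$-size quantum circuit $M$ deciding whether an $m$-bit truth table has quantum circuit complexity $\le s^*$; if $\Pr_z[M(G(z))=1]$ exceeded $2^{-\gamma n}$ for a suitable small $\gamma$, then by (i) the circuit $M$ would distinguish $G(U_\ell)$ from $U_N$ with advantage $\ge 2^{-\gamma n}/2$, and the quantum NW reconstruction --- distinguisher $\Rightarrow$ next-bit predictor of size $|M|+\poly(2^m)$ $\Rightarrow$, via list-decoding $\hat f$, a quantum circuit for $f$ --- would produce a quantum circuit for $f$ of size $\poly(2^m,2^{\gamma n})=2^{O(m+\gamma n)}$. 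Choosing $m$ and $\gamma$ small enough (as functions of $\epsilon_0$ and the degree of the reconstruction polynomial) makes this smaller than $2^{\epsilon_0 n}$, contradicting the hardness of $f$; hence the claimed bound on $\Pr_z[g_z\text{ not hard}]$.

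The algorithm is then direct: enumerate all $2^\ell$ seeds, compute $g_z=G(z)$ for each (every bit is one evaluation of $\hat f$, hence $\poly(2^n)$ time given the truth table of $f$), run $M$ on $g_z$ with size parameter $s^*$, and output exactly the $g_z$ that $M$ certifies hard. By the two facts above, at least a $1-2^{-\Omega(n)}$ fraction of the $2^{\Omega(n)}$ candidates survive and indeed have quantum circuit complexity $>2^m/((c{+}1)m)$, and the whole procedure runs in time $\poly(2^n)$, i.e. quantum polynomial time in the input length. I expect the main obstacle to be parameter balancing together with porting the two reductions to the quantum setting: one must verify that the worst-case-to-average-case amplification and the NW reconstruction still work when the adversary is a quantum circuit --- they should, since the reconstruction is classical polynomial-time post-processing wrapped around the quantum distinguisher, but the error/precision accounting for the quantum pieces and the promise-problem behaviour of $\MQCSP$ on the instances $g_z$ need care --- and that the reconstruction overhead $\poly(2^m)$ stays strictly below $2^{\epsilon_0 n}$ while keeping $m=\Omega(n)$ and the surviving-function count $2^{\Omega(n)}$, which there is just enough room to arrange.
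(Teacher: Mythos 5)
Your overall strategy coincides with the paper's: build a pseudorandom generator out of the hard function, observe that the assumed $\class{BQP}$ algorithm for $\MQCSP$ yields a fixed-polynomial-size quantum distinguisher accepting (by the counting bound of Claim~\ref{claim:gatecount}) almost all truth tables of hard $m$-bit functions, conclude that the generator must map almost all of its $2^{\Omega(n)}$ seeds to hard functions, and then filter with the $\MQCSP$ circuit. Where you diverge is inside the generator. The paper quantizes the Impagliazzo--Wigderson generator (Lemma~\ref{lem:quantum_IW_prg}): list-decodable encoding only to mild hardness $1-n^{-O(1)}$, then a direct product, the quantum Goldreich--Levin/IJKW step, an expander-walk-plus-nearly-disjoint-subsets direct-product generator to reach $2^{-\Omega(n)}$ hardness, and finally the quantum Nisan--Wigderson generator --- each stage using a reconstruction lemma already quantized in \cite{agg20}. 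You instead propose a single Sudan--Trevisan--Vadhan-style list-decoding jump directly to $\tfrac12+2^{-\Omega(n)}$ average-case hardness followed by NW. That route is shorter and is standard classically, but the one step you are taking on faith --- that the STV local list-decoder still reconstructs when the agreeing circuit is an inherently probabilistic quantum circuit with exponentially small advantage --- is precisely the kind of composition issue the paper sidesteps by staying within the \cite{agg20} toolkit (their Lemmas on quantum NW, quantum direct product, and quantum Goldreich--Levin exist exactly because one cannot freely fix the ``randomness'' of a quantum subroutine and compose). So your argument is sound modulo supplying a quantum analogue of the STV reconstruction at advantage $2^{-\Omega(n)}$ with $\poly(1/\delta)$ overhead; the paper's longer chain buys exactly that missing lemma, at the cost of more stages and the expander-walk combinatorics. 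The parameter bookkeeping you describe (seed length $O(n)$ for the design, reconstruction size kept below $2^{\epsilon_0 n}$, $m$ a small constant fraction of $n$) matches what the paper needs and there is indeed room to arrange it.
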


The proof of Theorem~\ref{thm:amp_intro} closely follows the proof in \cite{KC00}. The key ingredient is a quantum Impagliazzo-Wigderson generator, which ``quantizes'' the construction in \cite{iw97}. The quantum Impagliazzo-Wigderson generator can transform the given quantum extremely hard function to a quantum pseudorandom generator that fools quantum circuits of size $2^{O(n)}$. Since we assume $\MQCSP\in \class{BQP}$, it means that we can construct a small quantum distinguishing circuit to accept the truth tables of hard functions. And we can show that our quantum Impagliazzo-Wigderson generator can fool the distinguishing circuit. Hence, most of the outputs of the quantum pseudorandom generator will have high quantum circuit complexity. 

To quantize the Impagliazzo-Wigderson generator, we construct a quantum-secure direct-product generator, and also use the quantum  Goldreich-Levin Theorem and quantum-secure Nisan-Wigderson generator developed in \cite{agg20}.

\emph{Hardness magnification} is an interesting phenomenon in classical circuit complexity defined by \cite{os18}. It shows that a weak worst-case lower bound can be ``magnified'' into a strong worst-case lower bound for another problem. (See a recent talk by Oliveira~\cite{olinote}.) In this part, we show that $\MQCSP$ also has a quantum hardness magnification.

\begin{theorem}[Hardness magnification for $\MQCSP$, informal]\label{thm:magnify_informal}
If a gap version of $\MQCSP$ does not have nearly-linear size quantum circuit, then $\class{QCMA}$ cannot be computed by polynomial size quantum circuits.
\end{theorem}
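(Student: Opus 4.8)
The plan is to prove the contrapositive: assuming $\class{QCMA}\subseteq\class{BQC}[\poly]$, I construct a family of quantum circuits of nearly-linear size --- that is, $N\cdot\polylog(N)$ gates, where $N=2^n$ is the length of the input truth table --- deciding the gap version of $\MQCSP$ in the regime where the circuit-size parameter is $s=\polylog(N)$. The architecture follows the hardness-magnification template of Oliveira--Santhanam~\cite{os18}: (a) give a reduction that reads $T$ a nearly-linear number of times, does nearly-linear local work, and produces a $\class{QCMA}$ instance of length only $\polylog(N)$ whose answer decides Gap-$\MQCSP(T)$; (b) plug in the hypothesized polynomial-size quantum circuits for $\class{QCMA}$, which on a $\polylog(N)$-length instance cost only $\poly(\polylog(N))=\polylog(N)$ gates; (c) conclude the whole procedure is a nearly-linear quantum circuit, contradicting the assumption that Gap-$\MQCSP$ has no such circuit.

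The heart is step (a), which rests on a small ``test set'' $S\subseteq\{0,1\}^n$ (an anti-checker / $\epsilon$-net) of size $\polylog(N)$: the Boolean functions computed with bounded error by size-$\le s$ quantum circuits number at most $2^{\poly(s)}$ (count classical descriptions to Solovay--Kitaev precision), so this class and its symmetric-difference class have VC dimension $\poly(s)=\polylog(N)$, and there is a fixed $S$ of size $\polylog(N)$ for which agreement of two class members on $S$ forces $\epsilon$-agreement globally; hardwire $S$ as nonuniform advice. The reduction then (i) reads $T$ once to form $T|_S$; (ii) uses the $\class{QCMA}$ circuits, via prefix search over circuit descriptions, to recover (if one exists) a size-$\le s$ quantum circuit $C$ whose acceptance probability on every point of $S$ lies on the correct side of $1/2$ with a constant margin --- the underlying $\class{QCMA}$ language takes $(T|_S,\text{prefix})$ of length $\polylog(N)$, and its verifier reruns the witnessed circuit $O(\log|S|)$ times per point of $S$ to check the margins; (iii) runs $C$ on all $N$ inputs, using $N\cdot\poly(s)\cdot\log N$ gates with $O(\log N)$-fold amplification and a union bound, to estimate $\mathrm{tt}(C)$, and accepts iff it agrees with $T$ on all but an $\epsilon$ fraction of inputs. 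Completeness: in a YES instance the true size-$\le s$ circuit is a valid witness, the search succeeds, and since $T$ itself lies in the class the net property forces the recovered $C$ to be $\epsilon$-close to $T$, so step (iii) accepts. Soundness: in a NO instance every size-$\le s$ circuit is $\epsilon$-far from $T$, so whatever step (ii) returns is rejected at step (iii). The gate count is $\polylog(N)$ (advice, queries, amplification) plus $N\cdot\polylog(N)$ (forming $T|_S$, the final sweep) --- nearly linear.

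The main obstacle, and the way this departs from~\cite{os18}, is that quantum circuits are intrinsically erroneous and are specified by real amplitudes, so every spot where the classical argument uses a crisp bit ``$C$ outputs $f(x)$'' becomes ``the acceptance probability of $C$ on $x$ is bounded away from $1/2$,'' and this margin must survive simultaneously through: the VC/net counting, which must be done for the class of \emph{bounded-error computable} functions (discarding descriptions that do not cleanly compute any Boolean function); the $\class{QCMA}$ verifier, which must certify the on-$S$ margins robustly to its own sampling error; the prefix search and the $N$-input sweep, each amplified and union-bounded so the total error is $o(1)$; and the gate/precision accounting (Solovay--Kitaev on the witness, the dependence of $|S|$ on $\epsilon$), tracked to keep the $\class{QCMA}$ instance of length $\polylog(N)$. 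This budget is exactly what pins the size parameter to $s=\polylog(N)$ and what forces the \emph{gap} version: a short test set can certify only $\epsilon$-closeness, never exact equality, so the promise must separate ``exactly realizable by a size-$s$ circuit'' from ``not $\epsilon$-approximable by one.'' I expect the subtlest point to be guaranteeing, in the YES case, that the circuit returned by the prefix search is genuinely $\epsilon$-close to $T$ rather than an ``ambiguous'' circuit matching $T$ on $S$ but with near-$1/2$ acceptance probabilities off $S$; this is handled by constraining the witness format (e.g. an amplified canonical form together with the on-$S$ margin checks) and by choosing the gap parameters so that every size-$\le s$ circuit agreeing with $T$ on $S$ is forced into the class.
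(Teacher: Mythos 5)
Your proposal is essentially correct in outline, but it proves the informal statement by a genuinely different route and for a different instantiation of ``gap version'' than the paper's formal theorem. The paper (Theorem~\ref{thm:magnification}) treats the \emph{worst-case size-gap} problem $\MQCSP[2^{n^{1/2}}/2n,\,2^{n^{1/2}}]$ and shows it lies in $\class{BQC}[2^{n+O(n^{1/2})}]$ under $\class{QCMA}\subseteq\class{BQC}[\poly]$ via a quantum antichecker lemma (Lemma~\ref{lem:antichecker}), quantizing \cite{ops19,chopr20}: the large circuit first \emph{computes from the input truth table} a set of $2^{O(\sqrt n)}$ anticheckers --- whose defining property is tied to hardness at size $2^{\sqrt n}$, and whose construction itself uses the $\class{QCMA}\subseteq\class{BQC}[\poly]$ hypothesis (isolation-lemma formulas, approximate counting, $\class{QCMA}^{\class{coQCMA}}$ subroutines) --- and then needs only a \emph{single} $\class{QCMA}$ decision query (``is some size-$s$ circuit correct on all anticheckers?''), with no circuit recovery and no sweep over all $N$ inputs. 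Your route instead hardwires a \emph{fixed, input-independent} $\epsilon$-net $S$ of size $\poly(n)$ (which exists unconditionally by counting discretized circuit descriptions), uses $\class{QCMA}$-powered prefix search to recover an explicit candidate circuit matching $T$ on $S$, and verifies $\epsilon$-closeness by an $N\cdot\polylog(N)$ sweep. What each buys: your test set needs no conditional construction and the argument is more elementary, but --- as you yourself concede --- a $\poly(n)$-size test set can only certify $\epsilon$-closeness, so you decide an approximation-gap problem (essentially the paper's $\MQCSP[s,s',t,\tau]$ from the learning section, at $s=\poly(n)$) rather than the paper's exact worst-case version; a circuit failing on a single input is invisible to your algorithm, so your argument cannot recover Theorem~\ref{thm:magnification} as stated, whereas the input-dependent anticheckers handle exact correctness and a larger size parameter $2^{\sqrt n}$.

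Two places in your sketch need more than a gesture if this is to be a complete proof of your variant. First, the prefix search uses a \emph{promise}-$\class{QCMA}$ oracle, and on promise-violating prefixes (only extensions with margins strictly between the thresholds) the circuit may answer arbitrarily, so the naive invariant ``the current prefix extends to a high-margin witness'' is not maintained; you need the standard sliding-threshold repair (degrade the margin threshold by an inverse-polynomial amount per prefix symbol, so a ``yes'' branch always retains \emph{some} extension above the next threshold), which your parameters do accommodate but which should be spelled out. Second, the ``ambiguous circuit'' issue you flag is real: the recovered $C$ may have acceptance probabilities near $1/2$ off $S$, so it is not a member of the bounded-error class and the net property does not directly apply; the clean fix is to define the class by majority-rounding every discretized description (so the net applies to rounded functions), and to make the final sweep a \emph{tolerant} check (accept an input if the empirical frequency is within $\delta$ of the correct side), which in turn forces the No-promise of your gap problem to include a probability margin as well as an error fraction --- i.e.\ exactly the $\alpha,\beta,\tau$-style gap the paper already uses elsewhere. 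With those two repairs your argument goes through for your variant of the statement.
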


We note that this is a nontrivial theorem because even if we assume $\class{QCMA}\subseteq \class{BQC}[\poly(n)]$, we can only show $\MQCSP\in \class{BQC}[\poly(2^n)]$, i.e., $\MQCSP$ has a polynomial-size quantum circuit by the fact that $\MQCSP \in \class{QCMA}$. But the theorem implies that some gap-version of $\class{MQCSP}$ has nearly-linear size circuit! 

We prove the above theorem via a quantum antichecker lemma, whose classical version was given by \cite{ops19,chopr20}. 
And we observe that the two key ingredients: a delicate design of a Boolean circuit and a counting argument can be quantized.


\subsubsection{\texorpdfstring{$\MQCSP$}{MQCSP} and quantum fine-grained complexity}
Fine-grained complexity theory aims to study the \emph{exact} lower/upper bounds of some problems. For example, most theorists believe 3-SAT is not in $\class{P}$, but we do not know if it can be solved in $2^{o(n)}$ time. Exponential Time Hypothesis ($\class{ETH}$) is a commonly used conjecture in this area which rules out this possibility (see a survey by Williams~\cite{wvv18}).
Very recently, \cite{ila20} showed the fine-grained hardness of $\MCSP$ for partial function based on $\class{ETH}$. In the quantum setting, \cite{aclwz20,bps21} proposed quantum fine-grained reductions and quantum strong exponential time hypothesis ($\class{QSETH}$) to study the quantum hardness of problems in $\class{BQP}$. In this part, we follow the works of \cite{ila20,aclwz20} and prove the quantum hardness of $\MQCSP$ for partial functions based on the quantum $\class{ETH}$ conjecture,
which conjectures that there does not exist a $2^{o(n)}$-time quantum algorithm for solving 3-SAT\footnote{Existing quantum SAT solvers are not much faster than Grover's search; they need $2^{\Omega(n)}$-time even for 3-SAT.}. 
 \begin{theorem}[Fine-grained hardness of $\MQCSP^\star$, informal]\label{thm:mqcsps_informal}
Quantum \textsf{ETH} implies $N^{o(\log \log N)}$-quantum hardness of $\MQCSP$ for partial functions.
\end{theorem}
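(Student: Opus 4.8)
The plan is to follow the structure of the classical fine-grained hardness reduction of Ilango~\cite{ila20} for $\MCSP$ of partial functions, while carefully substituting quantum circuit complexity for classical circuit complexity and tracking the error/precision terms introduced by the quantum setting. Recall that the classical result reduces $3$-$\mathsf{SAT}$ on $n$ variables to $\MCSP^\star$ (the partial-function variant) on a truth table of size $N = 2^{\Theta(n/\log n)}$ (or some similar sub-exponential blow-up), so that a $2^{o(n)}$ algorithm for $3$-$\mathsf{SAT}$ would follow from an $N^{o(\log\log N)}$ algorithm for $\MCSP^\star$. Concretely, given a $3$-$\mathsf{SAT}$ instance $\varphi$, one builds a partial Boolean function $f_\varphi$ on $m = \Theta(n/\log n)$ input bits whose domain of definedness encodes the clauses and variables of $\varphi$, such that $f_\varphi$ has small circuit complexity if and only if $\varphi$ is satisfiable: a satisfying assignment yields a small ``selector'' circuit, and conversely any small circuit consistent with the partial specification can be decoded into a satisfying assignment. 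The first step is to re-examine this construction and verify that the combinatorial gadget producing $f_\varphi$ is insensitive to whether we measure circuit size classically or quantumly — i.e., that the ``yes'' direction still gives a quantum circuit of the target size (trivially, since a classical circuit is a quantum circuit of comparable size up to the fixed-gate-set overhead from Solovay–Kitaev, which is only polylogarithmic and absorbed into the sub-exponential slack), and the ``no'' direction still holds.

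The second and genuinely quantum step is the ``no''-direction soundness argument: if $\varphi$ is unsatisfiable, we must show that \emph{every} small \emph{quantum} circuit — not just every classical one — fails to compute $f_\varphi$ on its domain within the allowed error. Here I would argue by a counting/dimension argument: the number of quantum circuits of size $s$ over a fixed universal gate set with $s$ ancillas, discretized to the relevant precision $\varepsilon$, is at most $2^{O(s\log(s/\varepsilon))}$; choosing the gap parameters so that this count is smaller than the number of ``bad'' partial functions forces the decoding argument to go through. The key technical point is that a quantum circuit computing a Boolean function with bounded error, when restricted to a promise on which $\varphi$'s unsatisfiability would force inconsistency, still cannot thread the needle — this is where I would invoke the robustness of the classical decoding step (it should decode an \emph{approximately}-consistent small circuit just as well, since the gadget's domain points are chosen with enough separation), together with an amplification to drive the per-point error low enough that rounding a quantum circuit's acceptance probabilities to $\{0,1\}$ recovers an honest partial truth table. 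I expect to import the relevant quantum $\MCSP\in\class{QCMA}$ machinery and precision bookkeeping from the earlier sections, where quantum $\MCSP$s are set up as promise problems with an explicit error parameter.

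The main obstacle I anticipate is precisely the interaction between the partial-function promise and the quantum error tolerance. In the classical setting, ``the circuit is consistent with $f_\varphi$ on its domain'' is a clean Boolean condition; in the quantum setting it becomes ``the circuit outputs the correct bit with probability $\geq 2/3$ on each domain point,'' and one has to ensure that the reduction's parameters ($m$, $s$, the size of the domain, the gap between yes and no sizes) are set so that (a) the blow-up from $n$ to $N$ remains sub-exponential, keeping $2^{o(n)} \Leftrightarrow N^{o(\log\log N)}$, and (b) the quantum-circuit counting bound still beats the combinatorial lower bound on bad functions even after paying the $\log(1/\varepsilon)$ factor from precision and the $\poly$ factor from error-reduction. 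If the straightforward parameter choices don't close, the fallback is to weaken the conclusion to a particular natural gate set (as the paper does for multi-output $\MQCSP$ hardness), where the circuit count is cleanest, and then remark that Solovay–Kitaev transfers it to arbitrary gate sets at the cost of a slightly worse — but still $\omega(1)$ — exponent in the hardness. A secondary, more routine obstacle is re-deriving the quantum analogue of whatever ``$\class{ETH}$-hardness of classical $\MCSP^\star$'' lemma from~\cite{ila20,ila20} is used as a black box, since we need its reduction to be explicit enough to compose with the classical-to-quantum circuit-size comparison; I would state this as a lemma and prove it by inspecting the original reduction.
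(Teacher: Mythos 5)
Your high-level plan (quantize Ilango's reduction, re-establish the hardness of the source problem under quantum ETH) matches the paper's architecture, but the central technical step of your soundness argument is wrong, and the missing idea is exactly the one that makes the quantum case go through. The paper's reduction is not "unsatisfiable $\Rightarrow$ no small circuit for information-theoretic reasons"; it is "any quantum circuit of size exactly $6n-1$ computing the partial function $\gamma$ can be \emph{decoded into a permutation} $\pi$ witnessing a solution of the Bipartite Permutation Independent Set instance." A counting bound of the form $2^{O(s\log(s/\varepsilon))}$ over discretized quantum circuits cannot substitute for this: $\gamma$ is a single, highly structured partial function (built from the graph of an LMS-style independent-set instance, not directly from 3-SAT clauses), and the question is whether \emph{any} circuit of the threshold size computes it — counting circuits against "bad partial functions" has no purchase here. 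What the paper actually does is structural: since $\gamma$ depends on all $6n$ inputs and the circuit has only $6n-1$ two-qubit gates, a light-cone argument forces the circuit topology to be a full binary tree, i.e., a \emph{read-once quantum formula}; the dequantization theorem of Cosentino--Kothari--Paetznick \cite{ckp13} then converts any bounded-error read-once quantum formula into an \emph{exact} classical read-once formula of the same size, which simultaneously eliminates all of the error/precision bookkeeping you were worried about and reduces the problem to Ilango's classical analysis (showing XOR and NOT gates cannot appear, then reading off $\pi$). Without this step your "no" direction has no mechanism for extracting a permutation, and your fallback of restricting the gate set does not help — the result holds for arbitrary single- and two-qubit gates precisely because of \cite{ckp13}.

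A secondary inaccuracy: the lemma that must be re-proved quantumly is not an "ETH-hardness of $\MCSP^\star$" black box but the $2^{o(n\log n)}$ quantum hardness of the $2n\times 2n$ Bipartite Permutation Independent Set problem, obtained by walking the classical chain 3-SAT $\rightarrow$ 3-Coloring $\rightarrow$ $n\times n$ Clique $\rightarrow$ Permutation Clique $\rightarrow$ Permutation Independent Set $\rightarrow$ Bipartite Permutation Independent Set inside the quantum fine-grained reduction framework and checking each reduction's running time. You gesture at this, so it is a fixable bookkeeping issue, but the read-once/dequantization step is a genuine gap in your proposal.
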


To prove the above theorem, we basically follow the reduction path in \cite{ila20}, which gave a reduction from a fine-grained problem studied by \cite{lms11} to $\MQCSP$ for partial functions. But we need to bypass two subtleties:
\begin{itemize}
    \item The proof of \cite{ila20} relies on the structure of the classical read-once formula, but there is no direct correspondence with quantum;
    \item \cite{lms11} only proved the classical hardness of the bipartite permutation independent set problem, but we need quantum hardness result. 
\end{itemize}


For the first issue, we prove an unconditional quantum circuit lower bound for that function in the reduction. More specifically, we first show that if a small quantum circuit can compute the partial function $\gamma$ in the reduction, then that circuit is a quantum read-once formula (defined by \cite{yao93}); and vice versa. And then, we apply a ``dequantization'' result by \cite{ckp13} to show that the quantum read-once formula can be converted to a classical read-once formula with the same size. Then, by the structure of the ``dequantized'' read-once formula, we finally conclude that deciding $\MQCSP$ for $\gamma$ is equivalent to solving the bipartite permutation independent set problem.

For the second issue, we use the quantum fine-grained reduction framework and give a reduction from 3-SAT to the bipartite permutation independent set problem. Therefore, the quantum hardness of $\MQCSP$ for partial function follows from the quantum hardness of deciding 3-SAT conjectured by the quantum $\class{ETH}$.

\subsubsection{Quantum circuit complexity for states and unitaries}
In this section, we study $\UMCSP$ and $\SMCSP$. For $\SMCSP$ in Definition~\ref{def:smcsp_informal}, we consider two types of inputs: quantum states and the classical description of the state. We consider the inputs as quantum states since we generally cannot have the classical description of the quantum state in the real world, and many related problems (such as shadow tomography~\cite{aaronson18}, quantum gravity~\cite{BFV19}, and quantum pseudorandom state~\cite{JLF18}) have multiple copies of states as inputs. Although this input format makes $\SMCSP$ harder, we are able to show that $\SMCSP$ has a $\class{QCMA}$ protocol\footnote{Note that since $\SMCSP$ has quantum inputs, the problem is not in $\class{QCMA}$ under the standard definition.}. Furthermore, the search-to-decision reduction and the self-reduction in Theorem~\ref{thm:informal_reduction} hold for both versions of $\SMCSP$. We first show hardness upper bounds for $\UMCSP$ and $\SMCSP$.  
\begin{theorem}[Informal]\label{thm:smcspumcspqcma}
\emph{(1)} $\UMCSP\in \class{QCMA}$. \emph{(2)} $\SMCSP$ can be verified by $\class{QCMA}$ protocols.
\end{theorem}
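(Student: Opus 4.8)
\textbf{Proof proposal for Theorem~\ref{thm:smcspumcspqcma}.}
The plan is to exhibit, for each problem, a $\class{QCMA}$ verifier: the Merlin/prover sends a classical string, and Arthur/the verifier runs a $\class{BQP}$ computation (possibly consuming copies of the quantum input, in the case of $\SMCSP$) to check it. For both problems the natural witness is a classical description of the candidate circuit $C$ of size at most $s$ using at most $s$ ancillas---this is a string of length $\poly(s)$ naming the gates and the qubits they act on---so witness size is not an issue. The work is entirely in the verification step, and the key subtlety (flagged already in the ``challenges'' discussion) is that quantum computation is erroneous: we must estimate a fidelity/closeness quantity to within the promise gap, which we will do by a standard test-and-estimate procedure with enough repetitions, and we must be careful that ancilla qubits may end in an arbitrary (possibly entangled) state, so ``closeness'' must be defined as in the informal definitions, i.e.\ $|(\bra{\psi}\otimes I)U^\dagger C(\ket{\psi}\ket{0})|$ large for all $\ket\psi$ in the unitary case.

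For $\SMCSP$: given polynomially many copies of $\ket\psi$ and a witness circuit $C$, Arthur prepares $C\ket{0^n}$ on a fresh register and runs a SWAP test (or, more robustly, a many-outcome estimator) between $C\ket{0^n}$ and a supplied copy of $\ket\psi$, repeating over $\poly(n,s)$ fresh copies to estimate the fidelity $F(C\ket{0^n},\ket\psi)$ to additive error smaller than half the promise gap; accept iff the estimate clears the threshold. Completeness holds because the honest witness yields fidelity above the yes-threshold and the estimate concentrates; soundness holds because on a no-instance \emph{every} size-$s$ circuit---hence in particular the prover's---produces a state of fidelity below the no-threshold, so the estimate concentrates below the acceptance cutoff. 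Since Arthur's computation is a poly-size quantum circuit consuming poly-many copies of the input state and the witness is classical, this is exactly a $\class{QCMA}$ \emph{protocol} (it is not literally in $\class{QCMA}$ only because the input is a quantum state rather than a classical string, as the footnote notes). If instead the classical description of $\ket\psi$ is given, Arthur can prepare $\ket\psi$ himself (up to precision, handled by the usual Solovay--Kitaev-type padding absorbed into the promise gap) and the same argument goes through.

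For $\UMCSP$: the input is the full classical description of the $2^n\times 2^n$ unitary $U$, so the instance size is $N=2^{O(n)}$ and Arthur has time $\poly(N)$; the witness is again a classical circuit description of size $\le s$. The verifier must check that $C$ (acting on $n$ input plus $\le s$ ancilla qubits, all ancillas starting at $\ket 0$) is close to $U$ in the sense above. Here one can either (i) compute the relevant closeness parameter deterministically: since $U$ is given explicitly and $C$ has at most $n+s=\poly$ qubits, one can classically write down the $2^{n+s}\times 2^{n+s}$ matrix of $C$ in time $\poly(N)$ when $s=O(n)$, contract it against $U$, and read off $\min_{\ket\psi}|(\bra\psi\otimes I)U^\dagger C(\ket\psi\ket 0)|$ via a singular-value computation---but this breaks when $s$ is superlinear, exactly the ancilla obstacle; or (ii) do it quantumly: Arthur picks a (Haar-random or $2$-design) test state $\ket\psi$, or better yet uses the Choi-state/maximally-entangled-input trick, prepares $\ket\psi$ and $\ket{0}$ ancillas, applies $C$, applies $U^\dagger$ to the non-ancilla register (possible since $U$ is explicitly given, hence efficiently implementable as $U$ acts on only $n=\polylog N$ qubits), and measures the overlap with $\ket\psi\ket{0}$, repeating to estimate the worst-case closeness. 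One must argue that estimating this min-overlap to within the promise gap is doable with $\poly(N)$ repetitions---the standard move is that on yes-instances a good witness passes a constant-fraction-random-input test with high probability while the promise guarantees a gap, and on no-instances every size-$s$ circuit fails, so acceptance probabilities are separated; amplify by parallel repetition.

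\textbf{Main obstacle.} I expect the crux to be handling the worst-case (over all $\ket\psi$) nature of the closeness condition together with the arbitrary ancilla output: unlike a simple fidelity between two fixed states, ``$C$ is close to $U$'' quantifies over all inputs and allows garbage on the ancillas, so a naive single-input SWAP test does not certify it. The fix is to reduce the worst-case condition to a single quantity---e.g.\ via the Choi--Jamiołkowski isomorphism, testing $C$ on half of a maximally entangled state and comparing the output against $(U\otimes I)$ acting on that state, which converts the $\forall\ket\psi$ statement into one fidelity estimate on a fixed $2n$-ish-qubit state---and then estimate that quantity to within the promise gap by repetition, absorbing all precision losses (finite sampling, finite gate-set approximation of $U$ and of $\ket\psi$-preparation) into the definitional promise gap of the problem. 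Once the right single quantity is identified, completeness and soundness follow from concentration plus the promise, exactly as in the $\SMCSP$ case.
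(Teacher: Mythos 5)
Your $\SMCSP$ argument is essentially the paper's proof verbatim: classical witness $=$ the circuit, prepare $C\ket{0^n}$, SWAP-test against supplied copies of $\ket\psi$, Chernoff over $\poly(s)$ trials; nothing to add there.

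For $\UMCSP$ you take a genuinely different route from the paper. The paper's verifier never uses a Choi state: it runs a \emph{standard basis check} (test $(U^\dagger\otimes I_t)\Circ$ on every $\ket{i}$) plus a \emph{coherency check} on every pairwise superposition $(\ket a+\ket b)/\sqrt2$, and then proves (Lemmas~\ref{lem:basistosuper} and~\ref{lem:g7}) that passing both forces the ancilla states $\ket{\chi_a}$ to agree across basis states and hence forces closeness on \emph{all} $\ket\psi$, at the cost of an error blow-up from $\delta$ to $O(2^{n/2}\delta^{1/4})$. Your Choi--Jamio{\l}kowski test, if carried out, would collapse the two checks into one fidelity estimate, which is arguably cleaner. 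However, your proposal has a substantive gap exactly at the step you wave through with ``once the right single quantity is identified, completeness and soundness follow from concentration plus the promise.'' The entanglement (Choi) fidelity is an \emph{average-case} quantity, and it does not control the worst-case overlap without a dimension-dependent loss: a channel can have entanglement fidelity $1-O(2^{-n})$ while some specific $\ket\psi$ has overlap bounded away from $1$ by a constant (e.g., a phase flip on a single basis state, applied with constant probability, costs only $O(2^{-n})$ in Choi fidelity but $\Omega(1)$ on a superposition state). So the implication ``Choi fidelity high $\Rightarrow$ $\min_{\ket\psi}$ overlap high'' is true only with a multiplicative $2^{\Theta(n)}$ degradation, and proving that implication \emph{is} the soundness argument --- it is the analogue of the paper's Lemma~\ref{lem:g7}. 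This loss is also why the theorem is stated with the stringent completeness threshold $\alpha>1-2^{-2n-20}(1-\beta)^4$ and why the verifier must estimate its test statistic to precision $\approx(1-\beta)^{O(1)}2^{-O(n)}$, requiring $\poly(2^n)$ repetitions (affordable, since the input has size $2^{2n}$, but it must be said). Until you state and prove the quantitative Choi-to-worst-case transfer lemma, the soundness direction of your $\UMCSP$ verifier is unproven.
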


To prove Theorem~\ref{thm:smcspumcspqcma}, we use the \emph{swap test} to test whether the witness circuit $C$ outputs the correct states. This suffices to show that $\SMCSP$ has a $\class{QCMA}$ protocol. To show that $\UMCSP$ is in $\class{QCMA}$, checking if the circuit $C$ and $U$ agree on all inputs by using swap test is infeasible since there are infinitely many quantum states in the $2^n$-dimensional Hilbert space. If one only checked all the computational basis states (i.e., $\{\ket{x}:\, x\in \{0,1\}^n\}$), it is possible that the circuit $C$ and the given unitary $U$ are not close on inputs in the form of superposition states. This can come from the following two sources. (a) $C$ can introduce different phases on different computational basis states; (b) using ancilla qubits to implement $U$ results in entanglement between the output qubits and ancilla qubits, which may fail the swap test.

To deal with these difficulties, we introduce an additional step in the test called ``coherency test''. This step tests the circuit output on all the initial states in the form of $\ket{a}+\ket{b}$, where $\ket{a},\ket{b}$ are different computational basis states. We can prove that it forces the behavior of $C$ to be coherent on all the computational basis states, and forces the phases to be roughly the same.\par
\vspace{-2mm}
\paragraph{Reductions for $\UMCSP$ and $\SMCSP$ that are unknown to the classical $\MCSP$.}
In addition to the upper bounds, we also show interesting reductions for $\UMCSP$ and $\SMCSP$. 

\begin{theorem}[Informal]
\label{thm:informal_reduction}
\mbox{}
\begin{itemize}
    \item \textbf{Search-to-decision reductions:} There exist search-to-decision reductions for $\UMCSP$ and $\SMCSP$ when no ancilla qubits are allowed.
    \item \textbf{Self-reduction:} $\SMCSP$ is approximately self-reducible. 
    \item A gap version of $\MQCSP$ reduces to $\UMCSP$. 
\end{itemize}

\end{theorem}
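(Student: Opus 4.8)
The plan is to exploit two structural features of quantum circuits that the overview has already flagged: \emph{reversibility}, which drives the two reductions in the first two bullets, and the tight correspondence between a Boolean function $f$ and its oracle unitary $U_f$, which drives the last bullet. Throughout, fix a finite universal gate set $\g$; at every step there are only $\poly(n)$ ``candidate gates'' (some gate of $\g$ applied to some tuple of wires), so trying all of them is cheap.

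\textbf{Search-to-decision for $\UMCSP$ and $\SMCSP$ (no ancilla).} First, use the decision oracle with binary search on the size parameter to learn the exact optimum $s^\star$. Then recover an optimal circuit gate by gate, working from the last gate backwards: if $C=g_{s^\star}\cdots g_1$ is optimal for $U$, then $g_{s^\star}^{\dagger}U$ has complexity exactly $s^\star-1$, and conversely any gate $g$ for which $\UMCSP(g^{\dagger}U,\,s^\star-1)$ is a YES-instance may serve as $g_{s^\star}$; we try all $\poly(n)$ candidates, query the oracle, fix one that works, replace $U$ by $g^{\dagger}U$, and recurse. For $\SMCSP$ the identical peeling works with $U$ replaced by $\ket\psi$; when the input is given as copies of $\ket\psi$ rather than a classical description, we simply apply $g^{\dagger}$ to each copy and feed the resulting copies of $g^{\dagger}\ket\psi$ to the oracle, so both input models are handled. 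The zero-ancilla hypothesis is essential: with ancilla, peeling a gate that touches an ancilla wire destroys the ``ancilla starts in $\ket 0$'' structure, so the residual circuit is no longer a smaller instance of the same form.

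\textbf{Approximate self-reduction for $\SMCSP$.} The same identity $\SMCSP(\ket\psi,s)=\bigvee_{g\in\g}\SMCSP(g^{\dagger}\ket\psi,\,s-1)$ is already a self-reduction to a strictly smaller size parameter, and iterating it $s$ times bottoms out at the trivial instance. Two things make it only \emph{approximate}. First, for a general gate set not closed under inverses we do not have $g^{\dagger}$ exactly, so we approximate each $g^{\dagger}$ by a short $\g$-circuit via Solovay--Kitaev; this costs a $\polylog$ factor in size and, more importantly, perturbs the fidelity, so the YES/NO thresholds of the instance handed to the oracle must be relaxed, which is the source of the loss in the approximation ratio. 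Second, because $\SMCSP$ is a gap (promise) problem, the small per-step fidelity errors accumulate over the recursion and must be kept below the gap; controlling this accumulation and choosing the Solovay--Kitaev precision accordingly is the main technical work of this part.

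\textbf{Gap-$\MQCSP$ reduces to $\UMCSP$.} Map $f:\{0,1\}^n\to\{0,1\}$ to the $(n{+}1)$-qubit unitary $U_f:\ket{x}\ket{b}\mapsto\ket{x}\ket{b\oplus f(x)}$. For completeness, take a size-$s$ bounded-error circuit for $f$, amplify its success probability to $1-2^{-\Theta(n)}$ by $O(n)$-fold repetition and majority vote (size $O(s\cdot n)$, ancilla still $\poly(s)$), then apply the standard compute--copy--uncompute construction to obtain a circuit implementing $U_f$ up to operator-norm error $2^{-\Theta(n)}$; hence a size-$s$ YES-instance of $\MQCSP$ maps to a size-$O(s\cdot n)$ YES-instance of $\UMCSP$. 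For soundness, any size-$t$ circuit for $U_f$ yields a size-$(t+O(1))$ circuit for $f$ by running it on $\ket{x}\ket{0}$ and reading the last qubit, so if $f$ has quantum circuit complexity $>t$ then so does $U_f$; choosing the gap of $\MQCSP$ wide enough that $O(s\cdot n)$ still lies below the soundness threshold completes the reduction. The delicate step, and the one I expect to be the main obstacle for this bullet, is completeness: a merely bounded-error circuit for $f$ need not be close to $U_f$ in the worst-case-over-superpositions sense demanded by $\UMCSP$ (it can attach inconsistent relative phases to different basis states, or leave garbage entangled with the output) --- exactly the obstruction analyzed for ``$\UMCSP\in\class{QCMA}$'' --- and amplifying to exponentially small error \emph{before} the compute--copy--uncompute step is what lets us bypass it at only polynomial size blow-up. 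Indeed, this phase/entanglement subtlety is the recurring difficulty across all three parts, and managing it while keeping sizes polynomial and ancilla within budget is where the real care is needed.
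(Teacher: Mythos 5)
Your first and third bullets are essentially the paper's arguments. The gate-peeling search-to-decision reduction (binary-search the optimum, then for each of the $\poly(n)$ candidate last gates $g$ query the decision oracle on $g^{\dagger}U$ or $g^{\dagger}\ket{\psi}$ with size parameter decremented) is exactly what the paper does, including the observation that ancilla wires break the peeling; the one thing you gloss over is that because the oracle is a promise problem, each peeled gate may cost a small additive fidelity loss, so the paper runs the induction with a drifting sequence of thresholds $\epsilon_i=\epsilon_0+i\Delta$ and must keep $s\cdot\Delta$ below the final tolerance. For the third bullet the paper also uses $U_f$ with compute--copy--uncompute, but it does not amplify first: it tracks the error directly and shows $CC(U_f,2\epsilon)\le 2\,CC(f,\epsilon)+m$, i.e.\ the error merely doubles, which yields a multiplicative factor-$2$ gap $\MQCSP[s/2-m,s]\le\UMCSP$ rather than your factor-$O(n)$ gap. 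Your amplification route is sound but quantitatively weaker; the paper's route requires the (stated) check that $C_{f,\epsilon}^{\dagger}(\text{copy})C_{f,\epsilon}$ approximates $U_f$ coherently on all superpositions, not just basis states, which is the same phase/garbage subtlety you flag.

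The genuine gap is in the second bullet. What you propose, $\SMCSP(\ket{\psi},s)=\bigvee_{g\in\g}\SMCSP(g^{\dagger}\ket{\psi},s-1)$, decrements only the \emph{size parameter} while keeping the instance an $n$-qubit state; this is just the peeling step of your first bullet restated, and it is not what the theorem means by self-reduction. The paper's self-reduction reduces an $n$-qubit instance to $(n-1)$-qubit instances: write $\ket{\psi}=c_0\ket{0}\ket{\psi_0}+c_1\ket{1}\ket{\psi_1}$, estimate $|c_0|^2,|c_1|^2$ by amplitude estimation, and case-split. If one amplitude is tiny, $CC(\ket{\psi},\epsilon)$ is sandwiched by the complexity of the single dominant $(n-1)$-qubit state. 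Otherwise one gets an upper bound $CC(\ket{\psi},\epsilon)\le k\bigl(CC(\ket{\psi_0},\epsilon)+CC(\ket{\psi_1},\epsilon)\bigr)+3$ by preparing $c_0\ket{0}+c_1\ket{1}$ with one rotation and running controlled versions of the two subcircuits, and a lower bound $CC(\ket{\psi_i},\epsilon')\le k^{*}\bigl(CC(\ket{\psi},\epsilon)+O(1)\bigr)$ with $k^{*}=O(1/\epsilon)$ by measuring the first qubit of $O(1/\epsilon)$ copies to extract $\ket{\psi_0}$ or $\ket{\psi_1}$. The approximation loss comes from the amplitude-estimation precision, the $O(1/\epsilon)$ post-selection overhead, and only at the very end from Solovay--Kitaev (to move off the fixed gate set of $\fCNOT$ plus single-qubit rotations) --- not, as you suggest, from inverting gates at every step of an $s$-fold recursion. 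Without this qubit-count decomposition your proposal does not establish the self-reducibility claim.
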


Classically, it is unknown whether $\MCSP$ is self-reducible or has search-to-decision reductions. Ilango~\cite{ilango20_CCC} proved that some variants of $\MCSP$ have search-to-decision reductions. 
Recently, Ren and Santhanam \cite{rs21} showed that a relativization barrier applies to the deterministic search-to-decision reduction and self-reduction of \textsf{MCSP}. We prove the existence of search-to-decision reductions by using the property that ``\emph{quantum circuits are reversible}''. In particular, we guess the $i$-th gate, uncompute the gate from the state or the unitary, and use the decision oracles to check whether the complexity of the new state or the new unitary reduces. By repeating this process for all gates, we can find the desired circuits. This approach suffices for the case where the quantum circuits use no ancilla qubits. On the other hand, when the quantum circuits use ancilla qubits and are not forced to turn ancilla qubits back to the all-zero state, this approach does not work. Consider $\UMCSP$. The quantum circuit may implement a unitary $U\otimes V$. To find the circuit, the approach above needs to start from $U\otimes V$ and do the uncomputation iteratively. However, $V$ is unknown. $\SMCSP$ has the similar issues.

For the self-reducibility of $\SMCSP$, we show that one can approximate the circuit complexity of an $n$-qubit state by computing the circuit complexities of ($n-1$)-qubit states. Roughly, we find a ``win-win decomposition'' of an $n$-qubit state such that its circuit complexity is either close to the circuit complexity of an ($n-1$)-qubit state or can be approximated by two ($n-1$)-qubit states.


Finally, we show a reduction related to $\class{MQCSP}$ and $\class{UMCSP}$. The proof is by encoding a Boolean function into a particular unitary and showing that the circuit complexity of that unitary gives both upper and lower bounds for the circuit complexity of the Boolean function.  


\paragraph{Implications of Hardness of \texorpdfstring{$\SMCSP$}{SMCSP} and \texorpdfstring{$\UMCSP$}{UMCSP}} 

For $\UMCSP$, one application is related to a question Aaronson asked in~\cite{aaronson2016complexity}: does there exist an efficient quantum process that generates a family of unitaries that are indistinguishable from random unitaries given the full description of the unitary? If there is an efficient algorithm for $\UMCSP$, then there is no efficient quantum process that generates unitaries indistinguishable from random unitaries given the full unitary.

Moreover, several implications of $\MCSP$ carry to $\UMCSP$ by Theorem~\ref{thm:informal_reduction}. This follows from the fact that the gap version of $\MQCSP$ suffices to break certain pseudorandom generators.


For $\SMCSP$, we focus on the version where the inputs are copies of quantum states and present its relationships to quantum cryptography, tomography, and quantum gravity.
\begin{theorem}[Informal]\label{thm:smcsp_app_informal}
\mbox{}
\begin{enumerate}
    \item If $\SMCSP$ has quantum polynomial-time algorithms, then there are no pseudorandom states, and thus no quantum-secure one-way functions.
    \item Assuming additional conjectures from physics and complexity theory, the existence of an efficient algorithm for $\SMCSP$ implies the existence of an efficient algorithm for estimating the wormhole's volume
    \item If $\SMCSP$ can be solved efficiently, then one can solve the succinct state tomography problem\footnote{The succinct state tomography problem is that given many copies of a state with the promise that its circuit complexity is at most certain $s$, the problem is to find a circuit that computes the state.} in quantum polynomial time. 
\end{enumerate}
\end{theorem}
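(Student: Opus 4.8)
The plan is to prove the three items separately; each one reduces either to a covering/counting argument or to reductions already established for $\SMCSP$.

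\textbf{Item 1.} I would first establish the circuit-complexity gap that separates pseudorandom states from Haar-random states. Any pseudorandom-state ensemble $\{|\phi_k\rangle\}$ is by definition prepared by a single $\poly(n)$-size quantum circuit, so every state in its support has $\SMCSP$-complexity at most a fixed polynomial $s_0(n)$, with fidelity $1$. A Haar-random $n$-qubit state, by contrast, has complexity far above $s_0(n)$ with overwhelming probability: there are only $2^{O(s\log(n+s))}$ circuits of size $s$ over a fixed gate set, while the set of states that are $\epsilon$-close in fidelity to a fixed state has Haar measure $\epsilon^{2^n-1}$, so a union bound over all size-$s_0$ circuits shows a Haar-random state is $\epsilon$-far from \emph{every} such circuit's output with probability $1-2^{-\Omega(2^n)}$. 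Given a $\class{BQP}$ algorithm for $\SMCSP$, I then build a distinguisher: draw $\poly(n)$ copies of the challenge state, run the $\SMCSP$ algorithm with its size threshold placed strictly between $s_0$ and the Haar lower bound, and output the answer; this separates the ensemble from Haar with overwhelming advantage, contradicting the definition of a pseudorandom state. Finally, since quantum-secure one-way functions imply pseudorandom states~\cite{JLF18}, the nonexistence of pseudorandom states forces the nonexistence of $\class{qOWF}$. I expect the main obstacle to be purely quantitative bookkeeping: one must choose $\epsilon$, $s_0$, and the threshold so that they sit consistently inside the promise gap of the (promise-problem) definition of $\SMCSP$, and so that the covering estimate genuinely certifies the Haar case as a no-instance.

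\textbf{Item 2.} Here I would first pin down the hypotheses. The physics input is the holographic ``complexity $=$ volume'' conjecture: for the ($n$-qubit, suitably discretized) boundary state $|\psi_t\rangle$ dual to a two-sided eternal black hole, the regularized volume of the Einstein--Rosen bridge equals, up to universal constants and a controlled additive tolerance, the circuit complexity $\mathcal{C}(|\psi_t\rangle)$ (cf.~\cite{BFV19}). The complexity-theoretic input is that the notion of complexity appearing in the conjecture agrees up to $\poly$ factors with the gate complexity measured by $\SMCSP$, and that one has copy access to $|\psi_t\rangle$. Under these assumptions the implication is essentially immediate: given a $\class{BQP}$ algorithm for $\SMCSP$, invoke it (together with the search-to-decision reduction of Theorem~\ref{thm:informal_reduction}, or a binary search over the unary size parameter in the $\poly(n)$ regime) to estimate $\mathcal{C}(|\psi_t\rangle)$ up to the promised accuracy, then output the value predicted by the conjectured volume law. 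The only real work is definitional --- matching the continuum-to-qubit discretization, the tolerance windows, and the regime (complexity at most $\poly(n)$, consistent with $s$ given in unary) in which both the conjecture and the algorithm apply --- so I anticipate no serious technical obstacle here beyond stating the hypotheses carefully.

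\textbf{Item 3.} The succinct state tomography problem is exactly the \emph{search} version of $\SMCSP$: given copies of $|\psi\rangle$ promised to have complexity at most $s$, output a circuit $C$ of size at most $s$ with $C|0^n\rangle$ close in fidelity to $|\psi\rangle$. Theorem~\ref{thm:informal_reduction} supplies a search-to-decision reduction for $\SMCSP$ that applies to the copies-of-states input model (in the no-ancilla regime), so a $\class{BQP}$ decision algorithm for $\SMCSP$ plugs straight into that reduction to yield a $\class{BQP}$ algorithm for search $\SMCSP$, i.e., for succinct state tomography. The one point requiring attention is that the search-to-decision reduction returns a circuit of approximately optimal size, and one must verify that this optimality, combined with the fidelity promise used by the decision algorithm, guarantees the output state lies within the target fidelity of $|\psi\rangle$; threading that, and the ancilla-count restriction, through the argument is where the care goes, but it is routine given the earlier theorems.
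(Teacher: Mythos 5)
Your proposal follows essentially the same route as the paper for all three items: a complexity-gap distinguisher (poly-size circuits for PRS vs.\ exponential complexity for Haar-random states, plus the qOWF$\Rightarrow$PRS implication of~\cite{JLF18}) for item 1, a direct invocation of the Complexity=Volume conjecture together with an efficient dictionary map and the polynomial-volume regime for item 2, and the identification of succinct tomography with search-$\SMCSP$ composed with the search-to-decision reduction for item 3. The extra quantitative detail you supply (the Haar-measure union bound, the promise-gap and ancilla-count caveats) is consistent with, and slightly more explicit than, what the paper writes.
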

The first result in Theorem~\ref{thm:smcsp_app_informal} follows from the observation that we can use $\SMCSP$ algorithms to distinguish whether the given states have large circuit complexities. This results in algorithms for breaking pseudorandom states, and thus algorithms for inverting quantum-secure one-way functions by~\cite{JLF18}. It is worth noting that a recent work by Kretschmer~\cite{kretschmer2021} showed some relativized results for the problem of breaking pseudorandom states. Since that problem reduces to $\class{SMCSP}$, his results would provide another angle for understanding the hardness of $\class{SMCSP}$. We show the second result under the model and assumptions considered in~\cite{BFV19}. Roughly speaking, the volumes of wormholes correspond to circuit complexities of particular quantum states. Thus efficient algorithms for one implies solving the other one efficiently if the correspondence can be computed efficiently. The third result mainly uses the \emph{search-to-decision reduction} in Theorem~\ref{thm:informal_reduction} to find the circuit that computes the state.   
\vspace{-2mm}

\subsection{Discussion and open questions}
\label{sec:intro quantum}



We lay out the following three-aspect road map for the quantum $\MCSP$ program. For each aspect, we present several results and also propose many open directions to explore. We have also summarized all results in this work in Table~\ref{tab:UMCSP SMCSP}.    

First, we define the Minimum Quantum Circuit Size Problem ($\MQCSP$) and study upper bounds and lower bounds for its complexity. Furthermore, we explore the connections between $\MQCSP$ and other areas of quantum computing such as quantum cryptography, quantum learning, quantum circuit lower bounds, and quantum fine-grained complexity.

Then, we further extend $\MQCSP$ to study the quantum circuit complexities for quantum objects, including unitaries and states.\footnote{Aaronson has raised questions about quantum circuit complexity for unitaries or states in~\cite{aaronson2016complexity}.} We want to investigate their hardness and connections to other areas in TCS. In this work, we show upper bounds and lower bounds for their complexities, search-to-decision reductions (for $\UMCSP$ and $\SMCSP$), a self-reduction (for $\SMCSP$), and reductions from $\MQCSP$ to $\UMCSP$. In addition to connections generalized from classical analogues (such as cryptography, learning, and circuit lower bounds), we also find connections that might be unique in the quantum setting, such as tomography and quantum gravity.

For the last part, we want to turn around and ask what could happen when considering quantum algorithms or quantum reductions for $\MCSP$ (and also for $\MQCSP$, $\UMCSP$, and $\SMCSP$)? In the previous two parts, we have already observed that efficient quantum algorithms for these problems result in surprising implications to other fields. One can further consider other influences of quantum algorithms to study quantum and classical $\MCSP$s. For example, can $\class{SAT}$ reduce to $\MCSP$ under quantum reductions?



Following the three-aspect road map for the quantum $\MCSP$ program, there are many open directions to explore. In particular, we are interested to understand the hardness of these problems, the relationships between them, and their connections to other fields in computer science. 

\subsubsection{Open problems: the complexity of quantum circuits}\label{sec:intro open probs}
We start with open problems related to the hardness and relationships between quantum $\MCSP$s. The most basic questions are to understand the complexity of different quantum $\MCSP$s. As we have already seen, it is unclear if quantum $\MCSP$s are in $\class{NP}$. 
Besides, we do not know if $\class{NP}$- or $\class{QCMA}$-hard problems reduce to them.
\begin{openproblem}\label{open:MQCSP in NP}
Are $\UMCSP$, $\MQCSP$, and $\SMCSP$ in $\class{NP}$? Are these problems $\class{NP}$-hard, $\class{QCMA}$-hard, or $\class{C}$-hard for some complexity class $\class{C}$ that is between $\class{QCMA}$ and $\class{SZK}$? 
\end{openproblem} 

We note that the case that makes these problems not known to be in $\class{NP}$ is when there are more than linearly many ancilla qubits. Therefore, if one can show that adding superpolynomially many ancilla qubits does not lead to significant improvement on quantum circuit complexity, then we are likely to put these problems in $\class{NP}$ directly. Along this line, we pose the following open question: 

\begin{restatable}{openproblem}{linearancilla}\label{conj:linear ancilla}
For every $n,s,t\in\N$ with $t\leq s\leq2^{O(n)}$, is $\class{BQC}(s,t)\subset\class{BQC}(\poly(s,t),O(n))$?
\end{restatable}


For the hardness of $\UMCSP$ and $\SMCSP$, One potential approach for proving $\class{NP}$-hardness of $\UMCSP$ is as follows: Prove the $\class{NP}$-hardness of the gap version of certain variants of $\MQCSP$ (such as sparse $\MQCSP$ or $\mBMCSP$), and then reduce it to $\UMCSP$ via the last reduction in Theorem~\ref{thm:informal_reduction}. The hardness of $\SMCSP$ seems to be slightly more mysterious than $\UMCSP$. One reason for this is that we do not know any relationship between $\SMCSP$ and other quantum $\MCSP$s, and thus the approach of reducing particular variants of quantum $\MCSP$ to $\SMCSP$ does not directly work. This leads to another important open question: 
\begin{openproblem}
What are the relationships between $\UMCSP$, $\MQCSP$, and $\SMCSP$?  
\end{openproblem}

To answer whether quantum $\MCSP$s are $\class{NP}$-complete, we can also study these problems from another angle, that is, check if quantum $\MCSP$s have particular reductions that all $\class{NP}$-complete problems have. In the previous section, we observed that quantum circuits have some properties leading to search-to-decision reductions for $\UMCSP$ and $\SMCSP$ without ancilla qubits and an approximate self-reduction for $\SMCSP$. Therefore, we ask whether we can have search-to-decision reductions and self-reductions for these quantum $\MCSP$s. 
\begin{openproblem}
Are there search-to-decision reductions and self-reductions for quantum $\MCSP$s?
\end{openproblem}
It is worth noting that our search-to-decision reductions fail when ancilla qubits are allowed. This mainly follows from the fact that the circuit of the solution can be an non-identity operator on the ancilla qubits in general. This could possibly be addressed by iterating all possible unitaries or states on an $\epsilon$-net when the number of ancilla qubits are not large (e.g., at most $\log\log n$). However, we need new ideas when considering more ancilla qubits.

Moreover, it would be interesting to investigate the applications of these reductions. For instance, we have seen that the search-to-decision reductions give algorithms with $\UMCSP$ or $\SMCSP$ oracle additional power to obtain the circuits. This power may lead to interesting applications. 
\begin{openproblem}
Is there any application of search-to-decision reductions or self-reductions for quantum $\MCSP$s?
\end{openproblem}

The hardness of average-case quantum $\MCSP$s (which inputs are given randomly) is another interesting topic to explore. Hirahara~\cite{hirahara2018non} showed that there is a worst-case to average-case reduction for the (gap version of) classical $\MCSP$. We wonder if we can prove that quantum $\MCSP$s have worst-case to average-case reductions. 
\begin{openproblem}
\label{open:w2a}
Are there worst-case to average-case reductions for quantum $\MCSP$s?
\end{openproblem}
Note that there is negative evidence~\cite{BT06} showing that such classical reductions might not exist for $\class{NP}$-complete problems\footnote{However, there is no evidence for the existence of quantum worst-case to average-case reductions for $\class{NP}$-complete since the analysis in~\cite{BT06} fails in the quantum setting. See~\cite{Chia2020basingoneway} for related discussion. }. The existence of such reduction could result in important applications in cryptography, which we will discuss later.

Finally, we can also try to prove the hardness of quantum $\MCSP$s under stronger assumptions or more powerful reductions. 
\begin{openproblem}
Assuming $\class{QETH}$ or $\class{QSETH}$, is $\MQCSP$, $\UMCSP$, or $\SMCSP$ quantumly hard?
\end{openproblem}

\begin{openproblem}
Does quantum reduction provide more power to show the hardness of $\MCSP$? Specifically, is $\class{NP}\subseteq\class{BQP}^{\MCSP}$ or $\class{NP}\subseteq\class{BQP}^{\MQCSP}$?
\end{openproblem}

\vspace{-4mm}
\subsubsection{Open problems: potential connections to other areas}

In this work, in addition to generalizing several known connections for $\MCSP$ to quantum $\MCSP$s, we have also discovered several connections which could be unique for quantum $\MCSP$s. There are still many classically existing or unknown connections that we can explore. One fascinating question is whether we can base the security of one-way functions on any of these problems. 
\begin{openproblem}
\label{open:mcsp2owf}
Can we base the security of cryptographic primitives on $\MQCSP$, $\UMCSP$, $\SMCSP$, or some variants of these problems? 
\end{openproblem}
Note that since quantum $\MCSP$s considered in this work are all worst-case problems, to answer Problem~\ref{open:mcsp2owf}, we probably need worst-case to average-case reductions discussed in Problem~\ref{open:w2a}. Moreover, Liu and Pass~\cite{LP20} recently showed that the existence of classical one-way function is equivalent to the average-case hardness of a type of Kolmogorov complexity on uniform distribution. However, the average-case hardness of $\MCSP$ on uniform distribution is not known to imply one-wayness even classically, and the quantum version faces a similar obstacle. Very recently, Ilango, Ren, and Santhanam \cite{irs21} showed that the average-case hardness of \textsf{Gap-MCSP} on a locally samplable distribution is equivalent to the existence of one-way function.  Liu and Pass \cite{lp21} further generalized this result to show equivalence between the existence of one-way functions and the existence of sparse languages that are hard-on-average (including Kolmogorov complexity, $k$-\textsf{SAT}, and $t$-\textsf{Clique}). It is natural to ask whether their results can be generalized to quantum $\MCSP$s. In addition to one-way functions, We are interested in connections between quantum $\MCSP$s and ``quantum-only'' primitives, e.g., quantum $\iO$, copy protection, quantum process learning, etc. 

Along this line, as many quantum problems have quantum inputs, it is natural to consider quantum $\MCSP$s with quantum inputs. We have shown how $\SMCSP$ connects to problems in quantum cryptography, quantum gravity, and tomography given quantum states as inputs. This fact gives the possibility that $\MQCSP$, $\UMCSP$, and $\SMCSP$ with ``succinct'' quantum or classical inputs may have surprising connections to other problems in quantum computing.  For instance, one can consider inputs which are quantum circuits that encode some objects (e.g., unitaries). Then, the problem is to find another significantly smaller circuit. In~\cite{CCCW21}, Chakrabarti et al. have studied this problem and show applications to quantum supremacy.

\definecolor{LightCyan}{rgb}{0.8,0.9,1}
\definecolor{LightYellow}{rgb}{1,1,0.8}
\definecolor{LightRed}{rgb}{1,0.8,0.8}

\begin{table}[H]
\centering
\begin{tabular}{|c|c|c|}
\hline
                          & Results                                                                                                                    & \makecell{Informal Theorem Index\\ (Formal Theorem Index)}                          \\ \hline
 \multirow{12}{*}{\makecell{$\MQCSP$\\(Def.~\ref{def:main_bmcsp})}} & \cellcolor{LightCyan} $\MQCSP\in\class{QCMA}$                                                                                                     & \cellcolor{LightCyan} Theorem~\ref{thm:informal_mqcsp_crypto}  (Theorem~\ref{thm:QCMA_MCSP})                           \\ \cline{2-3} 
& \cellcolor{LightCyan} $\MQCSP\in\class{BQP}$ $\Rightarrow$ No \textsf{qOWF}                                                                       & \cellcolor{LightCyan} Theorem~\ref{thm:informal_mqcsp_crypto}  (Theorem~\ref{thm:mqcsp_qowf})                           \\ \cline{2-3}
& \cellcolor{LightCyan} $\class{SZK}\leq \MQCSP$                                                            & \cellcolor{LightCyan} Theorem~\ref{thm:informal_mqcsp_crypto}  (Theorem~\ref{thm:szk_mcsp})                           \\ \cline{2-3}
& \cellcolor{LightCyan} $\mBMCSP$ is $\class{NP}$-hard    under a natural gate set                                                                 & \cellcolor{LightCyan} Theorem~\ref{thm:informal_mqcsp_crypto}  (Theorem~\ref{thm:mqcsp_multi hard})                           \\ \cline{2-3}
& \cellcolor{LightCyan} $i{\cal O}+\MQCSP\in\class{BQP}$ $\Rightarrow$ $\class{NP}\subseteq\class{coRQP}$   & \cellcolor{LightCyan} Theorem~\ref{thm:informal_mqcsp_crypto}  (Theorem~\ref{thm:qioimplication})                           \\ \cline{2-3}
& \cellcolor{LightYellow} PAC learning for $\class{BQP/poly}$ $\Leftrightarrow$ $\MQCSP\in\class{BPP}$                                                & \cellcolor{LightYellow} Theorem~\ref{thm:pac_informal}  (Theorem~\ref{thm:PAC learning and MQCSP})                           \\ \cline{2-3}
& \cellcolor{LightYellow} $\class{BQP}$ learning $\Leftrightarrow$ $\MQCSP\in \class{BQP}$                                                           & \cellcolor{LightYellow} Theorem~\ref{thm:learning_informal}  (Theorem~\ref{thm:quantum learning MCSP})                           \\ \cline{2-3}
& \cellcolor{LightYellow} $\MQCSP\in\class{BQP}\Rightarrow \class{BQE}\not\subset\class{BQC}[n^k],\ \forall k\in\mathbb{N}_+$ & \cellcolor{LightYellow} Theorem~\ref{thm:informal_circuit_lb}  (Theorem~\ref{thm:ckt lb from MQCSP in BQP max})                           \\ \cline{2-3}
& \cellcolor{LightCyan} $\MQCSP\in\class{BQP}\Rightarrow \class{BQP^{QCMA}}\not\subset\class{BQC}[n^k],\ \forall k\in\mathbb{N}_+$ & \cellcolor{LightCyan} Theorem~\ref{thm:informal_circuit_lb}  (Theorem~\ref{thm:magnification})                           \\ \cline{2-3}
& \cellcolor{LightYellow} $\MQCSP\in\class{BQP}$ $\Rightarrow$ Hardness amplification                                                                 & \cellcolor{LightYellow} Theorem~\ref{thm:amp_intro}  (Theorem~\ref{thm:mqcsp_hardness_amp})                           \\ \cline{2-3}
& \cellcolor{LightCyan} Hardness magnification for $\class{MQCSP}$                                                                & \cellcolor{LightCyan} Theorem~\ref{thm:magnify_informal}  (Theorem~\ref{thm:magnification})                           \\ \cline{2-3}
& \cellcolor{LightYellow} $\class{QETH}\Rightarrow$ quantum hardness of $\MQCSP^\star$                                   & \cellcolor{LightYellow} Theorem~\ref{thm:mqcsps_informal}  (Theorem~\ref{thm:qeth_hard_mqcsps})                           \\ \hline                          
                          
\multirow{8}{*}{\makecell{$\UMCSP$\\(Def.~\ref{def:umcsp})}} & \cellcolor{LightRed} $\UMCSP\in\class{QCMA}$                                                                                                    & \cellcolor{LightRed} Theorem~\ref{thm:smcspumcspqcma}  (Theorem~\ref{thm:umcspqcma})                                  \\ \cline{2-3} 
                          & \cellcolor{LightRed} Search-to-decision reduction for $\UMCSP$                                                                      & \cellcolor{LightRed} Theorem~\ref{thm:informal_reduction}  (Theorem~\ref{thm:search2decision_umcsp})                              \\ \cline{2-3} 
                          & \cellcolor{LightRed} $\textsf{gap}$-$\MQCSP\leq\UMCSP$                                                                                          & \cellcolor{LightRed} Theorem~\ref{thm:informal_reduction}  (Theorem~\ref{thm:b2u})                              \\ \cline{2-3} 
                          & \cellcolor{LightYellow} $\UMCSP\in\class{BQP}$                          &  \cellcolor{LightYellow} \\
                          & \cellcolor{LightYellow} $\Rightarrow$ No pseudorandom unitaries and no \textsf{qOWF} & \multirow{-2}{*}{\cellcolor{LightYellow} (Theorem~\ref{thm:pru}, Corollary~\ref{cor:umcsp_owf})} \\ \cline{2-3} 
                          & \cellcolor{LightYellow} $i\mathcal{O}+\UMCSP\in\class{BQP}$ $\Rightarrow$ $\class{NP}\subseteq\class{coRQP}$ & \cellcolor{LightYellow} (Corollary~\ref{cor:umcsp_io})                                                      \\ \cline{2-3} 
                          & \cellcolor{LightYellow} $\UMCSP\in\class{BQP}$ $\Rightarrow$ Hardness amplification for $\class{BQP}$                                              & \cellcolor{LightYellow} (Corollary~\ref{cor:umcsp_amp})                                                     \\ \cline{2-3} 
                          & \cellcolor{LightYellow} $\UMCSP\in\class{BQP}$ $\Rightarrow$ $\class{BQE}\not\subset\class{BQP}[n^k],\ \forall k\in\mathbb{N}$                     & \cellcolor{LightYellow} (Corollary~\ref{cor:umcsp_bqe})                                                     \\ \hline
\multirow{8}{*}{\makecell{$\SMCSP$\\(Def.~\ref{def:smcsp})}} & \cellcolor{LightCyan} $\SMCSP$ can be verified via $\class{QCMA}$                                                                                & \cellcolor{LightCyan} Theorem~\ref{thm:smcspumcspqcma}  (Theorem~\ref{thm:SQCMA_UMCSP})                                  \\ \cline{2-3} 
                          & \cellcolor{LightRed} Search-to-decision reduction for $\SMCSP$                                                                      & \cellcolor{LightRed} Theorem~\ref{thm:informal_reduction}  (Theorem~\ref{thm:s2d_smcsp})                              \\ \cline{2-3} 
                          & \cellcolor{LightRed} Self-reduction for $\SMCSP$                                                                                    & \cellcolor{LightRed} Theorem~\ref{thm:informal_reduction}  (Theorem~\ref{thm:smcsp_self_reduction})                              \\ \cline{2-3} 
                          & \cellcolor{LightRed} $\SMCSP\in\class{BQP}$                              & \cellcolor{LightRed} \\
                          & \cellcolor{LightRed} $\Rightarrow$ No pseudorandom states and no \textsf{qOWF} & \multirow{-2}{*}{\cellcolor{LightRed} Theorem~\ref{thm:smcsp_app_informal}  (Theorem~\ref{thm:smcsp_no qOWF})}                              \\ \cline{2-3} 
                          & \cellcolor{LightRed} Assume conjectures from physics & \cellcolor{LightRed}  \\ & \cellcolor{LightRed} $\SMCSP$ $\Rightarrow$ Estimating wormhole's volume                       & \multirow{-2}{*}{\cellcolor{LightRed} Theorem~\ref{thm:smcsp_app_informal}  (Theorem~\ref{thm:smcsp_wormwhole})}                                \\ \cline{2-3} 
                          & \cellcolor{LightRed} Succinct state tomography $\leq \SMCSP$                            & \cellcolor{LightRed} Theorem~\ref{thm:smcsp_app_informal}  (Theorem~\ref{thm:smcsp_succinct tomo})                              \\ \hline
\end{tabular}
\caption{Summary of our results. A result with color \colorbox{LightCyan}{Blue} is a direct extension from its classical analog. A result with color \colorbox{LightYellow}{Yellow} requires additional techniques. A result with color \colorbox{LightRed}{Red} is unique in the quantum setting.}
\label{tab:UMCSP SMCSP}
\end{table}

\newpage
\section{Preliminaries}\label{sec:main-prelim}
We start with a brief overview of quantum computation and complexity theory. We recommend the standard textbook~\cite{NC00} for a more comprehensive treatment.

\subsection{Quantum states, unitary transformations, and quantum circuits}

To give a brief introduction to the quantum computing, we divide the computation into three parts: \textit{input}, \textit{process}, and \textit{output}.

\paragraph{Input.} In quantum computing, we represent information in quantum states using qubits. 
\begin{definition}[Pure quantum state]
A pure quantum state $\ket{\psi}$ on $n$ qubits is represented as a unit vector in $\C^{2^n}$, $\ket{\psi} = (c_0,c_1,\dots,c_{2^n-1})^T$, where $c_i \in \C$ for $i\in\{0,\dots,2^n-1\}$ and $\sum_{i} |c_i|^2 = 1$.
\end{definition}

For example, $\ket{0},\ket{1},\dots,\ket{2^n-1}$ represent $n$-bit classical messages, $0,1,\dots,2^n-1$. For convenience, we sometimes denote $N=2^n$. Mathematically, one can think of $\ket{i}$ as the column vector with the $(i+1)$-th entry being $1$ and $0$ elsewhere. The input to quantum computers can be any quantum state. For classical problems, we can encode the classical input $x\in \{0,1\}^n$ as the quantum state $\ket{x}\in \C^{2^n}$. In general, any pure quantum state $\ket{\psi}$ can be represented as $\sum_{i} c_i\ket{i}$ for some $c_0,\dots,c_{2^n-1}\in \C$ with $\sum_i|c_i|^2=1$. The complex conjugate of $\ket{\psi}$ is denoted as a row vector $\bra{\psi} = (c_0^*,\cdots,c_{2^n-1}^*)$.\footnote{In general, people consider mixed state for quantum information. However, pure states suffice for our purpose. }

\paragraph{Quantum process.} Quantum process for quantum states is defined as a unitary transformation. 
\begin{definition}[Unitary transformation]
A unitary transformation $U$ for an $n$-qubit quantum state is an isomorphism in the $2^n$-dimensional Hilbert space. For convenience, we view $U$ as a $2^n\times2^n$ matrix satisfying that $U\in \C^{N\times N}$ with $UU^\dagger = U^\dagger U = I$ where $U^\dagger$ is the Hermitian adjoint of $U$. 
\end{definition}

We can represent an $n$-qubit quantum process acting on an $n$-qubit state $\ket{\psi}$ as $\ket{\psi}\mapsto U\ket{\psi}$ as a unitary matrix $U$ in $\C^{2^n\times 2^n}$. Note that a unitary matrix must preserve the norm of the input state. Thus any unitary transformation is reversible. To implement a unitary transformation, we pick a set of \textit{local unitary operations} that can generate any unitary transformation with arbitrary precision.

\begin{definition}[Universal quantum gate set]
A quantum gate set $\g$ is a set of unitaries such that for any unitary transformation $U$, $U$ can be approximated by a finite sequence of gates in $\g$. 
\end{definition}
For example, $\{\class{Toffoli},\class{H}\}$ is a universal gate set~\cite{shi2002both}. In this work, we consider gate sets which only contain unitaries with constant dimensions. 

Note that choosing different universal gate sets may cause the circuit complexity of the same object to be different.  However, the Solovay-Kitaev theorem shows that one universal gate set can approximate another one at a modest cost. 

\begin{theorem}[Solovay-Kitaev Theorem]\label{thm:sk}
Let $\g$ and $\g'$ be two universal gate sets. Then, any $s$-gate circuit $\Circ$ using gates from $\g$ can be approximated to precision $\epsilon$ by a $s\poly\log\frac{s}{\epsilon}$-gates circuit $\Circ'$ using gates from $\g'$. We say $\Circ$ approximates to $\Circ'$ with precision $\epsilon$ if 
\begin{align*}
    \|\Circ - \Circ'\| \leq \epsilon, 
\end{align*}
where $\|\cdot\|$ is $L_2$ norm. 
\end{theorem}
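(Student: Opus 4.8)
The plan is to reduce the claim about $s$-gate circuits to the problem of approximating a single unitary of constant dimension, and then to prove that by the classical Solovay--Kitaev recursion. Concretely, write $\Circ = g_s g_{s-1}\cdots g_1$ with each $g_i\in\g$ acting on $O(1)$ qubits. If for every $i$ we can construct a $\g'$-circuit $\Circ'_i$ with $\|g_i-\Circ'_i\|\le\epsilon/s$, then a telescoping argument using $\|AB-A'B'\|\le\|A-A'\|+\|B-B'\|$ for unitaries shows $\Circ':=\Circ'_s\cdots\Circ'_1$ satisfies $\|\Circ-\Circ'\|\le s\cdot(\epsilon/s)=\epsilon$. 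Since each $g_i$ is a fixed unitary of constant dimension $d$ (tensored with the identity on the other wires), it suffices to prove: for constant $d$ and any target precision $\delta$, every $U\in SU(d)$ can be approximated to within $\delta$ by a word over $\g'$ of length $\polylog(1/\delta)$. Taking $\delta=\epsilon/s$ then yields $\Circ'$ with $s\cdot\polylog(s/\epsilon)$ gates. (We work in $SU(d)$ up to an overall phase, which is harmless here, and assume as usual that $\g'$ is closed under inverses.)

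For the single-unitary statement, we run the Solovay--Kitaev recursion. By universality of $\g'$ and compactness of $SU(d)$ there are constants $\epsilon_0>0$ and $\ell_0$ such that the set $S_0$ of all $\g'$-words of length $\le\ell_0$ is an $\epsilon_0$-net of $SU(d)$, where $\epsilon_0$ is taken as small as the recursion below requires. Inductively, suppose $S_k$ is an $\epsilon_k$-net realized by words of length $\ell_k$. Given $U$, pick $\tilde U\in S_k$ with $\|U-\tilde U\|\le\epsilon_k$; then the residual $\Delta:=\tilde U^{\dagger}U$ obeys $\|\Delta-I\|\le\epsilon_k$. The key structural fact is the ``shrinking lemma'': any $\Delta$ with $\|\Delta-I\|\le\eta$ is \emph{exactly} a group commutator $\Delta=VWV^{\dagger}W^{\dagger}$ with $\|V-I\|,\|W-I\|=O(\sqrt{\eta})$. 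Approximate $V,W$ by $\tilde V,\tilde W\in S_k$ to within $\epsilon_k$; by the stability estimate that perturbing each factor of a group commutator of $O(\sqrt{\eta})$-near-identity elements by $\mu$ changes the commutator by only $O(\sqrt{\eta}\,\mu)$, the word $\tilde V\tilde W\tilde V^{\dagger}\tilde W^{\dagger}$ approximates $\Delta$ to within $O(\sqrt{\epsilon_k}\cdot\epsilon_k)=O(\epsilon_k^{3/2})$. Outputting $\tilde U\,\tilde V\tilde W\tilde V^{\dagger}\tilde W^{\dagger}$ thus exhibits an $\epsilon_{k+1}$-net $S_{k+1}$ with $\ell_{k+1}=5\ell_k$ and $\epsilon_{k+1}=C\epsilon_k^{3/2}$ for an absolute constant $C$.

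Finally, putting $u_k:=C^2\epsilon_k$ turns the recursion into $u_{k+1}=u_k^{3/2}$, so $u_k=u_0^{(3/2)^k}$; choosing $\epsilon_0$ small enough that $u_0<1$ forces $\epsilon_k\to 0$ doubly exponentially, hence $\epsilon_k\le\delta$ after $k=O(\log\log(1/\delta))$ rounds, while $\ell_k=5^k\ell_0=2^{O(\log\log(1/\delta))}=\polylog(1/\delta)$. Combined with the reduction above, this proves the theorem.

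The one genuinely nontrivial ingredient is the shrinking lemma together with its perturbation bound: this is exactly where the contracting exponent $3/2$, and hence the $O(\log\log(1/\delta))$ depth of the recursion, comes from, while everything else is bookkeeping and norm estimates. I would establish it on the Lie algebra $\mathfrak{su}(d)$ via the Baker--Campbell--Hausdorff expansion: writing $V=e^{A}$, $W=e^{B}$ one has $VWV^{\dagger}W^{\dagger}=\exp\big([A,B]+(\text{higher-order terms in }A,B)\big)$ with $\|[A,B]\|=O(\|A\|\,\|B\|)$, so an inverse-function-theorem / fixed-point argument yields, for any prescribed $\Delta=e^{C}$ with $\|C\|\le\eta$, elements $A,B$ of norm $O(\sqrt{\eta})$ realizing $VWV^{\dagger}W^{\dagger}=\Delta$ exactly; differentiating the same expansion controls the motion of the commutator under perturbations of $A$ and $B$.
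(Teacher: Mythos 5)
The paper does not prove this statement --- it is imported as the classical Solovay--Kitaev theorem, stated without proof. Your sketch is the standard Dawson--Nielsen argument (gate-by-gate reduction with the telescoping bound, $\epsilon_0$-net base case, balanced group-commutator shrinking lemma giving the $\epsilon_{k+1}=C\epsilon_k^{3/2}$, $\ell_{k+1}=5\ell_k$ recursion), and it is correct, including the bookkeeping that yields $\polylog(1/\delta)$ word length per gate and hence $s\polylog(s/\epsilon)$ gates overall.
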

We will formally state Solovay-Kitaev Theorem when defining the problems of quantum circuit complexity.

We can represent quantum algorithms as quantum circuits by using a sequence of quantum gates from a universal quantum gate set.   
\begin{definition}[Quantum circuit $\class{QC}(s,t,\g)$]\label{def:main_d_qc}
Let $s,t:\N\rightarrow \N$ and $\g$ be a universal quantum gate set. A quantum circuit family $\{\Circ_n: n>0\}$ is in $\class{QC}(s,t,\g)$ if the following holds: For all $n>0$, 
\vspace{-1mm}
\begin{itemize}
\setlength\itemsep{-1mm}
    \item the input to $\Circ_n$ is an $n$-qubit quantum state $\ket{\psi}$;
    \item $\Circ_n$ extends the input layer with $t(n)$ ancilla qubits, where these ancilla qubits are initiated to $\ket{0^{t(n)}}$; 
    \item $\Circ_n$ applies $s(n)$ gates from $\g$ on the initial state $\ket{\psi}\ket{0^{t(n)}}$. 
\end{itemize}
\end{definition}
Here, in addition to the qubits for the input, the circuit can also have ancilla qubits as its working space. We say that a quantum algorithm is efficient if its corresponding circuit has circuit size at most polynomial in the input size. In the rest of the paper, we may write $\class{QC}(s,t,\g)$ as $\class{QC}(s)$ if the number of ancilla qubits is at most $O(s)$.

\paragraph{Output.} The outputs of quantum circuits defined in Definition~\ref{def:main_d_qc} are quantum states. To extract useful information from a quantum state $\ket{\psi}$, one can \emph{measure} the state. Mathematically, a measurement is simply a sampling process. For example, if we measure $\ket{\psi}$ in the \textit{computational basis}, i.e., $\{\opro{0}{0},\dots,\opro{2^n-1}{2^n-1}\}$, we get the output being index $i$ with probability $|c_i|^2$. In general, we can measure a state $\ket{\psi}$ on any orthogonal basis $\mathcal{B}$ for $\C^{2^n}$. Mathematically, this is equivalent to a change of basis via a \textit{unitary transformation}. 

In summary, a quantum algorithm for some Boolean function is as follows: Given $\ket{x}$, apply a quantum circuit $\Circ$ on state $\ket{x,0^{t(n)}}$, and then measure the state $\Circ\ket{x,0^{t(n)}}$ in the computational basis. If $\Circ$ computes $f$, then the measurement outcome will be $f(x)$ with probability good enough (e.g., $\geq 2/3$).  Note that a quantum process can have measurements in the middle of the computation in general. In this case, the process is not reversible any more. However, we can always defer these measurements until all the unitaries have been applied by adding ancilla qubits.  Therefore, for simplicity, we will only consider processes represented as unitaries followed by a computational-basis measurement.

\begin{remark}[Deferring measurements]
Let $\mathcal{M}_i$ be the computational-basis measurement on the $i$-th qubit. Let $\ket{\psi}$ be any $n$-qubit state and $U,V$ be any $n$-qubit unitaries. Then, the process $U\circ \mathcal{M}_i \circ V$  operating on $\ket{\psi}$ is equivalent to $\mathcal{M}_{n+1}\circ U\circ \mathsf{CNOT}_{i,n+1} \circ V$, where $\mathsf{CNOT}_{i,n+1}$ has the $i$-th qubit as the control qubit and the $n+1$-th qubit as the target qubit.    
\end{remark}

\subsection{Quantum complexity classes}
\label{sec:complex}
We introduce quantum complexity classes that are related to our study on the quantum MCSP. The classes we define in below are actually $\class{PromiseBQP}$ and $\class{PromiseQCMA}$. To avoid abuse of notation, we just denote them as $\class{BQP}$ and $\class{QCMA}$. 

We first give the definition of the quantum analogue of $\class{BPP}$ and $\class{P}$.  
\begin{definition}[$\class{BQP}$] 
A promise problem $P=(P_Y,P_N)$ is in $\class{BQP}$ if there exists a polynomial-time classical Turing Machine that on input $1^n$ for any $n\in \N$ outputs the description of a quantum circuit $\Circ_n$ with $\poly(n)$ gates and $\poly(n)$ ancilla qubits such that for $x\in \{0,1\}^n$ the following holds: 
\begin{enumerate}
    \item if $x\in P_Y$, $\Pr[M_1\circ\Circ_n\ket{x,0^{t}}=1]\geq 2/3$;  
    \item if $x\in P_N$, $\Pr[M_1\circ\Circ_n\ket{x,0^{t}}=1]\leq 1/3$, 
\end{enumerate}
where $M_1$ is the computational-basis measurement on the first qubit of the given state.  
\end{definition}

We also consider the quantum analogue of $\class{NP}$ and $\class{MA}$ in this work. 
\begin{definition}[$\class{QCMA}$]
A promise problem $P=(P_Y,P_N)$ is in $\class{QCMA}$ if there exists a quantum polynomial-time (QPT) algorithm $V$ such that
\begin{enumerate}
    \item for $x\in P_Y$, there exists $w\in \{0,1\}^{\poly(n)}$ such that $\Pr[V(x,w)=1]\geq 2/3$;
    \item for $x\in P_N$, for all $w\in \{0,1\}^{\poly(n)}$, $\Pr[V(x,w)=1]\leq 1/3$. 
\end{enumerate}  
\end{definition}
Another quantum analogue of $\class{MA}$ and $\class{NP}$ is called $\class{QMA}$. The difference between $\class{QMA}$ and $\class{QCMA}$ is that $\class{QMA}$ allows the certificates to be quantum states. This difference makes $\class{QCMA}\subseteq \class{QMA}$\footnote{One may expect that the quantum certificate gives the malicious prover more power to cheat in the soundness case. However, it can be shown that the existence of such a cheating prover in $\class{QMA}$ would also imply a cheating prover in $\class{QCMA}$ by the convexity of quantum states.}.

We also consider the class $\class{RQP}$, which is the one-sided error version of $\class{BQP}$:
\begin{definition}[$\class{RQP}$]
A promise problem $P=(P_Y, P_N)$ is in $\class{RQP}$ if there exists a QPT algorithm ${\cal A}$ such that
\begin{enumerate}
    \item for $x\in P_Y$, then $\Pr[{\cal A}(x)=1]\geq \frac{1}{2}$;
    \item for $x\in P_N$, then $\Pr[{\cal A}(x)=1]=0$.
\end{enumerate}
\end{definition}

\subsection{Nonuniform quantum circuit complexity classes}
With the mathematical background of quantum computing, we can define nonuniform quantum circuit complexity classes. We define the quantum analogues of $\MCSP$ as promise problems. (We will justify the reason later in Section~\ref{sec:main-QMCSP}.) Therefore, we also define complexity classes for promise problems. A promise problem is defined as $P = \{P^n\}$, where $P^n = (P^n_Y,P^n_N)$ satisfying $P^n_Y\cap P^n_N = \emptyset$ and $P^n_Y\cup P^n_N \subseteq \{0,1\}^n$. We say a promise problem $P$ is in some class $\class{C}$ if there exists a language $L\in\class{C}$ such that $P_Y\subseteq L$ and $P_N\subseteq \{0,1\}^*\setminus L$. In other words, for $x\in\{0,1\}^* \setminus P$, the answer could be arbitrary. Note that promise problems are naturally considered in quantum computing; for example, the local Hamiltonian problem~\cite{KSV02} (which is $\class{QMA}$-complete) and Identity check on basis states~\cite{wjb03} (which is $\class{QCMA}$-complete.) 

\begin{definition}[$\class{BQC}(s,t,\g)$]\label{def:main_bqc}
Let $s,t:\N\rightarrow \N$ and $\g$ be a quantum gate set. $\class{BQC}(s,t,\g)$ is the set of promise problems $P=\{P^{n}: n>0\}$ for which there exists a circuit family $\{\Circ_n: n>0\}\in \class{QC}(s,t,\g)$ such that for $n>0$, for any $x$ where $|x| =n$, 
\vspace{-1mm}
\begin{itemize}
\setlength\itemsep{-1mm}
    \item if $x\in P^{n}_Y$, then $\Pr[M_1\circ\Circ_n\ket{x,0^t}=1]\geq 2/3$; 
    \item if $x\in P^{n}_N$, $\Pr[M_1\circ\Circ_n\ket{x,0^t}=1]\leq 1/3$.
\end{itemize}
Here, $M_1$ is the computational-basis measurement on the first qubit. 
\end{definition}
In the rest of the paper, we will write $\class{BQC}(s,t,\g)$ as $\class{BQC}(s)$ for simplicity if the number of ancilla qubits is at most $O(s)$. 

In addition to $\class{BQC}$, we will also consider quantum complexity classes such as $\class{QMCA}$ and $\class{BQP}$. For the same reason, the classes we consider are actually $\class{PromiseBQP}$ and $\class{PromiseQCMA}$. To avoid abuse of notation, we just denote them as $\class{BQP}$ and $\class{QCMA}$. Also, when $\class{NP}$ is mentioned, we are actually considering $\class{PromiseNP}$. The formal definitions of these classes are given in Appendix~\ref{sec:complex}. 




\section{Minimum Quantum Circuit Size Problems}\label{sec:main-QMCSP}

We start off the quantum $\MCSP$ program by giving the definitions of various quantum analogs of the classical $\MCSP$ in Section~\ref{sec:definitions} and investigating some basic complexity-theoretic results in Section~\ref{sec:MQCSP upper bound} and Section~\ref{sec:MQCSP unconditional hardness}.

\subsection{Problem definitions}\label{sec:definitions}
While classical computation works on Boolean strings, quantum computation works on unit complex vectors. Thus, there are multiple natural notions of $\MCSP$ that can be defined and studied in the quantum realm. But first let us formally define the classical $\class{MCSP}$ as follows.

\begin{definition}[Classical $\class{MCSP}$]~\label{def:main_mcsp}
Let $n,s\in \N$\footnote{For every Boolean function, there is a circuit with size at most $O(2^n/n)$. Therefore, one can suppose $s$ is at most $O(2^n/n)$. Besides, one can also consider $s$ is given in unary, such that the problem is still well-defined in the sense that it is trivially in $\class{NP}$.}. Let $f: \{0,1\}^n\rightarrow \{0,1\}$ be a Boolean function. The problem is, given the truth table $\T(f)$ of $f$ and the size parameter $s$ in unary, decide if there exists a classical Boolean circuit $C$ of size at most $s$ such that $C(x) = f(x)$ for all $x\in \{0,1\}^n$. 
\end{definition}

Note that $\MCSP\in\class{NP}$ because given a truth table $\T(f)$ a circuit $C$, we can verify whether $C(x)=f(x)$ for all $x\in\{0,1\}^n$ in $\poly(|\T(f)|,1^s)$ time. On the other hand, when $s= \Omega(n)$, the number of circuits of size at most $s$ is $2^{\Theta(s\log s)}$, which is $2^{\omega(n)}$ by the counting argument. Besides, for every Boolean function, there exists a circuit with size at most $O(2^n/n)$~\cite{Lupanov58}; therefore, we can suppose the $s = O(2^n/n)$, which implies that brute-force search takes $2^{O(2^n)}$ time to solve $\MCSP$ in the worst case and it is the best known algorithm for $\MCSP$.\\

As quantum computation is generally believed to be more powerful than classical computation, it is likely that the quantum circuit complexities for some Boolean functions are much different from their classical circuit complexities. Specifically, quantum circuits can create quantum entanglement between qubits that cannot be simulated classically. Therefore, we define the following problem for studying the quantum circuit complexity of the given Boolean function.

\begin{definition}[$\MQCSP_{\alpha,\beta}$]\label{def:main_bmcsp}
Fix a universal gate set $\g$. Let $n,s,t\in \N$ and $t\leq s$. Let $f:\{0,1\}^n\rightarrow \{0,1\}$ be a Boolean function. Let $\alpha,\beta\in (1/2,1)$ such that $\alpha-\beta\geq \frac{1}{\poly(2^n)}$. $\MQCSP$ is a promise problem defined as follows.
\vspace{-1mm}
\begin{itemize}
\setlength\itemsep{-1mm}
\item \textbf{Inputs:} the truth table $\T(f)$ of $f$, the size parameter $s$ in unary representation, and the ancilla parameter $t$. 
\item \textbf{Yes instance:} there exists a quantum circuit $\Circ$ using at most $s$ gates and operating on at most $n+t$ qubits such that for all $x\in \{0,1\}^n$,  
$\|(\bra{f(x)}\otimes I_{n+t-1})\Circ\ket{x,0^{t}}\|^2 \geq \alpha$.
\item \textbf{No instance:} for every quantum circuit $\Circ$ using at most $s$ gates and operating on at most $n+t$ qubits, there exists $x\in\{0,1\}^n$ such that $\|(\bra{f(x)}\otimes I_{n+t-1})\Circ\ket{x,0^t}\|^2 \leq \beta$. 
\end{itemize}
With the promise that the input must be either a yes instance or a no instance, the problem is to decide whether the input is a yes instance or not. 
\end{definition}

\begin{remark}
Here, we set the thresholds for the yes and no instances to be $\alpha, \beta$ such that $1/2<\beta<\alpha<1$ and $\alpha-\beta>\frac{1}{\poly(2^n)}$. We require $\alpha$ and $\beta$ to be greater than $1/2$ because a quantum circuit that outputs a uniformly random bit (e.g., measure $\ket{+}$ in the computational basis) can compute $f(x)$ with $1/2$ probability for all $x$. For simplicity, in the rest of the work, we will ignore the subscription $\alpha,\beta$ and will specify them when it is necessary.
\end{remark}


For $\MQCSP$, which gate set $\g$ is used is another important parameter to be considered. One may ask if circuit complexity can significantly change when considering different $\g$. Fortunately, according to the Solovay-Kitaev Theorem in Theorem~\ref{thm:sk}, we can conclude that any $s$-gate circuit using gates from $\g$ can be $\epsilon$-approximated by an ($s\cdot\polylog\frac{s}{\epsilon}$)-gate circuit from another universal gate set. Hence, the circuit complexity only modestly changes when considering different gate sets. 

\begin{claim}\label{claim:gateset2gateset}
Fix two universal gate sets $\g$ and $\g'$. Suppose that there exists a $s$-gate circuit $\Circ$ that uses gates from $\g$ such that for all $x$, $\|(\bra{f(x)}\otimes I_{n+t-1})\Circ\ket{x,0^{t}}\| \geq 1-\delta$. Then, there exists another circuit $\Circ'$ that uses $s\cdot\polylog\frac{s}{\epsilon}$ gates in $\g'$ such that $\|(\bra{f(x)}\otimes I_{n+t-1})\Circ\ket{x,0^{t}}\| \geq 1-\delta-\epsilon^2/2$. 
\end{claim}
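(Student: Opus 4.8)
The plan is to reduce Claim~\ref{claim:gateset2gateset} directly to the Solovay--Kitaev Theorem (Theorem~\ref{thm:sk}) together with a routine perturbation estimate on the relevant amplitude. First I would invoke Theorem~\ref{thm:sk} with the target precision $\epsilon$: given the $s$-gate circuit $\Circ$ over $\g$, there is a circuit $\Circ'$ over $\g'$ with $s\cdot\polylog\frac{s}{\epsilon}$ gates such that $\|\Circ-\Circ'\|\le\epsilon$ in operator ($L_2$) norm. The point is then to track how this operator-norm closeness propagates to the quantity $\|(\bra{f(x)}\otimes I_{n+t-1})\Circ'\ket{x,0^t}\|$ that appears in the definition of $\MQCSP$.

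The key computation is a triangle-inequality argument. Write $P_x=\opro{f(x)}{f(x)}\otimes I_{n+t-1}$ for the projector and let $\ket{\phi}=\ket{x,0^t}$. Then
\begin{align*}
\bigl|\,\|P_x\Circ'\ket{\phi}\| - \|P_x\Circ\ket{\phi}\|\,\bigr|
&\le \|P_x(\Circ'-\Circ)\ket{\phi}\|
\le \|\Circ'-\Circ\|\cdot\|\ket{\phi}\|
\le \epsilon,
\end{align*}
since $P_x$ is a projector and $\ket{\phi}$ is a unit vector. Hence $\|P_x\Circ'\ket{\phi}\|\ge 1-\delta-\epsilon$. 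This already gives a clean statement; if one instead wants the stated bound $1-\delta-\epsilon^2/2$, then one should apply Solovay--Kitaev with precision $\epsilon^2/2$ in place of $\epsilon$ (absorbing the change into the $\polylog$ factor, since $\polylog\frac{s}{\epsilon^2/2}=\polylog\frac{s}{\epsilon}$), and the same triangle inequality yields $\|P_x\Circ'\ket{\phi}\|\ge 1-\delta-\epsilon^2/2$ for every $x$. The $\epsilon^2/2$ form is presumably chosen so that it lines up with the earlier conversion between $L_2$ operator distance and fidelity-type quantities (the remark ``$\|\Circ-\Circ'\|\le\epsilon$ if \dots'' in the Solovay--Kitaev statement), so I would phrase the final step to match whichever precision convention the surrounding text uses.

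There is essentially no deep obstacle here — the statement is a straightforward corollary of Solovay--Kitaev — so the ``hard part'' is only bookkeeping: making sure the ancilla register is handled correctly (both circuits act on the same $n+t$ qubits, so the norms compare directly with no partial-trace subtleties), confirming that the gate-count blow-up $s\cdot\polylog\frac{s}{\epsilon}$ is stated uniformly over all inputs $x$ (it is, since $\Circ'$ is a single fixed circuit approximating $\Circ$, independent of $x$), and being careful that the inequality is stated for amplitudes $\|\cdot\|$ rather than squared amplitudes $\|\cdot\|^2$ as in Definition~\ref{def:main_bmcsp}, so that the loss is additive in $\epsilon$ (or $\epsilon^2/2$) rather than something messier. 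I would close by remarking that this claim is exactly what licenses us to fix an arbitrary convenient universal gate set throughout the paper, changing circuit sizes only by polylogarithmic factors and approximation parameters only negligibly.
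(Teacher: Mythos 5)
Your proposal is correct and rests on the same two ingredients as the paper's proof: the Solovay--Kitaev Theorem (Theorem~\ref{thm:sk}) to produce a single circuit $\Circ'$ over $\g'$ with $\|\Circ-\Circ'\|\le\epsilon$ and gate count $s\cdot\polylog\frac{s}{\epsilon}$, followed by an elementary perturbation estimate; the only divergence is in that second step. The paper converts the $L_2$ distance into a fidelity-type loss via $|\ipro{\psi}{\phi}|\ge 1-\frac{1}{2}\|\ket{\psi}-\ket{\phi}\|^2$, which is where the additive $\epsilon^2/2$ in the statement comes from with Solovay--Kitaev run at precision $\epsilon$. You instead use the reverse triangle inequality $\bigl|\,\|P_x\Circ'\ket{x,0^t}\|-\|P_x\Circ\ket{x,0^t}\|\,\bigr|\le\|\Circ-\Circ'\|$ for the projector $P_x=\opro{f(x)}{f(x)}\otimes I_{n+t-1}$, which costs an additive $\epsilon$ rather than $\epsilon^2/2$, and you correctly recover the stated bound by rerunning Solovay--Kitaev at precision $\epsilon^2/2$ --- harmless for the gate count since $\polylog\frac{2s}{\epsilon^2}=\Theta(\polylog\frac{s}{\epsilon})$. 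Both routes prove the claim, but your bookkeeping is arguably the more airtight of the two: the distance-to-fidelity inequality compares two \emph{unit} vectors, whereas the quantity being bounded is a projected amplitude, and when $\delta>0$ the projection of the perturbation can partially cancel the $\ket{f(x)}$-component of $\Circ\ket{x,0^t}$, contributing a cross term of order $\epsilon\sqrt{\delta}$ that the paper's one-line argument does not track; the triangle inequality sidesteps this entirely at the price of the slightly stronger precision requirement. Your remarks that $\Circ'$ is a single circuit uniform over all $x$, that both circuits act on the same $n+t$ qubits so no partial-trace issues arise, and that the claim is what licenses fixing a convenient gate set throughout, all match the paper's (largely implicit) treatment.
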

\begin{proof}
The proof simply follows from the Solovay-Kitaev Theorem in Theorem~\ref{thm:sk}. The only subtlety is that the distance measure in Theorem~\ref{thm:sk} is $L_2$ norm distance. However, for any two states $\ket{\psi}$ and $\ket{\phi}$, we have $|\langle \psi|\phi\rangle| \geq 1-\frac{1}{2}\|\ket{\psi} - \ket{\phi}\|^2$. Thus, we can obtain the lower bound for $\|(\bra{f(x)}\otimes I_{n+t-1})\Circ\ket{x,0^{t}}\|$ by using the $L_2$ norm between $\Circ$ and $\Circ'$.
\end{proof}

In this work, we mainly focus on arbitrary gate sets containing one- and two-gates and $|\g| = O(1)$. However, for some applications, we may require a particular gate set such as $\{\class{Toffoli},\class{H}\}$. We will specify $\g$ when it is necessary. We assume $t\leq s$ without loss of generality since we mainly consider the gate set $\g$ to have one- and two-qubit gates. Specifically, if there are more than $s$ ancilla qubits, there must be ancilla qubits that are not used by any gate.


We define the problem as a promise problem for two reasons: first, applying measurements on quantum states generally gives probabilistic outputs. Similar to many probabilistic algorithms, we say a quantum algorithm solves a problem if it outputs the answer with high probability in general. Check the definition of $\class{BQP}$ for an example. Along this line, we expect a quantum circuit $\Circ$ to implement the given Boolean function $f$ with high probability, i.e., for each input $x$, the circuit outputs $f(x)$ with high probability. The second reason is about verifying the circuit. Consider the case where $\Circ$ only fails on one $x$ with success probability $2/3-\epsilon$, where $\epsilon$ is some extremely small number.  In this case, it is hard to verify the circuit efficiently. Therefore, we require a gap for efficient verification and say that $\Circ$ does not implement $f$ if it can only output $f(x)$ with probability with small probability for some $x$. 





\paragraph{Other variants.}

In many applications, the \textit{gap-version} of $\MCSP$ is much easier and more flexible to work with. Below we define the gap-version of $\MQCSP$ and the multi-output $\MQCSP$.


\begin{definition}[${\MQCSP_{a,b}[s,s',t]}$]\label{def:app_gap MQCSP}
Let $n,s,s',t\in \N$ such that $t\leq s<s'\leq 2^{O(n)}$. Let $a-b\geq 1/\poly(2^n,1^{|s|})$. Let $f:\{0,1\}^n\rightarrow \{0,1\}$ be a Boolean function. $\MQCSP[s,s']$ is a promise problem defined as follows.
\begin{itemize}
\item Input: the truth table $\T(f)$ of $f$, the size parameter $s$ in unary, and the ancilla parameter $t$.
\item Yes instance: there exists a quantum circuit $\Circ$ using at most $s$ gates and operating on at most $n+t$ qubits such that for all $x\in \{0,1\}^n$,  
\begin{align*}
    \|(\bra{f(x)}\otimes I_{n+t-1})\Circ\ket{x,0^{t}}\|^2 \geq \frac{2}{3} \, .
\end{align*}
\item No instance: for every quantum circuit $\Circ$ using at most $s'$ gates and operating on at most $n+t$ qubits, there exists $x\in\{0,1\}^n$ such that
\begin{align*}
    \|(\bra{f(x)}\otimes I_{n+t-1})\Circ\ket{x,0^t}\|^2 \leq \frac{1}{2} \, .
\end{align*}
\end{itemize}
With the promise that the input must be either a yes instance or a no instance, the problem is to decide whether the input is a yes instance or not. 
\end{definition}
When it is clear from the context, we may use $\MQCSP^\star$ to denote ${\MQCSP_{a,b}[s,s',t]}$.

\begin{definition}[$\g$-$\mBMCSP_{\alpha,\beta}{(s,t)}$]\label{def:BQMCSP-multi}
Let $m,s,t$ be functions of $n$ such that $t\leq s\leq 2^{o(n)}$ and $m\leq n+t$. Let $\alpha,\beta \in [2^{-m},1]$ such that $\alpha-\beta>\frac{1}{\poly(2^n)}$. Let $f:\{0,1\}^n\rightarrow \{0,1\}^m$ be a multioutput function. $\g-\mBMCSP_{\alpha,\beta}{(s,t)}$ is a promise problem that  
\begin{enumerate}
\item Input: the truth table $\T(f)$ of $f$.
    \item Yes instance: there exists a quantum circuit $\Circ$ using at most $s$ gates from $\g$ and operating on at most $n+t$ qubits such that for all $x\in \{0,1\}^n$,  
    \begin{align*}
        \|(\bra{f(x)}\otimes I_{n+t-m})\Circ\ket{x,0^{t}}\|^2 \geq \alpha, 
    \end{align*}
    \item No instance: for any quantum circuit $\Circ$ using at most $s$ gates from $\g$ and operating on at most $n+t$ qubits, there exists $x\in\{0,1\}^n$ such that
    \begin{align*}
        \|(\bra{f(x)}\otimes I_{n+t-m})\Circ\ket{x,0^t}\|^2 \leq \beta. 
    \end{align*}
\end{enumerate}
With the promise that the input must be either a yes instance or a no instance, the problem is to decide whether the input is a yes instance or not. 
\end{definition}

\paragraph{Natural property.}
It is worth noting that we can view an efficient quantum algorithm for $\class{MQCSP}$ as \emph{quantum natural property against quantum circuit classes}. Natural properties against circuit classes were first defined by Razborov and Rudich~\cite{RR97}, and recently, Arunachalam et al.~\cite{agg20} further considered quantum natural properties against circuit classes. 

\begin{definition}[Natural Property~\cite{RR97}]\label{def:main_natural_prop}
Let $C$ be a uniform complexity class and $C'$ be a circuit class. We say that a property $\Gamma = \{\Gamma_n: n\in \mathbb{N}\}$ is $C$-natural against $C'$ if the following holds.
\vspace{-1mm}
\begin{enumerate}
\setlength\itemsep{-1mm}
    \item \textbf{Constructivity:} for all $L\in \Gamma$, $L\in C$.   
    \item \textbf{Largeness:} There exists $n_0\in \N$, for $n\geq n_0$, $|\Gamma_n|/|\mathcal{F}_n|\geq \frac{1}{2}$, where $\mathcal{F}_n$ is the set of all Boolean functions with input length $n$. 
    \item \textbf{Usefulness:} There exists $n_0\in \N$, for $n\geq n_0$, $\Gamma_n\cap C'_n = \emptyset$, where $C'_n$ is the set of circuits in $C'$ on $n$ (qu)bits.   
\end{enumerate}
\end{definition}

Note that an $\class{MQCSP}$ oracle can be used to construct natural properties against quantum circuit classes $\class{BQC}[s]$ for any $s$. 
Therefore, if we suppose that $\class{MQCSP}$ is in $\class{BQP}$, then we can have properties that are $\class{BQP}$-natural against quantum circuit classes. For simplicity, we call properties that are $\class{BQP}$-natural as \emph{quantum natural properties}. 
Arunachalam et al.~\cite{agg20} first considered quantum natural properties against circuit classes, and proved circuit lower bounds for quantum complexity classes. Our work can also be viewed as a study of quantum natural properties against quantum circuit classes. The formal definition of $\class{BQP}$-natural property is in below:

\begin{definition}[$\class{BQP}$-Natural Property~\cite{agg20}]\label{defn:qnaturalprop}
We say that a  combinatorial property $\Gamma$ is $C$-natural against polynomial-size quantum circuits $(\class{BQC}[\poly])$ if the following holds. 
\begin{enumerate}
    \item \textbf{Constructivity:} for any string $L\in \Gamma$, $L$ can be accepted by a $\class{BQP}$ algorithm.   
    \item \textbf{Largeness:} There exists $n_0\in \N$, for $n\geq n_0$, $\frac{|\Gamma_n|}{|\mathcal{F}_n|}\geq \frac{1}{2}$. 
    \item \textbf{Usefulness:} There exists $n_0\in \N$, for $n\geq n_0$, any string accepted by $\class{BQC}[\poly]$ is not in $\Gamma_n$.   
\end{enumerate}
\end{definition}

Then, our observation on the connection between $\MQCSP$ and quantum natural property is formally stated as follows:
\begin{observation}\label{obs:mqcsp_nat_prop}
If $\class{MQCSP}\in \class{BQP}$, then there exists a $\class{BQP}$-natural property against quantum circuits $\class{QC}[n^k]$ for any $k\in \N_+$.
\end{observation}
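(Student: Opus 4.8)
The plan is to construct, from an $\MQCSP$ oracle in $\class{BQP}$, a family of combinatorial properties $\Gamma = \{\Gamma_n\}$ satisfying the three conditions of Definition~\ref{defn:qnaturalprop} against $\class{QC}[n^k]$, for an arbitrary fixed $k \in \N_+$. The natural candidate is the property that says ``$f$ has no size-$n^k$ quantum circuit'': that is, on input a string of length $N = 2^n$ viewed as a truth table $\T(f)$, accept iff $\T(f)$ is a \emph{no} instance of the gap problem $\MQCSP_{\alpha,\beta}$ with size parameter $s = n^k$ (and ancilla parameter $t$ set appropriately, say $t = s$, or $t = O(n)$ if we want the problem itself inside $\class{QCMA}$; the observation only needs it for the natural-property argument). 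I would first set up the (gap) version so that there is a clean separation: a yes-instance has a circuit of size $n^k$ computing $f$ with probability $\geq \alpha$, and a no-instance has every circuit of size $n^k$ failing (probability $\leq \beta$) on some input.

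\textbf{Constructivity.} Since we assume $\MQCSP \in \class{BQP}$, its complement is also in $\class{BQP}$ (flip the output bit; $\class{BQP}$ is closed under complement). Hence the language $\Gamma_n = \{\T(f) : f \text{ is a no-instance of } \MQCSP \text{ at size } n^k\}$ is decided by a $\class{BQP}$ algorithm, giving constructivity. One subtlety to address: the size parameter $s = n^k$ is a function of the input length $N = 2^n$, so I should note that the reduction/algorithm gets $1^n$ (equivalently $N$ in unary) and computes $n^k$ in polynomial time, and that $n^k \le 2^{O(n)}$ so the promise-gap condition $\alpha - \beta \ge 1/\poly(2^n)$ is consistent.

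\textbf{Largeness.} Here I would invoke a counting argument exactly as in the classical case: the number of quantum circuits on $n + t$ qubits using at most $s = n^k$ gates from a fixed finite gate set $\g$ is at most $2^{O(s \log s)} = 2^{O(n^k \log n)}$, which is $2^{\poly(n)}$, whereas the number of Boolean functions on $n$ bits is $2^{2^n}$. Each circuit can ``cover'' (i.e., be a yes-witness for) at most one Boolean function — or, more carefully, since circuits are approximate, at most a small number, but certainly the set of functions with a size-$n^k$ circuit has size at most $2^{\poly(n)} \ll \frac{1}{2} \cdot 2^{2^n}$. Thus all but a $2^{-\Omega(2^n)}$ fraction of functions are no-instances, so $|\Gamma_n|/|\mathcal{F}_n| \geq 1/2$ for large $n$. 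The one place needing care is the approximate/gap nature: I must check that a \emph{random} $f$ is, with overwhelming probability, genuinely a no-instance (fails the $\beta$ threshold on some input for every small circuit), not merely ``not a yes-instance.'' This follows because a random $f$ agrees with any fixed circuit's output distribution on a random input with probability bounded away from $1$ by a constant, and a union bound over the $2^{\poly(n)}$ circuits still leaves room — this is the main technical point to verify, and it is essentially the argument already sketched in the paragraph following Definition~\ref{def:main_mcsp}.

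\textbf{Usefulness.} By construction, any truth table accepted by a $\class{QC}[n^k]$ circuit is a yes-instance (it has a size-$n^k$ quantum circuit), hence not in $\Gamma_n$; so $\Gamma_n \cap \class{QC}[n^k]_n = \emptyset$. Finally, I would remark that $k$ was arbitrary, so we get such a property against $\class{QC}[n^k]$ for every $k \in \N_+$, which is the statement of Observation~\ref{obs:mqcsp_nat_prop}. The main obstacle is genuinely the largeness step in the presence of the completeness/soundness gap — ensuring that ``most functions are no-instances of the promise problem'' rather than just ``most functions lack a small circuit'' — but this is handled by the standard counting-plus-concentration argument, adapted to account for the bounded-error success probability of quantum circuits.
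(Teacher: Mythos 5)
Your proposal is correct and follows essentially the same route the paper takes (the paper states this only as an observation, with the largeness step supplied by the counting argument of Claim~\ref{claim:gatecount}): define $\Gamma_n$ as the no-instances of $\MQCSP$ at size $n^k$, get constructivity from $\MQCSP\in\class{BQP}$ and closure under complement, largeness from counting circuits versus functions, and usefulness by construction. Your care about the promise gap — that a random $f$ must fail the $\beta$ threshold against every small circuit, not merely lack an exact small circuit — is exactly the point the paper's Claim~\ref{claim:gatecount} handles via the observation that a fixed circuit's output on a fixed input agrees with a random $f(x)$ with probability $\geq 1/2$ only half the time.
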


\subsection{Upper bounds for \texorpdfstring{$\class{MQCSP}$}{MQCSP}}\label{sec:MQCSP upper bound}

It turns out that, unlike the classical $\MCSP$, $\MQCSP$ is not trivially in $\class{NP}$. The best upper bound we are able to get for $\MQCSP$ is $\class{QCMA}$, the quantum analogue of $\class{NP}$ (or $\class{MA}$). Before showing that $\MQCSP$ is in $\class{QCMA}$, we first discuss why it is not trivially in $\class{NP}$ like the classical $\MCSP$. One obvious reason is that $\MQCSP$ is a promise problem. Therefore, we consider $\class{PromiseNP}$, which definition is the same as $\class{NP}$ except that $\class{PromiseNP}$ relax the definition of $\class{NP}$ to contain promise problems that have $\class{NP}$ certificates. For the ease of presentation, we will use $\class{NP}$ for both $\class{NP}$ and $\class{PromiseNP}$. Then, when the number of ancilla qubits is linear, one can verify the given circuit by simply writing down the corresponding unitary.  

\begin{theorem}\label{thm:np_mcsp}
$\MQCSP$ is in $\class{NP}$ when only a linear number of ancilla qubits are allowed.
\end{theorem}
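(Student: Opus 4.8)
The plan is to show that when $t = O(n)$, the promise problem $\MQCSP$ has an $\class{NP}$ verifier. The certificate is a description of a candidate quantum circuit $\Circ$ using at most $s$ gates on at most $n+t$ qubits; since the gate set $\g$ has constant size and each gate acts on a constant number of qubits, such a description has length $O(s \log(n+t)) = \poly(2^n)$, so it is a legal polynomial-size (in the input length $2^n$) witness. The verifier then needs to decide, in deterministic polynomial time in $|\T(f)| = 2^n$ and $1^s$, whether $\Circ$ witnesses the ``yes'' case, i.e., whether $\|(\bra{f(x)}\otimes I_{n+t-1})\Circ\ket{x,0^{t}}\|^2 \geq \alpha$ for every $x \in \{0,1\}^n$.

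The key observation is that with $t = O(n)$ ancilla qubits, the unitary $\Circ$ acts on $n + t = O(n)$ qubits, hence is a $2^{O(n)} \times 2^{O(n)}$ matrix. First I would have the verifier compute this matrix explicitly: starting from the identity, multiply in each of the $\leq s$ gates one at a time; each gate is a constant-size matrix tensored with identity, and each such multiplication costs $\poly(2^{O(n)}) = \poly(2^n)$ arithmetic operations (one must also track the entries to sufficient precision — $\poly(n)$ bits suffice because a product of $s \leq 2^{O(n)}$ unitaries with polynomially bounded entries does not blow up, and the gap $\alpha - \beta \geq 1/\poly(2^n)$ gives the required precision margin; this is the one routine-but-slightly-delicate bookkeeping point). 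Once the full matrix of $\Circ$ is in hand, for each of the $2^n$ strings $x$ the verifier reads off the column of $\Circ$ indexed by $\ket{x,0^{t}}$, sums $|c_y|^2$ over all basis states $y$ whose first qubit equals $f(x)$, and checks this quantity is $\geq \alpha$. All $2^n$ checks together take $\poly(2^n)$ time. If every check passes, accept; otherwise reject.

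Correctness: if the input is a yes instance, the guaranteed circuit $\Circ$ is accepted (completeness), and if the input is a no instance then every circuit of size $\leq s$ on $\leq n+t$ qubits fails on some $x$ with value $\leq \beta < \alpha$, so no witness is accepted (soundness); on promise-violating inputs the behavior is unconstrained, which is exactly what $\class{PromiseNP}$ allows. The main obstacle — and the only place that genuinely uses $t = O(n)$ rather than $t = \omega(n)$ — is the cost of writing down and manipulating the $2^{n+t} \times 2^{n+t}$ unitary: this is $\poly(2^n)$ precisely when $t = O(n)$, and becomes superpolynomial otherwise, which is exactly why $\MQCSP$ is not known to be in $\class{NP}$ in general. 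A secondary point to handle carefully is the numerical precision argument sketched above, ensuring that finite-precision matrix arithmetic distinguishes the $\alpha$ and $\beta$ thresholds; this follows from the $1/\poly(2^n)$ promise gap together with standard stability bounds for products of unitaries, and I would state it as a short lemma rather than belabor the constants.
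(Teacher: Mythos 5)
Your proposal is correct and matches the paper's argument: the paper likewise takes the circuit description as the certificate and has the verifier explicitly write down the $2^{n+t}\times 2^{n+t}$ unitary $U_{\Circ}$ by multiplying in the gates, then checks $\|(\bra{f(x)}\otimes I)\,U_{\Circ}\ket{x,0^t}\|^2\ge\alpha$ for each $x$, all in $\poly(2^n)$ time precisely because $t=O(n)$. Your extra remark on finite-precision arithmetic relative to the $1/\poly(2^n)$ promise gap is a point the paper leaves implicit, but it does not change the argument.
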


However, when the number of ancilla qubits is superlinear, e.g., $n^2$, the quantum circuit $\Circ$ operates on $2^{O(n^2)}$ qubits, and thus the corresponding unitary $U_{\Circ}$ has dimension $2^{O(n^2)}$ which is superpolynomial in $2^n$. In this case, the verifier cannot compute $U_{\Circ}$ classically in time $\poly(2^n)$. Therefore, the trivial approach does not work. 

Note that although the trivial approach fails to show that $\MQCSP$ is in $\class{NP}$, it does not rule out the possibility that $\MQCSP$ can be efficiently verified via other approaches. In the following theorem, we show that a quantum verifier can efficiently verify the given quantum circuit, and thus $\MQCSP$ is in $\class{QCMA}$.    

\begin{restatable}{theorem}{qcmamqcsp}\label{thm:QCMA_MCSP}
$\MQCSP \in \class{QCMA}$. 
\end{restatable}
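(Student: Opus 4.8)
The plan is to exhibit an explicit $\class{QCMA}$ protocol in which the classical witness is a description of a candidate circuit and the quantum verifier estimates, for \emph{every} input $x$, the probability that the circuit outputs $f(x)$. On a yes-instance the prover sends a description of the $s$-gate circuit $\Circ$ on $n+t$ qubits that witnesses membership: each gate is named by one of the $O(1)$ elements of $\g$ together with the $O(1)$ wire indices in $[n+t]$ it acts on, so since $t\le s$, the description has length $O(s\log(n+t))$, which is polynomial in the input length $|\T(f)|+s+\log t = \poly(2^n,s)$ (recall $s$ is given in unary). The verifier first performs a syntactic check that the witness is a legal circuit with at most $s$ gates on at most $n+t$ qubits, rejecting otherwise.

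Given a syntactically valid $\Circ$, the verifier does the following for each of the $2^n$ strings $x\in\{0,1\}^n$: prepare $\ket{x,0^t}$ (by applying $X$ gates to $\ket{0^{n+t}}$), apply $\Circ$ gate by gate — using the Solovay--Kitaev theorem (Theorem~\ref{thm:sk}) to simulate each $\g$-gate by $\polylog$-many of the verifier's own gates to precision $1/\poly(2^n)$, incurring only a $\poly(n)$ multiplicative size overhead and total simulation error at most $(\alpha-\beta)/10$ — measure the first qubit, and record whether the outcome equals $f(x)$, read off from $\T(f)$. Repeating this $m=O\big((\alpha-\beta)^{-2}\cdot n\big)$ times yields an estimate $\hat p_x$ of $p_x:=\|(\bra{f(x)}\otimes I_{n+t-1})\Circ\ket{x,0^t}\|^2$ with $|\hat p_x-p_x|\le(\alpha-\beta)/4$ except with probability at most $2^{-n}/3$, by Hoeffding's inequality; since $\alpha-\beta\ge 1/\poly(2^n)$ we have $m=\poly(2^n)$. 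The verifier accepts iff $\hat p_x\ge(\alpha+\beta)/2$ for all $x$. One execution of $\Circ$ (quantumly) costs $\poly(s,n+t)=\poly(2^n,s)$ time and uses $n+t\le n+s$ qubits, and there are $2^n\cdot m=\poly(2^n,s)$ executions overall, so the verifier is QPT.

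For completeness, on a yes-instance every $p_x\ge\alpha$, so by a union bound over the $2^n$ values of $x$, with probability at least $2/3$ all estimates satisfy $\hat p_x\ge\alpha-(\alpha-\beta)/4>(\alpha+\beta)/2$, and the (honest) witness is accepted. For soundness, on a no-instance \emph{every} circuit of size at most $s$ on at most $n+t$ qubits — in particular any syntactically valid witness the prover might send — has some $x^\star$ with $p_{x^\star}\le\beta$; then with probability at least $2/3$ we get $\hat p_{x^\star}\le\beta+(\alpha-\beta)/4<(\alpha+\beta)/2$, so the verifier rejects, and ill-formed witnesses are rejected outright. Inputs satisfying neither the yes nor the no condition are outside the promise, so the behavior there is unconstrained. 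Hence $\MQCSP\in\class{QCMA}$.

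I do not expect a serious obstacle here; the points requiring care are (i) that the verifier must check \emph{all} $2^n$ inputs rather than a random sample — a uniformly random $x$ would typically miss the single bad input guaranteed in the no-case, but exhaustive checking is affordable precisely because the input length is already $2^{\Theta(n)}$; (ii) that the inverse-polynomial promise gap $\alpha-\beta$ forces $\poly(2^n)$ repetitions and per-input confidence $2^{-\Omega(n)}$ for the union bound, all of which remain polynomial; and (iii) that the verifier's native gate set may differ from $\g$, which is handled by Solovay--Kitaev at polylogarithmic overhead (cf.\ Claim~\ref{claim:gateset2gateset}). Note that, in contrast to $\UMCSP$, no ``coherency test'' is needed, since here we only query $\Circ$ on computational-basis inputs and read off a single classical output bit.
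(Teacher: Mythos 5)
Your proposal is correct and follows essentially the same route as the paper's proof: the classical witness is the circuit description, and the quantum verifier exhaustively runs the circuit on all $2^n$ computational-basis inputs with $\poly(2^n)$ repetitions each, thresholds at $(\alpha+\beta)/2$, and concludes via a Chernoff/Hoeffding bound plus a union bound over inputs. Your write-up is somewhat more careful than the paper's (explicit syntactic check, explicit repetition count, and the Solovay--Kitaev handling of a mismatched verifier gate set), but these are refinements of the same argument rather than a different approach.
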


We leave the proof to Appendix~\ref{sec:proof_szk} for completeness. 


\subsection{Hardness of quantum \texorpdfstring{$\class{MCSP}$}{MCSP}}\label{sec:MQCSP unconditional hardness}

It is a major open problem in complexity theory to understand the hardness of classical $\MCSP$. Here, we show that the state-of-the-art hardness results on $\MCSP$ (and its variants) can be extended to $\MQCSP$. We remark that this is actually not straightforward to see because the classical $\MCSP$ is incomparable with $\MQCSP$.

First, we show that the $\class{SZK}$-hardness result of $\MCSP$ by Allender and Das~\cite{AB14} can be extended to $\MQCSP$. Here, $\class{SZK}$ stands for the complexity class \textit{Statistical Zero Knowledge} that lies between $\class{P}$ and $\class{NP}$. We first define $\class{SZK}$ and the statistical distance as follows.

\begin{definition}[Statistical Distance $SD(X,Y)$]
Let $X$ and $Y$ be two probability distributions, the statistical distance between $X$ and $Y$ can be defined as follows:
\begin{align*}
\max_{S\subseteq \{0,1\}^{m'}} |\Pr[X\in S] - \Pr[Y\in S]|    
\end{align*}

\end{definition}

\begin{definition}[$\class{SZK}$]
A promise problem $P = (P_Y,P_N)$ is in $\class{SZK}$ if there exists a $\class{PPT}$ verifier $V$ and an interactive proof system $(P,V)$ satisfying the following properties: 
\begin{enumerate}
    \item \textbf{Completeness:} For $x\in P_Y$, there exists $P$ such that $\Pr[\langle P,V\rangle(x)=1] \geq \frac{2}{3}$. 
    \item \textbf{Soundness:} For $x\in P_N$, for all $P$, $\Pr[\langle P,V\rangle(x)=1] \leq \frac{1}{3}$.
    \item \textbf{Statistical zero-knowledge:} There exists a $\class{PPT}$ simulator $S$, for all $\class{PPT}$ verifier $V^*$, for all $x\in P_Y$, 
    \begin{align*}
        \mathsf{SD}(S(V^*)(x), \langle P,V^*\rangle(x)) \leq \negl(n). 
    \end{align*}
\end{enumerate}
\end{definition}

We introduce an $\class{SZK}$-complete problem by Ben-Or and Gutfreund~\cite{BG03}. 
\begin{definition}[Polarized Image Intersection Density ($\class{PIID}$),~\cite{BG03}]
Given two circuits $C_0,C_1: \{0,1\}^m\rightarrow \{0,1\}^{m'}$ of size $n^k$ with the promise that either
\begin{enumerate}
    \item $\max_{S\subseteq \{0,1\}^{m'}} |\Pr_x[C_0(x)\in S] - \Pr_x[C_1(x)\in S]|\leq \frac{1}{2^n}$, or 
    \item $\Pr_{x\in \{0,1\}^{m'}}[\exists y\in \mathsf{Im}(C_0)\mbox{ such that }C_1(x) = y]\leq \frac{1}{2^n}$, 
\end{enumerate}
where $n = \poly(m)$ and $\class{Im}(C) := \{C(x): x\in \{0,1\}^m\}$. The problem is to decide which case is true. 
\end{definition}

\begin{restatable}{theorem}{mcspszkhard}\label{thm:szk_mcsp}
$\class{SZK} \subseteq \class{BPP}^{\class{MQCSP}}$
\end{restatable}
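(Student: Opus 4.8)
The plan is to mimic the classical reduction of Allender and Das~\cite{AB14}, which shows $\class{SZK}\subseteq\class{BPP}^{\MCSP}$, and to replace the use of an $\MCSP$ oracle by an $\MQCSP$ oracle. The crucial point is that an $\MQCSP$ oracle is \emph{at least as powerful} as an $\MCSP$ oracle for the purpose of that argument, because every Boolean function with a small \emph{classical} circuit also has a small \emph{quantum} circuit: a classical circuit of size $\sigma$ over, say, $\{\class{AND},\class{OR},\class{NOT}\}$ can be turned into a quantum circuit of size $O(\sigma)$ (using a few ancilla qubits for fan-out and intermediate results via Toffoli and $\class{X}$ gates) that computes $f$ \emph{exactly}, hence with success probability $1=\alpha$. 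Conversely, a truly random function has high classical circuit complexity with overwhelming probability, and therefore — since a size-$s$ quantum circuit on $n+t$ qubits can be simulated classically when $t=O(n)$, and more importantly since we only need the \emph{distinguishing} direction — a random function is a no-instance of (gap) $\MQCSP$ with overwhelming probability. So an $\MQCSP$ oracle still gives us a "natural-property-style" test that distinguishes strings with small circuit complexity from (near-)random strings.

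Concretely, I would first reduce $\class{SZK}$ to the $\class{SZK}$-complete problem $\class{PIID}$ of Ben-Or and Gutfreund~\cite{BG03}, already quoted in the excerpt: given circuits $C_0,C_1:\{0,1\}^m\to\{0,1\}^{m'}$ of size $n^k$, decide whether their output distributions are statistically close (Yes) or have essentially disjoint images (No). The next step is the key gadget from~\cite{AB14}: for a random $x\in\{0,1\}^m$ and a random bit $b$, consider the string $y=C_b(x)$, and feed to the $\MQCSP$ oracle (suitably padded, and with the size parameter $s$ chosen to be a small polynomial in $n$ but large enough to encode $C_0,C_1$ and one input $x$, and with a linear ancilla budget $t=O(n)$) the truth table of a function that has a small circuit \emph{iff} $y$ lies in the intersection of the images, i.e. $\exists x'\ C_{1-b}(x')=y$. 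In the statistically-close case the oracle almost always sees a "small-complexity" instance; in the disjoint-image case it almost always sees a random-looking (high-complexity) instance. Running this $\poly(n)$ times and taking a majority vote yields a $\class{BPP}^{\MQCSP}$ decision procedure for $\class{PIID}$, hence for all of $\class{SZK}$. The randomness in the reduction is why we land in $\class{BPP}^{\MQCSP}$ rather than $\class{P}^{\MQCSP}$.

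The main obstacle — and the place where quantum-specific care is needed — is the \emph{gap/promise structure} of $\MQCSP$. Classically one can query $\MCSP$ on any truth table and get a crisp yes/no; here $\MQCSP$ is a promise problem with thresholds $\alpha>\beta>1/2$, so I must ensure that every string handed to the oracle genuinely falls in the promise: Yes-instances must have an exact (probability-$1$) small quantum circuit, which the classical-to-quantum compilation above provides; No-instances must fail for \emph{every} size-$s$ quantum circuit on $n+t$ qubits with some input on which success probability is $\le\beta$. The latter is the real content: I would argue by a counting/dimension argument (as in the discussion following Definition~\ref{def:main_mcsp} and in Observation~\ref{obs:mqcsp_nat_prop}) that the number of distinct functions approximately computable by quantum circuits of size $s$ with $t=O(n)$ ancillae is at most $2^{O(s\log s)}\cdot 2^{O(\poly(n))}$ — subexponential in $2^n$ — so a uniformly random truth table is a No-instance except with probability $2^{-\Omega(2^n)}$; a standard discretization/$\epsilon$-net argument handles the continuum of amplitudes. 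One then also checks that the reduction only ever queries padded instances whose intended answer is robust to the $(\alpha,\beta)$ slack, and that the choice of gate set is immaterial by Claim~\ref{claim:gateset2gateset} (Solovay--Kitaev), since the compiled circuits are exact and the polylog overhead is absorbed into $s$. Modulo these promise-maintenance checks, the argument is a faithful quantization of~\cite{AB14}.
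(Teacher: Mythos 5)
Your opening observation is exactly the one the paper relies on: a Boolean function with a small classical circuit also has a small quantum circuit computing it exactly, while a random truth table is a No-instance of $\MQCSP$ with overwhelming probability, so an $\MQCSP$ oracle can serve as the distinguisher in any Allender--Das-style argument. Your choice of $\class{PIID}$ as the $\class{SZK}$-complete starting point also matches the paper. The problem is in the middle of your reduction, where the actual mechanism is either missing or wrong.

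You propose to hand the oracle, in a single query, ``the truth table of a function that has a small circuit iff $y=C_b(x)$ lies in the intersection of the images.'' No such gadget is known to be constructible: building, in polynomial time, a truth table whose circuit complexity encodes an arbitrary existential statement of the form $\exists x'\, C_{1-b}(x')=y$ is essentially what an $\class{NP}$-hardness reduction for $\MCSP$ would require, and that is a major open problem. Your subsequent claim that the statistically-close case yields low-complexity instances and the disjoint-image case yields random-looking ones is asserted without any candidate construction, and the dichotomy the oracle actually certifies (pseudorandom versus truly random truth tables) has nothing to do with image intersection per se. The route the paper takes --- and the one Allender--Das take classically --- is different: the $\MQCSP$ distinguisher is first converted into an \emph{inverter} for efficiently computable functions (this is Theorem~\ref{thm:mqcsp_qowf}, which uses the GGM-based local PRG of Lemma~\ref{lem:qprg2lqprg} seeded by the function to be inverted; breaking that PRG yields inversion with probability $1/\poly(n)$ over a random output). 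The $\class{BPP}^{\MQCSP}$ algorithm for $\class{PIID}$ then samples $x$, computes $y=C_0(x)$, runs the inverter for $C_1$ on $y$ to get a candidate $x'$, and accepts iff $C_1(x')=y$. Completeness uses that the two output distributions are $2^{-n}$-close, so the inverter's $1/\poly(n)$ success probability on $C_1(U_m)$ transfers to samples from $C_0(U_m)$; soundness uses that in the disjoint-image case no valid preimage exists except with probability $2^{-n}$, so the final check fails. This inversion step, which requires many adaptive oracle queries rather than one gadget query, is the content your proposal is missing; the promise-maintenance and counting considerations in your last paragraph are fine but do not repair it.
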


To prove Theorem~\ref{thm:szk_mcsp}, we first observe
that the existence of small classical circuit implies the existence small quantum circuits and an $\MQCSP$ oracle can invert one-way functions (which we will prove in Section~\ref{sec:crypto}). Then, we can show that $\class{PIID}$ is in $\class{BPP}^{\class{MQCSP}}$ following the framework of~\cite{AB14}.  We leave the proof to Appendix~\ref{sec:proof_szk} for completeness.

Next, we quantize the recent breakthrough of Ilango et al.~\cite{ILO20} on the $\class{NP}$-hardness of classical $\MCSP$. There are two main differences between the classical and quantum settings: (i) the circuit model is different and hence makes the combinatorics different, and (ii) the quantum setting allows the output to have some errors.
We partially overcome these two difficulties and prove the following theorem. 

\begin{restatable}{theorem}{mbmcspnphard}\label{thm:mqcsp_multi hard}
Suppose $\fCNOT\circ(\fI\otimes \fX), \fToffoli\in \g$. Every multi-bit gate in $\g$ behaves classically on classical inputs and has at most 1 target wire and at most 2 control wire. (That is, except 1 wire, the outputs of the other wires, at most 2, are the same as their corresponding classical inputs. For example, $\fCNOT$ gate.) Then $\g$-$\mBMCSP$ is $\class{NP}$-hard under randomized reduction.
\end{restatable}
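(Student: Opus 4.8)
The plan is to follow the reduction of Ilango, Loff, and Oliveira~\cite{ILO20} for the $\class{NP}$-hardness of multi-output $\MCSP$ and adapt each step to the quantum circuit model under the stated gate-set restrictions. Recall that the classical argument reduces from (a suitable promise/approximation version of) the Set Cover problem, or more precisely from the bipartite-graph covering instances they engineer: one builds a multi-output Boolean function $f_G$ from a bipartite graph $G$ so that the circuit complexity of $f_G$ in a DeMorgan-like basis is controlled, up to additive lower-order terms, by the minimum vertex cover / minimum "biclique-cover"-type quantity of $G$. The key structural fact exploited there is that an optimal circuit for such an $f_G$ can be assumed to be very shallow and combinatorially rigid — essentially a disjoint union of small gadgets — so that counting gates reduces to counting covering sets. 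First I would re-examine their gadget construction and verify that the target function $f_G$ can be taken to be one whose outputs are computed by circuits built \emph{only} out of $\fCNOT\circ(\fI\otimes\fX)$ and $\fToffoli$ gates (these are exactly AND-into-a-fresh-wire and XOR-into-a-fresh-wire on classical inputs), so that the honest ``yes'' direction of the reduction goes through verbatim with the same gate count up to constants.

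The substantive work is the ``no'' / soundness direction: I must show that \emph{no} small quantum circuit over $\g$ can compute $f_G$ if $G$ has no small cover, even though quantum circuits may use phases, superposition on ancillas, and the bounded amount of allowed error $\beta$. Here I would use the hypothesis that every multi-bit gate in $\g$ behaves classically on classical inputs and touches at most one target wire with at most two controls. The idea is a ``classicalization'' argument: given a size-$s$ quantum circuit $\Circ$ over $\g$ that computes $f_G$ on all classical basis inputs with probability $\geq \alpha > 1/2$, I want to extract from it a classical circuit of size $O(s)$ (or $s + o(s)$) computing $f_G$ exactly, or at least a combinatorial object of comparable size witnessing a small cover of $G$. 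The plan is: (i) discard phases and the single-qubit non-classical gates by a hybrid/triangle-inequality argument, bounding how much each such gate can help — since each one-qubit gate affects only a constant number of output distributions; (ii) observe that on classical inputs the action restricted to the ``classical skeleton'' (the $\fToffoli$- and controlled-$\fX$-type gates) is a reversible classical circuit, and each such gate, having at most one target and two controls, is essentially a classical AND/XOR update; (iii) by a counting / averaging argument over the $2^n$ classical inputs, conclude that the number of genuinely nonlinear (Toffoli-like) gates must be at least the cover number of $G$ up to lower-order terms, matching the classical lower bound of~\cite{ILO20}. Combining the two directions gives an approximation-preserving reduction from the $\class{NP}$-hard covering problem, so $\g$-$\mBMCSP$ is $\class{NP}$-hard under randomized reductions (the randomization being inherited from~\cite{ILO20}, where it is used to sample/produce the hard covering instances).

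The main obstacle I expect is step (ii)–(iii) of the soundness direction: controlling the error budget and the ancilla behavior simultaneously. In the classical argument one has exact equality $C(x)=f_G(x)$, which makes the rigidity of optimal circuits easy to exploit; in the quantum setting $\Circ$ may only agree with $f_G$ up to probability $\alpha$, it may leave ancillas entangled, and the intermediate states are genuine superpositions even though the gate set acts ``classically on classical \emph{inputs}'' — the point is that after the first gate the register no longer holds a classical input. So the real technical content is showing that, despite superposition, a size-$s$ circuit over this restricted $\g$ that is $\alpha$-correct on all $2^n$ classical inputs can be ``rounded'' to an exact classical circuit (or cover) of size $s(1+o(1))$; I would attempt this via a potential-function / progress-measure argument tracking, for each classical input, the ``computational path'' that carries the bulk of the amplitude, and a union bound showing these paths collectively behave like one classical circuit. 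A secondary obstacle is that the reduction's guarantees must be robust to the $\alpha,\beta$ gap in Definition~\ref{def:BQMCSP-multi}; I would set the gadget parameters so that the classical lower bound beats $s$ by an additive $\omega(1)$ (indeed $\Omega(|V(G)|)$) term, which comfortably absorbs the error-rounding losses. Finally, because the gate-set hypothesis is used essentially, the theorem is necessarily stated for such $\g$ and leaving the general universal gate set open is expected — Solovay–Kitaev would blow up the gate count by a $\polylog$ factor, destroying the tight additive counting the lower bound relies on.
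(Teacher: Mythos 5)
Your high-level plan — follow \cite{ILO20}, reduce from a bounded set-cover-type problem, implement the honest gadgets with $\fToffoli$ and $\fCNOT\circ(\fI\otimes\fX)$, and exploit the one-target-wire property of the gate set — matches the paper's proof for the construction and the ``yes'' direction. But your soundness direction contains a genuine gap, and it is exactly the step you flag as the main obstacle: you propose to ``round'' an $\alpha$-correct size-$s$ quantum circuit to an \emph{exact classical circuit} of size $s(1+o(1))$ via a hybrid argument over phases plus a potential-function/progress-measure argument tracking high-amplitude computational paths. You do not establish this, and it is far stronger than what is needed. No such classicalization is carried out in the paper, and it is not clear it can be: after the first gate the register is in superposition, and bounding the cumulative contribution of single-qubit non-classical gates by a per-gate triangle inequality does not obviously yield an exact classical simulator of comparable size.

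The paper's soundness argument avoids classicalization entirely. The randomness of the reduction is used to sample a \emph{random truth table} $T$ (not, as you write, to produce the hard covering instances — the set-cover instance is the input). The lower bound $\Delta\geq \mathrm{cover}([n],\cS)/6-6$ is then an \emph{incompressibility} argument: using the structural hypothesis that every multi-bit gate has a single target wire, each distinct output component of $T\bullet g$ is associated to a unique gate; removing those gates leaves a circuit $D$ on at most $\ell$ gates such that $D(x,g_1(x),\dots,g_{3\ell}(x))$ \emph{encodes} $T(x)$. Crucially, $D$ is still a quantum circuit — it only needs to serve as a short \emph{description} of $T$, not as a classical computation of it. The error budget is handled in one line: on the subspace where all outputs are computed with the promised probability ($\geq 0.99$), the shrunk circuit agrees with the original up to a global phase, so reading off its most-likely outputs determines $T$. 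Since a random $T$ admits a $(1-\Omega(1))m$-bit description with probability only $2^{-\Omega(m)}$, the bound follows. Your step (iii) (``the number of Toffoli-like gates is at least the cover number'') also misstates the shape of the classical lower bound, which is this description-length bound over random $T$, not a direct gate-versus-cover inequality. To repair your proposal, replace the classicalization program with the gate-to-output-component association plus the incompressibility count; the rest of your outline then goes through.
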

$\fCNOT\circ(\fI\otimes \fX)$ is the following operation on two input wires, denoted as control wire and target wire: first do a $\fX$ on the target wire, and do a $\fCNOT$ from the control wire to the target wire. We consider it as a single gate, as the analog of the classical NOT gate.\par
Here the choice of gate set matter: we need the quantum gate set to contain the analog of the usual classical gate set. $\fCNOT\circ(\fI\otimes \fX)$ is the analog of classical single-bit NOT operation, and $\fToffoli$ is the analog of classical AND operation. Here the correspondence has two properties: (1) if the target wire is in the zero state and the control wire is classical, the output of the target wire will be the corresponding classical logical computation result; (2) if the input of the control wire is classical, the output of the control wire will remain the same. Since in the quantum world data copy is not for free, the second property is important for deriving our result.\par
The proof follows the outline of the proof in \cite{ILO20}. We note there are two differences during the proof in the quantum case compared to the classical case:
\begin{itemize}
    \item The circuit model is different. In the classical world the gates are single-output and we assume free-copy. And the basic gate set contains AND, OR, NOT gates. In quantum world, data copy is not for free and we need to use the $\fToffoli$ gate to implement the AND/OR gates.
    \begin{remark}
    One idea might be to use the Solovay-Kitaev theorem to switch the gate set and make the theorem general. But this does not work here in an immediate way. Our proof does not imply the problem is also $\class{NP}$-hard to approximate multiplicatively. On the other hand, the classical result \cite{ILO20} is not known to be general on different gate set either.
    \end{remark}
    \item In the definition of multi-output minimum quantum circuit size problem, we allow the output to have some errors, which is not considered in the classical world.
\end{itemize}
\begin{proof}[Proof of Theorem~\ref{thm:mqcsp_multi hard}]
We consider the same construction as \cite{ILO20}. Let's restate it here for completeness.\par
\begin{enumerate}
    \item Choose a large enough constant $r$ so that 20-approximating $r$-bounded set cover problem is $\class{NP}$-hard. Consider an instance $(1^n, \cS)$ of this problem.\par
    \item $m$ is the least power of $2$ that is greater than $n^3$. Sample the truth table $T$ representing a function on $\{0,1\}^{\log m}\rightarrow \{0,1\}$ uniformly at random.
     Construct $g:=\bullet_{S\in \cS} \text{Eval-DNF}_{T_{\langle S^m\rangle}}$ where:
    \begin{itemize}
        \item[--] To define $\text{DNF}_{f}$ that encode the truth table $f$, we first repeat the construction in \cite{ILO20} for completeness:
        $$\text{DNF}_f:=((x_1=y_1^1)\land\cdots\land (x_n=y_n^1))\lor \cdots \lor ((x_1=y_1^t)\land\cdots\land (x_n=y_n^t))$$
        where $y^1,\cdots y^t$ are YES inputs of $f$ in lexicographical order, $x_1,\cdots x_n$ index the bits of the input string $x$, $y_1^j,\cdots y_n^j$ index the bits of $y^j$, and $(x_i=y_i^t)$ denotes $(x_i\oplus (1\oplus y_i^j))$.\par
        We use the same construction with one difference: here $\lor$ is further decomposed to $\lnot$ and $\land$.
        \item[--] $T_{\langle S\rangle}$ is the truth table that is equal to $T$ for input in $S$ and $0$ everywhere else.
        \item[--] $S^m:=\cup_{i\in S} P_i^{m,n}$ where $P_i^{m,n}:=\{j\in [m]:j\equiv i \mod n\}$. This step closes the gap between $[m]$ (the $\MCSP$ size) and $[n]$ (the set cover size).
        \item[--] ``$\bullet$'' is used on two functions that have the same input domain, and it concatenates the outputs of these functions to get a new function.
        \item[--] To define Eval-$C$, we first consider $x_1\bullet x_2\bullet \cdots x_n\bullet g_1\bullet g_2\bullet\cdots g_s$ where $g_1,\cdots g_s$ are the output of each gate in circuit $C$. Then we remove the gate output that are the same on all the inputs.
    \end{itemize}
    \item As in \cite{ILO20}, define $k$ as the number of distinct components of $g$ that are not directly a function identical to an input. Note that this can be efficiently computed.\par
    Take $\alpha=1, \beta=0.99, t=10s$ ($s$ is the output number of our construction).\par Define $CC_{\alpha,\beta}({t,\T(f)})$ as the subroutine that uses binary search to find the minimum $s$ such that $\g-\mBMCSP_{\alpha,\beta}{(s,t)}(\T(f))=\text{true}$.\footnote{Since multiMQCSP is a promise problem this routine does not necesarrily find the minimum $s$ but should return a value that there exists a circuit of this size that approximate the function everywhere with correct probability $\beta$. This is sufficient for later proof.} Use the $\mBMCSP$ oracle and compute $$\Delta:=CC_{\alpha,\beta}({t,\T(T\bullet g)})-k$$ as the approximation of the set cover instance $(1^n, \cS)$.
\end{enumerate}
To analyze this reduction, we need to prove the followings steps:
\begin{enumerate}
    \item $CC_{\alpha,\beta}({t,\T(g)})=k$
    \item $\Delta\leq 3\cdot \mathrm{cover}([n],\cS)+1$ where $\mathrm{cover}([n],\cS)$ is the size of the minimum set cover solution for $\cS$.
    \item $\Delta\geq \mathrm{cover}([n],\cS)/6-6$ with probability $1-2^{-\Omega(m)}$.
\end{enumerate}
Then we get an approximation to the set cover problem.\par
Let us prove the three statements step-by-step.
    \paragraph{Step 1:}
    The $\leq $ part is proved by the function construction itself. We implement $\lnot$ with the $\fCNOT\circ(\fI\otimes \fX)$ gate (and write the output on an empty ancilla system) and implement $\land$ with the $\fToffoli$ gate.\par
    The $\geq$ part is slightly different since in quantum case the gate model is different.
    In classical world all the gates are single-output, while in quantum world there are multi-output gates. However, for the multi-output gates like $\fCNOT$ and $\fToffoli$, there is only one target wire, and the other wires are control wire. Thus for each output component, we can always find the nearest gate that does not use it as a control wire (if there is such a gate along the way, ignore it). In this way each different output component corresponds to a different gate in the circuit, which completes the proof.
    \paragraph{Step 2:} As \cite{ILO20}, when $\mathrm{cover}([n],\cS)=\ell$, without loss of generality assume $S_1,\cdots S_\ell$ are a set cover. Then $T=T_{\langle S^m_1\rangle}\lor\cdots \lor T_{\langle S^m_\ell\rangle}$. This can be computed using $3\ell+1$ extra gates on the minimum circuit of Eval-$g$. (Note that in the quantum world we need slightly more gates than the classical world. And we need to evaluate the OR gate by NOT-AND-NOT gates to get $T$.)
    \paragraph{Step 3:} Denote $\ell=\lfloor \mathrm{cover}([n],\cS)/6\rfloor$. The goal is to show that the probability that $\Delta\leq \ell$ is small by showing that $T$ satisfying $\Delta\leq \ell$ must have a short description. Suppose $T$ is a truth table such that the condition $\Delta>\ell$ does not hold. We need to find a circuit of gate number $\leq 2\ell$ where:\begin{itemize}
        \item The inputs are: the bits of $x$; and the output of $g$.
        \item It encodes the output of $T$.
    \end{itemize} We use the similar idea to \cite{ILO20} but we need to address the two problems discussed before this proof.\par
    As what we did in Step 1, we can associate each output component ($g_i(x)$, for example) to a unique gate in the circuit. 
    As \cite{ILO20}, we remove these gates from the circuit. There might be some gates between this gate and the output $g_i(x)$ that use the wire as control wires. For these gates, simply use $g_i(x)$ as the control value.\par
    As in \cite{ILO20} we have $CC_{\alpha,\beta}(t,\T (T\bullet g))\leq \ell+k$. And since for each $g_i$ at least one gate is removed, the remaining circuit is a circuit $D$ that takes $\log (m)+k$ inputs and has at most $\ell$ gates such that
    $$D(x,g_1(x),\cdots g_k(x))\text{   encodes   }T(x)$$
    Then since each gate has fan-in at most $3$ the circuit uses at most $3\ell$ components of $g$. Then after a possible relabling of $g_1\cdots g_k$ we can assume $D$ takes $\log (m)+3\ell$ inputs such that
    $$D(x,g_1(x),\cdots g_{3\ell}(x))\text{   encodes   }T(x)$$
    The new circuit does not necessarily behave the same as the original circuit, but they do behave the same (up to a global phase) on the subspace that all the outputs are computed correctly. By the definition of $\mBMCSP$ and the choices of parameters this is true with norm $\geq 0.99$. Thus we can view the shrinked circuit as an encoding of $T$ by focusing on the most-possible outputs of this circuit. Then by the same argument as \cite{ILO20} such a shrinked circuit has a  description of $(1-\Omega(1))m$ bits, which implies such $T$ has at most $2^{(1-\Omega(1))m}$ choices thus a random $T$ falls into this case with exponentially small probability.
\end{proof}


However, we don't know whether this problem is NP-complete, since it's not known to be in NP. With a proof similar to that of Theorem~\ref{thm:QCMA_MCSP}, we only know $\mBMCSP\in\class{QCMA}$. Namely, there remains a gap between our understandings of the upper bound and hardness of $\mBMCSP$. We pose it as an open problem to settle the complexity of $\mBMCSP$.

\section{Connections Between MQCSP and Other Problems}\label{sec:main-connections}







\subsection{Cryptography and MQCSP}\label{app:crypto}
Classically, we have already known connections between $\class{MCSP}$ and one-way functions~\cite{KC00,RR97} and indistinguishable obfuscation~\cite{impagliazzo2018power}. In this section, we show the quantum analogies of these results.  

\subsubsection{Quantum cryptographic primitives}
\label{sec:crypto}
We first introduce relevant primitives in cryptography. 

\begin{definition}[Pseudorandom Generator ($\class{PRG}$)]
Let $G: \{0,1\}^* \rightarrow \{0,1\}^*$ be a polynomial-time computable function. Let $\ell: \N\rightarrow \N$ be a polynomial-time computable function such that $\ell(n)> n$ for all $n$. $G$ is a pseudorandom generator of stretch $\ell(n)$ if it satisfies: 
\begin{enumerate}
    \item $|G(x)| = \ell(|x|)$ for all $x\in \{0,1\}^*$, and 
    \item for all Probabilistic polynomial-time (PPT) algorithm $\A$, there exists a negligible function $\epsilon: \N\rightarrow [0,1]$ such that for all $n\in \N$ 
    \begin{align*}
        \left|\Pr_{x\sim\{0,1\}^n}[\A(G(x)) = 1] - \Pr_{y\sim \{0,1\}^{\ell(n)}}[\A(y)=1]\right| \leq \epsilon(n).
    \end{align*}
\end{enumerate}
\end{definition}

We say that a $\class{PRG}$ is \emph{local} if every output bit of the $\class{PRG}$ can be computed in time $\poly(n)$. In the following, we define $\class{PRG}$ secure against any quantum polynomial-time adversary.

\begin{definition}[Quantum-Secure Pseudorandom Generator ($\class{qPRG}$)]\label{def:qprg}
Let $G: \{0,1\}^* \rightarrow \{0,1\}^*$ be a polynomial-time computable function\footnote{It is worth noting that $G$ can be any function that is efficiently computable in either quantum or classical polynomial time.}. Let $\ell: \N\rightarrow \N$ be a polynomial-time computable function such that $\ell(n)> n$ for all $n$. $G$ is a pseudorandom generator secure against quantum adversaries of stretch $\ell(n)$ if it satisfies: 
\begin{enumerate}
    \item $|G(x)| = \ell(|x|)$ for all $x\in \{0,1\}^*$, and 
    \item for all quantum polynomial-time (QPT) algorithm $\A$, there exists a negligible function $\epsilon: \N\rightarrow [0,1]$ such that for all $n\in \N$ 
    \begin{align*}
        \left|\Pr_{x\sim\{0,1\}^n}[\A(G(x)) = 1] - \Pr_{y\sim \{0,1\}^{\ell(n)}}[\A(y)=1]\right| \leq \epsilon(n).
    \end{align*}
\end{enumerate}
\end{definition}  

In this work, we consider two ways of constructing quantum-secure $\class{PRG}$s based on different cryptographic primitives. One is based on the quantum-secure one-way functions and the other one is based on the hard function. 

\begin{restatable}[Quantum-Secure One-Way function ($\class{qOWF}$)]{definition}{qOWF}\label{def:qowf}
A function $f:\{0,1\}^* \rightarrow \{0,1\}^*$ is a quantum-secure one-way function, if the following conditions hold: For every $n\in \N$, for any $x\in \{0,1\}^n$ picked uniformly at random, 
\begin{enumerate}
    \item There exists a $\poly(n)$-time deterministic algorithm for computing $f$. 
    \item For any $\poly(n)$-time quantum algorithm $\A'$, $\Pr_x[\A'(f(x))\in f^{-1}(f(x))] = \negl(n)$. 
\end{enumerate}
\end{restatable}

\begin{definition}[GGM Construction~\cite{GGM86}]\label{def:ggm}
Let $G:\{0,1\}^n\rightarrow \{0,1\}^{2n}$ be a $\class{(q)PRG}$. For every $z\in \{0,1\}^{m}$, the GGM construction of a pseudorandom function family $\{h_z: \{0,1\}^n\rightarrow \{0,1\}^n\}_{z\in \{0,1\}^m}$ is defined as follows:
\begin{align*}
    f_z(x) = G_{z_{m}}\circ G_{z_{m-1}}\circ\cdots\circ G_{z_{1}}(x),  
\end{align*}
where we denote by $G_0(x)$ the first $n$ bits of $G$, and by $G_1(x)$ the last $n$ qubits. 
\end{definition}

\begin{lemma}[\cite{HILL99}]\label{owf2prg}
If $\class{OWF}$s exist, then for every $c\in \N$, there exists a secure $\class{PRG}$ with stretch $\ell(n)=n^c$.  
\end{lemma}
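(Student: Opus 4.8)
The plan is to separate the claim into two modular steps and compose them. \textbf{Step~1 (one-bit stretch).} From an arbitrary one-way function $f$ produce a pseudorandom generator $G_0\colon\{0,1\}^n\to\{0,1\}^{n+1}$ with a single bit of stretch; this is exactly the H{\aa}stad--Impagliazzo--Levin--Luby theorem. \textbf{Step~2 (stretch amplification).} Iterate $G_0$ to obtain, for each fixed $c$, a pseudorandom generator of stretch $\ell(n)=n^c$, with the pseudorandomness of the iterate following from that of $G_0$ by a hybrid argument.

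For Step~1 the route I would follow is the standard one, and I would ultimately invoke \cite{HILL99} for the details rather than reproduce them. The skeleton is: use the Goldreich--Levin hardcore predicate to turn $f$ into a \emph{pseudoentropy generator} --- intuitively, $(f(x),r,\langle x,r\rangle)$ is computationally indistinguishable from a distribution of strictly larger min-entropy, since an efficient observer cannot detect that the last bit is already determined by the rest; then take many independent copies so that the pseudo-min-entropy concentrates; and finally apply a universal hash function together with the leftover hash lemma to extract a string statistically close to uniform whose length, combined with a hashed-down copy of the seed, exceeds the input length by at least one bit. The delicate and genuinely hard points --- and the main obstacle of the whole lemma --- are that the preimage sizes of $f$, hence the relevant entropy levels, are unknown (so one runs all $\poly(n)$ candidate levels in parallel), and that the entropy bookkeeping through the nested hybrids must be tight enough that the final output is honestly longer than the seed while the distinguishing advantage stays negligible.

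For Step~2, write $G_0(x)=(H(x),b(x))$ with $H(x)\in\{0,1\}^n$ the $n$-bit prefix and $b(x)\in\{0,1\}$ the trailing bit, and define $G_1:=G_0$ and $G_{k+1}(x):=\big(b(x),\,G_k(H(x))\big)$, so that $G_k\colon\{0,1\}^n\to\{0,1\}^{n+k}$ is computable in time $\poly(n)$ whenever $k=\poly(n)$; taking $k:=n^c-n$ yields the claimed stretch. To show $G_k$ is pseudorandom, consider hybrids $H_j$ for $0\le j\le k-1$ in which the first $j$ output bits are uniform and the remaining $n+k-j$ bits are $G_{k-j}$ evaluated on a fresh uniform $n$-bit string, together with the fully uniform distribution at the end; two consecutive hybrids differ only by replacing $G_0$ applied to a uniform input with a uniform $(n+1)$-bit string and then applying a fixed polynomial-time post-processing, so each neighboring pair is $\negl(n)$-indistinguishable by the security of $G_0$, and summing over the $k=\poly(n)$ boundaries keeps the total advantage negligible.

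Finally, I would remark that both steps are black-box and their reductions run in classical polynomial time, so a distinguisher of either flavor --- classical or quantum --- against the resulting $G$ is converted into an inverter of the same flavor for $f$; in particular the identical argument promotes a quantum-secure one-way function to a quantum-secure $\class{PRG}$ of stretch $n^c$, the form actually needed in the sequel.
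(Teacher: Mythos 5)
Your outline is correct and matches the paper's treatment: the paper states this lemma purely as a citation to \cite{HILL99}, and your decomposition into the HILL one-bit-stretch construction (whose hard core you rightly defer to the cited reference) followed by the standard iterated-composition hybrid argument for stretch amplification is exactly the standard proof of that cited result. Your closing remark that both reductions are black-box is also precisely the observation the paper uses to promote the statement to the quantum-secure setting in the subsequent folklore lemma.
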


Since the security proof of Lemma~\ref{owf2prg} is black-box, the analysis carries over to the quantum setting directly if the one-way function is secure against quantum adversaries.  Therefore, we can obtain Lemma~\ref{qowf2qprg}.  

\begin{lemma}[Folklore]\label{qowf2qprg}
If $\class{qOWF}$s exist, then for every $c\in \N$, there exist $\class{qPRG}$s with stretch $\ell(n)=n^c$. 
\end{lemma}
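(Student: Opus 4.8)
\textbf{Proof proposal for Lemma~\ref{qowf2qprg}.} The plan is to observe that the classical construction underlying Lemma~\ref{owf2prg} (the H\r{a}stad--Impagliazzo--Levin--Luby construction from~\cite{HILL99}) is \emph{fully black-box}, and then argue that black-box security reductions relativize to quantum adversaries. Concretely, from a quantum-secure one-way function $f$ one runs the \emph{same} polynomial-time construction to obtain a candidate generator $G$ with stretch $\ell(n) = n^c$; the construction itself is classical and efficient, so $G$ is a polynomial-time computable function as required by Definition~\ref{def:qprg}. It remains only to verify the pseudorandomness condition against QPT adversaries.

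For this I would invoke the black-box nature of the HILL security proof: there is a fixed probabilistic polynomial-time oracle machine $R$ such that, for any (even non-uniform) algorithm $\A$ that distinguishes $G(U_n)$ from $U_{\ell(n)}$ with advantage $\varepsilon(n)$, the machine $R^{\A}$ inverts $f$ with probability at least $\poly(\varepsilon(n)/n)$ on a random image. The key point is that $R$ only feeds $\A$ classically-sampled strings, reads off a classical bit, and otherwise does classical bookkeeping (probability estimation by sampling, hybrid bookkeeping, Goldreich--Levin-style hardcore extraction). Therefore, if $\A$ is a QPT algorithm, then $R^{\A}$ is still a QPT algorithm: a classical polynomial-time machine making polynomially many calls to a polynomial-time quantum subroutine is itself a QPT procedure. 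Hence a QPT distinguisher with non-negligible advantage against $G$ would yield a QPT algorithm inverting $f$ with non-negligible probability, contradicting Definition~\ref{def:qowf}. This establishes that $G$ is a $\class{qPRG}$ of stretch $n^c$; taking $c$ arbitrary gives the statement.

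The one genuine point that needs checking --- and the closest thing to an obstacle --- is that the reduction $R$ is \emph{straight-line}: it never needs to ``rewind'' the distinguisher or inspect its internal state, only to run it (possibly many times, possibly adaptively on different inputs) as a black box. Rewinding is exactly where quantum reductions typically break, so I would explicitly note that the HILL reduction (and the GGM-style and Goldreich--Levin ingredients it uses, cf.\ Definition~\ref{def:ggm}) is of this benign form, each invocation being an independent fresh run on a freshly prepared classical input. Once that is spelled out, the composition argument above goes through verbatim, and no new quantum-specific analysis is required --- which is precisely why the statement is folklore.
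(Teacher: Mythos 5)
Your proposal is correct and follows essentially the same route as the paper: the paper likewise invokes the fact that the HILL security proof of Lemma~\ref{owf2prg} is black-box, so that a QPT distinguisher against $G$ yields (via the same classical reduction run with a quantum subroutine) a QPT inverter for $f$, contradicting Definition~\ref{def:qowf}. Your explicit remark that the reduction is straight-line and never rewinds the adversary is exactly the point the paper leaves implicit, so nothing is missing.
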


\begin{lemma}\label{lem:qprg2lqprg}
Suppose that there exists a $\class{qPRG}$ $G:\{0,1\}^n\rightarrow \{0,1\}^{2n}$. Then, for $m=O(\log n)$, there exists a local $\class{qPRG}$ $\hat{G}: \{0,1\}^n \rightarrow \{0,1\}^{2^m}$.
\end{lemma}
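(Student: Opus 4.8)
The plan is to use the GGM construction (Definition~\ref{def:ggm}) with the given doubling $\class{qPRG}$ $G:\{0,1\}^n\to\{0,1\}^{2n}$ to build a $\class{qPRG}$ with exponentially long output while keeping each output bit efficiently computable. Recall that in the GGM tree, the output is indexed by a leaf: to compute the portion of the output associated with a leaf $y\in\{0,1\}^m$, one walks the binary tree of depth $m$, starting from the seed $x\in\{0,1\}^n$ at the root, and at each level applies $G$ and keeps the left or right half of the output according to the next bit of $y$. After $m$ steps this produces an $n$-bit block. Since $m=O(\log n)$, this walk consists of $O(\log n)$ applications of $G$, each of which runs in time $\poly(n)$, so any single $n$-bit block of the output (and hence any single bit, since $2^m$ such blocks of length $n$ can be concatenated, or one simply truncates to $2^m$ bits) is computable in $\poly(n)$ time given $x$ and $y$. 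This establishes the locality requirement.

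Concretely, I would define $\hat G:\{0,1\}^n\to\{0,1\}^{2^m}$ by letting the $y$-th block ($y\in\{0,1\}^m$) be $f_y(x) := G_{y_m}\circ G_{y_{m-1}}\circ\cdots\circ G_{y_1}(x)$ in the notation of Definition~\ref{def:ggm}, and then taking the first $2^m$ bits of the concatenation over all $y$ in lexicographic order (note $2^m = \poly(n)\cdot 2^m / n$ bits are available, which is more than enough; if one insists on stretch strictly, one takes $m$ slightly larger so that $n\cdot 2^m \ge 2^{m'}$, a cosmetic adjustment). The first step is to check stretch: $2^m > n$ for $m = c\log n$ with $c>1$, so $\hat G$ genuinely stretches. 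The second step is the pseudorandomness argument, which is the standard GGM hybrid argument: one interpolates between $2^m$ copies of truly random leaves and the fully pseudorandom tree by a hybrid over the $m$ levels of the tree, replacing the outputs of $G$ at one level at a time with uniform strings; each adjacent pair of hybrids is indistinguishable by the $\class{qPRG}$ security of $G$ applied to the (polynomially many, since $2^m=\poly(n)$) invocations of $G$ at that level, via a hybrid-within-hybrid over those invocations. Since $m=O(\log n)$ and each level has at most $2^m=\poly(n)$ nodes, the total number of hybrids is $\poly(n)$, so the distinguishing advantage remains negligible against any QPT adversary.

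The main obstacle — really the only point requiring care — is ensuring that the GGM security reduction goes through against \emph{quantum} adversaries. The classical GGM proof is entirely black-box: it only uses $G$ as an oracle/subroutine and never inspects its internals, and the hybrid argument reduces a distinguisher for $\hat G$ to a distinguisher for $G$ with only polynomial loss. Because the reduction is black-box and the $\class{qPRG}$ definition (Definition~\ref{def:qprg}) quantifies over all QPT distinguishers, the same reduction works verbatim in the quantum setting: a QPT adversary breaking $\hat G$ yields, through the polynomially many hybrid steps, a QPT adversary breaking $G$, contradicting its $\class{qPRG}$ security. This is exactly the same phenomenon already invoked for Lemma~\ref{qowf2qprg}. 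So the proof is: (i) define $\hat G$ via the depth-$m$ GGM tree; (ii) observe each output bit is computed by $O(\log n)$ nested applications of $G$, hence in $\poly(n)$ time, giving locality; (iii) invoke the black-box GGM hybrid argument, valid against quantum adversaries since $G$ is a $\class{qPRG}$ and only $\poly(n)$ hybrids are needed, to conclude $\hat G$ is a $\class{qPRG}$.
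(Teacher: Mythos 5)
Your proposal is correct and follows essentially the same route as the paper: define $\hat G$ via the depth-$m$ GGM tree built from the doubling $\class{qPRG}$ $G$, observe that each output bit needs only $m=O(\log n)$ nested $\poly(n)$-time evaluations of $G$ (locality), and conclude pseudorandomness from the standard black-box hybrid argument over the $m$ levels, which transfers verbatim to QPT distinguishers. The only (cosmetic) differences are that the paper outputs just the first bit of each leaf's $n$-bit block rather than truncating concatenated blocks, and you are in fact slightly more explicit than the paper about the inner hybrid over the polynomially many invocations of $G$ within a level.
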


\begin{proof}

We first give the construction of $\hat{G}$. Follow the GGM construction in Definition~\ref{def:ggm}, we let 
\begin{align*}
    h'_x(z) = G_{z_{m}}\circ G_{z_{m-1}}\circ\cdots\circ G_{z_{1}}(x)
\end{align*}
where $z\in \{0,1\}^m$, $x\in \{0,1\}^n$. We let $h_x(z)$ be the first output bit of $h'_x(z)$ and define the $\class{qPRG}$ as 
\begin{align*}
    \hat{G}(x) = h_x(0)~|~h_x(1)~|~\cdots~|~h_x(2^m-1). 
\end{align*}
It is obvious that each bit of $\hat{G}(x)$ can be computed in time $m$ times the runtime of $G$.

We then prove that $\hat{G}(x)$ is indistinguishable from a truly random string by the standard hybrid approach. For $i\in [m]$, we define 
\begin{align*}
    H^{i}(z) = (G_{z_{m}}\circ G_{z_{m-1}}\circ\cdots \circ G_{z_{i}}(y_{z,i}))_1, 
\end{align*}
where $y_{z,i}$ is drawn independently and uniformly randomly from $\{0,1\}^n$. Note that $H^1(z) = h_z(x)$ and $H^m(z)$ is a random bit. Let 
\begin{align*}
     \hat{G}^{i} = H^i(0)~|~H^i(1)~|~\cdots~|~H^i(2^m-1)\quad \forall i\in[m]. 
\end{align*}
Suppose that there exists a QPT algorithm $\A$ such that
\begin{align*}
    \left|\Pr_{x\sim \{0,1\}^n}[\A(\hat{G}(x))=1] - \Pr_{u\sim \{0,1\}^{2^m}}(\A(u))\right|\geq 1/\poly(n). 
\end{align*}
Then, by the triangular inequality, 
\begin{align*}
    \sum_{i=1}^{m-1} \left|\Pr[\A(\hat{G}^i)=1] - \Pr[\A(\hat{G}^{i+1})=1]\right|\geq 1/\poly(n)
\end{align*}
which implies that there exists $i^*$ such that $|\Pr[\A(\hat{G}^{i^*})=1] - \Pr[\A(\hat{G}^{i^*+1})=1]|\geq 1/\poly(n)$. Since distinguishing $\hat{G}^{i^*}$ and $\hat{G}^{i^*+1}$ implies that one can distinguish $G(x)$ from a random string, $G$ is not a $\class{qPRG}$. This completes the proof.   

\end{proof}

\subsubsection{Implications for quantum-secure one-way functions (qOWF)}\label{subsec:mcsp_owf}
Here, we show a quantum analogous result for~\cite{KC00,RR97} by considering the implication of the existence of efficient quantum algorithms for either classical or quantum $\class{MCSP}$.

\begin{restatable}{theorem}{mqcspqowf}\label{thm:mqcsp_qowf}
If $\class{MQCSP}\in \class{BQP}$, then there is no quantum-secure one-way function ($\class{qOWF}$). 
\end{restatable}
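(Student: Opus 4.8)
The plan is to follow the classical Kabanets–Cai paradigm: use an $\MQCSP$ oracle (or a $\class{BQP}$ algorithm for $\MQCSP$) as a distinguisher that breaks every candidate pseudorandom generator, and conclude that one-way functions secure against quantum adversaries cannot exist. First I would recall the chain of standard reductions available to us from the excerpt: if a $\class{qOWF}$ exists, then by Lemma~\ref{qowf2qprg} there is a $\class{qPRG}$ with arbitrary polynomial stretch, and then by Lemma~\ref{lem:qprg2lqprg} there is a \emph{local} $\class{qPRG}$ $\hat G\colon\{0,1\}^n\to\{0,1\}^{2^m}$ with $m=O(\log n)$. The locality is the crucial feature: each output bit of $\hat G$ is computable by a $\poly(n)$-time (hence $\poly(n)$-size, by taking the classical circuit and viewing it as a reversible/quantum circuit with $\poly(n)$ ancilla) classical circuit, so the whole $2^m$-bit string $\hat G(x)$ is the truth table of a Boolean function on $m=O(\log n)$ variables whose classical circuit complexity is $\poly(n) = \poly(2^m)$, which is $\mathrm{subpoly}$ in the truth-table length in the usual parametrization but — more to the point — is much smaller than the generic bound $\Omega(2^m/m)$ achieved by a random $m$-variate function.

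The core step is the win–win distinguishing argument. I would observe: (i) since every poly-size classical circuit can be simulated by a poly-size quantum circuit with a polynomial number of ancilla (just run the classical gates reversibly and measure), the quantum circuit complexity of the function with truth table $\hat G(x)$ is at most $\poly(n)$; hence on a string of the form $\hat G(x)$, the $\MQCSP$ instance with size parameter $s = \poly(n)$ and ancilla parameter $t = \poly(n)$ (still within the $\MQCSP$ promise regime, $s \le 2^{O(m)}$) is a \emph{yes}-instance. (ii) For a uniformly random string $y\in\{0,1\}^{2^m}$, it is the truth table of a uniformly random $m$-variate Boolean function, which with probability $1-o(1)$ has quantum circuit complexity $\Omega(2^m/m) \gg \poly(n)$, making it a \emph{no}-instance for the gap-version of $\MQCSP$; here I would invoke a counting argument (the number of quantum circuits of size $s'$ over a fixed $O(1)$-size gate set is at most $2^{O(s'\log s')}$, so if $s'$ is small most truth tables are not approximated by any such circuit — the precision/gap issue can be handled by the standard trick of rounding gate parameters to a fine enough net, as the gate set is fixed and finite, or more simply by working directly with a universal gate set of constant size so the count is literally $|\g|^{s'}\cdot(\text{placements})$). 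Thus a $\class{BQP}$ algorithm for (the appropriate gap-version of) $\MQCSP$ distinguishes $\hat G(x)$ from uniform with advantage $\geq 1/2 - o(1)$, contradicting the pseudorandomness of $\hat G$ against quantum adversaries. Working back up the chain, this contradicts the existence of a $\class{qOWF}$.

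I expect the main obstacle to be the bookkeeping around the \emph{promise} nature of $\MQCSP$ and the error/precision parameters. Concretely: I must check that the yes-case (small classical circuit $\Rightarrow$ small quantum circuit computing $f$ with probability $\geq \alpha$, indeed with probability $1$ up to the measurement being deterministic) and the no-case (random function is hard even in the $\beta$-approximate sense) fall on the correct sides of the $(\alpha,\beta)$ gap simultaneously, and that the size parameter $s$ and ancilla parameter $t$ I plug in are admissible ($t \le s \le 2^{O(n)}$ in the truth-table-length parametrization, where the "input length" is $N = 2^m = \poly(n)$). The counting argument for the no-case also needs the standard care that a quantum circuit of size $s'$ on $m + t'$ qubits, even with $t'$ moderately large, only realizes $2^{\widetilde O(s')}$ "essentially distinct" behaviors on the $2^m$ computational-basis inputs up to the allowed error — this is routine but is the place where one has to be careful that super-linear ancilla does not secretly help; restricting to $t' = \poly(n)$ ancilla keeps the count manageable. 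None of this is deep, but it is exactly the "extra care" the introduction warns about, and it is where a careless version of the proof would break. Everything else is a direct transcription of the classical Kabanets–Cai / Razborov–Rudich argument using the quantum-secure versions of the primitives already set up in the excerpt.
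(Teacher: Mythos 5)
Your proposal is correct and follows essentially the same route as the paper's proof: it invokes the same chain (qOWF $\Rightarrow$ qPRG via Lemma~\ref{qowf2qprg} $\Rightarrow$ local qPRG via Lemma~\ref{lem:qprg2lqprg}), views the output as the truth table of an $m$-variate function with $m=O(\log n)$, and uses the $\MQCSP$ algorithm together with the counting bound for random functions (the paper's Claim~\ref{claim:gatecount}) as the distinguisher, with the same key observation that small classical circuit complexity implies small quantum circuit complexity. The parameter bookkeeping you flag is handled in the paper exactly as you describe, by choosing $m=d\log n$ with $d\gg b$ so that the yes/no instances fall on opposite sides of the size threshold $2^m/((c+1)m)$.
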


\begin{proof}
Let $f:\{0,1\}^* \rightarrow \{0,1\}^*$ be any function. By Lemma~\ref{qowf2qprg}, we construct $G_f: \{0,1\}^n\rightarrow \{0,1\}^{n^a}$ that is a $\class{qPRG}$ if $f$ is a $\class{qOWF}$. We denote the runtime for $G_f$ as $O(n^b)$ for some constant $b$. 

Given $G_f$, we construct a $\class{qPRG}$ $\hat{G}:\{0,1\}^n\rightarrow \{0,1\}^{2^m}$ where $m=O(\log n)$ by Lemma~\ref{lem:qprg2lqprg}. Then, we view the outputs of $\hat{G}(x)$ as a truth table of some Boolean function $g_x:\{0,1\}^{m}\rightarrow \{0,1\}$. Note that according to the construction in Lemma~\ref{lem:qprg2lqprg}, the time for evaluating $g_x$ on $z\in \{0,1\}^m$ is $O(m\cdot n^b) = \tilde{O}(n^b)$. On the other hand, for a random Boolean function from $\{0,1\}^m$ to $\{0,1\}$, we know from Claim~\ref{claim:gatecount} that its circuit complexity is greater than $\frac{2^m}{(c+1)m}$ with high probability. Therefore, by setting $m = d\log n$ for some constant $d\gg b$, the circuit complexity of the random function is $\tilde{O}(n^d) \gg \tilde{O}(n^b)$ with high probability. 

\begin{algorithm}[H]
\caption{A quantum algorithm for breaking $\class{qPRG}$}
	\label{alg:break_prg}
    \begin{algorithmic}[1]
	    \Input Given $\T(h)$ for $h:\{0,1\}^m\rightarrow \{0,1\}$ constructed from $\hat{G}$ in Lemma~\ref{lem:qprg2lqprg}.
        \State Runs the quantum algorithm for $\MQCSP$ with $s = \frac{2^m}{(c+1)m}$
        \State \Return ``Yes'' if the algorithm in previous step outputs yes.
        \State \Return ``No'', otherwise 
    \end{algorithmic}
\end{algorithm}

Since we assume $\MQCSP\in \class{BQP}$, we obtain a quantum polynomial-time algorithm $\A$ for distinguishing $\{g_x\}_{x\in \{0,1\}^n}$ and the random function family $\mathcal{F}_m$ as in Algorithm~\ref{alg:break_prg}. The circuit complexity for $g_x$ is at most $\tilde{O}(n^b)$ and the for a random function $h$ is greater than $\frac{2^m}{(c+1)m} = \tilde{O}(n^d)$ for $d\gg b$. thus, we obtain
\begin{align*}
    \left|\Pr_{x\sim \{0,1\}^n}[\A(\T(g_x))=1] - \Pr_{h\sim \mathcal{F}_m}[\A(\T(h))=1]\right| \geq 1/\poly(n). 
\end{align*}
This implies that we can use $\A$ to break $G$ in quantum polynomial time by Lemma~\ref{lem:qprg2lqprg}. Finally, by Lemma~\ref{qowf2qprg}, we obtain a quantum polynomial-time algorithm $\A_{inv}$ for inverting any $f$.  
\end{proof}

\subsubsection{Implication for quantum-secure \texorpdfstring{$\iO$}{iO}}
In this section, we use Theorem~\ref{thm:mqcsp_qowf} and quantum-secure $\iO$ to show that if $\class{MQCSP}$ can be solved by a $\class{BQP}$ algorithm, then $\class{NP}\subset \class{coRQP}$, which is the class of one-sided error quantum polynomial-time algorithms such that a ``Yes'' instance will always be accepted while a ``NO'' instance will be rejected with high probability.

We define the quantum-secure $\iO$ as follows: 
\begin{restatable}[Quantum-secure indistinguishability obfuscation, $\iO$]{definition}{qiO}
A probabilistic polynomial-time machine iO is an indistinguishability obfuscator for a circuit class $\{{\cal C}_\lambda\}_{\lambda\in \N}$ if the following conditions are satisfied for all $\lambda \in \N$:
\begin{itemize}
    \item \textbf{Functionality: }For any $C\in {\cal C}_\lambda$, for all inputs $x$, $\iO(C)(x)=C(x)$.
    \item \textbf{Indistinguishability: }For any $C_1,C_2\in {\cal C}_\lambda$ such that $|C_1|=|C_2|$ and $C_1(x)=C_2(x)$ for all inputs $x$, any quantum polynomial-time distinguisher ${\cal A}$ cannot distinguish the distributions $\iO(C_1)$ and $\iO(C_2)$ with noticeable probability, i.e., $\big|\Pr[{\cal A}(\iO(C_1))=1] - \Pr[{\cal A}(\iO(C_2))=1]\big| \leq \negl(\lambda).$
\end{itemize}
\end{restatable}

\begin{remark}
We note that there are some (candidate) constructions of post-quantum $\iO$, based on different assumptions. For example, \cite{bdgm20} constructed $\iO$ based on the circular security of LWE-based encryption schemes, which is conjectured to be quantum-secure. \cite{ww20} showed a construction of $\iO$ based on the indistinguishability of two distributions which is also arguably quantum-secure.
\end{remark}

Theorem~\ref{thm:mqcsp_qowf} implies the following result for quantum-secure $\iO$:

\begin{restatable}{theorem}{qioimplication}\label{thm:qioimplication}
Suppose that quantum-secure $\iO$ for polynomial-size circuits exists. Then, $\class{MQCSP}\in \class{BQP}$ implies $\class{NP}\subseteq \class{coRQP}$. 
\end{restatable}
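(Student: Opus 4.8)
The statement to prove is: if quantum-secure $\iO$ for polynomial-size circuits exists and $\MQCSP \in \class{BQP}$, then $\class{NP} \subseteq \class{coRQP}$. The natural route is to adapt the classical argument of Impagliazzo, Kabanets, and Volkovich~\cite{impagliazzo2018power} showing that $\iO$ plus $\MCSP \in \class{BPP}$ gives $\class{NP} \subseteq \class{coRP}$, making the quantum modifications bookkeeping. First I would recall the high-level classical idea: to decide $\class{SAT}$ on a formula $\varphi(x_1,\dots,x_n)$, consider the ``point function vs.\ all-zeroes'' style circuits — for an unsatisfiable $\varphi$ the circuit $C_\varphi$ that on input $w$ outputs (say) a freshly sampled pseudorandom string unless $w$ is a satisfying assignment, and compare its obfuscation against the obfuscation of the circuit that ignores the satisfying-assignment branch entirely. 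If $\varphi$ is unsatisfiable, the two circuits are functionally equivalent, so their obfuscations are indistinguishable, hence both produce truth-tables of functions with small circuit complexity (because the obfuscated circuit itself is a small circuit for the function it computes); if $\varphi$ is satisfiable, one can embed a genuinely pseudorandom / high-complexity value, which an $\MQCSP$ oracle detects. The $\MQCSP$ oracle, available in $\class{BQP}$ by hypothesis, is what distinguishes the two cases.

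The key steps, in order: (1) use Theorem~\ref{thm:mqcsp_qowf} to conclude that $\MQCSP \in \class{BQP}$ already kills quantum-secure one-way functions; since quantum-secure $\iO$ (together with, say, the trivial one-way function built from $\iO$, or just invoking that $\iO$ implies $\class{qOWF}$ under mild assumptions — actually the cleaner move is to note we no longer need $\class{qOWF}$ to exist, we only need $\iO$ as a gadget) — more precisely, follow \cite{impagliazzo2018power} and build the branching circuit $C_{\varphi,r}$ from a random seed $r$, padded to a fixed polynomial size; (2) run $\iO$ on $C_{\varphi,r}$ to get an obfuscated circuit $\widetilde C$, which is itself a $\poly(n)$-size classical (hence $\poly(n)$-size quantum) circuit computing some Boolean function $g$ on $O(\log n)$-bit inputs — here I'd choose the input length of the branch-indicator so that the truth table of $g$ has polynomial size; (3) feed $\T(g)$ to the $\MQCSP$ algorithm with threshold parameter $s = \poly(n)$ chosen between "small" (what $\widetilde C$ itself witnesses) and "large" (what a random/pseudorandom planted value forces), using Claim~\ref{claim:gatecount}-type counting to see a random function needs complexity $\gg \poly(n)$; (4) conclude: if $\varphi \notin \class{SAT}$, the obfuscation is indistinguishable from the obfuscation of the "no-branch" circuit, whose function has small complexity, so with overwhelming probability $\MQCSP$ says "small" — accept; if $\varphi \in \class{SAT}$, the planted high-complexity value makes $\MQCSP$ say "large" — reject; (5) verify the one-sided error structure to land in $\class{coRQP}$ rather than $\class{BQP}$: the unsatisfiable case must be accepted with certainty (perfect completeness on the $\class{NP}$-"no" side, i.e.\ $\class{coRQP}$'s "yes" side), which requires the obfuscator's functionality to be \emph{exact} and the $\MQCSP$ oracle's behavior on genuinely-small-complexity truth tables to be deterministic-correct — this is exactly why the classical version gives $\class{coRP}$, and the quantum analogue inherits the one-sidedness from the same source.

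The main obstacle I expect is step (5) combined with the error/promise subtleties introduced by $\MQCSP$ being a promise problem: in the classical argument $\MCSP$ is a genuine language, so "unsatisfiable $\Rightarrow$ small complexity $\Rightarrow$ oracle says yes" is clean. Here $\MQCSP_{\alpha,\beta}$ is a promise problem with a gap, and a $\class{BQP}$ algorithm for it only behaves well on truth tables that fall in the promise — one must argue that the truth tables $\T(g)$ arising from obfuscated circuits land safely on the "yes" side of the gap when $\varphi$ is unsatisfiable (true, since $\widetilde C$ exactly computes $g$ with a small circuit, giving success probability $1$, well above $\alpha$) and on the "no" side when $\varphi$ is satisfiable (this needs the planted value to force circuit complexity above $s'$ with probability $1$ over $r$, not just with high probability — so the gadget should plant a value whose \emph{worst-case} complexity over the seed is high, e.g.\ by planting the truth table of a hard function rather than a random string, or by a union bound that still leaves a good $r$, which only gives $\class{coRQP}$-style one-sided error anyway since we can reject whenever the oracle says "large"). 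Handling this cleanly — ensuring perfect soundness on the rejection side so the final class is $\class{coRQP}$ and not merely $\class{BQP}$ — is where the real care goes; the rest is a faithful transcription of \cite{impagliazzo2018power} with "$\class{BPP}$/$\class{coRP}$" replaced by "$\class{BQP}$/$\class{coRQP}$" and "$\MCSP$" by "$\MQCSP$", using that a polynomial-size classical circuit is a fortiori a polynomial-size quantum circuit so all complexity upper bounds transfer for free.
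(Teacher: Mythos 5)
Your instinct to route through Theorem~\ref{thm:mqcsp_qowf} was the right one, but you abandon it in favor of a mechanism that does not work. The gap is in your steps (2)--(4): you propose to evaluate the obfuscated circuit $\widetilde C$ on all $O(\log n)$-bit inputs, feed the resulting truth table $\T(g)$ to the $\MQCSP$ algorithm, and detect satisfiability by a jump in circuit complexity. But $\widetilde C$ is itself a $\poly(n)$-size circuit that computes $g$ exactly, in both the satisfiable and the unsatisfiable case; indeed every function on $O(\log n)$ input bits has circuit complexity $O(2^m/m)=\poly(n)$ unconditionally. So there is never a complexity gap for $\MQCSP$ to see. The ``planted high-complexity value'' in the satisfiable branch cannot rescue this: by design it is only released on inputs $w$ that satisfy $\varphi$, so extracting it in order to test its complexity already requires finding a satisfying assignment, which is the problem being solved. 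Indistinguishability of the two obfuscations is likewise of no help, since the quantity you propose to measure is identical (namely ``small'') in both worlds.

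The paper's proof keeps your step (1) and discards the rest. By Theorem~\ref{thm:mqcsp_qowf}, $\MQCSP\in\class{BQP}$ yields a QPT inverter $\A_{inv}$ that, for any fixed circuit $D$, inverts the candidate one-way function $f_D(r):=\iO(D;r)$ with non-negligible probability $p$; the truth-table-complexity testing you want to perform happens only inside that theorem's proof (breaking the PRG built from $f_D$ via GGM/HILL), never on obfuscation-derived truth tables. The SAT algorithm on input $C$ of size $s$ computes $\hat C=\iO(C;r)$, runs $r'\gets\A_{inv}(\bot_s,\hat C)$ for a canonical unsatisfiable circuit $\bot_s$, and rejects iff $\iO(\bot_s;r')=\hat C$. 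If $C$ is unsatisfiable then $C\equiv\bot_s$, so by indistinguishability the inverter succeeds on $\hat C$ with probability $\Omega(p)$ (it succeeds on inputs of the form $\iO(\bot_s;r)$, and a QPT procedure cannot behave noticeably differently on $\iO(C;r)$), hence the algorithm rejects with that probability and one amplifies by repetition. If $C$ is satisfiable then by functionality $\iO(C;r)$ computes a satisfiable function and can never equal any $\iO(\bot_s;r')$, so the algorithm always accepts. This is also where your step (5) resolves: the one-sidedness needed for $\class{coRQP}$ comes entirely from the functionality of $\iO$, not from any deterministic behavior of the $\MQCSP$ oracle or from which side of its promise gap the relevant truth tables land.
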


\begin{proof}
Let $f_C(r) := \iO(C,r)$, where $r$ is the random string.
Then, by Theorem~\ref{thm:mqcsp_qowf}, we know that there exists a quantum polynomial-time algorithm ${\cal A}_{inv}$ with access to an $\class{MQCSP}$ oracle and a non-negligible function $p$ such that for any circuit $C$,
\begin{align}\label{eq:invert_io}
    \Pr_r\left[f_C({\cal A}_{inv}^{\class{MQCSP}}(C, \iO(C, r)))=f_C(r)\right]\geq p(|r|). 
\end{align}

Then, we can use ${\cal A}_{inv}$ to solve the $\textsf{Circuit-SAT}$ problem.  The algorithm is as follows:

\begin{algorithm}[H]
\caption{A quantum algorithm for \textsf{Circuit-SAT}}
	\label{alg:circuit_sat}
    \begin{algorithmic}[1]
	    \Input The description of a circuit $C:\{0,1\}^n\rightarrow \{0,1\}$.
        \State $s\gets |C|$.
        \State Compute $\bot_s$. \Comment{A canonical unsatisfiable circuit}
        \State $\hat{C}\gets \iO(C,r)$.
        \State $r'\gets {\cal A}_{inv}^{\class{MQCSP}}(\bot_s, \hat{C})$.
        \State \Return ``No'' if $\hat{C}=\iO(\bot_s, r')$.
    \end{algorithmic}
\end{algorithm}
We assume that for any $s\geq 0$, we can compute a canonical unsatisfiable circuit of size $s$ in $\poly(s)$ time.    

If $C\in \textsf{UNSAT}$, then $C\equiv \bot_s$. If $C=\bot_s$, by Eq.~\eqref{eq:invert_io}, ${\cal A}_{inv}^{\class{MQCSP}}$ finds $r$ with probability at least $p(|r|)$. Otherwise, by the indistinguishability of $\iO$ and $\class{MQCSP}\in \class{BQP}$, ${\cal A}_{inv}^{\class{MQCSP}}$ is a quantum polynomial-time algorithm and hence cannot distinguish $C\in\class{UNSAT}\setminus\{\bot_s\}$ and $\bot_s$ with more than $\negl(|r|)$ probability. Therefore, Algorithm~\ref{alg:circuit_sat} will reject $C$ with probability $O(p(|r|))$.

If $C\in \textsf{SAT}$, then $C\not\equiv \bot_s$. By the functionality of $\iO$, for any $r,r'$, $\iO(C,r)\ne \iO(\bot_s, r')$. Hence, Algorithm~\ref{alg:circuit_sat} will always accept $C$.

Hence, by repeatedly running Algorithm~\ref{alg:circuit_sat} many times, we get that $\class{NP} \subset \class{coRQP}$, the one-sided error analog of $\class{BQP}$
\end{proof}

\begin{remark}
It is worth noting that in the classical setting, the existence of $\iO$ implies that $\class{NP}$ and $\MCSP$ are equivalent under randomized reductions; the other direction directly follows from the fact that $\MCSP \in \class{NP}$. However, since it is unclear if $\MQCSP\in \class{NP}$, we can only conclude that $\class{NP} \subseteq \class{RQP}^{\MQCSP}$ assuming the existence of quantum-secure $\iO$. 
\end{remark}

\subsection{Learning theory}\label{sec:main learning}

In this section, we discuss connections between $\MQCSP$ and learning theory. We consider two standard settings: probably approximately correct (PAC) learning and quantum learning. We postpone the details to Appendix~\ref{app:learning}.

\paragraph{PAC learning.}
Let $\class{C}$ be a circuit class. We are interested in how to efficiently learn a function in $\class{C}$. PAC learning is a theoretical framework to evaluate how well a learning algorithm is. Here we focus on a special setting of PAC learning where the algorithm is able to query any input to the unknown function. In the following, we denote $\class{C}$-$\MCSP$ as the classical $\MCSP$ problem with respect to the circuit class $\class{C}$.

\begin{restatable}[PAC learning over the uniform distribution with membership queries]{definition}{paclearning}\label{def:PAC learning}
Let $\class{C}$ be a circuit class and let $\epsilon,\delta>0$. We say an algorithm $(\epsilon,\delta)$-PAC-learns $\class{C}$ over the uniform distribution with membership queries if the following hold. For every $n\in\mathbb{N}$ and $n$-variate $f\in\class{C}$, given membership query access to $f$, the algorithm outputs a circuits $C$ such that with probability at least $1-\delta$ over its internal randomness, we have $\Pr_{x\in\{0,1\}^n}[f(x)\neq C(x)]<\epsilon$. The running time of the learning algorithm is measured as a function of $n,1/\epsilon,1/\delta$ and, $\textsf{size}(f)$.
\end{restatable}

The seminal paper of Carmosino et al.~\cite{CIKK16} showed that efficient PAC learning for a (classical) circuit class $\class{C}$ is \textit{equivalent} to the corresponding $\MCSP$ being easy. Here, we quantize this connection and show in the following theorem that efficient PAC-learning for $\class{BQP/poly}$ is equivalent to efficient algorithm for $\MQCSP$. Here, $\class{BQP/poly}$ is defined as $\bigcup_{s\leq \poly(n)} \class{BQC}(s)$.

For technical reason, we need to work on a gap version of $\MQCSP$ in one direction of the equivalence. Let $\tau:\N\to(0,1/2)$, $\MQCSP[s,s',t,\tau]$ is defined as the gap problem where the No instances in Definition~\ref{def:app_gap MQCSP} becomes ``for every quantum circuit $\Circ$ using at most $s'$ gates and operating on at most $n+t$ qubits, there are at least $\tau$ fraction of $x\in\{0,1\}^n$ such that $\|(\bra{f(x)}\otimes I_{n+t-1})\Circ\ket{x,0^t}\|^2 \leq \frac{1}{2}$''.

\begin{restatable}[Equivalence of efficient PAC learning for $\class{BQP/poly}$ and efficient randomized algorithm for $\MQCSP$]{theorem}{learningpac}\label{thm:PAC learning and MQCSP}
\mbox{}
\begin{itemize}
\setlength\itemsep{-1mm}
    \item If $\MQCSP\in\class{BPP}$, then there is a randomized algorithm that $(1/\poly(n),\delta)$-PAC learns $f\in\class{BQP/poly}$ under the uniform distribution with membership queries for every $\delta>0$. Specifically, the algorithm runs in quasi-polynomial time.
    \item If there is a randomized algorithm that $(1/\poly(n),\delta)$-PAC learns $f\in\class{BQP/poly}$ under the uniform distribution with membership queries for some $\delta>0$ in $2^{O(n)}$ time, then we have $\MQCSP[\poly(n),\omega(\poly(n)),\poly(n),\tau]\in\class{BQP}$ and $\MQCSP[\poly(n),\omega(\poly(n)),O(n),\tau]\in\class{BPP}$ for every $\tau>0$.
\end{itemize}
\end{restatable}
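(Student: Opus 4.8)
The plan is to follow the classical equivalence of Carmosino–Impagliazzo–Kabanets–Kolokolova~\cite{CIKK16} between $\MCSP$-type oracles and PAC learning, quantizing each half, and being careful with the ancilla count because classical simulation of a quantum circuit costs $2^{O(\text{input}+\text{ancilla})}$ time. Throughout I will use the standard ``learning from a natural property'' paradigm: algorithms for $\MQCSP$ yield a natural property distinguishing low-complexity truth tables from random ones, and by the Carmosino et al.\ reconstruction machinery this implies a learner; conversely, a learner applied to the truth table given as input directly decides (the gap version of) $\MQCSP$.

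\textbf{Direction 1 ($\MQCSP\in\class{BPP}\Rightarrow$ learner).} Let $f\in\class{BQP/poly}$, say $f$ is computed by a size-$\poly(n)$ quantum circuit using $\poly(n)$ ancilla qubits. Given membership-query access to $f$, the learner first uses the queries to build (local pieces of) a candidate natural property: by Lemma~\ref{lem:qprg2lqprg} and the GGM construction one builds, from a putative small circuit for $f$, a local $\class{qPRG}$-style object whose truth table is distinguishable from random exactly when the underlying function has a small circuit. The key point is that the Carmosino et al.\ ``Natural Proofs $\Rightarrow$ learning'' reduction is black-box in the distinguisher and in the PRG: it feeds the $\MQCSP\in\class{BPP}$ algorithm (the distinguisher) subfunctions of $f$ via the Nisan–Wigderson generator and recovers, from the success of the distinguisher on these derandomized restrictions, a circuit approximating $f$ on $1-1/\poly(n)$ fraction of inputs. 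I would invoke the quantum-secure Goldreich–Levin theorem and the quantum-secure Nisan–Wigderson generator developed in~\cite{agg20} to make the reconstruction go through when the distinguisher is quantum-free but the function class is quantum. The running time is dominated by running the generator on $m=O(\log n)$-bit restrictions and evaluating $f$ (which the learner can do by queries), together with the $\class{BPP}$ routine for $\MQCSP$ on truth tables of size $2^m=\poly(n)$; choosing $m=\Theta(\log n)$ with a large enough constant to beat the circuit size of $f$ gives quasi-polynomial total running time in $n$. Here $\MQCSP$ only needs to be invoked on truth tables of size $2^{O(\log n)}=\poly(n)$, which is why $\class{BPP}$-access (not $\class{BQP}$-access) suffices.

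\textbf{Direction 2 (learner $\Rightarrow$ gap-$\MQCSP$ easy).} Suppose there is a $2^{O(n)}$-time randomized $(1/\poly(n),\delta)$-PAC learner for $\class{BQP/poly}$. To decide $\MQCSP[\poly(n),\omega(\poly(n)),t,\tau]$ on input $\T(f)$: if it is a Yes instance, $f$ is computed by a $\poly(n)$-size quantum circuit with $t$ ancilla, so $f\in\class{BQP/poly}$ in the relevant sense, and we can simulate membership queries to $f$ directly from the truth table we are given. Run the learner; it outputs a circuit $C$ agreeing with $f$ on $\ge 1-1/\poly(n)$ fraction of inputs, with size $\poly(\text{size}(f),n)=\poly(n)$. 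We then check by brute force over all $x$ (which is $2^{O(n)}$ work, acceptable) and over the learner's output whether such a $C$ of size at most $s'=\omega(\poly(n))$ exists agreeing on all but $\le \tau$ fraction; the No-instance guarantee of the gap problem says no circuit of size $\le s'$ does this, so the learner must have failed, i.e.\ its output fails the check. This gives a randomized decision procedure. The subtlety is the ancilla count in the verification: when $t=O(n)$ we can classically simulate the $s'$-gate witness circuit $C$ in time $2^{O(n)}$ and check agreement, yielding membership in $\class{BPP}$; when $t=\poly(n)$ is superlinear classical simulation is too expensive, so we instead verify $C$ quantumly (run $C$ on each $\ket{x,0^t}$, measure, estimate the success probability), yielding membership in $\class{BQP}$. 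This is exactly the $\class{BQP}$ vs.\ $\class{BPP}$ split in the statement.

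\textbf{Main obstacle.} I expect the hard part to be Direction 1: making the Carmosino et al.\ reconstruction argument survive when the target class is $\class{BQP/poly}$ rather than a classical circuit class, since a quantum circuit is inherently randomized and one cannot freely compose the ``partial'' circuits produced at intermediate stages of the reconstruction. The resolution is to borrow the composable reconstruction tools of~\cite{agg20}, which were built precisely to reconstruct a circuit from a (possibly quantum) distinguisher while controlling error accumulation; one must check that their tools, designed for quantum distinguishers against classical-ish PRGs, still compose when the PRG is the local $\class{qPRG}$ of Lemma~\ref{lem:qprg2lqprg} and the ``hardness'' being exploited is quantum circuit complexity. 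A secondary technical nuisance is bookkeeping the gap parameter $\tau$ and the precision/error thresholds $\alpha,\beta$ through both directions, since the learner only guarantees agreement on a $1-1/\poly(n)$ fraction and the quantum circuit only computes $f(x)$ with probability $\ge\alpha>1/2$ rather than exactly; this is why the forward direction is stated for the gap version $\MQCSP[s,s',t,\tau]$ rather than the exact problem.
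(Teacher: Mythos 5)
Your proposal is correct in substance and follows the same two-pronged route as the paper, and your Direction 2 matches the paper's argument almost exactly: run the learner with membership queries answered from the truth table, empirically test the output circuit's agreement with $f$, use the gap (No instances force every size-$s'$ circuit to err on a $\tau$ fraction) for soundness, and split into $\class{BQP}$ versus $\class{BPP}$ according to whether the witness circuit's ancilla count permits classical simulation in $2^{O(n)}$ time. (The paper samples $O(\log(1/\delta)/\epsilon^2)$ test points rather than brute-forcing all $2^n$ inputs, but both fit the time budget.) Where you diverge is Direction 1, and there you have misplaced the difficulty. Because the distinguisher supplied by $\MQCSP\in\class{BPP}$ is a \emph{classical} randomized algorithm, the paper does not need any quantized reconstruction machinery: it observes that $\class{BQP/poly}\supseteq\class{P/poly}$, so the class is closed under the Nisan--Wigderson and direct-product/Goldreich--Levin constructions of~\cite{CIKK16}, the $\class{BPP}$-natural property remains useful against the derived functions, and Theorem~5.1 of~\cite{CIKK16} applies verbatim to give the quasi-polynomial-time learner. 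The compositionality problem you flag as the ``main obstacle'' (intermediate quantum circuits being inherently randomized and hard to compose) is genuinely the obstacle for the \emph{quantum-learner} equivalence, Theorem~\ref{thm:quantum learning MCSP}, where the distinguisher itself is quantum and the tools of~\cite{agg20} are needed; importing them here is harmless but unnecessary. Likewise, the GGM/local-$\class{qPRG}$ construction of Lemma~\ref{lem:qprg2lqprg} belongs to the cryptographic reductions, not to the CIKK16 learner, whose generator is built from $f$ itself. Your route would still compile into a correct proof, but the paper's Direction 1 is a one-line closure observation rather than a new quantization.
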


Similarly, the positive resolution of Open Problem~\ref{conj:linear ancilla} would strengthen the conclusion of the second item in Theorem~\ref{thm:PAC learning and MQCSP} to $\MQCSP[\poly(n),\omega(\poly(n)),\poly(n),\tau]\in\class{BPP}$.

\paragraph{Quantum learning.}
As it could be the case that $\MQCSP$ might have non-trivial quantum algorithm, it is also of interest to study the connection to quantum learning~\cite{agg20}.

\begin{restatable}[Quantum learning]{definition}{quantumlearning}\label{def:quantum learning}
Let $\class{C}$ be a circuit class of boolean functions and let $\epsilon,\delta>0$. We say a quantum algorithm $(\epsilon,\delta)$-learns $\class{C}$ if the following hold. For every $n\in\mathbb{N}$ and $n$-variate $f\in\class{C}$, given quantum oracle access to $f$, the algorithm outputs a polynomial-size quantum circuit $U$ such that with probability at least $1-\delta$, we have $\mathbb{E}_{x\in\{0,1\}^n}[|(\bra{f(x)}\otimes I)U\ket{x,0^m}|^2]>1-\epsilon$. The running time of the learning algorithm is measured as a function of $n,1/\epsilon,1/\delta$ and, $\textsf{size}(f)$.
\end{restatable}

It turns out that efficient quantum learning for a circuit class $\class{C}$ (could be either a classical circuit class or a quantum circuit class) is equivalent to efficient quantum algorithm for $\class{C}$-$\class{MCSP}$. Similarly, $\class{C}$-$\class{MCSP}[s,s',\tau]$ is defined as the gap problem with the No instances being the truth tables where every circuit $\mathcal{C}$ of size $s'$ errs on $\tau$ fraction of the inputs.

\begin{restatable}[Equivalence of efficient quantum learning and efficient quantum algorithm for $\class{C}$-$\class{MCSP}$]{theorem}{learningquantum}\label{thm:quantum learning MCSP}
Let $\class{C}$ be a circuit class.
\begin{itemize}
\setlength\itemsep{-1mm}
    \item If $\class{C}$-$\class{MCSP}\in\class{BQP}$, then there exists a quantum algorithm that $(1/\poly(n),\delta)$-learns $\class{C}$ for every $\delta>0$. Specifically, the algorithm runs in polynomial time.
    
    \item If there exists a quantum algorithm that $(\epsilon,\delta)$-learns $\class{C}$ in time $2^{O(n)}$ for some constants $\epsilon,\delta\in(0,1/2)$, then we have $\class{C}$-$\class{MCSP}[\poly(n),\omega(\poly(n)),\tau]\in\class{BQP}$ for every $\tau>0$.
\end{itemize}
\end{restatable}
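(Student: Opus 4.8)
\textbf{The plan} is to quantize the ``learning from a natural property'' equivalence of Carmosino, Impagliazzo, Kabanets and Kolokolova~\cite{CIKK16}: an algorithm for $\class{C}$-$\class{MCSP}$ serves as a $\class{BQP}$-natural property useful against $\class{C}$ (Definition~\ref{defn:qnaturalprop}), the Nisan--Wigderson reconstruction paradigm turns such a property into a learner for $\class{C}$, and conversely a learner can be simulated on the given truth table to decide $\class{C}$-$\class{MCSP}$. Each classical ingredient is replaced by its quantum-robust counterpart developed in~\cite{agg20}.

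\textbf{Forward direction ($\class{C}$-$\class{MCSP}\in\class{BQP}\Rightarrow$ quantum learning).} Given the $n$-variate target $f\in\class{C}$ through the membership-query unitary $O_f:\ket{x}\ket{b}\mapsto\ket{x}\ket{b\oplus f(x)}$, build---following~\cite{CIKK16}---the Nisan--Wigderson generator $\mathsf{NW}^{f}$ from $f$ on a combinatorial design with polynomial-length seeds, so that every output bit is computable from $O(\log n)$ evaluations of $f$. Hence the string produced by $\mathsf{NW}^{f}$ is the truth table of a function on $m=\Theta(\log n)$ variables of $\class{C}$-circuit complexity $\poly(n)$, whereas a uniformly random function on $m$ variables has circuit complexity $\widetilde{\Omega}(2^{m})$ with overwhelming probability (Claim~\ref{claim:gatecount}). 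Tuning the size parameter of $\class{C}$-$\class{MCSP}$ strictly between these two bounds, the hypothesised $\class{BQP}$ algorithm accepts outputs of $\mathsf{NW}^{f}$ and rejects random strings, i.e.\ it is a quantum distinguisher with $1/\poly(n)$ advantage. A hybrid argument over the coordinates of $\mathsf{NW}^{f}$ converts this distinguisher into a quantum next-bit predictor for $f$; the quantum Goldreich--Levin theorem of~\cite{agg20} upgrades the predictor to a $\poly(n)$-size quantum circuit agreeing with $f$ on a $\tfrac{1}{2}+\tfrac{1}{\poly(n)}$ fraction of inputs; and a final (quantum) hardness-amplification / self-correction step boosts the agreement to $1-1/\poly(n)$, producing the required hypothesis circuit. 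All queries and all calls to the $\class{C}$-$\class{MCSP}$ oracle are on inputs of polynomial length, so the learner runs in polynomial time, and the failure probability is pushed below any $\delta>0$ by repetition and majority vote.

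\textbf{Backward direction (quantum learning $\Rightarrow\class{C}$-$\class{MCSP}[\poly,\omega(\poly),\tau]\in\class{BQP}$).} Let $L$ be a quantum algorithm that $(\epsilon,\delta)$-learns $\class{C}$ in time $2^{O(n)}$; first amplify its accuracy to some $\epsilon'<\tau/2$ by standard boosting of weak learners. On input a truth table $\T(f)$ with $f:\{0,1\}^{n}\to\{0,1\}$ and size parameter $s=\poly(n)$, run $L$ with a $2^{O(n)}$ time cutoff, answering each membership query by implementing $O_f$ directly from $\T(f)$---a $\poly(2^{n})$-time, i.e.\ polynomial-in-the-input-length, operation, so the whole simulation is a $\class{BQP}$ computation. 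If $L$ times out or returns a circuit larger than $s'=\omega(\poly(n))$, reject; otherwise it returns a $\poly(n)$-size quantum circuit $U$, and we estimate $\mathbb{E}_{x}[|(\bra{f(x)}\otimes I)U\ket{x,0^{m}}|^{2}]$ to small additive error by sampling $x$, running $U$, and measuring, accepting iff the estimate clears a fixed threshold between $1-\tfrac{\tau}{2}$ and $1-\epsilon'$. On a yes-instance $f$ has a size-$s$ $\class{C}$-circuit, so with probability $\ge1-\delta$ the boosted learner halts in time and outputs $U$ with agreement $>1-\epsilon'$, which passes; on a no-instance every circuit of size $\le s'$---in particular the $\poly(n)$-size $U$---errs on a $\ge\tau$ fraction of inputs, so its agreement is at most $1-\tfrac{\tau}{2}$ and the test rejects. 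Standard amplification of $\delta$ and of the estimation confidence completes the argument.

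\textbf{Main obstacle.} The crux is the reconstruction step in the forward direction. Classically one hard-wires the distinguisher into successive hybrids and composes it with partial evaluations of $\mathsf{NW}^{f}$; but a quantum distinguisher is an inherently randomized unitary that cannot be freely copied or wired in series, and measuring one quantum subroutine before feeding it to the next destroys the coherence the analysis needs. Circumventing this requires the composable reconstruction primitives of~\cite{agg20}---a quantum-secure Nisan--Wigderson generator together with a quantum Goldreich--Levin theorem---and a careful accounting of success probabilities and approximation errors through each stage so that only $1/\poly(n)$ total loss is incurred. Making the weak-to-strong boosting in the backward direction faithful to the uniform distribution (so the amplified hypothesis stays accurate under uniform inputs) is a further technical point.
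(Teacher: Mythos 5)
Your overall plan coincides with the paper's: the forward direction quantizes the ``learning from a natural property'' pipeline of \cite{CIKK16} using the composable quantum reconstruction lemmas of \cite{agg20} (quantum NW reconstruction, quantum Goldreich--Levin, quantum direct-product/IJKW decoding), and the backward direction simulates the learner on the given truth table and tests its hypothesis by sampling. Your backward direction is sound and essentially identical to the paper's Algorithm; your explicit boosting of the learner's accuracy below $\tau/2$ is in fact a point the paper's own parameter setting ($\epsilon=\tau/4$ for a learner that only guarantees some fixed constant $\epsilon$) glosses over.

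The forward direction, however, assembles the pipeline in an order that would fail. You build $\mathsf{NW}^{f}$ from the raw target $f$, extract a weak predictor, and only afterwards invoke Goldreich--Levin and a final ``hardness-amplification / self-correction'' step to climb from agreement $\tfrac{1}{2}+\tfrac{1}{\poly(n)}$ to $1-\tfrac{1}{\poly(n)}$. For an arbitrary $f\in\class{C}$ no such step exists: a circuit agreeing with an arbitrary Boolean function on a $\tfrac{1}{2}+\tfrac{1}{\poly(n)}$ fraction of inputs cannot in general be upgraded to one agreeing on a $1-\tfrac{1}{\poly(n)}$ fraction. The amplification must happen on the \emph{encoding} side, before the NW step: the paper defines $g(x_1,\dots,x_k,r_1,\dots,r_k)=\bigoplus_{i}r_i\cdot f(x_i)$ (direct product composed with the Goldreich--Levin mask), builds $\mathsf{NW}^{g}$, and runs the $\class{C}$-$\class{MCSP}$ distinguisher through the chain $C_{NW}\to C_{GL}\to C_{IJKW}$, where each stage decodes the corresponding layer of the encoding --- a weak predictor for $g$ is GL-decoded into a circuit computing $f^{\otimes k}$ with probability $\Omega(1/\poly(n))$, which IJKW then converts into a circuit agreeing with $f$ on a $1-\delta$ fraction. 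With the generator rebased on $g$ rather than $f$, the rest of your argument (the complexity gap between $\mathsf{NW}^{g}$-outputs and random truth tables on $\Theta(\log n)$ variables, and the resulting $1/\poly(n)$ distinguishing advantage) goes through as in the paper.
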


\subsection{Circuit lower bounds}\label{sec:main circuit lbs}
The classical $\MCSP$ is tightly connected to circuit lower bounds. Many results show that a fast algorithm for $\MCSP$ will lead to breakthrough in circuit lower bounds, which on the other hand indicates that $\MCSP$ might be very difficult to solve. In this section, we ``quantize'' four results relating $\MQCSP$ and quantum circuit lower bounds. 

\paragraph{Quantum circuit lower bound via quantum natural proof}
By Observation~\ref{obs:mqcsp_nat_prop}, we know that $\MQCSP$ gives a $\class{BQP}$-quantum natural property. Then, we follow a recent work by Arunachalam et al.~\cite{agg20} and prove the following theorem:

\begin{restatable}{theorem}{cktlbfixed}\label{thm:ckt lb from MQCSP in BQP fixed}
If $\class{MQCSP}\in \class{BQP}$, then $\class{BQE}\not\subset\class{BQC}[n^k]$ for any constant $k\in\mathbb{N}_+$, where $\class{BQE}=\class{BQTIME}[2^{O(n)}]$.
\end{restatable}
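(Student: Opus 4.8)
The plan is to follow the classical ``natural proofs yield circuit lower bounds'' paradigm of Razborov--Rudich, as adapted by Oliveira--Santhanam and Arunachalam et al., but with every ingredient replaced by its quantum analogue. By Observation~\ref{obs:mqcsp_nat_prop}, the assumption $\class{MQCSP}\in\class{BQP}$ gives, for every constant $k$, a $\class{BQP}$-natural property $\Gamma$ that is useful against $\class{QC}[n^k]$: it is constructive (decidable in $\class{BQP}$), large (at least half of all $n$-variate Boolean functions satisfy it), and no function computed by an $n^k$-size quantum circuit is in $\Gamma$. The goal is to diagonalize against $\class{BQC}[n^k]$ using a function computable in $\class{BQE}=\class{BQTIME}[2^{O(n)}]$.

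First I would set up a quantum-secure pseudorandom generator / hardness assumption dichotomy (a ``win-win'' argument). Fix $k$. Either (a) there is a quantum-secure $\class{qPRG}$ (equivalently, by Lemma~\ref{qowf2qprg}, strong enough one-way functions) with exponential stretch and seed length $n^{\epsilon}$ computable in $\class{BQE}$, or (b) no such generator exists. In case (b), a standard argument shows that the hardness needed for a circuit lower bound is essentially free: the absence of exponentially hard functions against $\class{BQC}[n^k]$ would itself be refuted, or more precisely one argues directly that $\class{BQE}\not\subset\class{BQC}[n^k]$ because a $\class{qPRG}$ could be built from any such hard function (this is the direction used in \cite{os16,agg20}). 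In case (a), the natural property $\Gamma$ acts as a distinguisher: a truly random truth table is in $\Gamma$ with probability $\ge 1/2$ by largeness, while the output of the $\class{qPRG}$, being a function with a small circuit (its seed plus the generator circuit, which is polynomial size once $n^{\epsilon}$-length seeds are hardwired), is \emph{never} in $\Gamma$ by usefulness once $n$ is large. This contradicts the pseudorandomness of the $\class{qPRG}$ against $\class{BQP}$ distinguishers, since $\Gamma$ is decidable in $\class{BQP}$. Hence case (a) is impossible, and we are in case (b), which yields the lower bound.

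The remaining work is to make the diagonalization/hardness step fully rigorous in the quantum setting, and this is where I expect the main obstacle. Classically one uses a simple counting-plus-diagonalization lemma: in time $2^{O(n)}$ one enumerates all $\le n^k$-size circuits, finds (by brute force over $2^n$ inputs) a truth table not computed by any of them, and this defines a language in $\class{E}$ outside $\class{SIZE}[n^k]$. The quantum analogue requires a \textbf{diagonalization lemma for quantum circuits}: one must argue that a $\class{BQE}$ machine can, for each input length $n$, identify a Boolean function that is $\alpha$-hard (in the $\MQCSP$ sense, with the right gap) for all $\class{QC}[n^k]$ circuits, \emph{and} that membership in the resulting language is decidable in quantum $2^{O(n)}$ time. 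The subtlety is that ``$\Circ$ does not compute $f$'' is now a statement about acceptance probabilities, so the brute-force search must estimate, for each of the (finitely many, up to an $\epsilon$-net on gate parameters) size-$n^k$ quantum circuits, whether it $\beta$-fails somewhere; since each such circuit acts on $\poly(n)$ qubits, its behaviour can be computed exactly (or to exponential precision) by a classical $2^{\poly(n)}\le 2^{O(n)}$-time simulation, so the diagonalizing function is indeed computable in $\class{BQE}$ (in fact in $\class{E}$). I would isolate this as a self-contained lemma. Finally, to get the statement for \emph{all} constants $k$ simultaneously with a single language, I would use the standard padding trick: the hard language for exponent $k$ is obtained at input length $n$ by diagonalizing against circuits of size $n^{k}$, and letting $k$ grow very slowly with $n$ (or simply noting that the family $\{\class{BQE}\not\subset\class{BQC}[n^k]\}_k$ is what is claimed, one $k$ at a time). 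A secondary technical point to handle carefully is the choice of universal gate set and the error thresholds $\alpha,\beta$: by Claim~\ref{claim:gateset2gateset} the natural property is robust to the gate set up to polylogarithmic blow-up, which is absorbed into the exponent, so no genuine difficulty arises there.
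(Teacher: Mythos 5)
Your high-level plan (a win--win argument combining a $\class{BQP}$-natural property from $\MQCSP\in\class{BQP}$ with a PRG and a quantum diagonalization lemma) is the right shape and matches the paper's strategy, but the two branches of your dichotomy are both broken as written.

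First, your case analysis does not close. You split on whether a quantum-secure \emph{cryptographic} $\class{qPRG}$ exists, kill case (a) by the Razborov--Rudich argument (the PRG's outputs, viewed as truth tables, have small circuits, hence are rejected by the natural property), and then claim case (b) gives the lower bound ``because a $\class{qPRG}$ could be built from any such hard function.'' That implication runs the wrong way: from ``no $\class{qPRG}$'' and ``hard function $\Rightarrow$ $\class{qPRG}$'' you can only conclude that \emph{no} suitably hard function exists, which is the negation of what you want. What your case (a) actually proves is Theorem~\ref{thm:mqcsp_qowf} (no $\class{qOWF}$), which does not by itself yield a circuit lower bound for $\class{BQE}$. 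The paper's dichotomy is instead on whether $\class{PSPACE}\subseteq\class{BQSUBEXP}$. In the branch where it fails, one has a function hard for uniform quantum subexponential time, and from it builds an NW/IW-style generator (Theorem~\ref{thm:bqp_PRG}) whose \emph{outputs are themselves hard truth tables} and which is secure against uniform $\class{BQP}$ adversaries; since the natural property is such an adversary and accepts half of random truth tables, it must accept a constant fraction of generator outputs, so those outputs really are hard, and evaluating the generator (running time $O(2^{\ell(n)})$ on seed length $\ell$) defines a hard language in $\class{E}\subseteq\class{BQE}$. You are conflating this complexity-theoretic PRG, whose outputs are hard, with the cryptographic PRF of Razborov--Rudich, whose outputs are easy; only the former extracts a lower bound from the natural property.

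Second, your fallback diagonalization is not in $\class{BQE}$: the inequality $2^{\poly(n)}\le 2^{O(n)}$ is false. Enumerating all size-$n^k$ quantum circuits and simulating each on $\poly(n)$ qubits takes $2^{\Theta(n^k\log n)}$ time, which places the diagonalizing function in $\class{EXP}$ (or $\class{EXPSPACE}$), not $\class{E}$. If one could diagonalize against $\class{BQC}[n^k]$ directly in $2^{O(n)}$ time, the theorem would be unconditional and the $\MQCSP$ hypothesis superfluous. The paper's Claim~\ref{clm:qc_diag} instead shows $\class{BQC}[n^k]\subseteq\class{DSPACE}[n^{2k}]/n^{2k}$ and diagonalizes via the nonuniform space hierarchy, obtaining a hard language only in $\class{PSPACE}$; it is precisely the other branch's hypothesis $\class{PSPACE}\subseteq\class{BQSUBEXP}$ that brings this language into $\class{BQE}$. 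You need to repair both the direction of the implication in your case (b) and the placement of the diagonalizing language before the argument goes through.
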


\begin{remark}
A key difference between Theorem~\ref{thm:ckt lb from MQCSP in BQP fixed} and \cite{agg20} is that their circuit lower bound for $\class{BQE}$ is against \emph{classical} circuits, while ours is against \emph{quantum} circuits by proving a diagonalization lemma for quantum circuits.
\end{remark}

An ingredient of our proof is a conditional pesudorandom generator against uniform quantum computation. We first recall the definition of PRG against uniform quantum circuits given by \cite{agg20}.

\begin{definition}[Pesudorandom generator against uniform quantum circuit, \cite{agg20}]\label{def:prg_agg20}
A family of functions $\{G_n\}_{n\geq 1}$ is an infinitely often $(\ell, m, s, \epsilon)$-generator against uniform quantum circuits if the following properties holds:
\begin{enumerate}
    \item \textsf{Stretch: }$G_n: \{0,1\}^{\ell(n)}\rightarrow \{0,1\}^{m(n)}$.
    \item \textsf{Uniformity and efficiency: } There is a deterministic algorithm $A$ that when given $1^n$ and $x\in \{0,1\}^{\ell(n)}$ runs in time $O(2^{\ell(n)})$ and outputs $G_n(x)$.
    \item \textsf{Pseudorandomness: } For every deterministic algorithm $A$ such that when given $1^{m(n)}$ runs in time $s(m)$ and outputs a quantum circuit $C_m$ of size at most $s(m)$ computing a $m$-input Boolean function, for infinitely many $n\geq 1$, 
    \begin{align*}
        \left| \Pr_{x\sim \{0,1\}^{\ell(n)}, C_m}[C_m(G_n(x))=1]-\Pr_{y\sim \{0,1\}^{m(n)}, C_m}[C_m(y)=1] \right|\leq \epsilon(m).
    \end{align*}
\end{enumerate}
\end{definition}

\cite{agg20} constructed the following infinitely often PRG based on the assumption $\class{PSPACE} \nsubseteq \class{BQSUBEXP}$.
\begin{theorem}[Conditional PRG against uniform quantum computations, \cite{agg20}]\label{thm:bqp_PRG}
Suppose that $\class{PSPACE} \nsubseteq \class{BQSUBEXP}$. 
Then, for some choice of constants $\alpha \geq 1$ and $\lambda \in (0,1/5)$, there is an infinitely often $(\ell, m, s, \varepsilon)$-generator $G = \{G_n\}_{n \geq 1}$, where 
$\ell(n) \leq n^\alpha$, $m(n) = \lfloor 2^{n^\lambda} \rfloor$,  $s(m) = 2^{n^{2\lambda}} \geq \mathsf{poly}(m)$ \emph{(}for any polynomial\emph{)}, and $\varepsilon(m) = 1/m.$ 
\end{theorem}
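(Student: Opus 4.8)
The plan is to follow the classical hardness-versus-randomness paradigm of Nisan--Wigderson, instantiated with a $\class{PSPACE}$-complete hard function and made \emph{uniform} in the style of Impagliazzo--Wigderson and Trevisan--Vadhan, while quantizing every reconstruction step. Concretely, the generator $G_n$ will be a Nisan--Wigderson generator built from a combinatorial design on $\ell(n)=n^{\alpha}$ seed bits with $m(n)=\lfloor 2^{n^{\lambda}}\rfloor$ output bits, applied to (a hardcore bit of) a Boolean function $f$ derived from a $\class{PSPACE}$-complete language. The whole argument is a contrapositive: if some uniform, deterministic, $s(m)$-time algorithm $A$ outputs a size-$s(m)$ quantum circuit $C_m$ that $\varepsilon(m)$-distinguishes $G_n$'s output from uniform, then we manufacture a subexponential-time quantum procedure deciding a $\class{PSPACE}$-complete language on \emph{every} input, contradicting $\class{PSPACE}\not\subseteq\class{BQSUBEXP}$.

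First I would fix the hard function. Take the standard arithmetization of $\class{TQBF}$ (LFKN/Shamir style): a family of low-degree polynomials $\{p_k\}$ over a field of size $\poly(k)$ that is (i) computable in $\class{PSPACE}$, (ii) \emph{downward self-reducible} (the value of $p_k$ on an input reduces to a few values of $p_{k-1}$), and (iii) \emph{self-correctable} via Reed--Muller/Berlekamp--Welch list decoding. Let $f$ be obtained by taking the quantum Goldreich--Levin hardcore bit of $p_k$ (using the quantum Goldreich--Levin theorem of \cite{agg20}); a Yao-style argument reduces ``predicting $f$ on a $\tfrac12+\gamma$ fraction of inputs by a uniformly-generated quantum circuit'' to ``computing $p_k$ on a noticeable fraction of inputs by a uniformly-generated quantum circuit.'' Combining (ii) and (iii): self-correction upgrades an approximate quantum circuit for $p_k$ at one length to one correct almost everywhere, and downward self-reduction lets a uniform quantum algorithm bootstrap from small lengths up to length $k$ in $\poly(k)\cdot 2^{o(k)}$ total time, deciding the $\class{PSPACE}$-complete language --- contradiction. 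Hence, at infinitely many lengths, no uniformly-$s(m)$-time-generated quantum circuit of size $s(m)$ agrees with $f$ on more than $\tfrac12+1/\poly$ of inputs. Choosing the hard length to be $\approx n^{\lambda}$ makes the design parameters work out so that $\ell(n)\le n^{O(1)}$ and the security bound is $s(m)=2^{n^{2\lambda}}\ge\poly(m)$ against advantage $\varepsilon(m)=1/m$.

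With the hard function in hand, I would invoke the quantum-secure Nisan--Wigderson generator of \cite{agg20} and prove pseudorandomness by the hybrid/next-bit-predictor reconstruction: a distinguisher with advantage $1/m$ yields, by averaging over the $m$ hybrids, an index $i^{*}$ and a next-bit predictor, which --- by the bounded intersections of the design --- depends on only $\polylog(m)$ design bits and becomes a small quantum predictor for $f$. The crucial point is keeping this \emph{uniform}: classically one hardwires the design row, the hybrid index, and fixings of the ``other'' NW inputs; instead the uniform reconstructor should \emph{compute} the design from $1^m$, \emph{find} $i^{*}$ by sampling-and-estimating each hybrid's advantage, and --- rather than hardwiring correct-on-average behaviour --- apply self-correction so only a short, uniformly-recomputable description is needed. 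Since the given distinguisher-generator $A$ is deterministic and uniform, every object used is computable in time $\poly(s(m))\cdot 2^{O(\ell(n))}=2^{o(k)}$, so the composed procedure is exactly a subexponential-time quantum algorithm for the $\class{PSPACE}$-complete language.

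The main obstacle, I expect, is precisely this uniformization of the reconstruction in the quantum world. Classically the NW reconstruction is inherently nonuniform, and making it uniform needs the Trevisan--Vadhan toolkit (instance checkers, estimating advantages from samples, search-to-decision inside $\class{PSPACE}$); quantizing it is delicate because a quantum next-bit predictor is randomized and its intermediate state is not reproducible, so the hybrid/Yao and Goldreich--Levin steps must be run with explicit success amplification and composed as quantum subroutines --- exactly the composable quantum reconstruction machinery of \cite{agg20}. A secondary point requiring care is hitting the exact parameter regime ($\ell(n)\le n^{\alpha}$, $m(n)=\lfloor 2^{n^{\lambda}}\rfloor$, $s(m)=2^{n^{2\lambda}}$, $\varepsilon(m)=1/m$ with $\lambda\in(0,1/5)$): this pins down how aggressively the design can stretch the seed and forces $\lambda$ bounded away from $1/5$ so the reconstruction overhead stays below the hardness bound $s(m)$.
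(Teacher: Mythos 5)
This paper does not prove Theorem~\ref{thm:bqp_PRG}; it imports the statement wholesale from~\cite{agg20} and uses it as a black box in the proof of Theorem~\ref{thm:ckt lb from MQCSP in BQP fixed}. Your proposal reconstructs essentially the argument that~\cite{agg20} itself gives: a Nisan--Wigderson generator instantiated with a downward self-reducible, self-correctable $\class{PSPACE}$-complete function, made uniform in the Impagliazzo--Wigderson/Trevisan--Vadhan style, with the reconstruction steps (hybrid argument, next-bit prediction, Goldreich--Levin, self-correction) replaced by their quantum, composable counterparts from that paper. So relative to the source of the theorem, your approach is the intended one, and you correctly identify the two genuinely delicate points --- uniformizing the inherently nonuniform NW reconstruction for randomized quantum predictors, and tuning $\lambda<1/5$ so the reconstruction overhead stays below $s(m)$ --- though as written this is a proof plan rather than a proof, and the infinitely-often quantifier structure in the contrapositive would need to be tracked carefully.
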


Now, we are ready to prove the lower bound for $\class{BQE}$ based on the conditional PRG and a diagonalization theorem for quantum circuits.

\begin{proof}[Proof of Theorem~\ref{thm:ckt lb from MQCSP in BQP fixed}]
We use a win-win argument to prove the circuit lower bound. 

\paragraph{Case 1:} Suppose $\class{PSPACE}\subseteq \class{BQSUBEXP}$, i.e., for every $\gamma\in (0, 1] $, $\class{PSPACE}\subseteq \class{BQTIME}[2^{n^\gamma}]$. Then, for a fixed $k\in \N$, by a diagonalization lemma for quantum circuits (Claim~\ref{clm:qc_diag}), we know that there exists a language $L\in \class{PSPACE}$ such that $L\notin \class{BQC}[n^k]$. However, by the assumption, $L\in \class{BQE}$, which implies that $\class{BQE}\not\subset \class{BQC}[n^k]$.

\paragraph{Case 2:} $\class{PSPACE}\not \subseteq \class{BQSUBEXP}$, that is, there exists a language $L\in \class{PSPACE}$ and $\gamma >0$ such that $L\notin \class{BQTIME}[2^{n^\gamma}]$. By Theorem~\ref{thm:bqp_PRG}, for some $\alpha\geq 1, \lambda\in (0,1/5)$, there exists an infinitely often $(\ell,m,s,\epsilon)$-PRG ${\cal G}=\{G_n\}_{n\geq 1}$, where $\ell(n) = n^{\alpha}$, $m(n) = \lfloor 2^{n^\lambda}\rfloor$, $s(m)=\lfloor 2^{n^{2\lambda}}\rfloor$, $\epsilon(m)=1/m$.

For each $w\in \{0,1\}^{n^\alpha}$, we consider $G_n(w)$ as the truth table of Boolean function $\textsf{fnc}(G_n(w)):\{0,1\}^{d}\rightarrow \{0,1\}$, where $d:=\log (m(n))$ is the input length of the function. We will show that $\textsf{fnc}(G_n(w))$ is a hard function for $\class{BQC}[d^{O(1)}]$ for most $w\in \{0,1\}^{\ell(n)}$. 

Suppose that this is not true, i.e., there exists a $k>0$ such that for almost every $n>0$,  $\textsf{fnc}(G_n(w))\in \class{BQC}[d(n)^k]$ for a constant fraction of seeds $w\in \{0,1\}^{\ell(n)}$. Then, consider a quantum circuit $C^{\MQCSP}_m$ which takes a $m$-bit string $s$ and accepts it if and only if $\MQCSP(s,1^{d^k})=1$, where $s$ is the truth table and $d^k$ is the size parameter. Since we assume $\MQCSP\in \class{BQP}$, the quantum circuit $C^{\MQCSP}_m$ can be generated by a deterministic algorithm
in time $\poly(m)\leq s(m)$\footnote{For all problems in $\class{BQP}$, there exists a classical Turing machine that can efficiently uniformly generate the quantum circuits.}. This implies that 
\begin{align*}
    \Pr_{w\sim \{0,1\}^{\ell(n)}, C^{\MQCSP}_m}\left[C^{\MQCSP}_m(G_n(w))=1\right] \geq \delta
\end{align*}
for some constant $\delta\in (0, 1)$. On the other hand, by the pseudorandomness property of $G_n$ (part 3 in Definition~\ref{def:prg_agg20}), for infinitely many $n$, we have
\begin{align}\label{eq:prg_contradict}
    \left| \Pr_{w\sim \{0,1\}^{\ell(n)}, C^{\MQCSP}_m}\left[C^{\MQCSP}_m(G_n(w))=1\right]-\Pr_{y\sim \{0,1\}^{m(n)}, C^{\MQCSP}_m}\left[C^{\MQCSP}_m(y)=1\right] \right| \leq \frac{1}{m}.
\end{align}
However, only $o(1)$-fraction of random functions have polynomial-size quantum circuits, i.e., 
\begin{align*}
    \Pr_{y\sim \{0,1\}^{m(n)}, C^{\MQCSP}_m}\left[C^{\MQCSP}_m(y)=1\right]\leq o(1),
\end{align*}
which means Eq.~\eqref{eq:prg_contradict} cannot hold. Therefore, for infinitely many $n$, and almost all $w$, the function $\textsf{fnc}(G_n(w))\notin \class{BQC}[n^k]$ for every $k\in \N_+$.

Therefore, we can construct a hard language $L^{\cal G}$ as follows:
\begin{itemize}
    \item For any $n>0$ and every $x\in \{0,1\}^n$, check if $x$ can be written as $(w, y)$, where $|w|=\ell(t)$ and $|y|=\lceil\log m(t)\rceil$ for some $t\in \N$. 
    \item If not, then $L^{\cal G}(x): = 0$.
    \item Otherwise, $L^{\cal G}(x):=\textsf{fnc}(G_t(w))(y)$.
\end{itemize}

We first show that $L^{\cal G}\in \class{BQE}$. By the running time property of $G_n$ (part 2 in Definition~\ref{def:prg_agg20}), $G_n(w)$ can be computed in deterministic time $O(2^{\ell(t)})\leq O(2^n)$. Hence, $L^{\cal G}\in \class{E}\subset \class{BQE}$.

Then, we show that $L^{\cal G}\notin \class{BQC}[n^{k}]$ for every $k\in \N_+$. Fix $k> 0$. Suppose there exists a quantum circuit family $\{C_n\}_{n\geq 1}$ that computes $L^{\cal G}$ and $C_n$ has size $n^k$ for every $n\geq 1$. However, we already proved that there exists an infinite-size subset $\{\mathcal{S}\subset \N\}$ such that for $n\in \mathcal{S}$, there exists many ``hard seed'' $w_n$ such that
\begin{align}\label{eq:L_g_hard_cond}
    \textsf{fnc}(G_t(w_n))\notin \class{BQC}[t^{2\alpha k}].
\end{align}
Then, for any $n\in S$ and any $w_n$ that makes Eq.~\eqref{eq:L_g_hard_cond} hold, define a new quantum circuit family $\{C\downharpoonright_{w_n}\}_{n\geq 1}$ such that $C\downharpoonright_{w_n}(y):=C(w_n, y)$, i.e., $C\downharpoonright_{w_n}$ computes the hard function $\textsf{fnc}(G_t(w_n))$. Hence, $C\downharpoonright_{w_n}$ must have size larger than $t^{2\alpha k}$.  Since $n=\ell(t) + \log m(t) = t^{\alpha} + t^{\lambda}\leq t^{2\alpha}$, and the size of $C_n$ should be least the size of its restriction $C\downharpoonright_{w_n}$, we conclude that $C_n$ has size larger than $n^k$ for these infinitely many $n\in {\cal S}$.  Therefore, the $\class{BQE}$ language $L^{\cal G}\notin \class{BQC}[n^k]$, which implies $\class{BQE}\not\subset \class{BQC}[n^k]$.

Combining Case 1 and 2 completes the proof of the theorem.
\end{proof}

\paragraph{Circuit lower bound for $\class{BQP}^{\class{QCMA}}$}
Our second result shows that if $\MQCSP\in \class{BQP}$, then $\class{BQP}^{\class{QCMA}}$ cannot be computed by polynomial-size quantum circuits. Our result follows the seminal work of Kabanets and Cai~\cite{KC00}, which showed a circuit lower bound for $\class{P}^{\class{NP}}$ based on $\MCSP$ is easy.  More specifically, we consider the following ``hard problem'':

\begin{definition}[Maximum quantum circuit complexity problem]
The input of this problem is $1^n$ for $n\in \N_+$. The output is the truth table of a function $f:\{0,1\}^n\rightarrow \{0,1\}$ such that for any $f':\{0,1\}^n\rightarrow \{0,1\}$, the quantum circuit complexity $\mathrm{qCC}(f)\geq \mathrm{qCC}(f')$. 
\end{definition}

We first prove that $\class{BPE^{QCMA}}$ can solve the maximum quantum circuit complexity problem, which implies that $\class{BPE^{QCMA}}$ contains the hardest Boolean function. Then, by the standard padding  argument, we can show quantum circuit lower bound for $\class{BQP^{QCMA}}$.
\begin{restatable}{theorem}{cktlbmax}\label{thm:ckt lb from MQCSP in BQP max}
If $\class{MQCSP}\in \class{BQP}$, then $\class{BPE^{QCMA}}$ contains a function with maximum quantum circuit complexity. Furthermore, $\class{BQP^{QCMA}}\not\subset \class{BQC}[n^k]$ for any constant $k>0$.
\end{restatable}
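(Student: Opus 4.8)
The plan is to follow the Kabanets--Cai strategy~\cite{KC00}: build a Boolean function of (near-)maximum quantum circuit complexity low in the quantum exponential hierarchy, and then scale it down by a padding argument. I separate the proof into the construction of the hard function and the padding.

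\emph{Step 1: the maximum-complexity function lies in $\class{BPE^{QCMA}}$.} The crucial point is that, under the hypothesis $\MQCSP\in\class{BQP}$, the predicate ``the truth table $T$ has quantum circuit complexity greater than $s$'' (i.e.\ $T$ is a No-instance of $\MQCSP$ with size parameter $s$) can be decided by a $\class{BQP}$ algorithm, and hence can be used as the verification step of a $\class{QCMA}$ protocol --- this is exactly where the $\class{QCMA}$ oracle (rather than an $\class{NP}$ oracle) comes from. Concretely, define a language $L$ consisting of tuples $(1^n,1^s)$, suitably padded to length $2^{O(n)}$, such that there exists a truth table $T\in\{0,1\}^{2^n}$ that is a No-instance of $\MQCSP$ with size parameter $s$; the $\class{QCMA}$ witness is $T$ itself (of length polynomial in the padded input), and the verifier runs the assumed $\class{BQP}$ algorithm for $\MQCSP$ on $(T,s)$ and accepts iff that algorithm rejects. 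After standard error reduction this gives $L\in\class{QCMA}$. A machine running in $2^{O(n)}$ time with oracle $L$ then, on input $1^n$, first binary-searches over $s\le O(2^n/n)$ using $O(n)$ oracle calls to find the largest $s^\ast$ with $(1^n,1^{s^\ast})\in L$, and then reconstructs a witnessing truth table bit by bit: maintaining a prefix $p$, it queries whether $p0$ extends to a No-instance of $\MQCSP$ with parameter $s^\ast$ (again a $\class{QCMA}$ query, with witness equal to the completion), fixes the next bit accordingly, and iterates $2^n$ times. By Claim~\ref{claim:gatecount} a random truth table has quantum circuit complexity at least $\tfrac{2^n}{(c+1)n}$, so $s^\ast=\omega(\poly(n))$, and the reconstructed $T$ has quantum circuit complexity equal to the maximum over all $n$-bit functions (up to the promise gap, which is absorbed by working with the gap version of Definition~\ref{def:app_gap MQCSP} and targeting a complexity a constant factor below the counting bound). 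This places a language $G$ of maximum quantum circuit complexity in $\class{E}^{\class{QCMA}}\subseteq\class{BPE^{QCMA}}$.

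\emph{Step 2: padding.} Suppose toward a contradiction that $\class{BQP^{QCMA}}\subseteq\class{BQC}[n^k]$ for some constant $k$. Fix $m=m(n):=\lceil (k+2)\log n\rceil$, let $G$ be the language from Step~1, and define $L'$ on inputs $x\in\{0,1\}^n$ by $L'(x):=G(x_1\cdots x_{m})$; running the $\class{BPE^{QCMA}}$ algorithm for $G$ on the $m$-bit prefix takes $2^{O(m)}=\poly(n)$ time with a $\class{QCMA}$ oracle, so $L'\in\class{BPP^{QCMA}}\subseteq\class{BQP^{QCMA}}$. By assumption $L'$ is computed by a quantum circuit family of size $n^k$ (with $O(n^k)$ ancillas). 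Hard-wiring the last $n-m$ input wires of such a circuit to $\ket{0}$ yields a quantum circuit of size at most $n^k$, using at most $n^k$ ancillas, that computes the $m$-bit restriction $G\!\restriction_{\{0,1\}^m}$. But by Claim~\ref{claim:gatecount} this $m$-bit function has quantum circuit complexity at least $\tfrac{2^m}{(c+1)m}\ge \tfrac{n^{k+2}}{(c+1)(k+2)\log n}>n^k$ for all large $n$ --- a contradiction. Hence $\class{BQP^{QCMA}}\not\subset\class{BQC}[n^k]$ for every constant $k>0$.

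\emph{Main obstacle.} The delicate part is Step~1: handling the promise-problem nature of $\MQCSP$. When the $\class{BQP}$ algorithm for $\MQCSP$ is invoked on truth tables lying in the $(\alpha,\beta)$-gap its answer is unconstrained, so one must argue that the binary search and the bit-by-bit reconstruction still converge to a function whose \emph{genuine} quantum circuit complexity is super-polynomial and, for the padding step, large enough to survive restriction to $\Theta(\log n)$ variables. This is dealt with by using the gap version of $\MQCSP$, choosing the target complexity a constant factor below the counting bound of Claim~\ref{claim:gatecount} so that there is slack on both sides of the gap at every query, and checking that the counting bound of Claim~\ref{claim:gatecount} is not spoiled by the $O(n^k)$ ancilla qubits available to the hypothetical $\class{BQC}[n^k]$ circuit (it is not, since $n^k$ stays well below the target gate count $\Theta(2^m/m)$ once $m$ is a sufficiently large constant multiple of $\log n$). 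The rest --- the $\class{QCMA}$ membership of $L$, the $\class{E}$-time bookkeeping, and the restriction argument --- is routine.
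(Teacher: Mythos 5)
Your proposal is correct and follows essentially the same route as the paper: under $\MQCSP\in\class{BQP}$ the existence of a high-complexity truth table becomes a $\class{QCMA}$ predicate, one searches for the maximum complexity and reconstructs a witnessing truth table bit by bit in $\class{BPE^{QCMA}}$, and a padding/restriction argument against Claim~\ref{claim:gatecount} yields the $\class{BQC}[n^k]$ lower bound. Your write-up is in fact more careful than the paper's (explicit handling of the promise gap and of the padding step, which the paper only invokes as "standard"), but the underlying argument is identical.
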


We note that there are two subtle differences between Theorem~\ref{thm:ckt lb from MQCSP in BQP max} and \cite{KC00}'s result:
\begin{itemize}
    \item We need a $\class{QCMA}$ oracle while \cite{KC00} used an $\class{NP}$ oracle. This is because we assume that $\MQCSP\in \class{BQP}$. In order to decide the maximum quantum circuit complexity, we can non-deterministically guess a truth table and use the $\class{BQP}$ algorithm to verify its quantum circuit complexity. This process can be achieved by an $\class{QCMA}$ oracle.
    \item Another difference is that we consider the $\class{BPE}$ class while \cite{KC00} considered the $\class{E}$ class. This is because our $\class{QCMA}$ oracle can only output correct answers with high probability. Thus, the whole algorithm will be a randomized algorithm. 
\end{itemize} 

The formal proof is deferred to Section~\ref{sec:mqcsp_bpe_qcma}.

\paragraph{Hardness amplification using $\class{MQCSP}$}
\cite{KC00} showed that the classical $\class{MCSP}$ can be used for hardness amplification, i.e., given one very hard Boolean function, there exists an efficient algorithm to find many hard functions via an $\class{MCSP}$ oracle. We show that it also holds for quantum circuits:
\begin{restatable}{theorem}{ampckt}\label{thm:mqcsp_hardness_amp}
Assume $\class{MQCSP}\in \class{BQP}$. Then, there exists a $\class{BQP}$ algorithm that, given the truth table of an $n$-variable Boolean function of quantum circuit complexity $2^{\Omega(n)}$, outputs $2^{\Omega(n)}$ Boolean functions on $m=\Omega(n)$ variables each, such that all of the output functions have quantum circuit complexity greater than $\frac{2^m}{(c+1)m}$ for any $c>0$.     
\end{restatable}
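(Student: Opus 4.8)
The plan is to follow the hardness-amplification argument of Kabanets and Cai~\cite{KC00}, replacing each classical ingredient by a quantum-secure analogue. The starting point is the truth table of an $n$-variable Boolean function $f$ with $\mathrm{qCC}(f) = 2^{\Omega(n)}$. From $f$ we build a \emph{quantum Impagliazzo--Wigderson generator} $G_f$: a function that maps a short seed $w$ to the truth table of an $m = \Omega(n)$-variable function $G_f(w)$, computable from $\T(f)$ in $\poly(2^n)$ time, and that $\epsilon$-fools every quantum circuit of size $2^{O(m)}$ (with $\epsilon = 2^{-\Omega(m)}$) whenever $f$ is worst-case hard for quantum circuits of size $2^{\Omega(n)}$. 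The construction proceeds in the usual stages: (i) a \emph{direct-product} step turning worst-case hardness into mild average-case hardness, for which we need a quantum direct-product/XOR theorem obtained by combining a concatenation reduction with the quantum Goldreich--Levin theorem of~\cite{agg20} to extract a hard bit; (ii) a hardness-amplification step pushing mild average-case hardness to $(1/2 + 2^{-\Omega(m)})$-hardness; and (iii) a \emph{Nisan--Wigderson} step, using the quantum-secure NW generator of~\cite{agg20}, to stretch this hard bit into a PRG of exponential stretch fooling size-$2^{O(m)}$ quantum circuits. Because each of these reductions is black-box and only invokes the would-be distinguisher as a subroutine polynomially many times, the classical security analyses carry over to quantum distinguishers, so $G_f$ indeed fools all $2^{O(m)}$-size quantum circuits.

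Next I would use the assumption $\MQCSP \in \class{BQP}$ to build the distinguisher. Let $C_m^{\MQCSP}$ be the polynomial-size quantum circuit that, on an $m$-variable truth table $T$, accepts iff the $\MQCSP$ algorithm reports $\mathrm{qCC}(T) \le s := \frac{2^m}{(c+1)m}$. By Claim~\ref{claim:gatecount}, a uniformly random $m$-variable function has $\mathrm{qCC} > s$ with probability $1 - o(1)$, so $\Pr_{T}[C_m^{\MQCSP}(T) = 1] = o(1)$. On the other hand $C_m^{\MQCSP}$ has size $\poly(2^m) = 2^{O(m)}$, hence cannot distinguish $G_f(w)$ from uniform by more than $\epsilon = 2^{-\Omega(m)}$; combining the two bounds, $\Pr_{w}[C_m^{\MQCSP}(G_f(w)) = 1] = o(1) + \epsilon = o(1)$. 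Equivalently, at most an $o(1)$-fraction of the outputs of $G_f$ are functions of quantum circuit complexity $\le s$. Choosing $m = \Theta(n)$ so that the seed length is $O(n)$ and there are $2^{\Omega(n)}$ seeds, the amplification algorithm enumerates all seeds $w$, computes $G_f(w)$, and outputs the resulting list of truth tables; all but an $o(1)$-fraction of them — hence $2^{\Omega(n)}$ of them — have quantum circuit complexity greater than $\frac{2^m}{(c+1)m}$. One can either output all of them and absorb the negligible bad fraction, or query the $\MQCSP$ oracle once more per candidate to discard the bad ones; either way the whole procedure is in $\class{BQP}$.

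The main obstacle is the quantization of the Impagliazzo--Wigderson generator itself, specifically establishing the \emph{quantum} direct-product/hardcore machinery. The classical XOR-lemma proofs freely clone intermediate advice strings and feed the distinguisher inputs obtained by hardwiring one coordinate of a tuple; in the quantum setting the distinguisher is a unitary followed by a measurement, so its internal state is not reusable classical data, and hardwiring a coordinate of a direct product must be done carefully to control entanglement between the fixed and free registers. The resolution is to route everything through the quantum Goldreich--Levin theorem of~\cite{agg20}, which already packages a list-decoding/reconstruction routine robust to a quantum-circuit ``oracle'' and produces a small quantum circuit for $f$ from any circuit predicting the hardcore bit; the remaining work is bookkeeping to verify that seed lengths and error parameters compose so that the final generator attains stretch $2^{\Omega(m)}$ while still fooling size-$2^{O(m)}$ quantum circuits, which is exactly the regime the counting argument requires. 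The padding choice $m = \Theta(n)$, the $\poly(2^n)$ running-time bound, and the counting bound on random functions are then routine and identical to~\cite{KC00}.
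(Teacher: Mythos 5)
Your proposal is correct and follows essentially the same route as the paper: construct a quantum Impagliazzo--Wigderson generator from the worst-case hard function (via worst-case-to-average-case reduction, a quantum direct-product step routed through the quantum Goldreich--Levin machinery of~\cite{agg20}, and the quantum Nisan--Wigderson generator), observe that $\MQCSP\in\class{BQP}$ yields a polynomial-size quantum circuit deciding circuit complexity which the generator must fool, and conclude by the counting bound of Claim~\ref{claim:gatecount} that almost all outputs are hard, filtering the rest with the $\MQCSP$ algorithm. You also correctly identify the main technical obstacle (the non-reusability of quantum distinguishers in the direct-product analysis) and its resolution, which is exactly what the paper's proof of Lemma~\ref{lem:quantum_IW_prg} handles.
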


In order to prove Theorem~\ref{thm:mqcsp_hardness_amp}, we first construct a ``quantum version'' of the Impagliazzo-Wigderson generator \cite{iw97}. We note that the construction in the following lemma is stronger than the Definition~\ref{def:prg_agg20}, based on the truth table of a very hard function.
\begin{restatable}[Quantum Impagliazzo-Wigderson generator]{lemma}{quantumiwprg}\label{lem:quantum_IW_prg}
For every $\epsilon>0$, there exist $c,d\in \N$ such that the truth table of a Boolean function $f:\{0,1\}^{cn}\rightarrow \{0,1\}$ of quantum circuit complexity $2^{\epsilon cn}$ can be transformed in time $O(2^n)$ into a pseudorandom generator $G:\{0,1\}^{dn}\rightarrow \{0,1\}^{2^n}$ running in time $O(2^n)$ that can fool quantum circuits of size $2^{O(n)}$,
i.e., for any $p>0$, any quantum circuit $\cal C$ of size at most $2^{pn}$,
\begin{align*}
    \left|\Pr_{x\sim \{0,1\}^{dn}, {\cal C}}[{\cal C}(G(x))=1] - \Pr_{y\sim \{0,1\}^{2^n}, {\cal C}}[{\cal C}(y)=1]\right|\leq 2^{-n}.
\end{align*}
\end{restatable}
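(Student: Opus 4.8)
The plan is to \emph{quantize} the classical Impagliazzo--Wigderson construction \cite{iw97}, following the modular ``hard function $\to$ pseudorandom generator'' pipeline: (i) amplify the worst-case quantum hardness of $f$ to strong average-case quantum hardness, and (ii) plug the resulting hard-on-average function into a Nisan--Wigderson-style generator. The crucial observation is that every security reduction in this pipeline is \emph{black-box}: a distinguisher for the generator is converted, via a hybrid/next-bit-prediction argument, into a small circuit that computes the hard function, and this circuit merely \emph{calls} the distinguisher and post-processes its (classical) output. Hence the whole argument lifts to quantum adversaries provided the handful of sub-routines whose analyses are not purely combinatorial are available in quantum-secure form --- namely the Goldreich--Levin hardcore bit and the Nisan--Wigderson generator, both supplied by \cite{agg20}. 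The only genuinely new module is a \emph{quantum-secure direct-product generator}, which I would construct and analyze directly.

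\textbf{Step 1 (worst-case to average-case).} Starting from $f:\{0,1\}^{cn}\to\{0,1\}$ of quantum circuit complexity $2^{\epsilon c n}$, first encode the truth table of $f$ by a locally list-decodable code (a low-degree/Reed--Muller extension works) to obtain $f_1$ which is \emph{mildly} hard on average: no quantum circuit of size $2^{\Omega(\epsilon cn)}$ agrees with $f_1$ on more than a $(1-\gamma_0)$-fraction of inputs, for some absolute constant $\gamma_0>0$ --- otherwise the (adversary-agnostic) local list-decoder composed with such a circuit would be a quantum circuit of size $2^{o(\epsilon cn)}$ computing $f$, a contradiction. Next, with $k=\Theta(n)$, set $f_2(x_1,\dots,x_k):=(f_1(x_1),\dots,f_1(x_k))$ and extract a hardcore bit using the quantum Goldreich--Levin theorem of \cite{agg20}: $g(x_1,\dots,x_k,r):=\langle f_2(x_1,\dots,x_k),r\rangle$. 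The quantum direct-product analysis, together with quantum Goldreich--Levin, then gives: for $g:\{0,1\}^{c'n}\to\{0,1\}$ with $c'=O(c)$, no quantum circuit of size $2^{\Omega(n)}$ predicts $g$ with advantage better than $2^{-\Omega(n)}$.

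\textbf{Step 2 (Nisan--Wigderson).} Feed $g$ into the quantum-secure NW generator of \cite{agg20}. Fix an $(\ell,c'n)$-design $\{S_1,\dots,S_{2^n}\}$ over a universe of size $dn$, with parameters tuned so that there are $2^n$ sets and $\ell=\Omega(c'n)$; such a design exists and is constructible in $2^{O(n)}$ time. Define $G(z):=g(z_{S_1})\,g(z_{S_2})\cdots g(z_{S_{2^n}})$, so $G:\{0,1\}^{dn}\to\{0,1\}^{2^n}$. The NW reduction is black-box: a quantum circuit $\mathcal{C}$ of size at most $2^{pn}$ distinguishing $G(U_{dn})$ from $U_{2^n}$ with advantage larger than $2^{-n}$ yields, through the standard hybrid and next-bit-predictor steps, a quantum circuit of size $2^{pn}+2^{O(n)}$ that predicts $g$ with advantage larger than $2^{-n}/2^n=2^{-O(n)}$, where the additive $2^{O(n)}$ accounts for hardwiring the (at most $2^n$) ``frozen'' columns $g(z_{S_j})$, each a fixed function on at most $\ell$ bits. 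Choosing $c$ large enough relative to $p$ and $1/\epsilon$ (and then $d$ large enough for the design), this contradicts the hardness of $g$ from Step~1 on both counts --- circuit size and prediction advantage --- so $G$ fools every quantum circuit of size $2^{pn}$. Tracking the remaining parameters: the output length is $2^n$ and the seed length is $dn$ as required, and since each output bit of $G$ is an evaluation of $g$ --- which reduces to evaluating the low-degree extension of $f$ --- the entire output of $G$ is computable in time $2^{O(cn)}=2^{O(n)}$ given the truth table of $f$, matching the stated time bound.

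\textbf{Main obstacle.} The delicate point is the quantum-secure direct-product generator: classical hardness-amplification theorems (Impagliazzo's hardcore lemma, Yao's XOR lemma, and their uniform variants) use min-max/boosting or learning-style arguments, and I would need to verify that the particular route used here --- direct product followed by Goldreich--Levin, whose reduction from a predictor of $g$ to a $(1-\gamma_0)$-agreeing circuit for $f_1$ is essentially a sampling/list-decoding argument --- remains valid when the predictor is a quantum circuit. Since that reduction uses the predictor only as a black box and manipulates its classical outputs, quantumness of the predictor is not itself an issue; the care needed is to ensure that the internal ``amplify by repetition'' steps do not covertly assume the predictor is deterministic after fixing randomness (a subtlety flagged in the introduction of this paper), which is handled by working with the success probability directly rather than with a fixed run. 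Beyond this, the remaining work is bookkeeping: pinning down the constants $c$, $d$, and the two hardness exponents, and checking the $2^{O(n)}$-time constructibility of the design and of the low-degree extension.
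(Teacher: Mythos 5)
There is a genuine gap, and it sits exactly where you flagged the ``only genuinely new module'': the direct-product generator. In Step 1 you take the \emph{plain} $k$-fold direct product $f_2(x_1,\dots,x_k)=(f_1(x_1),\dots,f_1(x_k))$ with $k=\Theta(n)$ independent copies. Each copy consumes a fresh input of length $\Theta(cn)$, so the input length of $f_2$ (and hence of $g$) is $\Theta(cn^2)$, not $c'n=O(cn)$ as you assert. This is fatal for the seed length: the NW design must use sets of size $\ell=$ (input length of $g$) $=\Theta(n^2)$, so the universe --- and therefore the seed of $G$ --- is $\Omega(n^2)$, and the lemma demands seed length $dn$. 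The whole point of the Impagliazzo--Wigderson construction is to \emph{derandomize} this direct product: the $k$ inputs $x_1,\dots,x_k$ are generated from an $O(n)$-bit seed as $\mathsf{XG}(r,r',v,d)=\mathsf{EW}(v,d)\oplus\mathsf{ND}^{\Sigma}(r')$ (an expander walk XORed with restrictions to nearly disjoint subsets), and the security argument rests on the combinatorial ``$M$-restrictible'' and ``$(k',q,\delta)$-hitting'' properties of $\mathsf{XG}$. That is the module you would actually have to construct and re-analyze against quantum adversaries (the paper's Claim~\ref{clm:H_prob}); your proposal names it and then substitutes the trivial independent-copies product, which defeats the purpose.

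Two smaller points. First, the order of amplification in your Step 1 is off: local list-decoding of the truth table yields only \emph{mild} average-case hardness ($1-n^{-O(1)}$ agreement, via the random self-reducibility reduction), not constant hardness; one first applies the IJKW-style direct-product theorem with $k=\poly(n)$ plus Goldreich--Levin to reach \emph{constant} hardness, and only then uses the derandomized direct product plus Goldreich--Levin again to push the advantage down to $2^{-\Omega(n)}$. Second, in your NW accounting the frozen columns $g(z_{S_j})$ are, after averaging over the bits outside $S_i$, functions of at most $a=|S_i\cap S_j|$ bits, so hardwiring them costs $2^n\cdot 2^{O(a)}$, not ``functions on at most $\ell$ bits'' (which would cost $2^{\Omega(n^2)}$ in your parameter setting). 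Your general thesis --- that the reductions are black-box in the distinguisher and therefore lift to quantum circuits once quantum GL and quantum NW are available --- is correct and is indeed how the paper proceeds, but without the derandomized direct-product generator the construction does not meet the stated seed length.
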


\begin{proof}[Proof of Theorem~\ref{thm:mqcsp_hardness_amp}]
Let $c>0$ and $s(n) = \frac{2^n}{(c+1)n}$. Assuming that $\class{MQCSP}\in \class{BQP}$, we get a polynomial-size quantum circuit family $\{{\cal D}_n\}$ that only accept $n$-variable Boolean functions of quantum circuit complexity greater than $s(n)$. By Claim~\ref{claim:gatecount}, the acceptance probability is close to one.

However, the size of ${\cal D}_n$ is bounded by a fixed polynomial in the input size, by Lemma~\ref{lem:quantum_IW_prg}, the quantum Impagliazzo-Wigderson generator $G$ will fool ${\cal D}_n$. That is, almost all $2^n$-bit strings output by $G$ will have quantum circuit complexity greater than $s(n)$. We can then use the $\class{MQCSP}$ circuit to decide the quantum circuit complexity of these strings and only output hard functions.    
\end{proof}

The proof of Lemma~\ref{lem:quantum_IW_prg} relies on a quantum-secure direct product generator and several hardness amplification steps. It is deferred to Section~\ref{sec:mqcsp_amp}.

\paragraph{Hardness magnification for $\class{MQCSP}$.}
Hardness magnification refers to a transformation of a weak circuit lower bound (e.g., linear size lower bound) to a stronger circuit lower bound (e.g., polynomial size lower bound). Note that a magnification theorem for a circuit class is highly dependent on the structure of the circuits. Specifically, it is not immediately clear that every circuit class is \textit{magnifiable}. Here, we show that there exists hardness magnification for quantum circuits when it comes to $\MQCSP$.

\begin{restatable}{theorem}{magnification}\label{thm:magnification}
    If $\class{MQCSP}\left[2^{n^{1/2}}/2n, 2^{n^{1/2}}\right]$ is hard for $\class{BQC}\left[2^{n+O(n^{1/2})}\right]$, then $\class{QCMA}\not\subseteq \class{BQC}[\poly(n)]$.
\end{restatable}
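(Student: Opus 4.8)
The plan is to prove the contrapositive. Assume $\class{QCMA}\subseteq\class{BQC}[\poly(n)]$; I will show that $\class{MQCSP}\left[2^{n^{1/2}}/2n,\,2^{n^{1/2}}\right]$, whose input is a length-$N$ truth table with $N=2^n$, is decided by quantum circuits of size $N\cdot 2^{O(n^{1/2})}=2^{n+O(n^{1/2})}$, i.e.\ is not hard for $\class{BQC}\left[2^{n+O(n^{1/2})}\right]$. The engine is a \emph{quantum antichecker lemma}: under the hypothesis, there is a circuit family $\{C_N\}$ of size $2^{n+O(n^{1/2})}$ that, given the truth table of $f:\{0,1\}^n\to\{0,1\}$, outputs $m=2^{O(n^{1/2})}$ strings $y_1,\dots,y_m\in\{0,1\}^n$ together with the bits $f(y_1),\dots,f(y_m)$ forming a set of $(2^{n^{1/2}}/2n,\,2^{n^{1/2}})$-anticheckers for $f$; that is, if $f$ is hard for size-$2^{n^{1/2}}$ quantum circuits, then every size-$(2^{n^{1/2}}/2n)$ quantum circuit fails to compute $f$ on some $y_i$.

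Proving the antichecker lemma splits into a counting step and a circuit-design step. For counting, fix the universal gate set $\g$, so there are only $2^{O(s\log s)}$ quantum circuits of size $s:=2^{n^{1/2}}/2n$ on $\le n+s$ qubits. Apply the von Neumann minimax theorem to the finite zero-sum game in which one player picks $x\in\{0,1\}^n$, the other a size-$s$ circuit $C$, and the payoff is $1$ iff $C$ errs on $x$: either there is a distribution over size-$s$ circuits succeeding on \emph{every} $x$ with probability $\ge 1/2+\gamma$ --- then an $O(n/\gamma^2)$-fold majority vote yields a single quantum circuit of size $O(sn/\gamma^2)\le 2^{n^{1/2}}$ that computes $f$ everywhere, contradicting hardness for size $2^{n^{1/2}}$ --- or there is a ``hard distribution'' $D$ under which every size-$s$ circuit errs with noticeable probability. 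Drawing $m=\tilde O(s)=2^{O(n^{1/2})}$ i.i.d.\ samples from $D$, a Chernoff bound together with a union bound over all $2^{O(s\log s)}$ circuits shows that with high probability every size-$s$ circuit fails to compute $f$ on one of the samples; these samples are the anticheckers. The quantum-specific care: ``$C$ fails on $y$'' is not a Boolean event since quantum circuits are bounded-error, which I will absorb by first amplifying per-input success (resp.\ failure) probability to $1-2^{-n}$ (resp.\ $2^{-n}$) within the allotted $O(n)$ overhead, and by bookkeeping the resulting thresholds against the factor-$2n$ gap between $s$ and $s'=2^{n^{1/2}}=2n\cdot s$.

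For the circuit-design step, note that ``find anticheckers for $f$'' is a search problem whose correctness predicate --- ``for every size-$s$ circuit $C$, some $y_i$ is a failure point for $C$'' --- is expressible with $O(1)$ alternations over strings of length $2^{O(n^{1/2})}$ (circuit descriptions, samples) with a $\class{BQP}$ base predicate, since checking $C$ on a given $y_i$ just runs $C$. Hence it lies in the $\class{QCMA}$ hierarchy, which collapses into $\class{BQC}[\poly]$ under the hypothesis (quantum Karp--Lipton), and bitwise prefix search produces the anticheckers. The \emph{delicate} point is size accounting: the only operation costing $\Omega(N)$ is looking up a bit of the $N$-bit truth table at a search-determined index, an explicit multiplexer of size $O(N)$; there are only $2^{O(n^{1/2})}$ such lookups, and every other subroutine --- including each circuit granted by the hypothesis --- acts on inputs of length only $2^{O(n^{1/2})}$ and hence has size $2^{O(n^{1/2})}$. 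Thus the hypothesis is \emph{never} invoked on a length-$N$ input (which would only give $N^{O(1)}$), and the total size is $N\cdot 2^{O(n^{1/2})}=2^{n+O(n^{1/2})}$. To finish, decide $\class{MQCSP}\left[2^{n^{1/2}}/2n,\,2^{n^{1/2}}\right]$ by running the antichecker circuit and then deciding the $\class{QCMA}$ problem ``does there exist a size-$(2^{n^{1/2}}/2n)$ quantum circuit $C$ with $\Pr[C(y_i)=f(y_i)]\ge 2/3$ for all $i$'' on the length-$2^{O(n^{1/2})}$ antichecker description, which by the hypothesis has circuits of size $2^{O(n^{1/2})}$: on a Yes instance the size-$s$ circuit for $f$ is consistent on all $y_i$, so we accept; on a No instance $f$ is hard for size $2^{n^{1/2}}$, so the antichecker property forces every size-$s$ circuit to fail on some $y_i$ and we reject.

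The main obstacle is the circuit-design step of the antichecker lemma rather than the counting: one must structure the antichecker-finding circuit so that all of its super-$2^{O(n^{1/2})}$ cost is confined to explicit truth-table multiplexers, so that the $\class{QCMA}\subseteq\class{BQC}[\poly]$ collapse is only ever invoked at input length $2^{O(n^{1/2})}$ --- naively phrasing ``find anticheckers given the $N$-bit truth table'' as one $\class{QCMA}$ problem yields only size-$N^{O(1)}$ circuits and destroys the nearly-linear bound. A secondary obstacle is making the minimax/Chernoff counting robust to the inherent randomness of quantum circuits while respecting the tight factor-$2n$ gap in the statement.
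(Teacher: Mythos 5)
Your top-level reduction is exactly the paper's: construct anticheckers with a circuit of size $2^{n+O(n^{1/2})}$, then decide by a single $\class{QCMA}$ query (collapsed to size $2^{O(n^{1/2})}$ under the hypothesis) whether some size-$(2^{n^{1/2}}/2n)$ circuit is consistent with $f$ on all anticheckers; and your existence argument for anticheckers (minimax plus a Chernoff/union bound over the $2^{O(s\log s)}$ circuits of size $s$) is a legitimate, standard alternative to the paper's iterative counting argument. The gap is in the circuit-design step, which you correctly identify as the crux but do not actually carry out. The predicate ``$(y_1,\dots,y_m)$ is a valid antichecker set for $f$'' depends on the truth table through the values $f(y_i)$ at the quantified points. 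Hence when you prefix-search the description of the antichecker tuple, each query ``does some completion of this prefix yield a valid antichecker set?'' is a $\class{QCMA}$-hierarchy instance whose answer depends on $f$ at \emph{existentially quantified} locations, so the truth table must be part of that instance's input; the collapse then only yields circuits of size $\poly(2^n)$ per query, which is exactly the failure mode you flag in your closing paragraph. Your claim that the $\Omega(N)$ cost can be confined to multiplexers at ``search-determined indices'' is circular: the indices are determined only after the search succeeds, while the search queries themselves need $f$ at as-yet-undetermined indices. A further obstruction specific to your route is that the hard distribution $D$ produced by minimax is an abstract mixed strategy; ``drawing $m$ i.i.d.\ samples from $D$'' is not an operation a size-$2^{n+O(n^{1/2})}$ circuit can perform, and nothing in your sketch replaces it with an explicit computation.

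The paper's proof of the antichecker lemma supplies precisely the two ingredients your sketch lacks. First, it uses an \emph{iterative} construction with a locally certifiable progress measure: after choosing $y_1,\dots,y_i$, the fraction of surviving size-$s$ circuits is at most $(1-1/4n)^i$, and this is certified by a small formula built from approximate counting with pairwise-independent hash functions, whose witnesses are obtained from the $\class{QCMA}\subseteq\class{BQC}[\poly]$ collapse on inputs of length $2^{O(n^{1/2})}$ only. Second, it uses Valiant--Vazirani isolation with a hard-wired family of hash pairs to make the next antichecker $y_i$ the \emph{unique} witness of a small predicate $F^{r,h}$, so that each bit of $y_i$ is computed by the fully explicit formula $\bigvee_{z\in\{0,1\}^n} z_j\wedge F^{r,h}(\dots,z,\dots,f(z))$ of size $2^{n+O(n^{1/2})}$, in which the truth table is accessed only at the explicitly enumerated $z$ via an address function and never inside a $\class{QCMA}$ query. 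Without some substitute for these two mechanisms, your antichecker lemma---and hence the theorem---is not established.
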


The proof of Theorem~\ref{thm:magnification} is via antichecker lemma, which was first given by \cite{ops19,chopr20} for proving hardness magnification for $\class{MCSP}$.

\begin{restatable}[Antichecker lemma for quantum circuits]{lemma}{antichecker}\label{lem:antichecker}
Assume $\class{QCMA} \subseteq \class{BQC}[\poly]$. Then for any $\lambda \in (0, 1)$ there are circuits $\{C_{2^n}\}_{n=1}^\infty$ of size $2^{n+O(n^{\lambda})}$ which given the truth table $\textsf{tt}(f) \in \{0, 1\}^{2^n}$ , outputs $2^{O(n^\lambda)}$ $n$-bit strings $y_1,\dots,y_{2^{O(n^\lambda)}}$ together with bits $f(y_1),\dots, f(y_{2^{O(n^\lambda)}})$ forming a set of anticheckers for $f$, i.e. if $f$ is hard for quantum circuits of size $2^{n^\lambda}$ then every quantum circuit of size $2^{n^\lambda}/2n$ fails to compute $f$ on one of the inputs $y_1,\dots,y_{2^{O(n^\lambda)}}$.
\end{restatable}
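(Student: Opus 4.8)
The plan is to port the classical antichecker construction of \cite{ops19,chopr20} to the quantum world. That construction splits into an unconditional structural part — a hard function has a small antichecker set — and a conditional algorithmic part — such a set is produced by a small circuit once $\class{NP}\subseteq\class{P/poly}$ — and the two ingredients to ``quantize'' are the counting/minimax argument behind the former and the locality (``delicate circuit design'') behind the latter. Throughout, fix the hardness threshold $t=2^{n^\lambda}$ and the defeated size $s'=t/2n$, and allow size-$s'$ quantum circuits up to $s'$ ancilla; the only facts about such circuits I will need are that each has an $O(s'\log s')$-bit description and can be simulated coherently in $O(s')$ gates inside a larger quantum circuit.

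\textbf{Step 1 (covering lemma, unconditional).} Say $y\in\{0,1\}^n$ \emph{covers} a size-$s'$ circuit $\mathcal C$ if $\mathcal C$ errs on $y$, i.e.\ $\|(\bra{f(y)}\otimes I)\mathcal C\ket{y,0^{s'}}\|^2\le 1/2$. I would show: if $f$ is hard for size $t$, then for every nonempty family $\mathcal A$ of size-$s'$ circuits some single $y$ covers a constant fraction of $\mathcal A$. This is a minimax statement — in the zero-sum game where the maximizer picks $y$, the minimizer picks $\mathcal C\in\mathcal A$, and the payoff is $1$ iff $\mathcal C$ errs on $y$, a tiny value would, by von Neumann's minimax theorem, give a distribution $\nu$ over $\mathcal A$ that is ``good'' (accepts with probability $\ge 2/3$) on every input simultaneously; then amplifying each sampled circuit by $O(1)$ repetitions and taking the majority over $\Theta(n)$ i.i.d.\ samples of $\nu$ (Chernoff, union-bounded over the $2^n$ inputs) yields one size-$O(n\,s')\le t$ quantum circuit computing $f$ everywhere, contradicting hardness. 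Iterating this greedily — start with $\mathcal A_0$ the set of all size-$s'$ quantum circuits, repeatedly delete a covered constant fraction — exhausts $\mathcal A$ after $\log_{3/2}|\mathcal A_0|=O(s'\log s')=2^{O(n^\lambda)}$ rounds, so an antichecker set of size $2^{O(n^\lambda)}$ exists, and its $f$-values come straight from $\mathsf{tt}(f)$.

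\textbf{Step 2 (finding the set, using $\class{QCMA}\subseteq\class{BQC}[\poly]$).} Make $C_{2^n}$ run the greedy process: at round $j$ scan all $y\in\{0,1\}^n$ and keep the first $y$ satisfying ``more than a third of the size-$s'$ circuits consistent with $f$ on $y_1,\dots,y_j$ are covered by $y$.'' This predicate is an approximate-counting question over the $2^{O(s'\log s')}$ size-$s'$ circuits whose verifier only runs a candidate circuit on $y$ and on the $\le 2^{O(n^\lambda)}$ earlier points and estimates its acceptance probability; by a quantum analogue of Stockmeyer's theorem it sits in $\class{BQP}^{\class{QCMA}}$ with input length $2^{O(n^\lambda)}$, so under $\class{QCMA}\subseteq\class{BQC}[\poly]$ it has a quantum circuit of size $2^{O(n^\lambda)}$ (after standard error reduction). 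The total size is then (rounds) $\times$ (points scanned) $\times$ (cost per query) $=2^{O(n^\lambda)}\cdot 2^n\cdot 2^{O(n^\lambda)}=2^{n+O(n^\lambda)}$, as required; the key point is that each query touches $\mathsf{tt}(f)$ only at $y$ and the $2^{O(n^\lambda)}$ previously chosen inputs, which is the quantized form of the ``delicate Boolean circuit'' and is what keeps the dependence on $N=2^n$ nearly linear.

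\textbf{Main obstacle.} I expect Step 1's minimax argument in the bounded-error setting to be the hard part: a quantum circuit only approximately computes $f$, so ``majority of circuits'' must become amplify-then-vote, and the size of the combined circuit has to be forced below $t$ even though it is assembled from $\Theta(n)$ size-$s'$ pieces (the $\Theta(n)$ coming from a union bound over all $2^n$ inputs), all while the number of greedy rounds — hence the antichecker count — stays $2^{O(n^\lambda)}$. Pinning down the constant so that $s'=t/\Theta(n)$ actually suffices (e.g.\ by tuning the game thresholds and the amplification schedule) is where the real work lies; the rest is a faithful translation of \cite{ops19,chopr20}, using only the two elementary facts about size-$s'$ quantum circuits noted above.
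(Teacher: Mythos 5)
Your plan is a legitimate alternative route to the paper's. The paper works with an explicit fraction-decay invariant $P_f(y_1,\dots,y_i)[(1-1/4n)^i]$ borrowed from the classical antichecker argument of \cite{ops19,chopr20}, finds each $y_i$ by Valiant--Vazirani isolation so that its bits can be extracted by an OR-formula of size $2^{n+O(n^\lambda)}$ over all $z\in\{0,1\}^n$, and certifies the counting predicate with hash-based ($\class{QCMA}$-witnessable) approximate counting; the terminal case is handled by checking $R_f\ge 2n^2$. You instead package the existence part as a minimax covering lemma and the search part as a lexicographic scan with Stockmeyer-style counting in $\class{BQP}^{\class{QCMA}}$. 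Your size accounting ($2^{O(n^\lambda)}$ rounds $\times\,2^n$ candidates $\times\,2^{O(n^\lambda)}$ per query) matches the paper's, and the Stockmeyer route is essentially interchangeable with the paper's certified counting once one adds the same slack between thresholds that the paper builds in via $(1-1/4n)$ versus $(1-1/2n)$. One caveat your scan introduces that the paper's isolation trick avoids: "the first $y$ satisfying the predicate" is not well defined when the true fraction falls in the promise gap of the approximate counter, so you must argue that accepting \emph{any} certified $y$ preserves the invariant, and amplify so that all $2^{n+O(n^\lambda)}$ oracle answers are simultaneously reliable.

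The genuine gap is in Step~1, and it is exactly the point you defer. With your definitions, "$y$ does not cover $\mathcal C$" only means $\|(\bra{f(y)}\otimes I)\mathcal C\ket{y,0^{s'}}\|^2>1/2$, with no promised gap above $1/2$. If no $y$ covers a third of $\mathcal A$, minimax gives a distribution $\nu$ such that for every $y$, a $2/3$-fraction of $\nu$'s mass is on circuits with acceptance probability merely exceeding $1/2$; the mixture's acceptance probability on the correct answer can then be as low as roughly $1/3$, and $O(1)$ repetitions cannot amplify a circuit whose advantage over $1/2$ is unboundedly small. So the majority over $\Theta(n)$ samples need not compute $f$ anywhere, and no contradiction with $2^{n^\lambda}$-hardness is obtained. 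The fix is to put a gap into the game itself: either define "errs" as acceptance probability $\le 2/3$ while "computes correctly" (the hypothesis of hardness and the conclusion of the lemma) means $\ge 2/3+\Omega(1)$, or make the payoff the acceptance probability rather than an indicator, so that the minimax value being large forces the mixture to have acceptance probability $\ge 1/2+\Omega(1)$ on every input before you amplify and vote. The paper inherits the same issue silently by asserting that Lemma~23 of \cite{ops19} "also holds for quantum circuits"; its counting argument likewise combines $O(n)$ size-$s'$ circuits by majority and needs the same gapped notion of correctness for the combined circuit to contradict $t$-hardness with $s'=t/2n$. So your obstacle is real, it is the same one the paper glosses over, and it is resolvable by fixing the acceptance thresholds before running either version of the argument.
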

With Lemma~\ref{lem:antichecker}, we can prove Theorem~\ref{thm:magnification} by using a small quantum circuit to verify the given circuits only on the anticheckers.
\begin{proof}[Proof of Theorem~\ref{thm:magnification}]
Suppose $\class{QCMA}\subseteq \class{BQC}[\poly]$. Let $\textsf{tt}(f)$ be the input of $\class{MQCSP}[2^{n^{1/2}}/2n, 2^{n^{1/2}}]$. By Lemma~\ref{lem:antichecker}, we can find a set of anticheckers $y_1,\dots,y_{2^{O(n^{1/2})}}$ by a quantum circuit of size $2^{n+O(n^{1/2})}$. Then, we use a $\class{QCMA}$ algorithm to decide if there exists a quantum circuit of size $2^{n^\lambda}/2n$ that computes $f$ correctly on $\{(y_1,f(y_1)), \dots,(y_{2^{O(n^\lambda)}}, f(y_{2^{O(n^\lambda)}}))\}$. By the assumption, it can be done by a $2^{O(n^\lambda)}$ size quantum circuit. Then, there are two cases:
\begin{itemize}
    \item If the $\class{QCMA}$ algorithm returns ``Yes'', it means that $y_1,\dots,y_{2^{O(n^{1/2})}}$ are not anticheckers. By Lemma~\ref{lem:antichecker}, $f$ is \emph{not} hard for $2^{n^{1/2}}$ size quantum circuit.
    \item If the $\class{QCMA}$ algorithm returns ``No'', then no $2^{n^{1/2}}/2n$ size quantum circuit can compute $f$ on $y_1,\dots,y_{2^{O(n^{1/2})}}$. So, $f$ is hard for $2^{n^{1/2}}/2n$ size quantum circuit. 
\end{itemize}
Hence, $\class{MQCSP}[2^{n^{1/2}}/2n, 2^{n^{1/2}}]\in \class{BQC}[2^{n+O(n^{1/2})}]$.
\end{proof}

The proof of Lemma~\ref{lem:antichecker} is deferred to Section~\ref{sec:magnify}.

\subsection{Fine-grained complexity}\label{sec:main fine-grained}

It is a long-standing open problem to show the hardness of $\class{MCSP}$ based on some fine-grained complexity hypotheses, like the Exponential-Time Hypothesis (\textsf{ETH}), which was conjectured by Impagliazzo,  Paturi, and Zane \cite{ipz01} and becomes a widely used assumption in fine-grained complexity area. 
\begin{definition}[Exponential Time Hypothesis (\textsf{ETH})]
There exists $\delta>0$ such that 3-SAT with $n$ variables cannot be solved in time $2^{\delta n}$.
\end{definition}
Very recently, a breakthrough result by Ilango \cite{ila20} proved the \textsf{ETH}-hardness of $\class{MCSP}$ for partial Boolean functions. On the other hand, Quantum fine-grained complexity was studied very recently by \cite{aclwz20,bps21,al20,gs20}. Motivated by the fact that currently there is no quantum algorithm for 3-\textsf{SAT} that is significantly faster than Grover's search, we conjecture that 3-\textsf{SAT} with $n$ variables cannot be solved in $2^{o(n)}$ quantum time (\textsf{QETH}). And based on \textsf{QETH}, we want show that $\class{MQCSP}$ for partial Boolean function is also hard. 

We first formally define \textsf{QETH} and $\MQCSP$ for partial functions ($\mqcsps$).
\begin{definition}[Quantum Exponential Time Hypothesis (\textsf{QETH})]
There exists $\delta'>0$ such that 3-SAT with $n$ variables cannot be solved in time $2^{\delta' n}$ in quantum.
\end{definition}

\begin{definition}[\textsf{MQCSP} for partial functions ($\mqcsps$)]\label{def:partial_mqcsp}
The input is the truth table $\{0,1,\star\}^{2^n}$ of a partial function $f: \{0,1\}^n \rightarrow \{0,1,\star\}$ and an integer parameter $s$. The goal is to decide whether there exists a quantum circuit $C$ of size at most $s$ (using single-qubit and 2-qubit gates) that computes $f$. That is, for all $x\in \{0,1\}^n$ such that $f(x)\ne \star$, we have
\begin{align*}
    \Pr[C(x)=f(x)]\geq \frac{2}{3}.
\end{align*}
\end{definition}

Our main result of this section is as follows:

\begin{restatable}[\textsf{QETH}-hardness of $\mqcsps$]{theorem}{qethhardness}\label{thm:qeth_hard_mqcsps}
$\mqcsps$ cannot be solved in $N^{o(\log \log N)}$-time quantumly on truth tables of length $N$ assuming \textsf{QETH}. 
\end{restatable}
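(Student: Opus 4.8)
The plan is to follow the reduction path of Ilango~\cite{ila20}, who reduced the \emph{bipartite permutation independent set} problem of~\cite{lms11} to $\mcsps$, and to push both ends of that path into the quantum setting. There are three ingredients. (i) A gadget construction that turns an instance of bipartite permutation independent set into the truth table of a partial function $\gamma$ together with a size threshold $s$, such that $\gamma$ admits a circuit of size $\le s$ iff the independent-set instance is a yes-instance. (ii) An argument that this equivalence survives when ``circuit'' is replaced by ``quantum circuit''; this is where the classical read-once-formula structure exploited by~\cite{ila20} must be recovered quantumly. (iii) A quantum fine-grained reduction showing that bipartite permutation independent set is itself hard under \textsf{QETH}.

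For ingredient (ii), I would prove an \emph{unconditional} quantum circuit lower bound for $\gamma$ that pins down the structure of every near-optimal quantum circuit. Concretely: first show that any quantum circuit of size close to $s$ that computes $\gamma$ on its defined inputs (with success probability $2/3$ as in Definition~\ref{def:partial_mqcsp}) must, by the design of the gadget, be a \emph{quantum read-once formula} in the sense of Yao~\cite{yao93} --- each relevant input variable is touched at most once and the fan-in/fan-out pattern is forced --- and conversely that any quantum read-once formula of that shape computes $\gamma$. Then invoke the dequantization theorem of~\cite{ckp13}, which converts a quantum read-once formula into a classical read-once formula of the same size computing the same (partial) function. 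With the problem reduced to classical read-once formulas, the structural analysis of~\cite{ila20} goes through, giving that $(\gamma,s)$ is a yes-instance of $\mqcsps$ iff the bipartite permutation independent set instance is a yes-instance. Tracking the gadget parameters as in~\cite{ila20}, an instance on $n$ Boolean variables becomes a truth table of length $N$ for which any $N^{o(\log\log N)}$-time algorithm runs in $2^{o(n)}$ time.

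For ingredient (iii), since~\cite{lms11} only establishes \emph{classical} hardness, I would re-derive the hardness of bipartite permutation independent set inside the quantum fine-grained reduction framework of~\cite{aclwz20,bps21}: give a fine-grained reduction from $3$-SAT on $n$ variables to bipartite permutation independent set that preserves the exponent up to $o(1)$, so that a $2^{o(n)}$-time quantum algorithm for the latter yields a $2^{o(n)}$-time quantum algorithm for $3$-SAT. The reductions used in~\cite{lms11} are many-one (or disjunctive truth-table with polynomially many subinstances), which the quantum fine-grained framework tolerates, so each step can be checked to be a valid quantum fine-grained reduction; combining them and invoking \textsf{QETH} gives the claim. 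Chaining the three ingredients: an $N^{o(\log\log N)}$-time quantum algorithm for $\mqcsps$ would, via (i)--(ii), solve bipartite permutation independent set on $n$ variables in $2^{o(n)}$ quantum time, hence solve $3$-SAT in $2^{o(n)}$ quantum time, contradicting \textsf{QETH}.

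The main obstacle is ingredient (ii): rigorously forcing a near-optimal quantum circuit for $\gamma$ to be a quantum read-once formula. Classically this rests on a clean counting/structure argument, but a quantum circuit can in principle hide information in ancillae, relative phases, or intermediate measurements, and it is only required to be correct with probability $2/3$ on each defined input. The delicate point is to design the gadget and choose the error thresholds so that none of these quantum tricks can beat the read-once size bound even approximately; this is precisely what the unconditional lower bound together with the~\cite{ckp13} dequantization must deliver, and it is the part of the argument with no classical counterpart.
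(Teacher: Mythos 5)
Your proposal matches the paper's proof essentially step for step: the same gadget $\gamma$ from Ilango's reduction, the same route of showing a size-$(6n-1)$ quantum circuit for $\gamma$ must be a quantum read-once formula and then dequantizing via Cosentino--Kothari--Paetznick, and the same re-derivation of the \textsf{QETH}-hardness of bipartite permutation independent set through the quantum fine-grained reduction chain from $3$-SAT. The only remark worth adding is that the step you flag as the main obstacle is resolved in the paper by a short light-cone argument --- since $\gamma$ depends on all $6n$ inputs and only $6n-1$ two-qubit gates are available, the circuit topology is forced to be a full binary tree on the input wires, which immediately rules out the ancilla/phase tricks you were worried about.
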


Our reduction reveals the connections between $\class{MQCSP}^\star$, quantum read-once formula and classical read-once formula. The proof is given in Section~\ref{sec:fine grained}.

\paragraph{Classical reduction for $\MCSP^\star$.}

We first give a brief overview of the classical reduction for $\MCSP^\star$ in \cite{ila20}. They reduced $\mcsps$ to a fine-grained problem: $2n\times 2n$ \textit{Bipartite Permutation Independent Set problem}, which is defined as follows: 
\begin{definition}[Bipartite Permutation Independent Set problem]
A $2n\times 2n$ bipartite permutation independent set problem is defined on a directed graph $G$ with vertex set $[n]\times [n]$ and edge set $E$. The goal is to decide whether there exists a permutation $\pi\in \mathcal{S}_{2n}$ such that
\begin{itemize}
    \item $\pi([n])= [n]$,
    \item $\pi(\{n+i:i\in [n]\})=\{n+i:i\in [n]\}$,
    \item if $((j,k), (j',k'))\in E$, then either $\pi(j)\ne k$ or $\pi(n+j')\ne \pi(n+k')$.
\end{itemize}
\end{definition}

Lokshtanov, Marx, and Saurabh \cite{lms11} proved that this problem is $2^{o(n\log n)}$-hard under \textsf{ETH}, which implies the \textsf{ETH}-hardness of $\mcsps$.

The reduction from $2n\times 2n$ bipartite permutation independent set problem to $\mcsps$ is via the following partial function $\gamma$. Consider an instance $G=([n]\times [n], E)$ of $2n\times 2n$ bipartite permutation independent set problem. The reduction outputs the truth table of a partial Boolean function $\gamma: \{0,1\}^{2n}\times \{0,1\}^{2n}\times \{0,1\}^{2n}\rightarrow \{0,1,\star\}$ such that
\begin{align}\label{eq:gamma_def}
    \gamma(x,y,z):=\begin{cases}
        \bigvee_{i\in [2n]} (y_i \wedge z_i) & \text{if}~x=0^{2n},\\
        \bigvee_{i\in [2n]} z_i & \text{if }x=1^{2n},\\
        \bigvee_{i\in [2n]} (x_i\vee y_i) & \text{if }z=1^{2n},\\
        0 & \text{if }z=0^{2n},\\
        \bigvee_{i\in [n]} x_i & \text{if }z=1^n0^n\text{ and }y=0^{2n},\\
        \bigvee_{i\in \{n+1,\cdots,2n\}} x_i & \text{if }z=0^n1^n\text{ and }y=0^{2n},\\
        1 & \text{if }\exists ((j,k),(j',k'))\in E ~\text{s.t.}~(x,y,z)=(\overline{e_ke_{k'}},0^{2n},e_je_{j'}),\\
        \star & \text{otherwise}.
    \end{cases}
\end{align}
In particular, the small circuit size of $\gamma$ implies that $G$ is a ``Yes'' instance of the bipartite permutation independent set problem:
\begin{lemma}[\cite{ila20}]\label{lem:eth_reduction}
Each of the following are equivalent:
\begin{enumerate}
    \item $\mcsps(\gamma, 6n-1)=1$;
    \item $\gamma$ can be computed by a read-once formula;
    \item there exists a $\pi\in {\cal S}_{2n}$ such that $\bigvee_{i\in [2n]} ((x_{\pi(i)}\vee y_i) \wedge z_i)$ computes $\gamma$;
    \item there exists a $\pi\in {\cal S}_{2n}$ that satisfies the instance of bipartite permutation independent set problem given by $G$.
\end{enumerate}
\end{lemma}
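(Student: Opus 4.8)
The plan is to prove the four-way equivalence as three linked equivalences, $(1)\Leftrightarrow(2)$, $(2)\Leftrightarrow(3)$, $(3)\Leftrightarrow(4)$, following Ilango~\cite{ila20}. Of the resulting implications, $(2)\Rightarrow(1)$ and $(3)\Rightarrow(2)$ are immediate, $(3)\Leftrightarrow(4)$ is a finite verification, $(1)\Rightarrow(2)$ is a short counting argument, and $(2)\Rightarrow(3)$ — the structural rigidity of read-once formulas computing $\gamma$ — is the real content and the \emph{main obstacle}. Crucially, every step is purely classical, so it suffices to transcribe Ilango's argument.

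I first dispose of the routine links. For $(3)\Rightarrow(2)$: the formula $\Phi_\pi:=\bigvee_{i\in[2n]}\big((x_{\pi(i)}\vee y_i)\wedge z_i\big)$ uses each of the $6n$ variables exactly once, hence is read-once. For $(2)\Rightarrow(1)$: a read-once formula over $6n$ variables built from fan-in-$2$ gates has at most $6n$ leaves and therefore at most $6n-1$ internal gates, so it witnesses $\mcsps(\gamma,6n-1)=1$. For $(3)\Leftrightarrow(4)$: substitute a permutation $\pi\in\mathcal{S}_{2n}$ into $\Phi_\pi$ and evaluate it on each case of the definition~\eqref{eq:gamma_def} of $\gamma$. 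On the blocks $x=0^{2n}$, $x=1^{2n}$, $z=1^{2n}$, $z=0^{2n}$ one obtains $\bigvee_i(y_i\wedge z_i)$, $\bigvee_i z_i$, $\bigvee_i(x_i\vee y_i)$, $0$ respectively, matching $\gamma$ using only that $\pi$ is a bijection; on the block $z=1^n0^n,y=0^{2n}$ one gets $\bigvee_{i\le n}x_{\pi(i)}$, which equals $\bigvee_{i\le n}x_i$ iff $\pi([n])=[n]$, and the block $z=0^n1^n,y=0^{2n}$ likewise forces $\pi(\{n+1,\dots,2n\})=\{n+1,\dots,2n\}$; finally on an edge input $(\overline{e_ke_{k'}},0^{2n},e_je_{j'})$ with $((j,k),(j',k'))\in E$ one computes $\Phi_\pi=x_{\pi(j)}\vee x_{\pi(n+j')}$, which (given the half-preservation just derived) equals $1=\gamma$ iff $\pi(j)\ne k$ or $\pi(n+j')\ne n+k'$. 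Collecting these, $\Phi_\pi$ computes $\gamma$ exactly when $\pi$ satisfies the bipartite permutation independent set instance $G$, giving both directions of $(3)\Leftrightarrow(4)$.

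For the counting link $(1)\Rightarrow(2)$ I would first check that $\gamma$ depends essentially on all $6n$ inputs: restricting to $z=1^{2n}$ gives $\bigvee_i(x_i\vee y_i)$, which for each $x_j$ and each $y_j$ supplies two domain points differing only in that coordinate with distinct $\gamma$-values; restricting to $x=0^{2n},y=1^{2n}$ gives $\bigvee_i z_i$, doing the same for each $z_j$. Hence any circuit $C$ computing $\gamma$ reads every variable. Now count gate-input wires: a circuit of size at most $6n-1$ over fan-in-$2$ gates has at most $12n-2$ such wires; $6n$ of them are consumed by (at least one occurrence of) each of the $6n$ variables, leaving at most $6n-2$ wires for the outputs of the $6n-1$ gates, so every gate but the output gate has fan-out exactly $1$ and every variable is read exactly once. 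Thus $C$ is a tree, i.e.\ a read-once formula, which is $(2)$. (If the model allows free input negations, push them to the leaves first; the count is unchanged.)

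The main obstacle is $(2)\Rightarrow(3)$: showing that any read-once formula $F$ computing $\gamma$ equals, after relabelling leaves, $\Phi_\pi$ for some $\pi\in\mathcal{S}_{2n}$. The approach is to peel off structure via restrictions, exploiting read-onceness (so that fixing variables to constants genuinely collapses subformulas). First, from $F|_{z=0^{2n}}\equiv0$ and $F|_{z=1^{2n}}=\bigvee_i(x_i\vee y_i)$ deduce that each $z$-leaf sits directly above an AND gate $z_{\sigma(i)}\wedge T_i$, that the rest of $F$ is an OR-tree over these $2n$ nodes, that no $x$- or $y$-leaf lies outside the $T_i$, and hence $F=\bigvee_{i\in[2n]}(z_{\sigma(i)}\wedge T_i)$ with $\sigma\in\mathcal{S}_{2n}$ and $T_1,\dots,T_{2n}$ read-once formulas partitioning the $4n$ variables $\{x_j,y_j\}$. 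Then, from $F|_{x=1^{2n}}=\bigvee_i z_i$ get $T_i|_{x=1^{2n}}\equiv1$ (so each $T_i$ contains at least one $x$-variable), and from $F|_{x=0^{2n}}=\bigvee_i(y_i\wedge z_i)$ get $T_i|_{x=0^{2n}}=y_{\sigma(i)}$ (so each $T_i$ contains exactly the one $y$-variable $y_{\sigma(i)}$); since $2n$ nonempty $x$-parts then partition the $2n$ $x$-variables, each $T_i$ contains exactly one $x$-variable $x_{\pi'(i)}$, and the gate joining $x_{\pi'(i)}$ to $y_{\sigma(i)}$ is forced to be $\vee$ with both literals positive, i.e.\ $T_i=x_{\pi'(i)}\vee y_{\sigma(i)}$. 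Reindexing by $\sigma$ rewrites $F$ as $\bigvee_{j\in[2n]}\big((x_{\pi(j)}\vee y_j)\wedge z_j\big)$ with $\pi:=\pi'\circ\sigma^{-1}$, which is $(3)$. The part I expect to be fiddly is verifying that each restriction-based deduction leaves no alternative: for the general fan-in-$2$ basis this requires ruling out non-monotone internal gates and variables nested in ways inconsistent with the observed restrictions, which is a finite but careful case analysis on the gate at the root of $F$ and its immediate subtrees, exactly as carried out in~\cite{ila20}. Combining this with the routine links completes the proof.
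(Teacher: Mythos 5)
The paper never proves this lemma itself—it is imported verbatim from Ilango~\cite{ila20}—and your reconstruction follows exactly that source's route: the wire-counting argument together with ``$\gamma$ depends on all $6n$ variables'' for $(1)\Rightarrow(2)$, restriction-driven rigidity for $(2)\Rightarrow(3)$, and case-by-case evaluation of $\Phi_\pi$ on the defined blocks of $\gamma$ (which correctly isolates the half-preservation constraints from cases $5$--$6$ and the edge constraints from case $7$) for $(3)\Leftrightarrow(4)$. Your outline is correct, and the one piece you leave as a pointer to~\cite{ila20}—the case analysis forcing $F=\bigvee_i\bigl(z_{\sigma(i)}\wedge T_i\bigr)$ and excluding non-monotone gates—is exactly the part this paper also takes on faith from the citation, so your write-up neither conflicts with nor falls short of the paper's own treatment.
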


\paragraph{Quantum reduction for $\MQCSP^\star$}
We follow the proof in \cite{ila20} but adapt it to  quantum circuits. More specifically, we want to show that for the partial function $\gamma$ defined by Eq.~\eqref{eq:gamma_def}, $\mqcsps(\gamma, 6n-1)=1$ is equivalent to the case that $\gamma$ can be computed by a read-once formula.

The reverse direction is easy:
\begin{claim}\label{clm:formula_to_mqcsp}
If $\gamma$ can be computed by a read-once formula, then $\mqcsps(\gamma, 6n-1)=1$.
\end{claim}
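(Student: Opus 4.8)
The plan is to reduce first to a concrete formula using Lemma~\ref{lem:eth_reduction}, and then to write down an explicit small quantum circuit. Assume $\gamma$ is computed by a read-once formula. By the equivalence of items (2) and (3) in Lemma~\ref{lem:eth_reduction}, there is a permutation $\pi\in{\cal S}_{2n}$ such that the read-once formula $F_\pi(x,y,z):=\bigvee_{i\in[2n]}\bigl((x_{\pi(i)}\vee y_i)\wedge z_i\bigr)$ computes $\gamma$ on all of $\{0,1\}^{6n}$, in particular on every input where $\gamma\neq\star$. This formula has exactly $6n$ leaves, one per input variable, hence exactly $6n-1$ binary Boolean gates: the $2n$ inner $\vee$'s, the $2n$ inner $\wedge$'s, and the $2n-1$ $\vee$'s of the outer disjunction.

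Next I would translate $F_\pi$ gate-by-gate into a quantum circuit $\Circ$ acting on the $6n$ input qubits together with ancilla qubits initialized to $\ket{0}$: each Boolean gate is implemented by a single quantum gate that reads the two wires carrying its arguments and writes the gate's value onto a fresh ancilla wire (a reversible two-input $\wedge$- or $\vee$-gate, which behaves classically on computational-basis states), so that the value of $F_\pi$ ends up on the last ancilla, which we designate the output qubit. Because $F_\pi$ is read-once, every input variable and every intermediate value is consumed exactly once, so no copying/fan-out gate is ever needed, and the number of gates of $\Circ$ equals the number of Boolean gates of $F_\pi$, namely $6n-1$. On any computational-basis input $\ket{x,y,z}$ the circuit $\Circ$ acts as a permutation and deterministically produces $F_\pi(x,y,z)=\gamma(x,y,z)$ on the output qubit; hence for every $x,y,z$ with $\gamma(x,y,z)\neq\star$ we have $\Pr[\Circ(x,y,z)=\gamma(x,y,z)]=1\geq 2/3$. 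By Definition~\ref{def:partial_mqcsp} this gives $\mqcsps(\gamma,6n-1)=1$.

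The one step that needs genuine care is the size bookkeeping: one must ensure the translation costs exactly one quantum gate per Boolean gate, with no hidden multi-gate decompositions, uncomputation steps, or copy gates — and this is precisely what the read-once structure of $F_\pi$ buys. The $\star$-entries of $\gamma$ impose no constraint at all on $\Circ$, which is why this direction is easy. If one wants the statement over an arbitrary fixed universal gate set rather than the convenient one used above, Claim~\ref{claim:gateset2gateset} (via the Solovay--Kitaev Theorem~\ref{thm:sk}) transfers the construction at the price of only a $\polylog$ factor in the gate count; the reduction in the main theorem fixes a gate set for which the count is exactly $6n-1$.
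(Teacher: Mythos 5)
There is a genuine gap, and it is exactly at the step you flag as needing "genuine care": the size bookkeeping under the correct gate model. Definition~\ref{def:partial_mqcsp} restricts the circuits counted by $\mqcsps$ to \emph{single-qubit and 2-qubit} gates, and this restriction is load-bearing for the whole reduction (the light-cone argument in Claim~\ref{clm:mqcsp_to_qformula} and the dequantization of \cite{ckp13} both live in the 2-qubit-gate model). Your translation implements each binary Boolean gate as "a single quantum gate that reads the two wires carrying its arguments and writes the gate's value onto a fresh ancilla wire" — that is a \emph{three}-qubit gate (a Toffoli-type gate for $\wedge$), so the circuit you build is not a legal witness for $\mqcsps(\gamma,6n-1)$. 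Decomposing each such gate into 2-qubit gates costs a constant factor and destroys the exact $6n-1$ count, which is the whole point of the claim.

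The repair is to exploit the read-once structure differently: since each input wire of the formula is consumed exactly once, a binary gate can be simulated \emph{in place} by a single 2-qubit unitary acting on its two argument qubits, producing the gate's value on one output qubit with success probability exactly $2/3$. This is possible despite the irreversibility of $\wedge$/$\vee$ because the three orthonormal images of the inputs mapping to $0$ need only squared projection $2/3$ each onto the 2-dimensional "output $=0$" subspace, and $3\cdot\tfrac23=2$ is exactly its dimension; the fourth input then lands entirely in the complementary subspace. No ancillas or copying are needed, the gate count equals the number of Boolean gates ($\le 6n-1$ for a binary tree on $\le 6n$ leaves), and this is what the paper's one-line argument ("simulate a Boolean circuit with 2-qubit gates without increasing the size") is implicitly invoking. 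Two smaller remarks: your detour through Lemma~\ref{lem:eth_reduction}(2)$\Leftrightarrow$(3) is harmless but unnecessary — counting internal nodes of a binary tree suffices — and the Solovay--Kitaev remark is off target, since Definition~\ref{def:partial_mqcsp} does not fix a finite universal gate set but does fix the arity bound that your construction violates.
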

\begin{proof}
It is easy to see that a read-once formula on $6n$ input variables has at most $6n-1$ Boolean gates. Hence, it implies that $\mcsps(\gamma, 6n-1)=1$. Then, we have $\mqcsps(\gamma, 6n-1)=1$ because we can use a quantum circuit with all 2-qubit gates to simulate a Boolean circuit without increasing the circuit size. 
\end{proof}

For the forward direction, we consider an intermediate  model: \textit{read-once quantum formula}. The quantum formula was defined by Yao \cite{yao93} as follows:
\begin{definition}
A quantum formula is a single-output quantum circuit such that every gate has at most one output that is used as an input to a subsequent one. 

If a quantum formula only uses every input qubit at most once, then we say it is a read-once quantum formula.
\end{definition}

We first prove the forward direction for the quantum read-once formula:
\begin{claim}\label{clm:mqcsp_to_qformula}
If $\mqcsps(\gamma, 6n-1)=1$, then $\gamma$ can be computed by a read-once quantum formula. Here, we assume that the quantum circuits only use single-qubit and 2-qubit gates.
\end{claim}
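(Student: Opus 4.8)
The plan is to show that a size-$\le 6n-1$ quantum circuit $C$ computing the partial function $\gamma$ must in fact be a \emph{read-once} quantum formula, and then apply a dequantization result to convert it to a classical read-once formula, which by Lemma~\ref{lem:eth_reduction} pins down the bipartite permutation independent set instance. The key structural observation is a counting/degree argument: $\gamma$ depends (nontrivially) on all $6n$ input qubits — this can be read off from Eq.~\eqref{eq:gamma_def}, since toggling any single coordinate among the $x,y,z$ blocks changes the value of $\gamma$ on some defined input (e.g.\ the restrictions $x=0^{2n}$, $x=1^{2n}$, $z=1^n0^n$, $z=0^n1^n$ together witness dependence on each $y_i$ and each $x_i$, and $x=0^{2n}$ together with $z=0^{2n}, 1^{2n}$ witness dependence on each $z_i$). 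A quantum circuit on single- and two-qubit gates that genuinely depends on all $6n$ input wires must touch each input wire with at least one gate, and since each two-qubit gate touches at most two wires, at least $\lceil 6n/2\rceil = 3n$ gates are needed just to ``read'' the inputs; combined with the requirement that the circuit funnel down to a single output qubit, a size-$(6n-1)$ circuit has essentially no slack. I would make this precise following the combinatorial skeleton of \cite{ila20}: show that in a size-$\le 6n-1$ circuit computing something depending on all $6n$ inputs, the underlying interaction graph is a forest and each input wire is used exactly once, i.e.\ $C$ is a read-once quantum formula in the sense of Yao \cite{yao93}.

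First I would set up the wire/gate incidence bookkeeping: let $C$ have $g \le 6n-1$ gates acting on $6n$ input qubits plus some ancillas, and argue that ancillas are useless here — an ancilla wire touched by gates only adds to the gate count without being a ``free'' input, so WLOG we may assume few or no ancillas, or fold them into the counting. Then the ``every input used at least once'' fact gives $g \ge 3n$ if all gates were two-qubit, but to get down to read-once one uses that the output is a single qubit and that any gate whose output fans into two later gates, or any input used twice, forces a strictly larger gate budget; quantitatively, $6n$ inputs needing to be combined into one output through two-input gates requires at least $6n-1$ binary combining steps, leaving zero room for any input to be reused or for any intermediate value to be duplicated. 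Hence $C$ is a read-once quantum formula — proving Claim~\ref{clm:mqcsp_to_qformula}. The subsequent claim (converting this to a classical read-once formula, then invoking Lemma~\ref{lem:eth_reduction}) would use the ``dequantization'' of \cite{ckp13}: a read-once quantum formula computing a total Boolean function (here, $\gamma$ restricted to its domain, extended arbitrarily, but the read-once structure is rigid enough that the dequantized formula has the same leaf set and the same underlying tree) collapses to a classical read-once formula of the same size over the standard basis.

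The main obstacle I anticipate is the error handling. Definition~\ref{def:partial_mqcsp} only requires $\Pr[C(x)=f(x)] \ge 2/3$ on the defined inputs, so $C$ is not exactly computing $\gamma$, and a priori a noisy/approximate circuit need not be read-once at all — approximation could in principle let a cleverer, non-tree circuit ``cheat''. Handling this requires an \emph{unconditional} quantum circuit lower bound: I would show that any quantum circuit of size $\le 6n-1$ that $2/3$-computes $\gamma$ on all defined inputs must already be read-once, by arguing that the $6n$-fold dependence is ``robust'' — if even one input wire were read twice or one intermediate wire fanned out, the gate budget shortfall would force the circuit to be independent of some input coordinate on a non-negligible fraction of inputs, contradicting the $2/3$ success probability on the relevant restricted subcubes. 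This is where the excerpt flags the need to ``prove an unconditional quantum circuit lower bound for that function,'' and it is the step where the careful combinatorics of \cite{ila20} has to be re-derived in the quantum gate model rather than quoted. Once Claim~\ref{clm:mqcsp_to_qformula} is in hand, combining it with Claim~\ref{clm:formula_to_mqcsp} and Lemma~\ref{lem:eth_reduction} gives the equivalence of $\mqcsps(\gamma,6n-1)=1$ with the bipartite permutation independent set instance, and then the quantum fine-grained reduction from 3-SAT to that problem (the second bullet in the section overview) yields the $N^{o(\log\log N)}$ quantum hardness under \textsf{QETH}.
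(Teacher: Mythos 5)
Your proposal is correct and follows essentially the same route as the paper: observe that $\gamma$ depends on all $6n$ inputs, then use a light-cone/counting argument showing that $6n-1$ two-qubit gates can bring at most $6n$ wires into the output's backward light cone, so the circuit must be a full binary tree on the $6n$ input leaves, i.e.\ a read-once quantum formula. The error-handling ``obstacle'' you flag dissolves immediately in this argument, since an input outside the light cone makes the output distribution \emph{exactly} independent of that coordinate, which already contradicts $2/3$-correctness on a pair of defined inputs differing only there; no robust or approximate version of the combinatorics is needed.
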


\begin{proof}
It is easy to verify that $\gamma$ depends on all of the $6n$ input variables. Hence, by a light-cone argument, the topology of the quantum circuit that computes $\gamma$ using $6n-1$ 2-qubit gates must be a full binary tree with $6n$ leaves. Hence, that circuit is a read-once quantum formula.
\end{proof}

Cosentino, Kothari, and Paetznick \cite{ckp13} proved that any read-once quantum formula can be ``dequantized'' to the classical read-once quantum formula:
\begin{theorem}[\cite{ckp13}]\label{thm:dequantize_readonce}
If a language is accepted by a bounded-error read-once quantum formula over single-qubit and 2-qubit gates, then it is also accepted by an exact read-once classical formula with the same size, using \textsf{NOT} and all 2-bit Boolean gates. 
\end{theorem}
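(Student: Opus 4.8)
The plan is to follow a two-level strategy: first put the quantum formula into a canonical \emph{binary-tree} shape, then use the bounded-error promise to argue that every gate in the tree is forced to behave ``classically,'' which lets one replace the formula gate by gate with a classical read-once formula of the same size. For Step~1 (normal form), given a bounded-error read-once quantum formula $F$ of size $s$ on inputs $x_1,\dots,x_n$, observe that single-qubit gates can be absorbed: a single-qubit gate on a leaf wire $x_i$ merely replaces the input state $\ket{x_i}$ by $V_i\ket{x_i}$, one of two pure states, and a single-qubit gate on an internal wire can be folded into the (unique) $2$-qubit gate immediately above it. After this the computation graph of $F$ is a tree whose internal nodes are $2$-qubit gates and whose leaves are the (possibly rotated) input qubits; because the formula is read-once and single-output this tree is a full binary tree, and the gate count has not increased. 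The output wire is measured in the computational basis; write $\mathrm{acc}(x)$ for the acceptance probability, with $\mathrm{acc}(x)\ge 2/3$ when $f(x)=1$ and $\mathrm{acc}(x)\le 1/3$ when $f(x)=0$.

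The heart of the argument (Step~2) is that each gate is effectively Boolean. The state carried by any wire $w$ is a single-qubit density matrix $\rho_w(x)$ depending only on the inputs $S_w$ feeding the subtree rooted at $w$, and, once the remaining inputs are fixed, the rest of the circuit acts on $w$ as a fixed two-outcome POVM, so $\mathrm{acc}(x)$ is an \emph{affine} function of $\rho_w(x)$. Working top-down: at the output wire the affine functional is $\rho\mapsto\bra{1}\rho\ket{1}$, which by the promise must avoid $(1/3,2/3)$ for all $x$; the output wire is produced by a $2$-qubit gate $U$ from two child wires $a,b$ whose states each take (inductively) only two possible values $\sigma^0,\sigma^1$, so $\rho_w(x)\in\{\,\mathrm{Tr}_{\mathrm{disc}}[U(\sigma^i_a\otimes\sigma^j_b)U^\dagger] : i,j\in\{0,1\}\,\}$, at most four values. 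The bounded-error gap, pushed through the affine functional, forces these four values to split into a ``$0$-group'' and a ``$1$-group'' monotonically in $(i,j)$, i.e.\ $U$ computes one of the sixteen two-bit Boolean functions of the bits summarized by $a$ and $b$ (up to relabelling and up to which computational-basis content the surviving wire carries). Iterating this down the tree — at each gate the affine functional seen by a child is obtained by plugging the now-known two-valued state of its sibling into $U$ and composing with the functional above — shows every internal wire carries a genuine Boolean value that is a read-once Boolean combination of its subtree inputs.

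For Step~3 (assembly), replace each $2$-qubit gate by the $2$-bit Boolean gate identified above, each absorbed single-qubit gate that swaps $\ket{0}\leftrightarrow\ket{1}$-content by a \textsf{NOT}, and each leaf by its input variable; the result is a classical read-once formula over \textsf{NOT} and all $2$-bit Boolean gates, with the same number of gates as $F$, computing $f$ \emph{exactly}.

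The main obstacle is Step~2: internal wires are never measured, so the only handle on them is indirect, through every way the rest of the read-once circuit can probe a wire as the free inputs vary. Making the ``four states split monotonically, hence the gate is classical'' step rigorous requires a careful Bloch-ball / affine-functional analysis together with careful bookkeeping of how the bounded-error gap degrades as one descends the tree; the reason the gap never collapses — so that the final rounding yields an \emph{exact} classical formula — is precisely the read-once structure, since each wire is used exactly once and error cannot be amplified by reuse.
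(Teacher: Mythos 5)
This statement is not proved in the paper at all: it is imported verbatim from \cite{ckp13}, so the only meaningful comparison is with the argument in that reference, whose overall shape (induction on the formula tree, analysing the top $2$-qubit gate together with the final measurement as a bilinear functional of the two incoming single-qubit states) your sketch does mirror.

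There is, however, a genuine gap in your Step~2, and it sits exactly where the real work of \cite{ckp13} lives. Your inductive invariant is that each child wire $a$ carries a state taking ``only two possible values $\sigma^0,\sigma^1$.'' That is false: the output of a read-once quantum subformula on $k$ inputs can a priori be any of up to $2^k$ distinct density matrices, and indeed your own next sentence derives that the \emph{output} of a single $2$-qubit gate already ranges over four states, contradicting the two-state hypothesis you would need one level down. The correct invariant is not about the number of states on a wire but about the \emph{effective two-outcome measurement} that the rest of the circuit induces on that wire: one must show that this measurement classifies the (possibly exponentially many) achievable states into two groups separated by a gap, according to a Boolean function of the subtree's inputs. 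A second, related omission: the effective measurement seen by wire $a$ is obtained by plugging in the state of its sibling $b$, which itself varies with $b$'s inputs; so even after establishing a two-group split for one choice of the sibling's state, you must prove the split is \emph{consistent} across all of them. Establishing both facts is precisely the content of the bilinear-form / POVM analysis in \cite{ckp13} (the decomposition of $\Tr[E(\rho_a\otimes\rho_b)]$ for a $2$-qubit effect $E$ under the bounded-error promise), and your sketch defers it to ``a careful Bloch-ball analysis'' without supplying the decomposition that makes the induction close. Minor further points: ``split monotonically in $(i,j)$'' is not what you want (you need an arbitrary $2$-bit Boolean function, all sixteen of which must be allowed, hence the gate set \textsf{NOT} plus all $2$-bit gates in the conclusion), and the exactness of the resulting classical formula follows simply from the fact that once each gate is identified with a Boolean gate the classical formula computes a total Boolean function outright, not from any claim that ``the gap never collapses.''
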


Hence, we can apply Theorem~\ref{thm:dequantize_readonce} to dequantize Claim~\ref{clm:mqcsp_to_qformula}:

\begin{claim}\label{clm:mqcsp_to_xorformula}
If $\mqcsps(\gamma, 6n-1)=1$, then $\gamma$ can be computed by a classical read-once formula with $6n-1$ 2-bit gates. In particular, all the \textsf{NOT} gates can be pushed to the leaf level and the high level gates are $\{\textsf{AND}, \textsf{OR}, \textsf{XOR}\}$.
\end{claim}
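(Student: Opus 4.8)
The plan is to chain together the two facts already assembled: Claim~\ref{clm:mqcsp_to_qformula}, which upgrades the hypothesis $\mqcsps(\gamma,6n-1)=1$ into a read-once \emph{quantum} formula for $\gamma$ over single- and $2$-qubit gates, and the dequantization theorem of Cosentino--Kothari--Paetznick (Theorem~\ref{thm:dequantize_readonce}), which converts any bounded-error read-once quantum formula into an \emph{exact} read-once classical formula of the same size over \textsf{NOT} and the full set of $2$-bit Boolean gates. Composing these two immediately yields a read-once classical formula $\Phi$ with $6n-1$ binary gates (possibly together with some \textsf{NOT} gates) that computes $\gamma$ on every input where $\gamma$ is defined. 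All that then remains is a purely syntactic clean-up of $\Phi$: showing that every binary gate may be taken to be \textsf{AND}, \textsf{OR}, or \textsf{XOR}, and that all \textsf{NOT} gates may be pushed down to the leaves.

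For the clean-up I would first record that $\gamma$ depends on all $6n$ input variables (the same observation used in the proof of Claim~\ref{clm:mqcsp_to_qformula}). Since $\Phi$ is read-once with $6n$ leaves and $6n-1$ binary gates, its underlying topology is a full binary tree, and no binary gate of $\Phi$ may ignore either of its inputs: otherwise the variables in the ignored subtree would have no influence on the output, contradicting the dependence of $\gamma$ on all $6n$ variables. Hence every binary gate of $\Phi$ computes one of the ten genuinely bivariate Boolean functions ($16$ minus the two constants and the four projections/negated projections), and each of these can be written as $g(\ell_1,\ell_2)$ with $g\in\{\textsf{AND},\textsf{OR},\textsf{XOR}\}$ and each $\ell_i$ equal to its input wire or the negation thereof: for instance $\textsf{NAND}=\textsf{OR}(\neg a,\neg b)$ and $\textsf{NOR}=\textsf{AND}(\neg a,\neg b)$ by De Morgan, $\textsf{XNOR}=\textsf{XOR}(\neg a,b)$, and the ``inhibition'' gates $a\wedge\neg b$, $\neg a\vee b$, and so on are already of this form.

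To move all negations to the leaf layer I would then apply the standard sign-propagation rewriting to $\Phi$: process the tree top-down while maintaining a single bit $s$ indicating whether the current subformula should be negated; at an \textsf{AND} or \textsf{OR} gate swap $\textsf{AND}\leftrightarrow\textsf{OR}$ and recurse on both children with sign $s$ when $s=1$ (De Morgan); at an \textsf{XOR} gate recurse on one child with sign $s$ and the other with sign $0$; at a \textsf{NOT} gate delete it and recurse on its child with sign $1\oplus s$; at a leaf, output the literal (negated iff $s=1$). This rewriting keeps the formula read-once (no leaf is ever duplicated), keeps the number of binary gates at exactly $6n-1$, removes every internal \textsf{NOT}, and produces at most one \textsf{NOT} directly above each leaf, with every internal gate in $\{\textsf{AND},\textsf{OR},\textsf{XOR}\}$ --- precisely the claimed form.

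The part that will need the most care is not any single deduction but the bookkeeping: confirming that Theorem~\ref{thm:dequantize_readonce} preserves the gate count as $6n-1$ exactly, so that the full-binary-tree structure survives the dequantization and the ``no degenerate gate'' argument goes through, and that the partial-function semantics of $\gamma$ --- especially the behaviour on $\star$-inputs --- is handled consistently when invoking the dequantization, which is phrased for total languages. I would make this precise by restricting attention throughout to the set of inputs on which $\gamma$ is defined and interpreting ``computes $\gamma$'' in the partial-function sense already used in Lemma~\ref{lem:eth_reduction}.
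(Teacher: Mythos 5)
Your proposal is correct and follows essentially the same route as the paper's proof: invoke Claim~\ref{clm:mqcsp_to_qformula} together with Theorem~\ref{thm:dequantize_readonce} to obtain a classical read-once formula with $6n-1$ binary gates, observe that every $2$-bit Boolean gate can be rewritten as an \textsf{AND}, \textsf{OR}, or \textsf{XOR} with negations on its inputs, and then push all \textsf{NOT} gates to the leaves via De Morgan-style sign propagation. Your additional bookkeeping (ruling out degenerate gates via dependence on all $6n$ variables, and handling the partial-function semantics) is a careful elaboration of the same argument rather than a different approach.
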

\begin{proof}
By Theorem~\ref{thm:dequantize_readonce}, there is a read-once classical formula that computes $\gamma$ using $6n-1$ 2-bit logical gates. We can enumerate all of the 2-bit Boolean function and check that they can be expressed by one of $\textsf{AND}, \textsf{OR}, \textsf{XOR}$ gate with some $\textsf{NOT}$ gates on the input wire. Then, by De Morgan's laws, we can push the $\textsf{NOT}$ gate to the bottom level. Note that these transformations will preserve the read-once property. 
\end{proof}


The next claim shows that $\textsf{NOT}$ and $\textsf{XOR}$ gates do not help computing $\gamma$:

\begin{claim}\label{clm:xorformula_to_formula}
The classical read-once formula computing $\gamma$ only uses $\textsf{AND}$ and $\textsf{OR}$ gates.
\end{claim}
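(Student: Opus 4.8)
The plan is to strip the \textsf{XOR} gates off $F$ by restricting $\gamma$ to two slices on which it collapses to a monotone total function, and then use the fact that a read-once formula cannot use \textsf{XOR} (nor a negated literal) to compute a monotone function that depends on all of its inputs. Throughout, let $F$ be the classical read-once formula computing $\gamma$ produced by Claim~\ref{clm:mqcsp_to_xorformula}: by the light-cone argument of Claim~\ref{clm:mqcsp_to_qformula} it is a full binary tree of $6n-1$ internal gates on the $6n$ leaves $x,y,z$, each leaf a literal over a distinct variable, every internal gate in $\{\textsf{AND},\textsf{OR},\textsf{XOR}\}$, with \textsf{NOT}s at the leaves; in particular $F$ is \emph{reduced} (every gate depends on every variable it reads). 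Consulting Eq.~\eqref{eq:gamma_def}, one checks directly that $\gamma|_{x=0^{2n}}(y,z)=\bigvee_{i\in[2n]}(y_i\wedge z_i)$ and $\gamma|_{z=1^{2n}}(x,y)=\bigvee_{i\in[2n]}(x_i\vee y_i)$, both \emph{total} monotone functions depending on all $4n$ remaining variables. Since each has $4n$ essential variables, the simplified restricted formulas $F|_{x=0^{2n}}$ and $F|_{z=1^{2n}}$ are still reduced read-once formulas, hence full binary trees on $4n$ leaves. The key ingredient is the following sub-lemma: \emph{a reduced read-once formula over $\{\textsf{AND},\textsf{OR},\textsf{XOR},\textsf{NOT}\}$ computing a total monotone function that depends on all of its inputs uses only $\textsf{AND}$ and $\textsf{OR}$ gates and only positive literals} --- indeed an \textsf{XOR} gate would, under a suitable restriction of the other variables that keeps its output-path alive, force some restriction of the function to equal $u\oplus w$ or its complement for variables $u,w$ in its two sub-trees (non-monotone), and a negated leaf $\bar v$ would similarly force a restriction to equal $\bar v$ (anti-monotone in $v$). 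Combined with the uniqueness of the read-once representation of a read-once function, this pins down $F|_{x=0^{2n}}$ as the canonical formula $\bigvee_{i}\textsf{AND}(y_i,z_i)$ and $F|_{z=1^{2n}}$ as an \textsf{OR}-tree over the positive literals $x_i,y_i$.

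Next I would read off structural constraints on $F$ itself. A sub-tree of $F$ becomes constant under $x\mapsto 0^{2n}$ exactly when it reads only $x$-variables (a reduced read-once sub-tree is non-constant whenever it reads a non-$x$ variable), and symmetrically for $z$. Since $F|_{x=0^{2n}}$ and $F|_{z=1^{2n}}$ contain no \textsf{XOR} gate, every \textsf{XOR} gate $g=h_1\oplus h_2$ of $F$ must have one input sub-tree that becomes constant under $x\mapsto 0^{2n}$ (hence $x$-only) and, by the same argument on the $z$-slice, one that becomes constant under $z\mapsto 1^{2n}$ (hence $z$-only); as these two sub-trees are disjoint and nonempty, one side of $g$ is $x$-only and the other is $z$-only. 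Moreover, under $x\mapsto 0^{2n}$ the $z$-only side survives unchanged as a sub-formula of the canonical formula for $\bigvee_i(y_i\wedge z_i)$, whose only sub-formulas on $z$-variables alone are single leaves $z_b$; so the $z$-only side of $g$ is exactly a leaf $z_b$, and its $x$-only side is a non-constant formula $h_1$ reading at least one $x$-variable. Symmetrically, every \textsf{AND} gate of $F$ must have a $z$-only input sub-tree, for otherwise it survives $z\mapsto 1^{2n}$ as an \textsf{AND} gate, contradicting the \textsf{OR}-tree shape of $F|_{z=1^{2n}}$.

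Finally I would derive the contradiction. Let $g=h_1\oplus z_b$ be an \textsf{XOR} gate as above. Under $x\mapsto 0^{2n}$ the constant value $h_1|_{x=0^{2n}}$ must be $0$ --- otherwise $g$ becomes the negated leaf $\bar z_b$, which does not appear in the canonical formula --- so $g$ becomes the leaf $z_b$. Tracing upward, the first ancestor of $g$ that survives the restriction must be the \textsf{AND} gate $y_b\wedge z_b$ of the canonical formula (none of the spliced-out gates in between may introduce a \textsf{NOT} or annihilate $z_b$, since the canonical formula has neither); call this ancestor $p$. Then $p$ is an \textsf{AND} gate of $F$ whose two input sub-trees are: the side $R$ containing $g$, which reads an $x$-variable (from $h_1$) and so is not $z$-only; and the other side $q$, which restricts to $y_b$ under $x\mapsto 0^{2n}$, hence reads $y_b$ and is not $z$-only. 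Thus $p$ is an \textsf{AND} gate of $F$ with no $z$-only input, contradicting the previous paragraph. Hence $F$ has no \textsf{XOR} gate; that is, $F$ uses only \textsf{AND} and \textsf{OR} gates (with \textsf{NOT}s at the leaves, as delivered by Claim~\ref{clm:mqcsp_to_xorformula}), which is the assertion; the elimination of the remaining leaf-negations then feeds into the classical analysis of Lemma~\ref{lem:eth_reduction}.

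I expect the main obstacle to be the bookkeeping of the gate-by-gate correspondence under the two restrictions: one must carefully track which gates survive simplification, which are spliced out, and in particular rule out that the splicing introduces spurious \textsf{NOT} gates that would be incompatible with the canonical monotone formulas --- this is precisely where the ``no negated literals / no \textsf{NOT}'' half of the sub-lemma and the exact shape of the two canonical formulas are used. Stating and proving the sub-lemma itself (the path-liveness argument showing an \textsf{XOR} sub-formula forces a non-monotone restriction) is routine but needs the reduced read-once hypothesis spelled out; verifying the two slice evaluations of $\gamma$ from Eq.~\eqref{eq:gamma_def} is straightforward case analysis.
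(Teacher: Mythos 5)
Your proposal is correct and rests on the same two restrictions ($x\mapsto 0^{2n}$ and $z\mapsto 1^{2n}$) and the same monotonicity observation that the paper uses. The paper's own proof, however, is only a few lines: it notes that $\gamma|_{x=0^{2n}}=\bigvee_i(y_i\wedge z_i)$ and $\gamma|_{z=1^{2n}}=\bigvee_i(x_i\vee y_i)$ are monotone, asserts that the \textsf{XOR} gates therefore cannot depend on the $y,z$ variables nor on the $x$ variables, and defers the details to Claim 13 of \cite{ila20}. As literally written, that sketch does not dispose of the one configuration that survives each monotonicity constraint separately --- an \textsf{XOR} gate with one $x$-only child and one $z$-only child --- and this is precisely the case your second and third paragraphs eliminate, via the chain: the $z$-only child must be a bare leaf $z_b$ (since the only $z$-only subformulas of the canonical formula for $\bigvee_i(y_i\wedge z_i)$ are leaves), its first surviving ancestor under $x\mapsto 0^{2n}$ is the \textsf{AND} gate computing $y_b\wedge z_b$, and that \textsf{AND} gate then has no $z$-only input, contradicting the constraint imposed by the \textsf{OR}-tree shape of $F|_{z=1^{2n}}$. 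So your argument is best described as a genuine completion of the paper's sketch rather than a different route; the cost is the extra bookkeeping you correctly flag (reducedness of $F$, totality of the two slices, uniqueness of the read-once decomposition, and tracking which gates are spliced out under restriction), all of which you handle correctly.
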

\begin{proof}
The proof is similar to the proof of Claim 13 in \cite{ila20}.

We first note that the $\textsf{XOR}$ gate is not monotone. Then, by setting $x=0^{2n}$, we have $\gamma(0^{2n}, y, z)=\bigvee_{i\in [2n]}(y_i \wedge z_i)$, which is a monotone function in $y$ and $z$. Hence, the $\textsf{XOR}$ gates in the formula cannot depend on the all the $y$ and $z$ variables. Similarly, by setting $z=1^{2n}$, we have $\gamma(x, y, 1^{2n}) = \bigvee_{i\in [2n]} (x_i \vee y_i)$, which is monotone in $x$ and $y$. It implies that the $\textsf{XOR}$ gates cannot depend on all the $x$ variables. Hence, the formula will not use the $\textsf{XOR}$ gate.

For the \textsf{NOT} gate, since the function is monotone in the positive input variables after some restrictions, and the formula is read-once, the $\textsf{NOT}$ gate will also not be used. 
\end{proof}

By Claim~\ref{clm:formula_to_mqcsp}, \ref{clm:mqcsp_to_xorformula} and \ref{clm:xorformula_to_formula}, we get that $\mqcsps(\gamma, 6n-1)=1$ is equivalent to the case that $\gamma$ can be computed by a read-once formula using $\textsf{AND}$ and $\textsf{OR}$ gates. This statement corresponds to showing that $\text{(1)}~\Leftrightarrow~\text{(2)}$ in Lemma~\ref{lem:eth_reduction} for $\mcsps$. Then, by $\text{(2)}~\Leftrightarrow~\text{(4)}$ in Lemma~\ref{lem:eth_reduction}, we prove the following reduction for $\mqcsps$:
\begin{lemma}\label{lem:mqcsp_reduction}
$\mqcsps(\gamma, 6n-1)=1$ is equivalent to the existence of  $\pi\in {\cal S}_{2n}$ that satisfies the instance of bipartite permutation independent set problem given by $G$.
\end{lemma}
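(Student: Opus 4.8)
The plan is to obtain Lemma~\ref{lem:mqcsp_reduction} by chaining together the claims already established in this subsection, using the classical equivalences of Lemma~\ref{lem:eth_reduction} as the bridge to the bipartite permutation independent set problem. I would prove the two directions of the equivalence separately.

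For the forward direction, suppose $\mqcsps(\gamma, 6n-1)=1$. By Claim~\ref{clm:mqcsp_to_xorformula}, $\gamma$ is then computed by a classical read-once formula with $6n-1$ two-bit gates in which all $\textsf{NOT}$ gates have been pushed to the leaves and the internal gates lie in $\{\textsf{AND},\textsf{OR},\textsf{XOR}\}$; by Claim~\ref{clm:xorformula_to_formula}, the $\textsf{XOR}$ and $\textsf{NOT}$ gates do not actually occur, so $\gamma$ is computed by a monotone read-once $\textsf{AND}/\textsf{OR}$ formula of size $6n-1$. This is exactly item (2) of Lemma~\ref{lem:eth_reduction} (and, since the formula has $6n-1$ gates, item (1) as well), so the implication $(2)\Rightarrow(4)$ of Lemma~\ref{lem:eth_reduction} produces a permutation $\pi\in\mathcal{S}_{2n}$ satisfying the bipartite permutation independent set instance $G$. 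Conversely, suppose such a $\pi$ exists. The implications $(4)\Rightarrow(3)\Rightarrow(2)$ of Lemma~\ref{lem:eth_reduction} give that $\gamma$ is computed by the read-once formula $\bigvee_{i\in[2n]}\bigl((x_{\pi(i)}\vee y_i)\wedge z_i\bigr)$, and then Claim~\ref{clm:formula_to_mqcsp} immediately yields $\mqcsps(\gamma,6n-1)=1$, since a quantum circuit simulates a Boolean read-once formula gate-for-gate without increasing size. Combining the two directions proves the lemma.

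I do not expect a genuine obstacle at this stage: the substantive work was done in dequantizing the read-once quantum formula (Claim~\ref{clm:mqcsp_to_qformula} together with Theorem~\ref{thm:dequantize_readonce}) and in the monotonicity-by-restriction argument of Claim~\ref{clm:xorformula_to_formula}. The only points where I would spend the writing effort are bookkeeping: confirming that the notion of ``read-once formula'' appearing in Claims~\ref{clm:mqcsp_to_xorformula}--\ref{clm:xorformula_to_formula} coincides with the monotone $\textsf{AND}/\textsf{OR}$ object used in Lemma~\ref{lem:eth_reduction}(2), and checking that the size bookkeeping matches, i.e. that the light-cone argument behind Claim~\ref{clm:mqcsp_to_qformula} forces a full binary tree with $6n$ leaves and hence exactly $6n-1$ internal gates, which is what makes item (1), $\mcsps(\gamma,6n-1)=1$, available and lets the equivalence close with the correct size parameter.
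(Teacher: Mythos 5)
Your proposal is correct and follows exactly the paper's route: chain Claims~\ref{clm:mqcsp_to_xorformula} and~\ref{clm:xorformula_to_formula} with Claim~\ref{clm:formula_to_mqcsp} to establish the equivalence of $\mqcsps(\gamma,6n-1)=1$ with computability by a read-once $\textsf{AND}/\textsf{OR}$ formula (item (2) of Lemma~\ref{lem:eth_reduction}), then invoke $(2)\Leftrightarrow(4)$ of Lemma~\ref{lem:eth_reduction}. Your bookkeeping remarks about the full-binary-tree size count and the matching notions of ``read-once formula'' are exactly the right details to verify, and nothing further is needed.
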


The remaining thing is to prove the quantum hardness of the $2n\times 2n$ ~\textsf{Bipartite Permutation Independent Set} problem. We follow the quantum fine-grained reduction framework by \cite{aclwz20} and show the following $\class{QETH}$-hardness result. The proof is given in Section~\ref{sec:fine grained}.

\begin{restatable}{lemma}{qethperm}\label{lem:qeth_perm_ind_set}
Assuming \textsf{QETH}, there is no $2^{o(n\log n)}$-time quantum algorithm that solves $2n\times 2n$ \textsf{Bipartite Permutation Independent Set} problem. 
\end{restatable}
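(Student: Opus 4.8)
The plan is to \emph{lift} the classical $\textsf{ETH}$-hardness of the $2n\times 2n$ Bipartite Permutation Independent Set problem proved by Lokshtanov, Marx, and Saurabh~\cite{lms11} to the quantum setting, exploiting the fact that their reduction --- like essentially every reduction in the $\textsf{ETH}$-hardness literature --- is a \emph{deterministic polynomial-time many-one reduction}, and invoking the quantum fine-grained reduction framework of~\cite{aclwz20} to formalize that such a reduction preserves quantum running time up to the usual parameter bookkeeping. Concretely, the argument has three pieces: a sparsified form of $\textsf{QETH}$, the (classical) parameter-efficient reduction from 3-\textsf{SAT}, and their composition.

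First I would record the sparsified version of $\textsf{QETH}$: it cannot be the case that 3-\textsf{SAT} on instances with $N$ variables and $M$ clauses is solvable by a quantum algorithm in time $2^{o(N+M)}$. This follows from the sparsification lemma of Impagliazzo, Paturi, and Zane~\cite{ipz01}: for every $\epsilon>0$ there is a classical subexponential-time procedure producing at most $2^{\epsilon N}$ instances, each with $O_\epsilon(N)$ clauses, whose disjunction is equisatisfiable with the original; running a hypothetical quantum $2^{o(N+M)}$-time solver on each sparse instance would yield a quantum 3-\textsf{SAT} algorithm running in time $2^{\epsilon N}\cdot 2^{o(N)}$, and letting $\epsilon\to 0$ contradicts $\textsf{QETH}$. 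Next I would recall the reduction of~\cite{lms11} (possibly composed with standard polynomial-time reductions among $\class{NP}$-complete problems, or routed through the intermediate $k\times k$ permutation problems introduced there): there is a deterministic polynomial-time algorithm that, given a 3-\textsf{SAT} instance $\phi$ on $N$ variables and $M$ clauses, outputs a graph $G=G(\phi)$ on vertex set $[n]\times[n]$ with $n=\Theta((N+M)/\log(N+M))$ such that $\phi$ is satisfiable if and only if $G$ is a Yes-instance of the $2n\times 2n$ Bipartite Permutation Independent Set problem. The only thing to verify here is that every gadget of~\cite{lms11} is syntactic and deterministic, so that the reduction is a legitimate quantum fine-grained many-one reduction in the sense of~\cite{aclwz20}; the relation $n\log n=\Theta(N+M)$ is precisely what makes a $k^{o(k)}$-type algorithm refute $\textsf{ETH}$, and will do the same against $\textsf{QETH}$.

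Finally I would compose the pieces. Suppose toward a contradiction that a quantum algorithm $\mathcal{A}$ solves $2n\times 2n$ Bipartite Permutation Independent Set in time $2^{o(n\log n)}$. Given a 3-\textsf{SAT} instance on $N$ variables, apply sparsification to get at most $2^{\epsilon N}$ instances with $M'=O_\epsilon(N)$ clauses, then apply the reduction of~\cite{lms11} to each, obtaining permutation-problem instances with parameter $n=\Theta((N+M')/\log(N+M'))=\Theta_\epsilon(N/\log N)$, hence $n\log n=\Theta_\epsilon(N)$. Running $\mathcal{A}$ on each costs $2^{o(n\log n)}=2^{o(N)}$, and there are at most $2^{\epsilon N}$ of them, so the whole procedure is a quantum 3-\textsf{SAT} algorithm of running time $2^{\epsilon N}\cdot 2^{o(N)}$; choosing $\epsilon$ small enough contradicts $\textsf{QETH}$, which proves the lemma.

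The argument is not conceptually deep; its content is entirely bookkeeping, and that is where the (mild) obstacle lies. I would need to (i) confirm that no gadget of~\cite{lms11} relies on a classical-specific feature --- in particular nothing like the reliance on the internal determinism of a classical solver that complicates some of the $\MCSP$ analyses mentioned earlier in this paper --- so that the reduction really is a black-box many-one reduction that lifts verbatim; and (ii) carefully track the parameter correspondence $n\leftrightarrow N$ through the sparsification step and the reduction, since the exact exponent in $n=\Theta(N/\log N)$ (as opposed to, say, $n=\Theta(\sqrt{N})$) is exactly what determines whether the composed algorithm lands at $2^{o(N)}$ quantum 3-\textsf{SAT} time, and hence whether the contradiction with $\textsf{QETH}$ actually goes through. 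Fitting the composition of the \emph{Turing} reduction coming from sparsification with the \emph{many-one} reduction of~\cite{lms11} into the formal definition of a quantum fine-grained reduction in~\cite{aclwz20} is the last routine-but-necessary check.
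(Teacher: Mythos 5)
Your proposal follows essentially the same route as the paper: lift the Lokshtanov--Marx--Saurabh reduction chain from 3-\textsf{SAT} to the $2n\times 2n$ Bipartite Permutation Independent Set problem into the quantum fine-grained framework of~\cite{aclwz20}, track the parameter correspondence $n\log n=\Theta(N)$, and derive a contradiction with \textsf{QETH}. The composition and the parameter bookkeeping are right, and your explicit invocation of the sparsification lemma to pass from $2^{o(N)}$ to $2^{o(N+M)}$ hardness is in fact \emph{more} careful than the paper's own treatment, which routes through \textsf{3-Coloring} and elides this point. One factual correction, though: your premise that the \cite{lms11} reduction is a ``deterministic polynomial-time many-one reduction'' is false for one link in the chain. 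The reduction from $k\times k$ \textsf{Clique} to $k\times k$ \textsf{Permutation Clique} runs in time $2^{\Theta(k\log\log k)}$, which is superpolynomial; the paper states this explicitly in its Claim on that step. This does not break your argument --- $2^{O(k\log\log k)}=2^{o(k\log k)}=2^{o(N)}$, so the reduction cost is still absorbed into the final $2^{\epsilon N}\cdot 2^{o(N)}$ bound --- but it does mean the chain is a \emph{subexponential-time} fine-grained reduction rather than a polynomial-time one, and your verification step (i) should be restated accordingly: what must be checked is that each link runs in time $2^{o(k\log k)}$ in the relevant parameter, not that it is polynomial.
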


Now, we can prove the $\textsf{QETH}$-hardness of $\mqcsps$:
\begin{proof}[Proof of Theorem~\ref{thm:qeth_hard_mqcsps}]
By Lemma~\ref{lem:mqcsp_reduction}, $\mqcsps$ can be reduced to $2n\times 2n$ \textsf{Bipartite Permutation Independent Set} problem and the hardness follows from Lemma~\ref{lem:qeth_perm_ind_set}.
\end{proof}
\section{MCSP for Quantum Objects}\label{sec:quantum obj}

In this section, we generalize the problem to considering circuit complexities of quantum objects, including unitaries and quantum states. In particular, we study their hardness, related reductions, and their implications to other subjects in quantum computer science. We start by defining the two problems.   


\begin{definition}[$\UMCSP_{\alpha,\beta}$]\label{def:umcsp}
Let $n,s,t\in \N$ and $t\leq s$. Let $\alpha,\beta\in (0,1]$. Let $U\in \mathbb{C}^{2^n\times 2^n}$ be a unitary. $\UMCSP$ is a promise problem defined as follows.
\begin{itemize}
\setlength\itemsep{-1mm}
    \item \textbf{Inputs:} the unitary matrix $U$, the size parameter $s$ in unary representation, and the ancilla parameter $t$.
    \item \textbf{Yes instance:} there exists a quantum circuit $\Circ$ using at most $s$ gates and operating on at most $n+t$ qubits such that for all $\ket{\psi}\in \C^{2^n}$, 
    \begin{align}
        \|(\bra{\psi}\otimes I_t)(U^{\dag}\otimes I_t) \Circ\ket{\psi,0^t}\|^2\geq \alpha, \label{eq:umcsp1}
    \end{align}  
    \item \textbf{No instance:} for every quantum circuit $\Circ$ using at most $s$ gates and operating on at most $n+t$ qubits, there exists some $\ket{\psi}\in \C^{2^n}$ such that 
    \begin{align}
        \|(\bra{\psi}\otimes I_t)(U^{\dag}\otimes I_t) \Circ\ket{\psi,0^t}\|^2\leq \beta. \label{eq:umcsp2}
    \end{align} 
\end{itemize}
With the promise that the input must be either a yes instance or a no instance, the problem is to decide whether the input is a yes instance or not. 
\end{definition}

\begin{remark}\label{rmk:precision}
Since the input to $\UMCSP$ is a unitary matrix $U$ and each entry is a complex number, we cannot fully describe $U$ and hence need to specify a precision parameter. Moreover, the precision issue is subtle in the search-to-decision reduction.
For a gate set $\mathcal{G}$, we denote $\ell_{\mathcal{G}}\in\N$ as the maximum number of bits used to encode an entry of a gate. Note that if a circuit uses $s$ gates from $\mathcal{G}$, then each entry in the resulting unitary can be written down with at most $s\cdot\ell_\mathcal{G}$ bits. Thus, by the triangle inequality for the distance between unitaries, it suffices to use $s\cdot\ell_{\mathcal{G}}$ bits to encode each entry of the input unitary.
Also, note that when $\alpha-\beta<2^{-s\cdot\ell_{G}}$, $\UMCSP_{\alpha,\beta}$ becomes a non-promise problem since effectively the gap between Yes and No instances does not matter.
In the definition of $\UMCSP$, we hide the introduction of precision parameter for simplicity. Note that from the above reasoning and the fact that the input unitary is $2^n\times 2^n$, it would not affect the complexity of the problem even one chooses the bit complexity to be $2^{O(n)}$, which is more than enough for most interesting situations. 
\end{remark}



\begin{definition}[$\SMCSP_{\alpha,\beta}$]~\label{def:smcsp}
Let $n,s,t\in \N$, where $t\leq s$. Let $\alpha,\beta\in (0,1]$. Let $\ket{\phi}\in \mathbb{C}^{2^n}$ be a quantum state. $\SMCSP$ is a promise problem defined as follows.
\begin{itemize}
\setlength\itemsep{-1mm}
     \item \textbf{Inputs:} size parameters $s$ and $n$ in unary, access to arbitrary many copies of $\ket{\psi}$, and the ancilla parameter $t$.
   \item \textbf{Yes instance:} there exists a quantum circuit $\Circ$ using at most $s$ gates and operating on at most $n+t$ qubits such that
    \begin{align*}
        \|(\bra{\phi}\otimes I_{n+t-1}) \Circ\ket{0^{n+t}}\|^2\geq \alpha, 
    \end{align*}
    \item \textbf{No instance:} for every quantum circuit $\Circ$ using at most $s$ gates and operating on at most $n+t$ qubits,
    \begin{align*}
        \|(\bra{\phi}\otimes I_{n+t-1}) \Circ\ket{0^{n+t}}\|^2\leq \beta.  
    \end{align*}
\end{itemize}
With the promise that the input must be either a yes instance or a no instance, the problem is to decide whether the input is a yes instance or not. 
\end{definition}

\begin{remark}
Similarly, the precision of the input parameters $\alpha,\beta$ of $\SMCSP$ has to depend on the bit complexity of the gate set. See Remark~\ref{rmk:precision} for more discussion.
\end{remark}

\begin{remark}
For the thresholds $\alpha,\beta$, it is worth noting that a quantum circuit that outputs a mixed state can always have nonzero inner product with an arbitrary state. Therefore, we cannot set $\beta$ to be arbitrarily small; otherwise, there will not be any $U$ or $\ket{\phi}$ satisfying the no instance.
\end{remark}

For $\SMCSP$, we focus on the version where the inputs are multiple quantum states. The input format is quite different from $\UMCSP$ and $\MQCSP$; instead of having the full classical description, $\SMCSP$ is given access to many copies of the quantum state. Hence, we say an algorithm for $\SMCSP$ is efficient if it runs in time $\poly(n,t,s)$, i.e., an efficient algorithm can use at most $\poly(n,t,s)$ copies of $\ket{\psi}$.  We choose this input format because that in the quantum setting, we generally cannot have the classical description of the quantum state. For instance, in shadow tomography\cite{aaronson18}, quantum gravity\cite{BFV19}, and quantum pseudorandom states\cite{JLF18}, the problem is given many copies of a quantum state, identify some properties of the state. Furthermore, although this problem seems to be much harder than having the full description or a succinct description (e.g, a circuit that generates the state) of the state, we will see that this problem has a $\class{QCMA}$ protocol. \footnote{Since $\SMCSP$ takes quantum inputs, the problem is not in $\class{QCMA}$ under the standard definition. However, problems with quantum inputs in quantum computing is natural, so, it is also reasonable to study the complexity classes that allow quantum inputs.}

\begin{remark}
On the other hand, the hardness results including the problem is in $\class{QCMA}$ (Theorem~\ref{thm:SQCMA_UMCSP}), the search-to-decision reduction (Theorem~\ref{thm:s2d_smcsp}), and the approximate self-reduction (Theorem~\ref{thm:smcsp_self_reduction}) all hold for the version where the input is a classical description for the state. 
\end{remark}  


Before proving the main theorems in this section, we introduce some notations and the swap test. Swap test \cite{BCWW01} is a quantum subroutine for testing whether two pure quantum states are close to each other.

\begin{notation}
We write $a\approx_\epsilon b$ for $a,b\in \mathbb{R}$ to mean $\|a-b\|\leq \epsilon$.
\end{notation}
\begin{notation}
We write $\ket{\varphi}\approx_\epsilon \ket{\phi}$ to mean $\|\ket{\varphi}-\ket{\phi}\|\leq \epsilon$.
\end{notation}
\begin{lemma}[Correctness of Swap Test]
For any two states $\ket{\phi},\ket{\psi}$, consider the following state
$$(\fH\otimes \fI)(\fcSWAP)(\fH\otimes \fI)\ket{0}\ket{\phi}\ket{\psi}$$
Measuring the first qubit gives outcome $1$ with probability $\frac{1}{2}-\frac{1}{2}|\bra{\phi}{\psi}\rangle|^2$.
\end{lemma}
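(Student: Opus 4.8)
The plan is to track the three-register state $\ket{0}\ket{\phi}\ket{\psi}$ step by step through the circuit $(\fH\otimes\fI)(\fcSWAP)(\fH\otimes\fI)$ and then compute the probability that a computational-basis measurement of the first qubit yields outcome $1$. This is a direct, self-contained calculation; no external results are needed beyond linearity of the gates and the definition of $\fcSWAP$ (swap the last two registers conditioned on the first qubit being $\ket{1}$).

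First I would apply the initial Hadamard to obtain $\tfrac{1}{\sqrt2}(\ket{0}+\ket{1})\ket{\phi}\ket{\psi}$. Then applying $\fcSWAP$ gives $\tfrac{1}{\sqrt2}\big(\ket{0}\ket{\phi}\ket{\psi}+\ket{1}\ket{\psi}\ket{\phi}\big)$. Finally, applying the second Hadamard to the control qubit and collecting terms yields
\begin{align*}
\tfrac12\ket{0}\big(\ket{\phi}\ket{\psi}+\ket{\psi}\ket{\phi}\big)+\tfrac12\ket{1}\big(\ket{\phi}\ket{\psi}-\ket{\psi}\ket{\phi}\big).
\end{align*}
The probability of measuring $1$ on the first qubit is then the squared norm of $\tfrac12\big(\ket{\phi}\ket{\psi}-\ket{\psi}\ket{\phi}\big)$.

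To finish, I would expand this squared norm as
\begin{align*}
\tfrac14\big(\ipro{\phi}{\phi}\ipro{\psi}{\psi}-\ipro{\phi}{\psi}\ipro{\psi}{\phi}-\ipro{\psi}{\phi}\ipro{\phi}{\psi}+\ipro{\psi}{\psi}\ipro{\phi}{\phi}\big),
\end{align*}
and use that $\ket{\phi},\ket{\psi}$ are unit vectors together with $\ipro{\phi}{\psi}\ipro{\psi}{\phi}=|\ipro{\phi}{\psi}|^2$ to reduce this to $\tfrac14(2-2|\ipro{\phi}{\psi}|^2)=\tfrac12-\tfrac12|\ipro{\phi}{\psi}|^2$, as claimed.

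There is no genuine obstacle here; the only thing requiring a modicum of care is the bookkeeping in the $\fcSWAP$ step and keeping the tensor-factor ordering consistent when expanding the inner products. If one wanted to be fully rigorous about the ``measuring the first qubit'' statement, one could note that the post-$\fH$ state is already written in the form $\ket{0}\ket{u_0}+\ket{1}\ket{u_1}$ with the two branches on the remaining registers, so the Born rule gives outcome $1$ with probability $\ipro{u_1}{u_1}$ directly.
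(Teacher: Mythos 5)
Your calculation is correct and complete: the step-by-step tracking through the two Hadamards and the controlled-SWAP, followed by the Born-rule computation of the squared norm of the $\ket{1}$ branch, is exactly the standard derivation of the swap test. The paper does not actually prove this lemma — it states it and cites the original swap-test paper — so there is nothing to compare against; your argument fills in precisely the routine computation that the authors omitted, and it does so without error.
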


\begin{claim}
\label{claim:simplebound}
Let $\ket{\phi}, \ket{\psi}\in \C^{2^{n}}$ be two quantum states such that $\ket{\phi} \approx_{\epsilon} \ket{\phi}$. Then, for any $\ket{\psi'}$ which is a state on at most $n$ qubits, 
\begin{align*}
    \|(\bra{\psi'}\otimes I)\ket{\phi}\| -\epsilon\leq \|(\bra{\psi'}\otimes I)\ket{\psi}\| \leq \|(\bra{\psi'}\otimes I)\ket{\phi}\| +\epsilon.  
\end{align*}
\end{claim}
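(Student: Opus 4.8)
The plan is to treat this as a straightforward operator-norm estimate combined with the reverse triangle inequality. First I would observe that the map $M := \bra{\psi'}\otimes I$, viewed as a linear operator from $\C^{2^n}$ to $\C^{2^{n-k}}$ (where $k\leq n$ is the number of qubits on which $\ket{\psi'}$ is supported and $I$ is the identity on the remaining $n-k$ qubits), is a contraction: since $\ket{\psi'}$ is a unit vector, $\|\bra{\psi'}\|=1$, and tensoring with the identity does not increase the operator norm, so $\|M\|_{\mathrm{op}}\leq 1$. Hence for any vector $\ket{\chi}\in\C^{2^n}$ we have $\|M\ket{\chi}\|\leq\|\ket{\chi}\|$.

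Applying this to $\ket{\chi}=\ket{\phi}-\ket{\psi}$ and using the hypothesis $\ket{\phi}\approx_\epsilon\ket{\psi}$, i.e.\ $\|\ket{\phi}-\ket{\psi}\|\leq\epsilon$, gives
\[
\big\|(\bra{\psi'}\otimes I)\ket{\phi}-(\bra{\psi'}\otimes I)\ket{\psi}\big\| = \big\|(\bra{\psi'}\otimes I)(\ket{\phi}-\ket{\psi})\big\|\leq\epsilon.
\]
Then the reverse triangle inequality for the Euclidean norm yields
\[
\Big|\,\|(\bra{\psi'}\otimes I)\ket{\phi}\| - \|(\bra{\psi'}\otimes I)\ket{\psi}\|\,\Big|\leq\epsilon,
\]
which is exactly the two-sided bound in the statement after rearranging. (I would also note in passing that the statement's hypothesis ``$\ket{\phi}\approx_\epsilon\ket{\phi}$'' is a typo for ``$\ket{\phi}\approx_\epsilon\ket{\psi}$''.)

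There is no real obstacle here; the only point that deserves a sentence of care is the claim that $\bra{\psi'}\otimes I$ has operator norm at most $1$, which one can justify either by the general fact that $\|A\otimes B\|_{\mathrm{op}}=\|A\|_{\mathrm{op}}\|B\|_{\mathrm{op}}$ or, more elementarily, by writing $\ket{\chi}=\sum_j \ket{j}\otimes\ket{v_j}$ in a basis for the first register and computing $\|(\bra{\psi'}\otimes I)\ket{\chi}\|^2=\|\sum_j\langle\psi'|j\rangle\ket{v_j}\|^2\leq(\sum_j|\langle\psi'|j\rangle|)^2\max\|\ket{v_j}\|^2$— actually the cleanest route is Cauchy–Schwarz: $\|(\bra{\psi'}\otimes I)\ket{\chi}\|=\sup_{\|\ket{w}\|=1}|(\bra{\psi'}\otimes\bra{w})\ket{\chi}|\leq\|\ket{\psi'}\otimes\ket{w}\|\cdot\|\ket{\chi}\|=\|\ket{\chi}\|$. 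I would use this last formulation since it is one line.
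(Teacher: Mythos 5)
Your proof is correct and is essentially the paper's argument: the paper writes $\ket{\psi}=\ket{\phi}+\ket{\epsilon}$ with $\|\ket{\epsilon}\|\leq\epsilon$ and applies the triangle inequality in both directions, which is exactly your reverse-triangle-inequality step, and it asserts $\|(\bra{\psi'}\otimes I)\ket{\epsilon}\|\leq\epsilon$ without proof — the contraction fact you justify via Cauchy–Schwarz. You are also right that the hypothesis ``$\ket{\phi}\approx_\epsilon\ket{\phi}$'' is a typo for ``$\ket{\phi}\approx_\epsilon\ket{\psi}$''.
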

\begin{proof}
Without loss of generality, we can write $\ket{\psi} = \ket{\phi} + \ket{\epsilon}$, where $\|\ket{\epsilon}\| \leq \epsilon$. Then, $\|(\bra{\psi'}\otimes I)\ket{\psi}\| = \|(\bra{\psi'}\otimes I)\ket{\phi}+ (\bra{\psi'}\otimes I)\ket{\epsilon}\|$. By using triangular inequality, we obtain the following two inequalities: 
\begin{align*}
    &\|(\bra{\psi'}\otimes I)\ket{\psi}\| \leq \|(\bra{\psi'}\otimes I)\ket{\phi}\|+ \|(\bra{\psi'}\otimes I)\ket{\epsilon}\|, \mbox{ and}\\
    &\|(\bra{\psi'}\otimes I)\ket{\psi}\| \geq \|(\bra{\psi'}\otimes I)\ket{\phi}\|- \|(\bra{\psi'}\otimes I)\ket{\epsilon}\|.
\end{align*}
Since $\|\ket{\epsilon}\| \leq \epsilon$, $\|(\bra{\psi'}\otimes I)\ket{\epsilon}\|\leq \epsilon$. This completes the proof. 
\end{proof}



\begin{theorem}\label{thm:umcspqcma}
$\UMCSP_{\alpha,\beta}$ where $\beta\leq 1-\poly(1/2^n)$ and $\alpha> 1- 2^{-2n-20}(1-\beta)^4$ (for example, $\alpha=1-\mathsf{exp}(-2^n),\beta=1-\poly(1/2^n)$) is in $\class{QCMA}$.
\end{theorem}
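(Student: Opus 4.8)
\textbf{Proof plan for Theorem~\ref{thm:umcspqcma}.}

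The plan is to exhibit a $\class{QCMA}$ protocol: the prover sends the classical description of a candidate circuit $\Circ$ (at most $s$ gates, hence $\poly(2^n)$ bits), and the verifier checks quantumly that $\Circ$ is close to $U$ in the sense of Eq.~\eqref{eq:umcsp1}. As noted in the technical overview, the naive idea of running a swap test between $\Circ\ket{x,0^t}$ and $U\ket{x}\otimes\ket{0^t}$ on all computational basis states $\ket{x}$ is insufficient: it does not detect (a) wrong relative phases that $\Circ$ may introduce between basis states, nor (b) residual entanglement between the output register and the ancilla register. So the verifier will run two families of tests, each repeated $\poly(2^n)$ times with a fresh pair of copies: a \emph{basis test} on inputs $\ket{x}$ for $x\in\{0,1\}^n$, and a \emph{coherency test} on inputs $\tfrac{1}{\sqrt2}(\ket{a}+\ket{b})$ for all pairs of distinct basis states $a,b$. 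Each individual test prepares $\Circ\ket{\varphi,0^t}$, applies $U^\dagger\otimes I_t$ to the first $n$ qubits, and then swap-tests the first $n$ qubits against a fresh copy of $\ket{\varphi}$ (equivalently, runs the swap test of the stated lemma, or just measures in a rotated basis); accept iff all tests pass with empirical frequency consistent with fidelity $\ge$ some threshold between $\beta$ and $1$.

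\emph{Completeness} is the easy direction: if $\Circ$ witnesses a Yes instance, then for every input state $\ket{\varphi}$ used above we have $\|(\bra{\varphi}\otimes I_t)(U^\dagger\otimes I_t)\Circ\ket{\varphi,0^t}\|^2\ge\alpha$, which is exponentially close to $1$, so every test passes with overwhelming probability and a Chernoff bound over the $\poly(2^n)$ repetitions gives acceptance probability $\ge 2/3$. \emph{Soundness} is where the real work is. I need: if $\Circ$ passes the basis test and the coherency test with high fidelity on every basis state and every pair, then in fact $\|(\bra{\psi}\otimes I_t)(U^\dagger\otimes I_t)\Circ\ket{\psi,0^t}\|^2\ge\beta$ for \emph{all} $\ket{\psi}$, hence $\Circ$ cannot pass when the instance is a No instance. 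Write $\Circ\ket{x,0^t}=U\ket{x}\otimes\ket{0^t}$ would be the ideal; more realistically, high fidelity on $\ket{x}$ forces $\Circ\ket{x,0^t}\approx (U\ket{x})\otimes\ket{\eta_x}$ up to a small error, for \emph{some} ancilla state $\ket{\eta_x}$ and some global phase $e^{i\theta_x}$ — this is because the reduced state on the first $n$ qubits after applying $U^\dagger$ has high overlap with $\ket{x}$, which by a standard Cauchy--Schwarz / gentle-measurement argument pins down the joint state to a product form with $\ket x$ on the first register. Then the coherency test on $\ket a+\ket b$: expanding $\Circ(\ket a+\ket b)\ket{0^t}=e^{i\theta_a}(U\ket a)\ket{\eta_a}+e^{i\theta_b}(U\ket b)\ket{\eta_b}$ up to error, applying $U^\dagger$ and swap-testing against $\tfrac1{\sqrt2}(\ket a+\ket b)$, high fidelity forces both $e^{i\theta_a}\approx e^{i\theta_b}$ (so the phases are globally consistent) \emph{and} $\ket{\eta_a}\approx\ket{\eta_b}$ (so the ancilla garbage is input-independent, call it $\ket\eta$). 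Having established $\Circ\ket{x,0^t}\approx e^{i\theta}(U\ket x)\otimes\ket\eta$ uniformly over $x$ with a single phase and a single $\ket\eta$, linearity gives $\Circ\ket{\psi,0^t}\approx e^{i\theta}(U\ket\psi)\otimes\ket\eta$ for every $\ket\psi$ — the error accumulates like $2^n$ times the per-basis error in the worst case (triangle inequality over the $2^n$ amplitudes), which is exactly why the completeness threshold $\alpha$ must be $1-2^{-2n-20}(1-\beta)^4$: the per-state error must be driven down by a $2^{-O(n)}$ factor (and a further $(1-\beta)^2$ factor to beat the soundness gap) before it can be blown up by the union over basis states and still stay below $1-\beta$.

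The main obstacle is the quantitative bookkeeping in soundness: carefully tracking how a fidelity deficit $1-\alpha$ on each of the $2^n$ basis states and each of the $\binom{2^n}{2}$ pairs propagates, through the product-form approximation and then through linearity, to a fidelity deficit on an arbitrary superposition $\ket\psi$, and verifying that the stated relation between $\alpha$ and $\beta$ is exactly what makes this bound land below $1-\beta$. The powers $(1-\beta)^4$ and the factor $2^{-2n-20}$ strongly suggest two places where a square is lost — once converting a trace-distance/fidelity bound to an amplitude ($\ell_2$) bound via $\sqrt{\cdot}$, and once more in the swap-test acceptance probability which reads off $|\langle\cdot|\cdot\rangle|^2$ rather than $|\langle\cdot|\cdot\rangle|$ — compounded with the two union bounds (over $x$, and over the $\poly(2^n)$ test repetitions for the statistical estimate). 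I would set up the argument so that these losses are isolated into a small number of named claims (a ``product-form lemma,'' a ``phase- and garbage-alignment lemma,'' and a ``linearity/union lemma''), prove each with explicit constants, and then compose them; the only genuinely delicate point is making sure the coherency test really does control \emph{both} the phases and the ancilla states simultaneously rather than trading one off against the other, which is handled by noting that the swap test after $U^\dagger$ is sensitive to $\|\tfrac12(e^{i\theta_a}\ket{\eta_a}+e^{i\theta_b}\ket{\eta_b})\|^2$, a quantity that is close to $1$ only when the two phased ancilla vectors nearly coincide.
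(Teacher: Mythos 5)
Your plan is essentially the paper's proof: the certificate is the circuit description, the verifier runs a standard-basis check plus a pairwise coherency check on states $\tfrac{1}{\sqrt2}(\ket a+\ket b)$, and soundness is argued by first pinning $\Circ\ket{x,0^t}$ to a product form $e^{i\theta_x}(U\ket x)\otimes\ket{\eta_x}$, then using the coherency test to align the phases and ancilla states across all $x$, and finally extending to arbitrary $\ket\psi$ by linearity — exactly the decomposition the paper carries out in its two soundness lemmas. One quantitative correction: the accumulation factor in the linearity step is not $2^n$ but $2^{n/2}$, since the triangle inequality contributes $\sum_x|c_x|$ times the per-basis error and $\sum_x|c_x|\le 2^{n/2}$ by Cauchy--Schwarz for a normalized $\ket\psi=\sum_x c_x\ket x$; with your stated $2^n$ factor the parameters would \emph{not} close (the per-basis amplitude error after the fourth root is only $\approx 2^{-n/2}(1-\beta)/32$), whereas with $2^{n/2}$ the total deficit is $O(1)\cdot(1-\beta)$, which is what makes the hypothesis $\alpha>1-2^{-2n-20}(1-\beta)^4$ land below $1-\beta$.
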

To design the verifier (that verifies a quantum circuit $\Circ$ really implements $U$ as we want), what we will do is the following checking:
\begin{enumerate}
    \item Standard basis check: check whether Eq.~\eqref{eq:umcsp1} is satisfied on standard basis states.
    \item Coherency check: Check Eq.~\eqref{eq:umcsp1} on superposition states in the form of $\ket{a}+\ket{b}$. This step has two goals: (1) checking whether the operation does behave similar to a unitary (instead of, for example, a collapsing measurement). (2) the unitary does not introduce different phases on different basis states.
\end{enumerate}
\begin{proof}
Our checking algorithm follows the two steps above. The certificate is the circuit that implements the unitary such that Eq.~\eqref{eq:umcsp1} is satisfied. The following algorithm verifies it (assuming the promise):
\begin{enumerate}
    \item (Standard basis check) For each $i\in [2^n]$, evaluate $(U^\dagger\otimes I_t)\Circ(\ket{i}\otimes \ket{0^t})$ for $\poly_1(2^n)$ times. Store the output state (which requires only polynomial memory); denote the $j$-th sample on input $i$ as $\ket{\varphi_i^j}$. Measure each of the states and check whether the output for $\ket{\varphi_i^j}$ is $i$. If not, mark it as a negative sample.\par
    If for any $i$, the ratio of negative samples is $\geq 2^{-2n-18}(1-\beta)^4$, reject. 
    \item (Coherency check) Do the following for each $i,j\in [2^n],i\neq j$ for $\poly_2(2^n)$ times:\par
    Apply $(U^\dagger\otimes I_t)\Circ$ on $\frac{1}{\sqrt{2}}(\ket{i}+\ket{j})\otimes \ket{0^t}$. Project the output system on $\frac{1}{\sqrt{2}}(\ket{i}+\ket{j})$. If the projection does not succeed, consider it as a negative sample.\par
        If for any of $i$, the ratio of negative samples is $\geq 2^{-2n-18}(1-\beta)^4$, reject.
\end{enumerate}
We will show, when $\poly_1,\poly_2$ are all chosen to be some sufficiently big polynomials, this test can be used as the $\class{QCMA}$-verifier we need.\par
First, if a circuit satisfies Eq.~\eqref{eq:umcsp1}, we can prove the verifier succeeds with probability $1-2^{-O(\poly(2^n))}$. 
\begin{enumerate}
\item First, in the standard basis check, by Eq.~\eqref{eq:umcsp1}, the expected ratio of negative sample is at most $1-\alpha\leq \frac{1}{4}\cdot\text{threshold}$ ($\text{threshold}:=2^{-2n-18}(1-\beta)^4$). By the Chernoff bound we have, $\forall a\in [2^n]$,
\begin{align*}
&\Pr[\text{negative ratio}\geq \text{threshold}]\\ = &~ \Pr[\text{negative samples}\geq \text{threshold}\cdot \poly_1(2^n)]\\
\leq &~ 2^{-O(\mathbb{E}[\text{negative samples}])}\tag{Chernoff bound}\\
\leq &~ 2^{-O(\poly_1(2^n)\cdot 2^{-2n-20}(1-\beta)^4)}\tag{threshold $\cdot \poly_1(2^n)\geq 4\cdot \mathbb{E}[\text{negative samples}]$}
\end{align*}
which is $2^{-O(\poly(2^n)})$ when $\poly_1$ is taken to be big enough. (Since $1-\beta=\poly(1/2^n)$)\par
Summing this failure probability for all $a\in [2^n]$ altogether we know with probability is at most 
\begin{align*}
    2^n\cdot 2^{-O(\poly(2^n))}=2^{-O(\poly(2^n))},
\end{align*}
which means it could not pass the first step. 

\item For the coherency check we can apply Eq.~\eqref{eq:umcsp1} directly again and know for each $a,b$, the expected error ratio is $\leq 1-\alpha\leq \frac{1}{4}\cdot\text{threshold}$. (Similarly $\text{threshold}:=2^{-2n-18}(1-\beta)^4$). Thus by the Chernoff bound and similar arguments
$$\forall a,b\in [2^n],~~~\Pr[\text{error ratio}\geq \text{threshold}]\leq 2^{-O(\poly_2(2^n)\cdot 2^{-2n-20}(1-\beta)^4)}$$
which is $2^{-O(\poly(2^n)})$ when $\poly_2$ is taken to be big enough. (Since $1-\beta=\poly(1/2^n)$)\par
Thus summing this failure probability for all $a,b\in [2^n]$ we know this step fails with probability at most 
\begin{align*}
    2^n\cdot 2^n\cdot 2^{-O(\poly(2^n))}=2^{-O(\poly(2^n))}.
\end{align*}
\end{enumerate}
Thus we get the completeness.\par
Then we prove a circuit that satisfies Eq.~(\ref{eq:umcsp2}) will be rejected with probability $1-2^{-O(\poly(2^n))}$. To prove that, we need to understand how the coherency check help us control the form of the states. We will prove the following lemmas step by step.\par
First, we show the success of coherency check implies the ancilla states have to be close to each other:
\begin{lemma}\label{lem:basistosuper}
Suppose for some $a,b\in [2^n], a\neq b$,  the following equations hold:
\begin{align}\label{eq:l22}
    &\|(\bra{a}\otimes I_t)(U^\dagger\otimes I_t)\Circ(\ket{a}\otimes \ket{0^t})\|^2\geq 1-\delta,\notag\\
    &\|(\bra{b}\otimes I_t)(U^\dagger\otimes I_t)\Circ(\ket{b}\otimes \ket{0^t})\|^2\geq 1-\delta,\notag\\
    &\left\|\left(\frac{\bra{a}+\bra{b}}{\sqrt{2}}\otimes I_t\right)(U^\dagger\otimes I)\Circ\left(\frac{\ket{a}+\ket{b}}{\sqrt{2}}\otimes \ket{0^t}\right)\right\|^2\geq 1-\delta.
\end{align}
Define the ancilla states $\ket{\chi_a}$, $\ket{\chi_b}$ via
\begin{equation}\label{eq:8.5.5.2} (U^\dagger\otimes I_t)\Circ(\ket{a}\otimes \ket{0^t})\approx_{\sqrt{\delta}} \ket{a}\otimes \ket{\chi_a}\end{equation}
\begin{equation}\label{eq:8.5.6.2} (U^\dagger\otimes I_t)\Circ(\ket{b}\otimes \ket{0^t})\approx_{\sqrt{\delta}} \ket{b}\otimes \ket{\chi_b}\end{equation}
where the right hand sides are the states from projecting $(U^\dagger\otimes I_t)\Circ(\ket{a}\otimes \ket{0^t})$, and projecting $(U^\dagger\otimes I_t)\Circ(\ket{b}\otimes \ket{0^t})$ on to $\ket{a},\ket{b}$ respectively.

Then we have
\begin{equation}\label{eq:eq11}
    \ket{\chi_a}\approx_{4\delta^{1/4}}\ket{\chi_b}
\end{equation}
\end{lemma}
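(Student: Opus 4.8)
### Proof Proposal for Lemma~\ref{lem:basistosuper}

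The plan is to expand the third inequality in Eq.~\eqref{eq:l22} using the two approximations \eqref{eq:8.5.5.2} and \eqref{eq:8.5.6.2}, and extract a statement about $\ipro{\chi_a}{\chi_b}$ from it. First I would use linearity of $(U^\dagger \otimes I_t)\Circ$: applying the operator to $\frac{1}{\sqrt 2}(\ket a + \ket b)\otimes\ket{0^t}$ gives $\frac{1}{\sqrt 2}$ times the sum of the two individual outputs, so by \eqref{eq:8.5.5.2}–\eqref{eq:8.5.6.2} and the triangle inequality,
\[
(U^\dagger\otimes I_t)\Circ\left(\tfrac{\ket a + \ket b}{\sqrt 2}\otimes\ket{0^t}\right) \approx_{\sqrt{\delta}} \tfrac{1}{\sqrt 2}\big(\ket a\otimes\ket{\chi_a} + \ket b\otimes\ket{\chi_b}\big).
\]
Then by Claim~\ref{claim:simplebound} (with $\ket{\psi'} = \frac{1}{\sqrt2}(\ket a + \ket b)$), the quantity $\big\|(\frac{\bra a + \bra b}{\sqrt 2}\otimes I_t)(U^\dagger\otimes I_t)\Circ(\frac{\ket a+\ket b}{\sqrt2}\otimes\ket{0^t})\big\|$ differs by at most $\sqrt\delta$ from $\big\|(\frac{\bra a+\bra b}{\sqrt2}\otimes I_t)\cdot\frac{1}{\sqrt2}(\ket a\otimes\ket{\chi_a}+\ket b\otimes\ket{\chi_b})\big\|$. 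Using $\ipro{a}{b}=0$, the latter inner product vector equals $\frac12(\ket{\chi_a}+\ket{\chi_b})$, whose norm is $\sqrt{\tfrac12 + \tfrac12\operatorname{Re}\ipro{\chi_a}{\chi_b}}$.

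Next I would turn the norm bound into a bound on $\operatorname{Re}\ipro{\chi_a}{\chi_b}$. From the third line of \eqref{eq:l22} the squared norm is $\ge 1-\delta$, so $\sqrt{1-\delta}-\sqrt\delta \le \sqrt{\tfrac12+\tfrac12\operatorname{Re}\ipro{\chi_a}{\chi_b}}$, which after squaring and rearranging gives $\operatorname{Re}\ipro{\chi_a}{\chi_b}\ge 1 - O(\sqrt\delta)$ (being a bit careful with the cross term $-2\sqrt{\delta(1-\delta)}$, which is $O(\sqrt\delta)$). Since $\ket{\chi_a},\ket{\chi_b}$ are (approximately) unit vectors — their norms are $\ge\sqrt{1-\delta}$ by \eqref{eq:8.5.5.2}–\eqref{eq:8.5.6.2} and $\le 1$ since they are sub-normalized projections — I would then compute $\|\ket{\chi_a}-\ket{\chi_b}\|^2 = \|\ket{\chi_a}\|^2 + \|\ket{\chi_b}\|^2 - 2\operatorname{Re}\ipro{\chi_a}{\chi_b} \le 2 - 2(1-O(\sqrt\delta)) = O(\sqrt\delta)$, hence $\|\ket{\chi_a}-\ket{\chi_b}\| = O(\delta^{1/4})$. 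Tracking the constants through each triangle-inequality step should yield the stated bound $4\delta^{1/4}$.

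The main obstacle I anticipate is bookkeeping of the error constants: there are several places where a $\sqrt\delta$-approximation gets squared, square-rooted, or passed through the triangle inequality, and it is easy to lose a factor of $2$ or pick up a spurious cross-term. In particular the step from "$\operatorname{Re}\ipro{\chi_a}{\chi_b}$ is close to $1$" to "$\ket{\chi_a}\approx\ket{\chi_b}$" amplifies the error from order $\delta$ (or $\sqrt\delta$) to order $\delta^{1/4}$, so I would want to make sure the $\sqrt\delta$-level bound entering that step is tight enough — i.e., that I really have $\operatorname{Re}\ipro{\chi_a}{\chi_b}\ge 1 - c\sqrt\delta$ with a small explicit $c$, rather than something weaker. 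A secondary subtlety is that the $\ket{\chi_a}$ are defined as \emph{projections} (hence only sub-normalized), so I should use the two-sided norm estimates $\sqrt{1-\delta}\le\|\ket{\chi_a}\|\le 1$ consistently rather than treating them as exact unit vectors; this only affects lower-order terms but needs to be stated. Everything else is routine application of Claim~\ref{claim:simplebound}, linearity, and the triangle inequality.
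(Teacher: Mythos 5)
Your proposal is correct and follows essentially the same route as the paper: both expand the coherency-test inequality via linearity and the two single-basis approximations, reduce it to the norm of $\tfrac12(\ket{\chi_a}+\ket{\chi_b})$, and convert that (by the parallelogram identity in the paper, equivalently your $\operatorname{Re}\ipro{\chi_a}{\chi_b}$ computation) into the $O(\delta^{1/4})$ bound on $\|\ket{\chi_a}-\ket{\chi_b}\|$. The only slight discrepancy is that combining the two $\sqrt{\delta}$-approximations inside the superposition actually costs $\sqrt{2\delta}$ rather than $\sqrt{\delta}$, which is exactly the constant the paper carries and which your final bookkeeping would need to absorb into the factor $4$.
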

\begin{proof}
We can evaluate the left hand side of Eq.~\eqref{eq:l22} and get
\begin{align*}
    &\left\|\frac{1}{\sqrt{2}}((\bra{a}+\bra{b})\otimes I_t)((U^\dagger\otimes I)\Circ)\frac{1}{\sqrt{2}}((\ket{a}+\ket{b})\otimes \ket{0^t})\right\|\\
   &\approx_{\sqrt{2\delta}}~\frac{1}{2}\|((\bra{a}+\bra{b})\otimes I_t)(\ket{a}\otimes \ket{\chi_a}+\ket{b}\otimes \ket{\chi_b})\|\tag{By Eqs.~\eqref{eq:8.5.5.2},\eqref{eq:8.5.6.2}}\\
    &=~\frac{1}{2}\|\ket{\chi_a}+\ket{\chi_b}\|\\
    &=~\sqrt{1-\frac{1}{4}\|\ket{\chi_a}-\ket{\chi_b}\|^2}
\end{align*}
Substitute Eq.~\eqref{eq:l22}, we know
\begin{align*}
    \sqrt{1-\frac{1}{4}\|\ket{\chi_a}-\ket{\chi_b}\|^2}\geq \sqrt{1-\delta}-\sqrt{2\delta},\\
    \|\ket{\chi_a}-\ket{\chi_b}\|\leq 2\sqrt{2\sqrt{2\delta(1-\delta)}-\delta}\leq 4\delta^{1/4}.
\end{align*}
The lemma is then proved.
\end{proof}
Furthermore, we can show, when Eq.~\eqref{eq:eq11} holds for all pairs $(a,b)$, the operation $(U^\dagger\otimes I)\Circ$ is indeed close to identity:
\begin{lemma}\label{lem:g7}
Suppose for all $a,b\in [2^n],a\neq b$, Eqs.~\eqref{eq:8.5.5.2},\eqref{eq:8.5.6.2},\eqref{eq:eq11} holds. Then for all $\ket{\psi}\in \C^{2^n}$, 
    \begin{align*}
        \|(\bra{\psi}\otimes I_t)(U^{\dag}\otimes I_t) \Circ\ket{\psi,0^t}\|^2\geq 1-10\cdot 2^{n/2}\delta^{1/4}
    \end{align*}  
\end{lemma}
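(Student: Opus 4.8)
The plan is to show that, under the three hypotheses, the operator $V := (U^{\dagger}\otimes I_t)\Circ$ acts on \emph{every} computational basis state (tensored with $\ket{0^t}$) almost exactly as ``identity on the first $n$ qubits, followed by loading one \emph{fixed} ancilla state $\ket{\chi}$ onto the last $t$ qubits'', and then to push this to an arbitrary superposition input by linearity, at the cost of a $2^{n/2}$ factor from Cauchy--Schwarz. For Step~1, fix a reference index $a_0\in[2^n]$ and set $\ket{\chi} := \ket{\chi_{a_0}}$. By hypothesis Eq.~\eqref{eq:eq11} holds for every pair $a\neq b$, so applying it with $b=a_0$ (and using the trivial case $a=a_0$) gives $\ket{\chi_a}\approx_{4\delta^{1/4}}\ket{\chi}$ for all $a$. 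Combining this with Eq.~\eqref{eq:8.5.5.2}/\eqref{eq:8.5.6.2} and the triangle inequality yields, for every $a\in[2^n]$,
\[
    \big\|\, V(\ket{a}\otimes\ket{0^t}) - \ket{a}\otimes\ket{\chi}\,\big\|\;\le\;\sqrt{\delta}+4\delta^{1/4}\;\le\;5\,\delta^{1/4},
\]
where the last step uses $\delta\le 1$ (the claimed bound is vacuous otherwise). Note also that $\ket{\chi}$ is obtained by projecting the unit vector $V(\ket{a_0}\otimes\ket{0^t})$ onto $\ket{a_0}\otimes(\,\cdot\,)$, and the orthogonal complement has squared norm $\le\delta$ by Eq.~\eqref{eq:8.5.5.2}, so $\sqrt{1-\delta}\le\|\ket{\chi}\|\le 1$.

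For Step~2, write $\ket{\psi}=\sum_{a\in[2^n]}c_a\ket{a}$ with $\sum_a|c_a|^2=1$. Then $V(\ket{\psi}\otimes\ket{0^t})=\sum_a c_a\,V(\ket{a}\otimes\ket{0^t})$, so by the triangle inequality, Step~1, and Cauchy--Schwarz ($\sum_a|c_a|\le 2^{n/2}$),
\begin{align*}
    \big\|\, V(\ket{\psi}\otimes\ket{0^t}) - \ket{\psi}\otimes\ket{\chi}\,\big\|
    &\le \sum_{a}|c_a|\cdot 5\delta^{1/4}\\
    &\le 5\cdot 2^{n/2}\,\delta^{1/4}.
\end{align*}
Now apply Claim~\ref{claim:simplebound} with $\ket{\phi}:=\ket{\psi}\otimes\ket{\chi}$, the state $V(\ket{\psi}\otimes\ket{0^t})$ playing the role of ``$\ket{\psi}$'' there, $\epsilon:=5\cdot 2^{n/2}\delta^{1/4}$, and test vector $\bra{\psi}$. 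Since $(\bra{\psi}\otimes I_t)(\ket{\psi}\otimes\ket{\chi})=\ket{\chi}$ and $\|\ket{\chi}\|\ge\sqrt{1-\delta}$, this gives
\[
    \big\|(\bra{\psi}\otimes I_t)(U^{\dagger}\otimes I_t)\Circ\ket{\psi,0^t}\big\|\;\ge\;\sqrt{1-\delta}\,-\,5\cdot 2^{n/2}\delta^{1/4}.
\]
Squaring, using $(x-y)^2\ge x^2-2y$ for $x\le 1$, and absorbing the $O(\delta)$ term (dominated by $2^{n/2}\delta^{1/4}$ when $\delta\le 1$) into the numerical constant yields $\|(\bra{\psi}\otimes I_t)(U^{\dagger}\otimes I_t)\Circ\ket{\psi,0^t}\|^2\ge 1-10\cdot 2^{n/2}\delta^{1/4}$, which is the claim; the exact constant can be tightened with slightly more careful bookkeeping.

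\textbf{Main obstacle.} The crux is not any individual estimate but the structural input used in Step~1: that the coherency check, encapsulated in Eq.~\eqref{eq:eq11}, forces \emph{all} the ancilla residues $\ket{\chi_a}$ to coincide up to $O(\delta^{1/4})$, so that a \emph{single} $\ket{\chi}$ serves every basis state. Without this alignment, the per-basis approximations $V(\ket{a}\otimes\ket{0^t})\approx\ket{a}\otimes\ket{\chi_a}$ carry mutually incompatible ancilla (and relative-phase) parts and do not combine coherently under linearity --- precisely the ``$\Circ$ introduces different phases / entangles output with ancillas'' failure mode flagged in the introduction and addressed by the coherency check. A secondary, and apparently unavoidable, cost is the $2^{n/2}$ blow-up from Cauchy--Schwarz in Step~2, which is exactly why Theorem~\ref{thm:umcspqcma} must take $\alpha$ (hence $\delta$) exponentially close to $1$.
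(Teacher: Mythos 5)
Your proof is correct and follows essentially the same route as the paper's: approximate $(U^\dagger\otimes I_t)\Circ$ on each basis input by $\ket{a}\otimes\ket{\chi}$ for a single fixed ancilla state (the paper uses $\ket{\chi_0}$, you use $\ket{\chi_{a_0}}$), accumulate the per-basis error $5\delta^{1/4}$ over the superposition via the triangle inequality, and bound $\sum_a|c_a|\le 2^{n/2}$ by Cauchy--Schwarz before squaring. Your extra bookkeeping for $\|\ket{\chi}\|\ge\sqrt{1-\delta}$ (which the paper elides) only perturbs the constant, as you note.
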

\begin{proof}
Decompose $\ket{\psi}=\sum_{i\in [2^n]}{c_i}\ket{e_i}$. Take $\ket{\text{aux}}=\ket{\chi_0}$. Then
\begin{align*}
    (U^\dagger\otimes I_t)\Circ(\ket{\psi}\otimes \ket{0^t})
    =&~\sum_{i\in [2^n]}{c_i}(U^\dagger\otimes I_t)\Circ(\ket{e_i}\otimes \ket{0^t})\\
    &\approx_{\sum_ic_i\sqrt{\delta}}~\sum_{i\in [2^n]}c_i\ket{e_i}\otimes \ket{\chi_i}\tag{By Eqs.~\eqref{eq:8.5.5.2},\eqref{eq:8.5.6.2}}\\
   &\approx_{\sum_i4c_i\delta^{1/4}}~\sum_{i\in [2^n]}{c_i}\ket{e_i}\otimes\ket{\text{aux}}\tag{By Eq.~\eqref{eq:eq11}}\\
    =&~\ket{\psi}\otimes \ket{\text{aux}},
\end{align*}
which implies
\begin{align*}
    \|(\bra{\psi}\otimes I_t)(U^{\dag}\otimes I_t) \Circ\ket{\psi,0^t}\|^2\geq &~ (1-5\delta^{1/4}\sum_ic_i)^2\\
    \geq &~ (1-5\cdot 2^{n/2}\delta^{1/4})^2\\
    \geq &~ 1-10\cdot2^{n/2}\delta^{1/4}.
\end{align*}
And the proof is completed.
\end{proof}
 Then we prove a circuit that satisfies Eq.~(\ref{eq:umcsp2}) will be rejected with probability $1-2^{-O(\poly(2^n))}$. 
\begin{enumerate}
    \item After the standard basis check, $\Circ$ has to satisfy the following property, otherwise the verifier will reject with probability $1-2^{-O(\poly(2^n))}$:\par
   \begin{equation}\label{eq:31qcma} \forall a\in [2^n],~~\|(\bra{a}\otimes I_t)(U^\dagger\otimes I_t)\Circ(\ket{a}\otimes \ket{0^t})\|^2\geq 1-2^{-2n}(\frac{1}{11}(1-\beta))^4\end{equation}
   That's because otherwise the standard basis test for some $a\in [2^n]$ will have an expected negative ratio $\geq 2^{-2n}(\frac{1}{11}(1-\beta))^4\geq 4\cdot \text{threshold}$ (recall $\text{threshold}:=2^{-2n-18}(1-\beta)^4$).\par
   A more detailed calculation is as follows.
\begin{align*}
\Pr[\text{negative ratio}< \text{threshold}]= &~ \Pr[\text{negative samples}< \text{threshold}\cdot \poly_1(2^n)]\\
\leq &~ \exp(-O(\mathbb{E}[\text{negative samples}]))\\
\leq &~ \exp\left(-O(\poly_1(2^n)\cdot 2^{-2n}((1-\beta)/11)^4)\right),
\end{align*}
where the second step follows from the Chernoff bound, and the last step follows from 
\begin{align*}
    \text{threshold} \cdot \poly_1(2^n)< \frac{1}{4}\mathbb{E}[\text{negative samples}].
\end{align*}
Thus 
\begin{align*}
    \Pr[\text{negative ratio}\geq \text{threshold}]\geq 1-2^{-O(\poly_1(2^n)\cdot 2^{-2n}((1-\beta)/11)^4)}
\end{align*}
   \item After the coherency check, $\Circ$ has to satisfy the following property, otherwise the verifier will reject with probability $1-2^{-O(\poly(2^n))}$: for all $a,b\in [2^n]$, $a\ne b$, 
   \begin{equation}\label{eq:37qcma}
   \|\frac{1}{\sqrt{2}}((\bra{a}+\bra{b})\otimes I_t)(U^\dagger\otimes I)\Circ(\frac{1}{\sqrt{2}}(\ket{a}+\ket{b})\otimes \ket{0^t})\|^2\geq 1-2^{-2n}(\frac{1}{11}(1-\beta))^4\end{equation}
   The calculation is similar as the first step.
\item And from Eqs.~\eqref{eq:31qcma} and \eqref{eq:37qcma}, Lemma \ref{lem:g7} implies that for all $\ket{\psi}\in \mathbb{C}^{2^n}$,
    \begin{align*}
        \|(\bra{\psi}\otimes I_t)(U^{\dag}\otimes I_t) \Circ\ket{\psi,0^t}\|^2\geq 1-10\cdot 2^{n/2}(2^{-2n}(\frac{1}{11}(1-\beta))^4)^{1/4}>\beta.
    \end{align*} 
However, by the promise this is not possible to be in the no instance.
\end{enumerate}
This completes the proof. 
\end{proof}
\begin{claim}\label{claim:np_umcsp}
$\UMCSP_{\alpha,\beta}$ is in $\class{NP}$ when only linear ancilla qubits are allowed and $1-\alpha< 2^{-2n-20}(1-\beta)^4$ and $1-\beta \geq \poly(1/2^n)$ (for example, $1-\alpha=\mathsf{exp}(-2^n),1-\beta=\poly(1/2^n)$). However, $\UMCSP_{\alpha,\beta}$ is not trivially in $\class{NP}$ in general. 
\end{claim}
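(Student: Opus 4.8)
The plan is to reuse, essentially verbatim, the verification procedure from the proof of Theorem~\ref{thm:umcspqcma}, together with the observation that when the number of ancilla qubits is $t=O(n)$ the whole procedure collapses to a \emph{deterministic} computation runnable in time $\poly(2^n)$, which is polynomial in the input length. The $\class{NP}$ certificate will be the circuit $\Circ$ itself: it has at most $s$ gates (and we may assume $s\le 2^{O(n)}$, since that suffices to implement any unitary on $n+t=O(n)$ qubits) on $n+t=O(n)$ wires, so its description has size $\poly(2^n)$, which the verifier first checks. Given $\Circ$, the verifier forms the $2^{n+t}\times 2^{n+t}$ matrix $M=U_\Circ$ by multiplying out the at most $s$ gate matrices; since $n+t=O(n)$ and, by the precision conventions of Remark~\ref{rmk:precision}, every entry lies in the fixed ring generated by the gate entries and is a $\poly(2^n)$-bit number, this takes $\poly(2^n)$ time.

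Next the verifier fixes a threshold $\delta$ with $1-\alpha<\delta<\tfrac{(1-\beta)^4}{10^4\cdot 2^{2n}}$; such a $\delta$ exists because $1-\alpha<2^{-2n-20}(1-\beta)^4$ and $10^4<2^{14}$, e.g.\ $\delta=2^{-2n-15}(1-\beta)^4$. It then performs the standard-basis check and the coherency check of Theorem~\ref{thm:umcspqcma}, but \emph{exactly} rather than by repeated measurement: for every $a\in\{0,1\}^n$ it checks $\|(\bra a\otimes I_t)(U^\dagger\otimes I_t)M\ket{a,0^t}\|^2\ge 1-\delta$, and for every $a\ne b\in\{0,1\}^n$ it checks $\bigl\|\tfrac1{\sqrt2}((\bra a+\bra b)\otimes I_t)(U^\dagger\otimes I_t)M\bigl(\tfrac1{\sqrt2}(\ket a+\ket b)\otimes\ket{0^t}\bigr)\bigr\|^2\ge 1-\delta$. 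It accepts iff all $O(4^n)$ checks pass; each check is an inner product of $2^{O(n)}$-dimensional vectors and costs $\poly(2^n)$, so the whole verifier runs in $\poly(2^n)$ time.

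For completeness, a circuit $\Circ$ witnessing a Yes instance satisfies Eq.~\eqref{eq:umcsp1} for \emph{all} $\ket\psi$, in particular for $\ket\psi=\ket a$ and $\ket\psi=\tfrac1{\sqrt2}(\ket a+\ket b)$ with value $\ge\alpha>1-\delta$, so every check passes. For soundness, if the verifier accepts some $\Circ$ then all checks pass with value $\ge 1-\delta$, so Lemma~\ref{lem:basistosuper} applied to each pair $(a,b)$ gives $\ket{\chi_a}\approx_{4\delta^{1/4}}\ket{\chi_b}$, and then Lemma~\ref{lem:g7} yields $\|(\bra\psi\otimes I_t)(U^\dagger\otimes I_t)M\ket{\psi,0^t}\|^2\ge 1-10\cdot 2^{n/2}\delta^{1/4}>\beta$ for all $\ket\psi$, the last inequality by our choice of $\delta$; this contradicts the No-instance condition Eq.~\eqref{eq:umcsp2}. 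Hence no certificate is accepted on a No instance, so $\UMCSP_{\alpha,\beta}\in\class{NP}$ in this regime. The second assertion, that $\UMCSP_{\alpha,\beta}$ is not trivially in $\class{NP}$ in general, follows exactly as in the discussion after Theorem~\ref{thm:np_mcsp}: once $t$ is superlinear (say $t=n^2$), $M=U_\Circ$ has dimension $2^{n+t}$, which is superpolynomial in $2^n$ and hence in the input size, so the verifier can no longer even write it down and this approach breaks.

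I expect the only genuine subtlety, and it is already dispatched by the excerpt, to be the a priori infinitary nature of Eq.~\eqref{eq:umcsp1} as a constraint over all $\ket\psi$: the entire role of Lemmas~\ref{lem:basistosuper} and \ref{lem:g7} is to show that passing the finitely many standard-basis and coherency checks forces the all-$\ket\psi$ guarantee, so the classical verifier never has to reason about superposition inputs directly. Everything else is bookkeeping — locating $\delta$ in the admissible window $\big(1-\alpha,\ \tfrac{(1-\beta)^4}{10^4\cdot 2^{2n}}\big)$, confirming that exact arithmetic over a fixed $\poly(2^n)$-bit ring keeps the computation polynomial in the input length (with the $\alpha$–$\beta$ gap absorbing any rounding if one prefers finite-precision gates, as in Remark~\ref{rmk:precision}), and observing that, unlike the $\class{QCMA}$ verifier, no Chernoff or union bound is needed because each quantity is computed exactly.
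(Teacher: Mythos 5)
Your proposal is correct and follows essentially the same route as the paper: the certificate is the circuit, the verifier writes out the $2^{n+t}\times 2^{n+t}$ unitary (feasible since $t=O(n)$), and soundness is reduced to Lemmas~\ref{lem:basistosuper} and~\ref{lem:g7} exactly as in the $\class{QCMA}$ proof. The one difference is that the paper has the classical verifier simulate the sampling-based protocol of Theorem~\ref{thm:umcspqcma} (via lazy sampling), whereas you compute the acceptance amplitudes exactly against a threshold $\delta$ in the admissible window; your variant is, if anything, cleaner since it makes the verifier manifestly deterministic and dispenses with the Chernoff bounds.
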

\begin{proof}
The certificate is the circuit implementation $\Circ$ that achieves Eq.~\eqref{eq:umcsp1}. Now since the circuit only operates on a polynomial-dimension system, the unitary transformation of the whole circuit can be computed and written down using only a polynomial-time classical computer.\par
The subtlety is to verify whether the unitary computed here satisfies Eq.~\eqref{eq:umcsp1}. We can prove it following the same way as the proof of Theorem \ref{thm:umcspqcma}. Here the quantum space is always polynomially bounded and a classical polynomial time verifier can simulate the protocol in the proof of Theorem \ref{thm:umcspqcma} classically. (One note is the quantum output samples there can be lazy-sampled.) This completes the proof. 
\end{proof}

Next, we showed that $\SMCSP$ has a $\class{QCMA}$ protocol. Note that since $\SMCSP$ is given access to quantum states, it is even not a promise problem under the standard definition. Therefore, we can only say there is a $\class{QCMA}$ protocol for this problem. 

\begin{theorem}\label{thm:SQCMA_UMCSP}
$\SMCSP_{\alpha,\beta}$  with gap $|\alpha-\beta|\geq \poly(s)$ has a $\class{QCMA}$ protocol.
\end{theorem}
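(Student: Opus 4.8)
The plan is to build a $\class{QCMA}$-style protocol whose \emph{classical} witness is a description of the candidate circuit $\Circ$ and whose quantum verifier touches the supplied copies of $\ket{\phi}$ only through swap tests. On a purported witness the verifier first checks that it encodes a valid quantum circuit with at most $s$ gates acting on at most $n+t$ qubits, rejecting immediately otherwise; this formatting check is what lets us later invoke the No-instance promise, which quantifies precisely over such circuits. Write $\rho_\Circ$ for the reduced state of $\Circ\ket{0^{n+t}}$ on the first $n$ qubits (the output register), and set $q(\Circ):=\|(\bra{\phi}\otimes I)\Circ\ket{0^{n+t}}\|^2=\bra{\phi}\rho_\Circ\ket{\phi}$, which is exactly the quantity appearing in Definition~\ref{def:smcsp}. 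The key observation is that a swap test between the output register of a fresh execution of $\Circ$ and a fresh copy of $\ket{\phi}$ returns the ``equal'' verdict with probability exactly $\tfrac12+\tfrac12\,q(\Circ)$, so $q(\Circ)$ is an exact linear function of a measurable acceptance probability and can therefore be estimated to any inverse-polynomial accuracy from polynomially many copies of $\ket{\phi}$.

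The verifier then proceeds as follows. Repeat $N$ times, for $N$ a sufficiently large polynomial in $n$, $s$, and $1/|\alpha-\beta|$: run $\Circ$ on $\ket{0^{n+t}}$, request one fresh copy of $\ket{\phi}$, and perform the swap test between the first $n$ output qubits and that copy. Let $\hat q:=2p-1$, where $p$ is the empirical ``equal'' frequency; by a Hoeffding/Chernoff bound $\hat q$ is within additive error $|\alpha-\beta|/4$ of $q(\Circ)$ except with probability at most $1/3$. The verifier accepts iff $\hat q\ge(\alpha+\beta)/2$. For a Yes-instance the honest prover sends the circuit $\Circ^{*}$ witnessing $q(\Circ^{*})\ge\alpha$, so $\hat q\ge \alpha-|\alpha-\beta|/4>(\alpha+\beta)/2$ with probability $\ge 2/3$; for a No-instance, any well-formed witness describes some size-$\le s$ circuit $\Circ$ with $q(\Circ)\le\beta$, while ill-formed witnesses are rejected outright, hence $\hat q\le\beta+|\alpha-\beta|/4<(\alpha+\beta)/2$ with probability $\ge 2/3$. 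Standard repetition and majority voting drive the error below any constant if desired.

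For the resource bounds: the witness is a classical string of length $\poly(s,\log(n+t))$; the verifier is a polynomial-time quantum algorithm (executing $\Circ$ costs $s$ gates, each swap test costs $O(n)$ gates, and there are $N=\poly(n,s,1/|\alpha-\beta|)$ rounds), and it consumes $N=\poly(n,t,s)$ copies of $\ket{\phi}$, which is exactly the notion of efficiency we adopted for $\SMCSP$, using that $|\alpha-\beta|$ is at least inverse-polynomial as in the statement. I expect the only genuinely non-standard point — and the reason the statement says ``$\class{QCMA}$ protocol'' rather than ``$\in\class{QCMA}$'' — to be that the instance is presented as copies of a quantum state, so the verifier can never test fidelity exactly; the thing to get right is therefore that the swap-test acceptance probability is an \emph{exact} affine function of $q(\Circ)$, so that the promised inverse-polynomial separation between $\alpha$ and $\beta$ survives statistical estimation from polynomially many copies. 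A minor secondary point is the ancilla register of $\Circ$, which $\SMCSP$ does not require to be reset: it is simply traced out by feeding only the first $n$ qubits into the swap test, and this is consistent with $q(\Circ)=\bra{\phi}\rho_\Circ\ket{\phi}$ matching Definition~\ref{def:smcsp}.
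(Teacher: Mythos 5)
Your proposal is correct and matches the paper's proof essentially verbatim: the witness is the classical circuit description, the verifier estimates $\|(\bra{\phi}\otimes I)\Circ\ket{0^{n+t}}\|^2$ via repeated swap tests against fresh copies of the input state, and a Chernoff bound with threshold $(\alpha+\beta)/2$ separates Yes from No instances using the inverse-polynomial gap. Your explicit treatment of the ancilla register (tracing it out so the swap-test acceptance probability is exactly $\tfrac12+\tfrac12\bra{\phi}\rho_\Circ\ket{\phi}$) is a point the paper leaves implicit, but it does not change the argument.
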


\begin{proof}

We use the swap test to check whether the given states and the state generated from the certificate circuit are close. The verifier's algorithm is as follows: 
\begin{algorithm}[H]
\caption{The efficient verifier for $\class{SMCSP}$. }
	\label{alg:smcsp}
    \begin{algorithmic}[1]
	    \Input $s,t\in \N$, $\poly(s)$ copies of $\ket{\psi}$, and quantum circuit $\Circ$. 
	    \State Generate $\poly(s)$ $\ket{\phi} = \Circ\ket{0}$.
	    \State Apply swap test to $\ket{\psi}$ and $\ket{\phi}$. 
        \State \Return ``Yes'' if there are at least $\frac{a+b}{2}$ trials outputs $0$.
        \State \Return ``No'', otherwise. 
    \end{algorithmic}
\end{algorithm}

Given $s,t\in \N$ and $\poly(s)$ copies of $\ket{\psi}$, we first consider the case where there exists a circuit $\Circ$ such that $\|(\bra{\psi}\otimes I_{t})\Circ\ket{0^{n+t}}\|^2 \geq \alpha$. Let $\Circ$ be the certificate. Then, by applying the swap test to $\ket{\psi}$ and $\Circ\ket{0}$, the probability that we get $0$ (which means identical) is $\frac{1}{2}+\frac{|\bra{\psi}\Circ\ket{0}|^2}{2}$, which is at least $\frac{1+\alpha}{2}$ in this case. We denote the probability of outputs $0$ at the $i$-th trial as $X_i$. Then, By the Chernoff inequality, 
\begin{align*}
    \Pr\left[\sum_{i=1}^{\ell}X_i \geq (\frac{1}{2}+\frac{\alpha+\beta}{4})\ell\right]\leq \exp\left(-\frac{(\alpha-\beta)^2 \ell}{16}\right). 
\end{align*}
Since $|\alpha-\beta|\geq \frac{1}{\poly(s)}$, the success probability of Algorithm~\ref{alg:smcsp} in this case is at least $2/3$ by having $\ell = \poly(s)$ trials. Similarly, we can prove the case when there exists no circuit $\Circ$ such that $\|(\bra{\psi}\otimes I_{t})\Circ\ket{0^{n+t}}\|^2 > \beta$. This completes the proof. 

\end{proof}

Given Theorem~\ref{thm:SQCMA_UMCSP}, we can also obtain the following result when given classical descriptions of quantum states.
\begin{restatable}{corollary}{smcspqcma}\label{cor:SMCSP_QCMA}
$\SMCSP$ with classical descriptions of quantum states as inputs is in $\class{QCMA}$.
\end{restatable}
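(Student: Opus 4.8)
The plan is to reduce to Theorem~\ref{thm:SQCMA_UMCSP} by letting the verifier manufacture its own copies of $\ket{\psi}$ from the classical description. When $\SMCSP$ is given a classical description, the input consists of the $2^n$ amplitudes of $\ket{\psi}$ (each specified with $\poly(2^n)$ bits of precision, consistently with Remark~\ref{rmk:precision} applied to states) together with $s$ and $n$ in unary; hence the input length is $N=2^{O(n)}$, the witness $\Circ$ can be encoded classically in $\poly(s)\le\poly(N)$ bits, and a quantum polynomial-time verifier is one running in time $\poly(N)$. By a standard state-preparation construction, from the classical list of amplitudes one can classically and in time $\poly(N)$ write down the description of an $O(2^n)$-gate circuit $W$ such that $W\ket{0^n}\approx_\eta\ket{\psi}$ for any chosen $\eta=1/\poly(2^n)$. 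This also resolves the (non-)promise subtlety: unlike the quantum-input version, the classical-description version of $\SMCSP$ is a genuine promise problem, so ``in $\class{QCMA}$'' is meant literally.

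\textbf{The verifier.} Run exactly Algorithm~\ref{alg:smcsp}, with the single modification that each time the protocol calls for a fresh copy of $\ket{\psi}$ it instead prepares $W\ket{0^n}$. Since the proof of Theorem~\ref{thm:SQCMA_UMCSP} uses only $\ell=\poly(s)=\poly(2^n)$ copies and each preparation costs $O(2^n)$ gates, the total runtime is $\poly(2^n)=\poly(N)$, so this is a QPT verifier with a classical witness. Completeness and soundness are inherited: the swap test between $W\ket{0^n}$ and $\Circ\ket{0^{n+t}}$ outputs $0$ with probability $\tfrac12+\tfrac12\,\|(\bra{\psi}\otimes I)\Circ\ket{0^{n+t}}\|^2$ up to an $O(\eta)$ additive shift coming from $W\ket{0^n}\approx_\eta\ket{\psi}$ together with Claim~\ref{claim:simplebound}; a Chernoff estimate over $\ell=\poly(2^n)$ trials then separates the Yes case ($\ge\alpha$) from the No case ($\le\beta$) exactly as in Theorem~\ref{thm:SQCMA_UMCSP}, provided $\eta$ is taken a small enough $1/\poly(2^n)$.

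\textbf{Main obstacle.} The only point requiring care is the precision/approximation bookkeeping: one must pin down the number of precision bits in the input description and the parameter $\eta$ so that simultaneously (i) $W$ has $\poly(2^n)$ gates, (ii) $W\ket{0^n}$ is within $\eta$ of $\ket{\psi}$, and (iii) the promise gap $|\alpha-\beta|\ge 1/\poly(2^n)$ of $\SMCSP$ absorbs both the $O(\eta)$ state-preparation error and the $O(1/\sqrt{\ell})$ sampling error. None of these are difficult individually, and once $\eta$ is fixed appropriately everything else is a verbatim replay of the proof of Theorem~\ref{thm:SQCMA_UMCSP}.
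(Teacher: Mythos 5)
Your proposal is correct and follows essentially the same route as the paper: the paper's Lemma~\ref{lem:c2q_state} constructs a $\poly(2^n)$-time state-preparation circuit for $\ket{\psi}$ from its classical amplitude list (via a cascade of controlled rotations), and the corollary's proof then simply reruns the swap-test verifier of Theorem~\ref{thm:SQCMA_UMCSP} on the self-prepared copies. Your extra bookkeeping of the preparation error $\eta$ against the promise gap is a sensible refinement the paper leaves implicit, but it does not change the argument.
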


The subtlety is that the verifier needs to construct the state $\ket{\psi}$ given the classical description of $\ket{\psi}$. If the verifier can do this efficiently (in time $\poly(2^n)$), then the rest of the analysis follows the proof for Theorem~\ref{thm:SQCMA_UMCSP}. We leave the proof to Appendix~\ref{appx:UandS}.

For the ease of notation, we will simply denote $\UMCSP_{\alpha,\beta}$ and $\SMCSP_{\alpha,\beta}$ as $\UMCSP$ and $\SMCSP$ and will specify $\alpha$ and $\beta$ when it is necessary in the rest of the section.

\subsection{Reductions for \texorpdfstring{$\class{UMCSP}$}{UMCSP} and \texorpdfstring{$\class{SMCSP}$}{SMCSP}}

In this section, we will show search-to-decision reductions for $\class{UMCSP}$ and $\class{SMCSP}$. To prove the above results, it is easier for us to consider  $\class{UMCSP}$ and $\class{SMCSP}$ as problems for computing the circuit complexity of given unitaries and states.

We first give formal definitions of approximating functions, unitaries, and states and the corresponding quantum circuit complexities. 
\begin{definition}[Approximating $f$ with precision $\delta$]
\label{def:C_f}
We say that a quantum circuit $\Circ$ that approximates a function $f:\mathbb{Z}^n\rightarrow \mathbb{Z}^m$ with precision $\delta$ if for all $x\in \mathbb{Z}^n$, there exists $\epsilon'\leq \epsilon$ such that  
\begin{align}
    \Circ_{f,\delta}\ket{x}\ket{0^t} = \sqrt{1-\epsilon'} \ket{f(x)}\ket{\psi_{f(x)}} + \sqrt{\epsilon'} \ket{\phi_x}. \label{eq:C_f}  
\end{align}
\end{definition}
\begin{definition}[Approximating $U$ with precision $\delta$]
\label{def:c_u_appx}
Let $U$ be as a $2^{n}\times 2^{n}$ unitary. We define $\Circ_{U,\epsilon}$ as the circuit that approximates $U$ with precision $\delta$ such that for all $\ket{\psi}\in \C^{2^n}$ there exists $\delta'\leq \delta$
\begin{align*}
    \Circ_{U,\delta}\ket{\psi}\ket{0^{t}} = \sqrt{1-\delta'}(U\ket{\psi})\otimes\ket{\psi'} + \sqrt{\delta'}\ket{\phi'}.
\end{align*}
Here, the additional $t$ qubits for $\Circ_{U,\delta}$ are ancilla qubits. 
\end{definition}

\begin{definition}[Approximating $\ket{\psi}$ with precision $\delta$]
\label{def:c_s_appx}
Let $\ket{\psi}\in \mathbb{C}^{2^n}$ be a quantum state. We define $\Circ_{\ket{\psi},\epsilon}$ as the circuit that approximates $\ket{\psi}$ with precision $\delta$ 
\begin{align*}
    \Circ_{\ket{\psi},\delta}\ket{0^{n+t}} = \sqrt{1-\delta'}\ket{\psi}\ket{\psi'} + \sqrt{\delta'}\ket{\phi'}
\end{align*}
Here, $\delta'\leq \delta$ and the additional $t$ qubits are ancilla qubits. 
\end{definition}


We use $CC(\cdot,\epsilon)$ to denote the quantum circuit complexity of the minimum quantum circuit that approximates the given Boolean functions, states, or unitaries with precision $\epsilon$. 




\begin{remark}[Upper bounds on $CC(\cdot,\epsilon)$]
\label{remark:complexity_upperbound}
For any universal gate set, any unitary $U$ in $\mathbb{C}^{2^n\times 2^n}$ can be $\epsilon$-approximated by a circuit with size at most $\widetilde{O}(n^2 2^{2n}\log \frac{1}{\epsilon})$~\cite{NC00}. The same upper bound also holds for states. The existence of $2^{O(n)}$ upper bounds implies that $CC(\cdot, \epsilon)$ can be computed efficiently given efficient algorithms for $\SMCSP$ and $\UMCSP$. 
\end{remark}





\subsubsection{Search-to-decision reductions} In the following, we prove search-to-decision reductions for $\UMCSP$ and $\SMCSP$. The main intuition for these reductions is that quantum circuits are reversible, which gives us the ability to do some ``rewinding tricks''. We define the search versions of $\UMCSP$ and $\SMCSP$ as follows:

\begin{definition}[$\class{SearchUMCSP}_{\epsilon}$]\label{def:search_umcsp}
Let $n,t\in\mathbb{N}$. Let $U\in \mathbb{C}^{2^n\times 2^n}$ be a unitary matrix and $\epsilon\in (0,1)$. Let $s$ be the smallest integer such that there exists a quantum circuit $\Circ$ of size $s$ that uses at most $t$ ancilla bits and for all $\ket{\psi}\in \C^{2^n}$,
    \begin{align*}
        \|(\bra{\psi}\otimes I_t)(U^{\dag}\otimes I_t) \Circ\ket{\psi,0^t}\|^2\geq 1-\epsilon.
    \end{align*} 
Given $U$, $t$, and $\epsilon$, the problem is to output a circuit $\Circ'$ of size at most $s$ that uses at most $t$ ancilla bits and for all $\ket{\psi}\in \C^{2^n}$, $\|(\bra{\psi}\otimes I_t)(U^{\dag}\otimes I_t) \Circ'\ket{\psi,0^t}\|^2\geq 1-\epsilon-2^{-cn}$ for every constant $c>0$.
\end{definition}

\begin{definition}[$\class{SearchSMCSP}_{\epsilon,s}$]\label{def:search_smcsp}
Let $n,s,t\in\mathbb{N}$ and $\epsilon\in (0,1)$. Let $\ket{\psi}\in \mathbb{C}^{2^n}$ be a quantum state with the promise that there exists a circuit $\Circ$ of size at most $s$ and $t$ ancilla bits such that     \begin{align*}
    \|(\bra{\psi}\otimes I_{n+t-1}) \Circ\ket{0^{n+t}}\|^2\geq 1-\epsilon. 
\end{align*}
Given $(n,s,t)$ in unary, $\epsilon$, and access to arbitrary many copies of $\ket{\psi}$, the problem is to find a circuit $\Circ'$ of size at most $s$ and $t$ ancilla bits such that $\|(\bra{\psi}\otimes I_{n+t-1}) \Circ'\ket{0^{n+t}}\|^2\geq 1-\epsilon-2^{-cn}$ for every constant $c>0$. 
\end{definition}

\begin{remark}
\label{remark:smcsp_input_s}
In Definition~\ref{def:search_smcsp}, we have included the upper bound $1^s$ (the unary representation) as part of the inputs. This mainly follows from the fact that we are considering problems with copies of quantum states. One may expect that we can find $s$ by using binary search with an efficient algorithm for $\SMCSP$. However, efficient algorithms for $\SMCSP$ with $s= 2^n$ can run in time $\poly(2^n)$, and efficient algorithms for $\class{SearchSMCSP}$ without $1^s$ as part of the inputs need to run in time $\poly(n)$. Hence, this prevents us from finding $s$ efficiently (in time $\poly(n)$) with an efficient algorithm for $\SMCSP$ (in time $\poly(s)$). On the other hand, if we consider the case where $\SMCSP$ and $\class{SearchSMCSP}$ have the classical description of the state (instead of copies of the quantum state) as part of the inputs, then there is no need to have $1^s$ in the inputs of $\class{SearchSMCSP}$ since we can find $s$ via binary search with efficient algorithms for $\SMCSP$.  
\end{remark}

In the following, we show search-to-decision reductions for $\UMCSP$ and $\SMCSP$ when $t=0$ (i.e., no ancilla qubits)\footnote{In general, search-to-decision reductions for $\SMCSP$ and $\UMCSP$ mean that $\class{SearchSMCSP}$ reduces to $\SMCSP$ and $\class{SearchUMCSP}$ reduces to $\UMCSP$ for any $n,s,t\in \N$.}.

\begin{restatable}{theorem}{stdumcsp}\label{thm:search2decision_umcsp}
There exists a search-to-decision reduction for $\UMCSP$ for $t=0$ (i.e., no ancilla qubits). In particular, if there is a time $T(n)$ algorithm for $\UMCSP_{\alpha,\beta}$ where $\alpha>1-2^{-c_1n}$ and $\alpha-\beta\geq2^{-c_2n}$ for every constants $c_1,c_2>0$, then there is a time $\poly(T(n),2^n)$ algorithm for $\class{SearchUMCSP}_{\epsilon}$ where $\epsilon\geq2^{-c_3n}$ for every constant $c_3>0$ and $t=0$.
\end{restatable}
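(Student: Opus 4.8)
The plan is to reconstruct the optimal circuit gate-by-gate using the reversibility of quantum circuits together with oracle calls to the decision problem. First I would use binary search on the size parameter (calling $\UMCSP$ on $U$ with size parameters $1,2,\dots$) to find $s = CC(U,\epsilon)$, the minimum circuit size; this costs $O(\log(2^{O(n)}))=O(n)$ oracle calls by Remark~\ref{remark:complexity_upperbound}. Then I would guess the gates of the optimal circuit $\Circ = g_s g_{s-1}\cdots g_1$ from the ``last'' gate inward. Since $\g$ is a finite gate set of one- and two-qubit gates, there are only $O(n^2|\g|)$ choices for each gate (location plus gate type); but since the gate entries are real numbers given to finite precision $\ell_\g$, one must enumerate over an $\epsilon$-net of gates, which is still a set of size $2^{O(n)}$ per gate. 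For each candidate last gate $g$, form the unitary $g^\dagger U$ (computable in time $\poly(2^n)$ since $t=0$), and query the decision oracle to test whether $CC(g^\dagger U, \epsilon') \le s-1$ for an appropriately adjusted precision $\epsilon'$. By reversibility, if $g$ is the true last gate of an optimal circuit then $g^\dagger U$ has complexity exactly $s-1$, and conversely if some candidate $g$ yields $CC(g^\dagger U)\le s-1$ then $g$ composed with an optimal circuit for $g^\dagger U$ gives a size-$s$ circuit for $U$. Iterating $s = 2^{O(n)}$ times and peeling off one gate per round recovers the full circuit.

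The key steps in order: (1) binary search to determine $s$; (2) the main loop: maintain a current unitary $V$ (initially $U$) and a remaining-budget counter $b$ (initially $s$); (3) for each gate position/type in the $\epsilon$-net, test via the oracle whether $CC(g^\dagger V,\epsilon_b) \le b-1$; (4) upon finding such a $g$, prepend it to the output circuit, set $V \leftarrow g^\dagger V$, $b \leftarrow b-1$, and repeat until $b=0$; (5) verify (or argue) that the accumulated precision loss stays below $2^{-c_3 n}$. The correctness argument rests on two observations: first, that $\UMCSP$-hardness is preserved under left-multiplication by a fixed gate (a gate is reversible, so if $\Circ$ $\epsilon$-implements $V$ then $g\Circ$ $\epsilon$-implements $gV$, and vice versa with $g^\dagger$); second, that the precision only needs to be tracked additively across the $s = 2^{O(n)}$ rounds, and since the gap $\alpha-\beta \ge 2^{-c_2 n}$ can be chosen with $c_2$ large, and the net discretization error per gate is at most $2^{-\Theta(n)}$, the total error is bounded by $2^{O(n)}\cdot 2^{-\Omega(n)}$, which we can drive below $2^{-c_3 n}$ by choosing the net fine enough. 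The total running time is the number of rounds ($2^{O(n)}$) times the net size per round ($2^{O(n)}$) times the cost of forming $g^\dagger V$ and one oracle call ($\poly(2^n) + T(n)$), giving $\poly(T(n), 2^n)$.

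The main obstacle I expect is the precision bookkeeping, not the combinatorial structure. Because the input $U$ and each candidate gate are only specified to finite precision, ``$g^\dagger V$'' is itself only an approximation, and errors compound over the $2^{O(n)}$ peeling rounds; I need to show that with a gate net of granularity $2^{-\poly(n)}$ the accumulated error stays within the allowed slack $2^{-c_3 n}$, and that the oracle's promise gap $\alpha-\beta\ge 2^{-c_2 n}$ is wide enough to absorb this slack at every round (this is why the theorem statement quantifies over all constants $c_1,c_2,c_3$ rather than fixing them — we need $c_2$ somewhat larger than $c_3$, and $c_1$ controlling the per-step completeness error). A secondary subtlety is ensuring that when we uncompute a wrong guess the decision oracle is actually being fed a legitimate promise instance: one must argue that for every candidate $g$, $g^\dagger V$ is either a genuine Yes or genuine No instance of $\UMCSP_{\alpha',\beta'}$ at the relevant size, which again follows from reversibility plus the fact that the net is fine enough that no candidate lands in the promise gap. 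Finally, a small point: the definition of $\class{SearchUMCSP}_\epsilon$ allows the output circuit to have precision $1-\epsilon-2^{-cn}$, so we have this $2^{-cn}$ budget to play with, which is exactly what makes the net-discretization argument go through.
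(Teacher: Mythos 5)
Your proposal matches the paper's proof in all essentials: binary-search the minimum size $s$, then peel off one gate at a time by enumerating candidate last gates $g$, querying the decision oracle on $g^\dagger U_{i-1}$ at size $s-i$ with a threshold relaxed by $\Delta=2^{-2c_3n}$ per round, and arguing by induction that a valid size-$(s-i)$ circuit survives each step, with the total additive error $s\cdot\Delta\leq 2^{-c_3n}$. The only superfluous element is your $\epsilon$-net over gates — the gate set $\g$ is finite with constant-size gates, so enumerating gate types times qubit positions is just $\poly(n)$ per round — and the promise-gap worry is resolved not by avoiding the gap but by the staircase of thresholds: a ``Yes'' answer at level $\epsilon_i$ always certifies a circuit at the weaker level $\epsilon_i+\Delta$, which is all the induction needs.
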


\begin{remark}
Here we require the gap $\alpha-\beta$ in the decision $\UMCSP$ oracle to be at least $\poly(2^{-n})$ because our $\class{QCMA}$ upper bound for $\UMCSP$ (see~\autoref{thm:umcspqcma}) only works in this regime.
\end{remark}

\begin{proof}
Let us first state the reduction in the form of an algorithm with oracle queries to $\UMCSP$ as follows.
\begin{algorithm}[H]
\caption{Search-to-decision reduction for $\class{UMCSP}$. }
	\label{alg:umcsp_s2d}
    \begin{algorithmic}[1]
	    \Input $\epsilon\in(0,1)$, $U\in \C^{2^n\times 2^n}$, and a constant $c_3>0$.
	    \State Let $U_0=U$, $\Delta=2^{-2c_3n}$, $\epsilon_0 = \epsilon$ , and $\epsilon_i=\epsilon_0+i\cdot\Delta$ for all $i\in\N$.
        \State Use the oracle $\UMCSP_{1-\epsilon_0,1-\epsilon_0-\Delta}$ to binary-search $s$, the minimum circuit size of $U$.
	    \State Set $i=1$.
	    \While {$i<s$}
	        \ForAll{gates $h_i$ in $\g$ on all $n$ qubits}
	            \If{$\class{UMCSP}_{1-\epsilon_i,1-\epsilon_i-\Delta}(U_{i-1}h_i^\dagger,s-i) = \text{Yes}$}
    	        \State Set $g_i = h_i$.
    	        \State Let $U_i=g^\dagger_iU_{i-1}$. 
    	        \State Set $i = i+1$.
    	        \State Break.
    	        \EndIf
	        \EndFor
	    \EndWhile
        \State \Return $g_1,\dots,g_s$. 
    \end{algorithmic}
\end{algorithm}

We inductively prove the following claim.
\begin{claim}\label{claim:umcsp_s2d}
For every $0<i<s$, at the $i$-th iteration in line 5, we know that there exists a circuit $\Circ$ of size at most $s-i+1$ such that  $\min_{\ket{\psi}}|\bra{\psi}U_{i-1}^{\dag} \Circ\ket{\psi}|^2\geq 1-\epsilon_i$.
\end{claim}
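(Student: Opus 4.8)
The plan is to prove Claim~\ref{claim:umcsp_s2d} by induction on the loop counter $i$, with the claim itself as the loop invariant: just before the $i$-th pass through line~5, the current unitary $U_{i-1}$ is $(1-\epsilon_i)$-approximated on every input state by some circuit of size at most $s-i+1$. Two elementary observations will drive the argument. First, a \emph{gate-shifting} identity: for any unitaries $V,W$ and any circuit $\Circ$, $\min_{\ket\psi}|\bra\psi(VW)^{\dagger}(\Circ W)\ket\psi|^2=\min_{\ket\psi}|\bra\psi V^{\dagger}\Circ\ket\psi|^2$, and symmetrically for left multiplication; this is immediate from the substitution $\ket\phi=W\ket\psi$ together with unitarity of $W$. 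Consequently, deleting one gate from an approximating circuit shifts the approximated target by exactly that gate and costs nothing in approximation quality. Second, a \emph{promise-slack} bookkeeping point: when $\UMCSP_{1-\epsilon_i,\,1-\epsilon_i-\Delta}$ answers ``Yes'' on an input $(M,k)$, that input need not be a Yes instance, but it is certainly not a No instance, so there \emph{exists} a size-$k$ circuit approximating $M$ to precision strictly better than $\epsilon_i+\Delta=\epsilon_{i+1}$. This second point is exactly why the invariant carries the increasing budget $\epsilon_i=\epsilon_0+i\Delta$ rather than a fixed $\epsilon_0$.

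For the base case $i=1$, line~2 returns an $s$ with $\UMCSP_{1-\epsilon_0,\,1-\epsilon_0-\Delta}(U,s)=\text{Yes}$; by the promise-slack point there is a size-$s$ circuit approximating $U_0=U$ to precision $\epsilon_1$, and since $s=s-1+1$ this is the invariant. For the inductive step, assume the invariant at iteration $i$ and write the witnessing circuit as $\Circ=h_m\cdots h_1$ with $m\le s-i+1$, where $h_1$ is the first gate applied. By gate shifting, $h_m\cdots h_2$ is a circuit of size $m-1\le s-i$ that approximates $U_{i-1}h_1^{\dagger}$ to precision $\epsilon_i$, so $(U_{i-1}h_1^{\dagger},\,s-i)$ is a genuine Yes instance of $\UMCSP_{1-\epsilon_i,\,1-\epsilon_i-\Delta}$; since $h_1$ is among the gate placements enumerated by the inner \textbf{for} loop, the loop must commit to \emph{some} gate $g_i$ on which the oracle answers ``Yes'' (possibly $h_1$, possibly another), and in particular it never stalls. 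For whichever $g_i$ is committed, the promise-slack point hands us a size-$(s-i)=s-(i{+}1)+1$ circuit approximating $U_{i-1}g_i^{\dagger}$, which is the next iterate $U_i$, to precision $\epsilon_{i+1}$ --- precisely the invariant at $i+1$. (For this to close, the oracle query and the update defining $U_i$ must act on the same side of $U_{i-1}$; with that convention, deleting $h_1$ simultaneously shrinks the circuit by one and turns $U_{i-1}$ into $U_i$, and the reconstructed circuit $g_1g_2\cdots g_s$ applies first exactly the gate removed first.) One edge case merits a sentence: if the circuit witnessing the invariant for $U_{i-1}$ is empty, so $U_{i-1}\approx I$ while still $i<s$, then for any single gate $g$ the unitary $U_{i-1}g^{\dagger}$ is a single gate up to error $\epsilon_i$ and hence has circuit complexity at most $s-i$, so a ``Yes'' answer is still forced and the loop does not stall.

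The step I expect to be the genuine obstacle is keeping the two roles of the promise oracle cleanly separated: for the ``no-stalling'' half one must feed the oracle an input that is provably a Yes instance so the answer is forced, whereas for the ``progress'' half one is entitled only to the weaker ``not a No instance'' guarantee, applied to whichever input the algorithm in fact committed to --- and one must verify that the single $\Delta$ of precision lost at that step is exactly the slack absorbed by the index bump $\epsilon_i\mapsto\epsilon_{i+1}$. Once the side conventions for the query and the update are fixed and gate shifting is established, everything else is bookkeeping; the quantitative corollary that $\epsilon_s$ remains tiny (needed for the surrounding Theorem~\ref{thm:search2decision_umcsp} but not for this claim) will then follow from $s\le 2^{O(n)}$ and the choice of $\Delta$.
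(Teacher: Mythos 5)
Your proof is correct and follows essentially the same route as the paper's: induction on the loop index with the claim as invariant, peeling one gate off the witnessing circuit (unitarity makes this lossless) to force the oracle into the if-branch, and then invoking only the ``not a No instance'' guarantee of the promise oracle to recover a witness for the next iterate at the cost of one $\Delta$ of precision. Your explicit separation of the oracle's two roles (forced acceptance on genuine Yes instances versus mere existence from a Yes answer) and your remark that the query $U_{i-1}h_i^\dagger$ and the update $U_i=g_i^\dagger U_{i-1}$ must be placed on the same side of $U_{i-1}$ are both sound observations; the paper implicitly assumes the consistent convention and identifies the witness as $\Circ g_i^\dagger$ directly.
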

\begin{proof}
For the base case we consider $i=1$ and note that after line 2 in Algorithm~\ref{alg:umcsp_s2d}, we know that there exists a circuit $\Circ$ of size at most $s$
such that $\min_{\ket{\psi}}|\bra{\psi}U_0^{\dag}\Circ\ket{\psi}|^2\geq 1-\epsilon-\Delta=1-\epsilon_1$. This proves the base case.

Now, suppose the induction statement holds for some $i$, we first claim that the algorithm must go into the if-loop in line 6. Note that by induction hypothesis there exists a circuit $\Circ$ of size at most $s-i+1$ such that $\min_{\ket{\psi}}|\bra{\psi}U_{i-1}^{\dag}\Circ\ket{\psi}|^2\geq 1-\epsilon_i$. Let $g_i$ be the last gate in $\Circ$, we know that $\min_{\ket{\psi}}|\bra{\psi}(U_{i-1}^{\dag}g_i)(g_i^{\dagger} \Circ)\ket{\psi}\|^2\geq 1-\epsilon_i$ and $ g_i^\dagger \Circ$ is a circuit of size at most $s-i+1-1=s-i$. This shows that the algorithm will go into the if-loop in line 6 in the $i$-th iteration. Next, after the algorithm goes into line 6 in the $i$-th iteration, by the correctness of $\UMCSP_{1-\epsilon_i,1-\epsilon_i-\Delta}$, we know that there is a circuit $\Circ'$ ($=\Circ g_i^\dagger$) of size at most $s-i$ such that $\min_{\ket{\psi}}|\bra{\psi}U_{i}^{\dag} \Circ' \ket{\psi}|^2\geq 1-\epsilon_i-\Delta=1-\epsilon_{i+1}$. This completes the induction step and hence proves Claim~\ref{claim:umcsp_s2d}.
\end{proof}
Finally, with the same argument in the proof of Claim~\ref{claim:umcsp_s2d}, we know that
\begin{align*}
    \min_{\ket{\psi}}|\bra{\psi}U^{\dag} g_1\cdots g_s\ket{\psi}|^2\geq 1-\epsilon_s=1-\epsilon-s\cdot2^{-2c_3n}\geq1-\epsilon-2^{-c_3n}
\end{align*} as desired. Also, notice that the algorithm only queries the $\UMCSP$ oracle at most $2^n$ times and hence the running time is $\poly(T(n),2^n)$ where $T(n)$ is the running time of the $\UMCSP$ oracle.
\end{proof}

\begin{restatable}{theorem}{stdsmcsp}\label{thm:s2d_smcsp}
There exists a search-to-decision reduction for $\class{SMCSP}$ for $t=0$. In particular, if there is a time $T(n)$ algorithm for $\SMCSP_{\alpha,\beta}$ where $\alpha>1-2^{-c_1n}$ and $\alpha-\beta\geq2^{-c_2n}$ for every constants $c_1,c_2>0$, then there is a time $\poly(T(n),s)$ quantum algorithm for $\class{SearchSMCSP}_{\epsilon,s}$ where $\epsilon\geq2^{-c_3n}$ for every constant $c_3>0$ and $t=0$.

\end{restatable}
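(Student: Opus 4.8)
The plan is to carry over the reversibility‑based, gate‑by‑gate reconstruction behind Theorem~\ref{thm:search2decision_umcsp}, replacing ``compute the classical description of $U_{i-1}h_i^\dagger$'' by ``physically uncompute $h_i$ from copies of the state''. Fix the target error exponent $c_3$ and a tiny step $\Delta$ (inverse‑exponential, pinned down below); set $\epsilon_j:=\epsilon+j\Delta$ and $\ket{\psi_0}:=\ket{\psi}$. First probe the decision oracle $\SMCSP_{1-\epsilon,\,1-\epsilon-\Delta}$ directly on copies of $\ket{\psi}$ to locate $s^*\le s$, the minimum size of a circuit realizing $\ket{\psi}$ to squared overlap $>1-\epsilon_1$. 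Then run $s^*$ rounds maintaining the invariant: entering round $i$ we hold gates $g_{s^*-i+2},\dots,g_{s^*}\in\g$ and a procedure that, on demand, produces a copy of $\ket{\psi_{i-1}}:=g^\dagger_{s^*-i+2}\cdots g^\dagger_{s^*}\ket{\psi}$ by applying $O(1)$‑size unitaries to a fresh copy of $\ket{\psi}$, and $\ket{\psi_{i-1}}$ has a circuit of size $\le s^*-i+1$ with squared overlap $>1-\epsilon_i$ with it. In round $i$, iterate over all $h\in\g$ on all $O(1)$‑tuples of the $n$ qubits ($\poly(n)$ candidates); for each, prepare the copies of $h^\dagger\ket{\psi_{i-1}}$ that the oracle wants and query $\SMCSP_{1-\epsilon_i,\,1-\epsilon_{i+1}}(h^\dagger\ket{\psi_{i-1}},\,s^*-i)$; accept the first $h$ answered ``Yes'', set $g_{s^*-i+1}:=h$ and $\ket{\psi_i}:=h^\dagger\ket{\psi_{i-1}}$. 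Output the $\g$‑circuit $g_{s^*}\cdots g_1$.

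Correctness is an induction like Claim~\ref{claim:umcsp_s2d}. Writing the witnessing size‑$(\le s^*-i+1)$ circuit of $\ket{\psi_{i-1}}$ as ``final gate $g$'' times ``the rest'' shows $g^\dagger\ket{\psi_{i-1}}$ has a circuit of size $\le s^*-i$ with squared overlap $>1-\epsilon_i$, so it is a true \emph{Yes} instance of $\SMCSP_{1-\epsilon_i,\,1-\epsilon_{i+1}}$ at size $s^*-i$; hence the oracle answers ``Yes'' on $h=g$ and round $i$ never stalls. Conversely, any $h$ answered ``Yes'' cannot be a \emph{No} instance, so $h^\dagger\ket{\psi_{i-1}}$ has a circuit of size $\le s^*-i$ with squared overlap $>1-\epsilon_{i+1}$, re‑establishing the invariant for round $i+1$; since the oracle is used only in this one‑sided way, its arbitrary behaviour on promise‑gap inputs is harmless. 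After $s^*$ rounds, $|\langle\psi_{s^*}|0^n\rangle|^2>1-\epsilon_{s^*+1}$, and since $\ket{\psi}=g_{s^*}\cdots g_1\ket{\psi_{s^*}}$ we get $|\langle\psi|(g_{s^*}\cdots g_1)|0^n\rangle|^2>1-\epsilon-(s^*+1)\Delta$. As $s^*\le s\le 2^{O(n)}$ (Remark~\ref{remark:complexity_upperbound}), picking $\Delta$ so that $(s^*+1)\Delta\le 2^{-c_3n}$ gives the required precision $1-\epsilon-2^{-c_3n}$; with such a $\Delta$, all oracle queries have gap $\alpha-\beta=\Delta\ge 2^{-c_2n}$ and, provided $\epsilon$ is itself inverse‑exponential (as in the $\UMCSP$ statement), completeness thresholds $\alpha_i=1-\epsilon_i>1-2^{-c_1n}$, so every query is an instance of the oracle family granted by the hypothesis.

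Resource‑wise, the oracle is queried $O(\log s)$ times in the probing step and $\poly(n)$ times per round for $\le s$ rounds; amplifying each query's error to $2^{-n}$ and union‑bounding adds a $\poly(n)$ factor; preparing one copy of an intermediate state is $\le s$ constant‑size unitaries on one copy of $\ket{\psi}$, and each query consumes at most $T(n)$ such copies. The totals are $\poly(T(n),s)$ running time and $\poly(T(n),s)$ copies of $\ket{\psi}$, matching the claim. Note the reconstruction keeps the ancilla count at $0$, which is exactly why the theorem is restricted to $t=0$: with ancilla, a circuit accepted by the oracle could leave the work register in an unknown state, breaking the uncomputation step (the $U\otimes V$ phenomenon mentioned in the introduction and in the discussion after Theorem~\ref{thm:informal_reduction}).

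The step I expect to be the crux is not any single estimate but the joint tuning of the precision schedule so that every oracle call is simultaneously (i) at an inverse‑exponential gap $\Delta$ for which the hypothesized $\SMCSP$ algorithm exists, (ii) at a gap small enough that the accumulated error $(s^*+1)\Delta\le 2^{-c_3n}$ even after up to $2^{O(n)}$ peeling rounds, and (iii) at a completeness threshold still within $2^{-c_1n}$ of $1$. The one genuinely new ingredient beyond the $\UMCSP$ proof is handling the oracle's quantum input: the intermediate states exist only as freshly regenerated copies and may fall inside the oracle's promise gap, which I resolve by the one‑sided use above so that a correctly guessed gate is always a true Yes instance while a wrongly guessed but accepted gate still preserves the invariant.
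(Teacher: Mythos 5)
Your proposal is correct and follows essentially the same route as the paper: binary-search the minimum size $s^*$, then peel off one gate per round by uncomputing each candidate gate on freshly regenerated copies of the state and querying the decision oracle with a slowly degrading precision schedule $\epsilon_i=\epsilon+i\Delta$, exactly as in Algorithm~\ref{alg:smcsp_s2d} and the inductive argument inherited from Claim~\ref{claim:umcsp_s2d}. The additional care you take with the one-sided use of the promise oracle and with why $t=0$ is needed is a faithful elaboration of what the paper leaves implicit.
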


\begin{proof}
The proof is similar to the proof for Theorem~\ref{thm:search2decision_umcsp}. We describe the reduction as follows:
\begin{algorithm}[H]
\caption{Search-to-decision reduction for $\class{SMCSP}$. }
	\label{alg:smcsp_s2d}
    \begin{algorithmic}[1]
	    \Input $s\in \N$, $\epsilon\in(0,1)$, access to copies of $\ket{\psi}$, and a constant $c_3>0$.
	    \State Let $\ket{\psi_0}=\ket{\psi}$, $\Delta=2^{-2c_3n}$, $\epsilon_0=\epsilon$, and $\epsilon_i=\epsilon_0+i\cdot\Delta$ for all $i\in\N$.
	    \State Use the oracle $\SMCSP_{1-\epsilon_0,1-\epsilon-\Delta}$ to binary-search $s^*\leq s$, the minimum circuit size of $\ket{\psi}$.
	    \State Set $i=1$
	    \While {$i<s^*$}
	        \ForAll{gates $h_i$ in $\g$ on all $n+t$ qubits}
	            \If{$\class{SMCSP}_{1-\epsilon_i,1-\epsilon_i-\Delta}(\ket{\psi_i},s^*-i) = \text{Yes}$}
    	        \State Set $g_i = h_i$.
    	        \State Let $\ket{\psi_i}=g^\dagger_i\ket{\psi_{i-1}}$.
    	        \State Set $i = i+1$.
    	        \State Break.
    	        \EndIf
	        \EndFor
	    \EndWhile
        \State \Return $g_1,\dots,g_{s^*}$. 
    \end{algorithmic}
\end{algorithm}
The analysis is similar to the proof of Theorem~\ref{thm:search2decision_umcsp}. 
Notice that given access to the quantum state $\ket{\psi}$, we can uncompute the gates using a quantum computer.
Therefore, the search-to-decision reduction still holds.
\end{proof}

Regarding $\SMCSP$ and $\class{SearchSMCSP}$ which have the classical description of $\ket{\psi}$ as part of the inputs (instead of copies $\ket{\psi}$), we can also obtain the search-to-decision reduction following the same framework. The only difference is that the algorithm uncomputes the gates from the states by matrix-vector multiplication instead of applying the inverse of the gates on the states. The runtime of the matrix-vector multiplication is $\poly(2^n)$.  Note that, as we have mentioned in Remark~\ref{remark:smcsp_input_s}, $\class{SearchSMCSP}$ in this case does not need to have the upper bound $s$ in the inputs.  


\begin{corollary}
There exists a search-to-decision reduction for $\SMCSP$, where the search and the decision problems are given the classical descriptions of the states in inputs. 
\end{corollary}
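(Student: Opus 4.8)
The final statement to prove is the Corollary immediately following Theorem~\ref{thm:s2d_smcsp}: that there exists a search-to-decision reduction for $\SMCSP$ when both the search and decision problems are given the classical descriptions of the states as inputs. The plan is to adapt Algorithm~\ref{alg:smcsp_s2d} verbatim, with the single change that the input is now a classical vector $v\in\C^{2^n}$ representing $\ket{\psi}$ rather than a stream of copies, and correspondingly every step that manipulates $\ket{\psi}$ is done by explicit linear algebra on the $2^n$-dimensional vector instead of by running a quantum circuit.

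\smallskip

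Concretely, I would first observe that with a classical description in hand there is no longer any need to pass $1^s$ as part of the input: as noted in Remark~\ref{remark:smcsp_input_s}, one can recover the minimum circuit size $s^\ast$ by binary search, invoking the decision oracle $\SMCSP_{1-\epsilon_0,1-\epsilon_0-\Delta}$ on the classical description (the decision oracle in this setting is the one from Corollary~\ref{cor:SMCSP_QCMA}), using the $2^{O(n)}$ upper bound on circuit complexity from Remark~\ref{remark:complexity_upperbound} as the search range. Then I would run the same gate-peeling loop: at iteration $i$, for each candidate gate $h_i$ from the (constant-size) gate set $\g$ acting on some choice of the $n$ wires, form the updated vector $v_i = h_i^\dagger v_{i-1}$ by a matrix-vector multiplication (each gate is a sparse/local unitary, so this costs $\poly(2^n)$ time), query $\SMCSP_{1-\epsilon_i,1-\epsilon_i-\Delta}$ on the classical description of $v_i$ with size parameter $s^\ast-i$, and keep the first $h_i$ that returns ``Yes''. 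The correctness argument is exactly the inductive Claim~\ref{claim:umcsp_s2d}/Claim in the proof of Theorem~\ref{thm:s2d_smcsp}: peeling off the last gate of an optimal circuit for $v_{i-1}$ yields a size-$(s^\ast-i)$ circuit for $h_i^\dagger v_{i-1}$ for the correct $h_i$, so the loop always finds a valid gate, and the accumulated error after $s^\ast$ steps is at most $\epsilon + s^\ast\cdot\Delta \le \epsilon + 2^{-c_3 n}$ by the choice $\Delta = 2^{-2c_3 n}$ and $s^\ast \le 2^{O(n)}$.

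\smallskip

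The only genuinely new ingredient relative to Theorem~\ref{thm:s2d_smcsp} is the uncomputation step, and this is where the classical-description setting is actually \emph{easier}: instead of relying on the reversibility of quantum circuits to physically apply $g_i^\dagger$ to copies of an unknown state, we simply multiply the known vector by the known matrix $g_i^\dagger$. So the main (mild) obstacle to watch is bookkeeping the precision: the entries of $v_i$ are obtained from $v$ by multiplying by $i \le s^\ast$ local unitaries whose entries have bounded bit-complexity, so $v_i$ has a classical description with $\poly(2^n,s^\ast,\ell_\g)$ bits, which is fine for feeding into the decision oracle, and one should remark (paralleling Remark~\ref{rmk:precision}) that the gap $\alpha-\beta$ in the oracle only needs to exceed the effective granularity $2^{-\poly(2^n)}$. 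Everything else — the number of oracle calls is $O(2^n)$, each matrix-vector product is $\poly(2^n)$, the running time is $\poly(T(n),2^n)$ — carries over unchanged.

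\smallskip

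I would therefore present the proof as: (i) state the reduction algorithm as the obvious modification of Algorithm~\ref{alg:smcsp_s2d}, replacing ``apply $g_i^\dagger$ to copies'' with ``compute $g_i^\dagger v_{i-1}$ by matrix-vector multiplication'' and dropping $1^s$ from the input; (ii) note that binary search with the decision oracle recovers $s^\ast$, using the $2^{O(n)}$ complexity upper bound; (iii) invoke the inductive claim from the proof of Theorem~\ref{thm:s2d_smcsp} essentially verbatim to establish correctness and the error accumulation bound; and (iv) bound the running time by $\poly(T(n),2^n)$. No step requires more than the results already in the excerpt, so the write-up can be kept to a short paragraph pointing to the proof of Theorem~\ref{thm:s2d_smcsp} for the repeated details.
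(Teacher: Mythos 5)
Your proposal is correct and follows essentially the same route as the paper: the paper also obtains this corollary by rerunning Algorithm~\ref{alg:smcsp_s2d} with the gate uncomputation replaced by a $\poly(2^n)$-time matrix-vector multiplication on the classical description, and by dropping $1^s$ from the input since the minimum size can be recovered by binary search (as in Remark~\ref{remark:smcsp_input_s}). Your additional precision bookkeeping is a harmless refinement of the same argument.
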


It is worth noting that Algorithm~\ref{alg:smcsp_s2d} and Algorithm~\ref{alg:umcsp_s2d} do not directly work when considering quantum circuits that are allowed to use ancilla qubits (i.e., $t>0$). This follows from the fact that, based on definitions of $\UMCSP$ and $\SMCSP$, a quantum circuit $\Circ$ that implements the target unitary or state can apply an arbitrary operator on the ancilla qubits, i.e., $C^{\dag} (U\otimes I) \neq I$. In this case, we do not know the unitary of $\Circ$ or the state of $\Circ\ket{0}$, and thus we cannot run Algorithm~\ref{alg:smcsp_s2d} and Algorithm~\ref{alg:umcsp_s2d}. 

\subsubsection{Self-reduction for \texorpdfstring{$\SMCSP$}{SMCSP}} 

In this section, we show that $\SMCSP$ is approximately self-reducible. In other words, one can approximate the circuit complexity of an $n$-qubit state by computing the circuit complexity of an $(n-1)$-qubit state. 

\begin{theorem}
\label{thm:smcsp_self_reduction}

Let $\A_\delta$ be an efficient algorithm for computing $CC(\ket{\phi},\delta)$ for any ($n-1$)-qubit state $\ket{\phi}$. Let $\ket{\psi}$ be any $n$-qubit state. Given $(n,s)$ in unary, $\epsilon\in (0,1)$, and access to copies of $\ket{\psi}$, $CC(\ket{\psi},\epsilon)$ can be approximated efficiently using $\A_\delta$. 
\end{theorem}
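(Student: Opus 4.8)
The plan is to reduce the task of approximating $CC(\ket{\psi},\epsilon)$ for an $n$-qubit state $\ket{\psi}$ to several calls of $\A_\delta$ on $(n-1)$-qubit states, following a ``win-win decomposition'' idea. First I would write $\ket{\psi}$ in terms of the last qubit as $\ket{\psi} = \sqrt{p}\,\ket{0}\ket{\alpha} + \sqrt{1-p}\,\ket{1}\ket{\beta}$, where $\ket{\alpha},\ket{\beta}$ are $(n-1)$-qubit states. Any circuit $\Circ$ (with $t=0$) generating $\ket{\psi}$ has a last gate $g$ acting on at most two wires; conditioning on whether the last wire is among them splits into two cases. If the last wire is untouched by $g$, then $\ket{\psi}$ is (approximately) a product state $\ket{\phi'}\otimes\ket{b}$ on the last qubit, and $CC(\ket{\psi},\epsilon)$ is essentially $CC(\ket{\phi'},\epsilon')$ for an $(n-1)$-qubit $\ket{\phi'}$ plus $O(1)$ for a possible single-qubit gate; if the last wire is touched, then peeling off $g$ leaves an $(n-1)$-qubit-plus-one-qubit state whose structure can again be analyzed. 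The upshot is a small set of candidate $(n-1)$-qubit states — obtained from $\ket{\psi}$ by (a) tracing out / projecting the last qubit in the computational basis, or (b) uncomputing one guessed final gate on the last two wires — such that $CC(\ket{\psi},\epsilon)$ equals, up to an additive $O(1)$ and a slight loss in the precision parameter, the minimum over these candidates of $CC(\cdot,\delta)$.

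The key steps, in order, are: (1) set up the decomposition of $\ket{\psi}$ along the last qubit and, using access to copies of $\ket{\psi}$, prepare copies of the relevant $(n-1)$-qubit reduced states (this uses the fact that quantum circuits are reversible, so one can uncompute a guessed gate $g^\dagger$ on the copies, exactly as in the search-to-decision reductions of Theorem~\ref{thm:search2decision_umcsp} and Theorem~\ref{thm:s2d_smcsp}); (2) prove the ``$\le$'' direction: given an approximately optimal circuit for the best $(n-1)$-qubit candidate, re-attach the guessed gate $g$ (and at most one extra single-qubit gate) to get a circuit for $\ket{\psi}$ of size at most $CC(\text{best candidate},\delta)+O(1)$, controlling the precision via Claim~\ref{claim:simplebound} and the triangle inequality; (3) prove the ``$\ge$'' direction: given an $\epsilon$-optimal size-$s$ circuit for $\ket{\psi}$, identify its last gate, uncompute it, and argue the resulting $(n-1)$-qubit state has complexity at most $s-1$ at a slightly degraded precision, so the best candidate's complexity is at most $CC(\ket{\psi},\epsilon)-O(1)$ at precision $\delta = \epsilon + 2^{-\Omega(n)}$; (4) enumerate all $O(1)$ choices of which wires the last gate acts on together with the $O(1)$-size gate set $\g$ (for a fixed finite $\g$ this is a constant number of candidates; for a continuous parametrized gate one first reduces to a finite $\epsilon$-net, incurring the Solovay–Kitaev-style overhead discussed around Claim~\ref{claim:gateset2gateset}), call $\A_\delta$ on each candidate, and output the minimum plus the appropriate additive constant; (5) track the total precision loss and the polynomial bookkeeping in the number of copies, which stays $\poly(n,s)$ since we make $O(1)$ calls to $\A_\delta$ and prepare $\poly(n,s)$ copies of each $(n-1)$-qubit state.

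The main obstacle I expect is step (3) combined with the handling of the ``last wire is touched'' case: when the final gate $g$ genuinely entangles the last qubit with the rest, uncomputing $g^\dagger$ from $\ket{\psi}$ does not in general leave a state of the form $(\text{something})\otimes\ket{0}$ on the last wire, so one cannot immediately read off a \emph{clean} $(n-1)$-qubit state. The resolution is to observe that the sub-circuit $\Circ$ minus its last gate still produces, on the first $n-1$ wires \emph{and} the last wire, a state whose complexity we want to bound; the trick is to argue that, since $\Circ$ is size-optimal, the last wire must in fact be disentangled from the rest \emph{before} the final gate is applied (otherwise an earlier gate on the last wire could be commuted or removed), or else to absorb the final two-wire gate into an $(n-1)$-qubit description by a ``controlled'' reinterpretation — and this case analysis, together with the resulting constant additive overhead, is where the argument is most delicate. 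A secondary subtlety is that $\SMCSP$ is only a promise problem with a $\poly(2^{-n})$ gap, so all the precision parameters $\epsilon,\delta$ and the additive slack $2^{-\Omega(n)}$ must be chosen consistently so that every invocation of $\A_\delta$ and the final comparison stay inside the promised gap; I would fix these constants at the very end once the case analysis pins down the exact overhead.
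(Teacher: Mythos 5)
Your approach — enumerate and peel off the last gate of an optimal circuit, as in the search-to-decision reductions — has a genuine gap exactly where you flagged it, and neither of your proposed fixes works. Peeling one gate reduces the \emph{circuit size} by one but not the \emph{number of qubits}: after applying $g^\dagger$ to copies of $\ket{\psi}$ you still hold an $n$-qubit state whenever the last wire is entangled with the rest, so there is no $(n-1)$-qubit candidate to hand to $\A_\delta$. Your first resolution (``the last wire must be disentangled before the final gate, else an earlier gate could be removed'') is false for generic entangled states: an optimal circuit for, say, a state with near-maximal entanglement across the last wire will keep that wire entangled through essentially the whole computation, and no single gate on it is removable. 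Your second resolution (``absorb the final two-wire gate by a controlled reinterpretation'') still describes an $n$-qubit object. Relatedly, the additive-$O(1)$ relation you claim between $CC(\ket{\psi},\epsilon)$ and the best $(n-1)$-qubit candidate is too strong; no such tight self-reduction is established, and the theorem only promises an \emph{approximation}.

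The paper avoids the circuit entirely and decomposes the \emph{state}: write $\ket{\psi}=c_0\ket{0}\ket{\psi_0}+c_1\ket{1}\ket{\psi_1}$, so that $\ket{\psi_0},\ket{\psi_1}$ are $(n-1)$-qubit states by construction, and estimate $|c_0|^2,|c_1|^2$ to precision $\epsilon/4$ by amplitude estimation. The win-win is then: (i) if one amplitude is below $\epsilon/2$, the state is $\epsilon$-close to a single branch and $CC(\ket{\psi_i},4\epsilon)\leq CC(\ket{\psi},\epsilon)\leq CC(\ket{\psi_i},\epsilon/4)$; (ii) otherwise both amplitudes are $\Omega(\epsilon)$, the upper bound $CC(\ket{\psi},\epsilon)\leq k\cdot\bigl(CC(\ket{\psi_0},\epsilon)+CC(\ket{\psi_1},\epsilon)\bigr)+3$ comes from a rotation preparing $c_0\ket{0}+c_1\ket{1}$ followed by controlled versions of the two branch circuits, and the lower bound $\frac{1}{k^*}\max_i CC(\ket{\psi_i},\epsilon')-h\leq CC(\ket{\psi},\epsilon)$ with $k^*=O(1/\epsilon)$ comes from running $O(1/\epsilon)$ parallel copies of the circuit for $\ket{\psi}$ and post-selecting on the measured first qubit (with measurements deferred). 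Note the resulting approximation is multiplicative with factors $O(1)$ and $O(1/\epsilon)$, not additive, and the Solovay--Kitaev overhead is applied only once at the end; if you want to rescue a gate-peeling argument you would need a separate mechanism for discovering a wire that is returned to $\ket{0}$, which an optimal circuit need not contain.
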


Recall that $CC(\cdot,\epsilon)$ denotes the quantum circuit complexity of the minimum quantum circuit that approximates the given states with precision $\epsilon$. 

\begin{proof}
We first fix the gate set to be $CNOT$ and all single-qubit rotations and prove the theorem under this particular gate set. Then, we generalize the theorem to all gate sets by the Solovay-Kitaev Theorem in Theorem~\ref{thm:sk}.  

Let $\ket{\psi}\in \mathbb{C}^{2^n}$ be an arbitrary $n$-qubit quantum state. Without loss of generality, we can represent $\ket{\psi}$ as 
\[
    c_0\ket{0}\ket{\psi_0} + c_1\ket{1}\ket{\psi_1}, 
\]
where $c_0,c_1\in \mathbb{C}$ and $|c_0|^2 + |c_1|^2 = 1$. $\ket{1}$ and $\ket{0}$ are single-qubit states, and $\ket{\psi_0}$ and $\ket{\psi_1}$ are states on $n-1$ qubits and are not orthogonal in general. Our goal is show upper and lower bounds for $CC(\ket{\psi},\epsilon)$ from $CC(\ket{\psi_0},\delta)$ and $CC(\ket{\psi_1},\delta)$.

To prove the upper and the lower bounds, we first estimate $|c_0|^2$ and $|c_1|^2$ to precision $\epsilon/4$ by using quantum amplitude estimation. We denote the estimated values as $|c'_0|^2$ and $|c'_1|^2$ and consider the following two cases.
\begin{enumerate}
    \item $|c'_0|^2\mbox{ or }|c'_1|^2< \epsilon/2$; and
    \item $|c'_0|^2,|c'_1|^2\geq \epsilon/2$. 
\end{enumerate}

\paragraph{Upper bound}  In case that $|c'_0|^2$ (or $|c'_1|^2$) is less than $\frac{\epsilon}{2} $, $|c_1|^2$ (or $|c_0|^2$) must be greater than $1-\frac{3\epsilon}{4}$, which implies that the square of the inner product of $\ket{\psi}$ and $\ket{1}\ket{\psi_1}$ (or  $\ket{0}\ket{\psi_0}$ ) is at least $1-\frac{3\epsilon}{4}$. Therefore, 
\begin{align*}
CC(\ket{\psi},\epsilon) \leq CC(\ket{\psi_1},\epsilon/4) \mbox{ or } CC(\ket{\psi_0},\epsilon/4).     
\end{align*}

In case that both $|c'_0|^2$ and $|c'_1|^2$ are at least $\frac{\epsilon}{2}$, Let $\Circ_0 = \Circ_{\ket{\psi_0},\epsilon}$ and $\Circ_1 = \Circ_{\ket{\psi_1},\epsilon}$. Then, there exists $\Circ^*$ that approximates $\ket{\psi}$ with precision $\epsilon$ as follows: 

\begin{align*}
    \ket{0^{n}} &\xrightarrow{R\otimes I_{n-1}}\quad c_0\ket{0}\ket{0^{n-1}} +  c_1\ket{1}\ket{0^{n-1}}\\
    &\xrightarrow{control-\Circ_{1}}\quad c_0\ket{0}\ket{0^{n-1}} +  c_1\ket{1}\Circ_1\ket{0^{n-1}}\\
    &\xrightarrow{X\otimes I_{n-1}}\quad c_0\ket{1}\ket{0^{n-1}} +  c_1\ket{0}\Circ_1\ket{0^{n-1}}\\
    &\xrightarrow{control-\Circ_{0}}\quad c_0\ket{1}\Circ_0\ket{0^{n-1}} +  c_1\ket{0}\Circ_1\ket{0^{n-1}}\\
    & \xrightarrow{X\otimes I_{n-1}}\quad c_0\ket{0}\Circ_0\ket{0^{n-1}} +  c_1\ket{1}\Circ_1\ket{0^{n-1}} 
\end{align*}

Here $R$ is a single-qubit rotation gate that rotates $\ket{0}$ to $c_0\ket{0}+c_1\ket{1}$. Since our gate set includes all single-qubit rotations, the cost of $R$ is just 1. For $control-\Circ_0$ and $control-\Circ_1$, we can think of it as every gate in $\Circ_i$ is controlled by an additional qubit, i.e., $R$ becomes $control-R$ and $\mathsf{CNOT}$ becomes $\mathsf{Toffoli}$ gate. By the composition methods in~\cite{NC00}, we can implement these control gates with only constant multiplicative overhead. Hence, $|\Circ^*|\leq k\cdot (|\Circ_0|+|\Circ_1|)+3$ for some constant $k$, and we can conclude that
\begin{align*}
CC(\ket{\psi},\epsilon) \leq k\cdot (CC(\ket{\psi_0},\epsilon)+CC(\ket{\psi_1},\epsilon))+3.
\end{align*}

\paragraph{Lower bound} Let $\Circ$ be the minimum quantum circuit that approximates $\ket{\psi}$ with precision $\epsilon$.

When $|c'_0|^2$ and $|c'_1|^2$ are both at least $\epsilon/2$,  $|c_0|^2$ and $|c_1|^2$ are at least $\epsilon/4$ where $|c_0'|^2$ is the estimated value of $|c_0|^2$. Intuitively, we can obtain $\ket{\psi_0}$ or $\ket{\psi_1}$ by parallelly applying $\Circ$ on $O(\frac{1}{\epsilon})$-many $\ket{0^n}$ states and measuring the first qubits of all the outputs states in the computational basis. By deferring all these measurements toward the end of the computation, we obtain 
\[
    CC(\ket{\psi_i},\epsilon') \leq k^*(CC(\ket{\psi},\epsilon)+h)  
\]
for $i=0,1$, $h=O(1)$, and $k^* = O(1/\epsilon)$. Here $\epsilon\leq \epsilon' \leq (1-\frac{\epsilon}{4})^{k^*} +\epsilon$. The additional constant cost $h$ is from the overhead of deferring measurements. 

When $|c'_0|^2$ or $|c'_0|^2$ is at least $1-\epsilon/2$, the circuit for $\ket{\psi}$ is already a good approximation for $\ket{\psi_1}$ following the same reason for proving the upper bound in the same case. This implies that
\[
    CC(\ket{\psi_i},4\epsilon)\leq CC(\ket{\psi},\epsilon). 
\] 

\paragraph{The reduction} 
The algorithm is as follows: 
\begin{enumerate}
    \item Estimating $|c_0|$ and $|c_1|$ with precision $\epsilon/4$. 
    \item Approximate $CC(\ket{\psi},\epsilon)$ according to $|c'_0|$ and $|c'_1|$. 
    \begin{itemize}
        \item When $|c'_0|^2\mbox{ or } |c'_1|^2\leq \frac{\epsilon}{2}$, compute $CC(\ket{\psi_i},\epsilon/4)$ and $CC(\ket{\psi_i},4\epsilon)$ for $i=0,1$. Then,
        \begin{align*}
            CC(\ket{\psi_i},4\epsilon)\leq CC(\ket{\psi},\epsilon) \leq CC(\ket{\psi_i},\epsilon/4).
        \end{align*}
        \item When $|c'_0|^2,|c'_1|^2\geq \epsilon/2$, compute $CC(\ket{\psi_i},\epsilon')$ and $CC(\ket{\psi_i},\epsilon)$ for $i=0,1$. Then,
        \begin{align*}
            \frac{1}{k^*}\cdot \max_{i=0,1}~(CC(\ket{\psi_i},\epsilon')) - h \leq CC(\ket{\psi},\epsilon) \leq k\cdot (CC(\ket{\psi_0},\epsilon)+CC(\ket{\psi_1},\epsilon))+3
        \end{align*}
    \end{itemize}
\end{enumerate}

For the running time of the reduction, we can estimate $|c_0|^2$ and $|c_1|^2$ with precision $\epsilon/4$ in time $\poly(1/\epsilon)$ using quantum amplitude estimation. In case that $|c'_0|^2$ (or $|c'_1|^2$) is less than $\frac{\epsilon}{2} $, we only need to compute $CC(\ket{\psi_1},\epsilon/4)$ by having many enough copies of $\ket{\psi_1}$, which can be efficiently obtained by measuring $\ket{\psi}$. In case that both $|c'_0|^2$ and $|c'_1|^2$ are at least $\frac{\epsilon}{2}$, $|c_0|$ and $|c_1|$ must be at least $\frac{\epsilon}{4}$. Then, we can still obtain sufficiently many copies of $\ket{\psi_0}$ and $\ket{\psi_1}$ in time $\poly(\frac{1}{\epsilon})$ to compute $CC(\psi_0,\epsilon)$ and $CC(\psi_1,\epsilon)$.

Finally, we generalize the results above to arbitrary universal gate set by applying the Solovay-Kitaev Theorem. This gives upper bounds multiplicative overhead $\polylog \frac{CC(\ket{\psi_i},\delta)}{\epsilon} $ and lower bounds multiplicative overhead $\polylog^{-1} \frac{CC(\ket{\psi_i},\delta)}{\epsilon} $, where the choices of $i$ and $\delta$ depend on the cases.

\end{proof}

\begin{remark}
\label{remark:state_equivalence}
Theorem~\ref{thm:smcsp_self_reduction} also holds when the problem is given the classical description of the quantum state. When considering the version with classical descriptions of states, the reduction becomes even simpler since $c_0$ and $c_1$ can be easily computed from the input.  
\end{remark}




\subsubsection{Reducing \texorpdfstring{$\MQCSP$}{MQCSP} to \texorpdfstring{$\UMCSP$}{UMCSP}} In the following, we present a reduction from $\class{MQCSP}$ to $\class{UMCSP}$. We first introduce a unitary that trivially encode a given Boolean function.

\begin{definition}[Trivial unitary encoding of Boolean functions ($U_{f}$)]\label{def:U_f}
Let $f:\mathbb{Z}^n\rightarrow \mathbb{Z}^m$. We define $U_{f}$ as a $2^{n+m}\times 2^{n+m}$ unitary such that for all $x\in \mathbb{Z}^n$
\begin{align*}
    U_{f}\ket{x}\ket{0} = \ket{x}\ket{f(x)} 
\end{align*}
\end{definition}



Obviously, given the truth table of a function $f:\{0,1\}^n\rightarrow \{0,1\}^m$, one can compute $U_f$ in time $\poly(2^n)$. Then, one might expect that the circuit complexity of $f$ is equal to of $U_f$ (in Definition~\ref{def:U_f}). However, this is not the case in general since there are many unitaries that can compute $f$ without the form of $U_f$. In the following lemma, we show that one can give both upper and lower bounds for $CC(f)$ by the quantities $CC(U_f,\epsilon)$ and $CC(U_f,2\epsilon)$
\begin{lemma}\label{lemma:B2U}
\begin{align*}
    \frac{CC(U_f,2\epsilon)}{2} - m \leq CC(f,\epsilon) \leq CC(U_f,\epsilon)
\end{align*}
\end{lemma}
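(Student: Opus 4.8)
The plan is to prove the two inequalities separately. For the upper bound $CC(f,\epsilon)\le CC(U_f,\epsilon)$, I would start from a circuit $\Circ$ of size $CC(U_f,\epsilon)$ that approximates $U_f$ with precision $\epsilon$ (in the sense of Definition~\ref{def:c_u_appx}, possibly with ancillas), and observe that running $\Circ$ on the input $\ket{x}\ket{0^m}\ket{0^t}$ produces, up to error $\sqrt{\epsilon}$ in $\ell_2$-norm, the state $\ket{x}\ket{f(x)}\otimes\ket{\text{aux}}$. Measuring the designated output register (the $m$ qubits encoding $f(x)$, or the single relevant output bit of $f$) then yields $f(x)$ with probability at least $\alpha\ge 1-\epsilon$ by Claim~\ref{claim:simplebound} applied with the projector $\bra{f(x)}$; hence $\Circ$ is a valid circuit witnessing $CC(f,\epsilon)$, using the same number of gates. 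A minor point to be careful about is matching the ``approximates $f$ with precision $\delta$'' convention of Definition~\ref{def:C_f} (error measured in squared amplitude / probability) against the ``approximates $U$ with precision $\delta$'' convention of Definition~\ref{def:c_u_appx}; the relation $|\ipro{\psi}{\phi}|\ge 1-\tfrac12\|\ket\psi-\ket\phi\|^2$ used in Claim~\ref{claim:gateset2gateset} handles the translation, and one should check the constants line up (possibly at the cost of replacing $\epsilon$ by $\epsilon$ up to a factor, which is absorbed into the statement as written).

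For the lower bound $\tfrac{CC(U_f,2\epsilon)}{2}-m\le CC(f,\epsilon)$, the idea is: given a size-$s$ circuit $\Circ_f$ that approximates $f$ with precision $\epsilon$ (on basis inputs $\ket{x}\ket{0^t}$), I would build a circuit for $U_f$ of size roughly $2s+m$. First run $\Circ_f$ on $\ket{x}$ together with fresh ancillas to get a state close to $\ket{f(x)}\ket{\psi_{f(x)}}\otimes(\text{junk})$; then copy the $m$ output bits of $f(x)$ onto the designated output register of $U_f$ using $m$ CNOT gates (this is fine since at that point the output register is, up to small error, in a computational basis state $\ket{f(x)}$, so CNOT-copying is classical copying); then run $\Circ_f^\dagger$ (the reverse circuit, same size $s$) to uncompute the first stage, restoring $\ket{x}$ and returning all the work qubits to $\ket{0}$. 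This is the standard Bennett-style uncomputation trick. The resulting circuit has size $\le s+m+s=2s+m$ and approximates $U_f$ on all basis states $\ket{x}\ket{0}$; the error roughly doubles (once from $\Circ_f$, once from $\Circ_f^\dagger$), which is why the statement has $CC(U_f,2\epsilon)$ rather than $CC(U_f,\epsilon)$. Therefore $CC(U_f,2\epsilon)\le 2\,CC(f,\epsilon)+m$, i.e. $\tfrac{CC(U_f,2\epsilon)}{2}-\tfrac m2\le CC(f,\epsilon)$; in fact the stated bound with $-m$ is weaker and hence also holds (or one just uses $-m$ as a clean upper estimate for the $m/2$ term plus any small gate overhead from the copying step).

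The main obstacle I anticipate is bookkeeping the error and the ancilla/definition conventions rather than any conceptual difficulty: one must (i) confirm that ``approximate $U_f$ on computational basis inputs $\ket{x}\ket{0}$'' is what is meant by $CC(U_f,\cdot)$ here — Definition~\ref{def:c_u_appx} quantifies over all $\ket\psi$, so I would either invoke the coherency/standard-basis analysis from the proof of Theorem~\ref{thm:umcspqcma} (which shows behaving correctly on basis states and on states $\ket a+\ket b$ forces near-correct behaviour everywhere) or note that for the encoding unitary $U_f$ it suffices to reason about the relevant input subspace, and (ii) track that uncomputing via $\Circ_f^\dagger$ only works cleanly because the intermediate junk state $\ket{\psi_{f(x)}}$ and $\ket{\phi_x}$ from Definition~\ref{def:C_f} are handled by running the exact inverse; the triangle inequality for operator norm then gives the factor-2 error growth. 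I would also double-check that the gate set is closed under taking inverses and under the controlled/CNOT operations used, or invoke Solovay--Kitaev (Theorem~\ref{thm:sk}) to absorb a $\polylog$ overhead, which is consistent with the ``$CC$'' notation used elsewhere in this section.
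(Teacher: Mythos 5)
Your proposal is correct and follows essentially the same route as the paper: the upper bound is immediate since a circuit approximating $U_f$ already computes $f$, and the lower bound uses exactly the compute--copy--uncompute construction $\Circ_{f,\epsilon}^\dagger\circ(\text{copy})\circ\Circ_{f,\epsilon}$, yielding $CC(U_f,2\epsilon)\le 2\,CC(f,\epsilon)+m$ with the error doubling from the two applications of the approximate circuit. Your observation that uncomputation is what forces correctness on all of $\C^{2^n}$ (not just basis states, and with the phases cancelling) matches the paper's remark about the global phase being undone by $\Circ_f^\dagger$.
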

\begin{proof}
It is easy to see that given $\T(f)$, one can compute $U_f$ in time $2^{O(n+m)}$ which is polynomial in $|T(f)|=2^{n+m}$. 

We first consider the case where $CC(f)$ and $CC(U_f)$ can be computed with probability $1$. We can prove the first inequality as follows: 
\begin{align}
    \ket{x}\ket{0} \xrightarrow{\Circ_{f}} & e^{-i\theta_x}\ket{f(x)}\ket{\psi_x}\label{eq:global_phase}\\
    \xrightarrow{copy}& e^{-i\theta_x}\ket{f(x)}\ket{f(x)}\ket{\psi_x}\notag\\
    \xrightarrow{\Circ_{f}^{\dag}}& \ket{f(x)}\ket{x}\ket{0},\notag  
\end{align}
where $e^{-i\theta}$ are the global coefficient that $C_f$ might have for each $\theta_x$. $C_f^\dag(copy)C_f$ perfectly computes $U_f$ on all $x\in \{0,1\}^n$ without any global coefficient. This implies that for all $\ket{\psi}\in \C^{2^n}$, $C_f^\dag(copy)C_f$ computes $U_f\ket{\psi}$ perfectly. The cost for applying this circuit is $2CC(f)+m$. 
Therefore, we can conclude that $CC(U_f)\leq 2CC(f)+m$. The second inequality is true since a circuit for implementing $U_f$ is also a circuit for $f$ by definition.  Note that the global phase in Eq.~\eqref{eq:global_phase} can be absorbed into the second register; however, we write it down here to help explain why $C_f^\dag(copy)C_f$ implements $U_f$ not just only on the computational basis, but on all the states.    

In the following, we consider the case where we allow $U_f$ and $f$ to be computed with probability at least some thresholds. 
\begin{align}
    \ket{x}\ket{0} \xrightarrow{\Circ_{f,\epsilon}}  &~\sqrt{1-\epsilon} \ket{f(x)}\ket{\psi_{f(x)}} + \sqrt{\epsilon} (\sum_{y\neq f(x)} c_y\ket{y}\ket{\phi'_{x,y}})\notag\\
    \xrightarrow{Copy} &~\sqrt{1-\epsilon} \ket{f(x)}\ket{f(x)}\ket{\psi_{f(x)}} + \sqrt{\epsilon} (\sum_{y\neq f(x)} c_y\ket{y}\ket{y}\ket{\phi'_{x,y}})\notag\\
    = &~\ket{f(x)}(\sqrt{1-\epsilon} \ket{f(x)}\ket{\psi_{f(x)}} + \sqrt{\epsilon} (\sum_{y\neq f(x)} c_y\ket{y}\ket{\phi'_{x,y}}))\notag\\
    &+ 
    \sqrt{\epsilon}(\sum_{y\neq f(x)}c_y\ket{y}\ket{y}\ket{\phi'_{x,y}}-\sum_{y\neq f(x)}c_y\ket{f(x)}\ket{y}\ket{\phi'_{x,y}})\notag\\
    \xrightarrow{\Circ_{f,\epsilon}^\dag}&~ \ket{f(x)}\ket{x}\ket{0} + \ket{\psi'_x}.\label{eq:last} 
\end{align}
Since $\ipro{f(x),x,0}{\psi'_x} = -\epsilon$ and $\ipro{\psi'_x}{\psi'_x}=2\epsilon$, we have that
\begin{align*}
    \ket{\psi'_x} = -\epsilon\ket{f(x),x,0} + \sqrt{2\epsilon-\epsilon^2} \ket{\psi''_x}. 
\end{align*}
Therefore, we can rewrite Eq.~\eqref{eq:last} as 
\begin{align*}
    (1-\epsilon)\ket{f(x),x,0} + \sqrt{2\epsilon-\epsilon^2}\ket{\psi''_x}, 
\end{align*}
which implies that the circuit $C^\dag_{f,\epsilon}(Copy)C_{f,\epsilon}$ can compute $U_f$ with probability $(1-\epsilon)^2<1-2\epsilon$, i.e., $CC(U_f,2\epsilon) \leq 2CC(f,\epsilon)+m$. $CC(f,\epsilon) \leq CC(U_f,\epsilon)$ is also trivial by the definition.  

\end{proof}

We describe an algorithm to approximate $CC(f)$ given an oracle to $\class{UMCSP}$. 
\begin{algorithm}[H]
\caption{A reduction from $\class{MQCSP}$ to $\class{UMCSP}$}
	\label{alg:B2U}
    \begin{algorithmic}[1]
	    \Input Given $\T(f)$ for $f:\{0,1\}^n\rightarrow \{0,1\}^m$
        \State Construct $U_f$.
        \State Use $\class{UMCSP}$ oracle to compute $s=CC(U_f)$. 
        \State \Return $(\frac{s}{2}-m, s)$. 
    \end{algorithmic}
\end{algorithm}

\begin{theorem}\label{thm:b2u}
$\class{MQCSP}[s/2-1, s] \leq \class{UMCSP}$. 
\end{theorem}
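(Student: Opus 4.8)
\textbf{Proof plan for Theorem~\ref{thm:b2u}.}

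The plan is to show that Algorithm~\ref{alg:B2U} correctly implements the reduction $\class{MQCSP}[s/2-1,s]\leq\class{UMCSP}$, using Lemma~\ref{lemma:B2U} as the main technical tool. First I would recall the setup: given the truth table $\T(f)$ of a Boolean function $f:\{0,1\}^n\to\{0,1\}$, the algorithm constructs the trivial unitary encoding $U_f$ (Definition~\ref{def:U_f}) in time $2^{O(n)}=\poly(|\T(f)|)$, queries the $\UMCSP$ oracle to compute $s=CC(U_f)$ (via binary search over the size parameter, which is legitimate since $CC(U_f)\leq 2^{O(n)}$ by Remark~\ref{remark:complexity_upperbound}), and returns the pair $(s/2-m,s)$; here $m=1$ since $f$ is single-output. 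The goal is to argue that this pair correctly decides the gap problem $\class{MQCSP}[s/2-1,s]$, i.e.\ that $CC(f)$ lies in the interval sandwiched by these two values.

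The core of the argument is a direct invocation of Lemma~\ref{lemma:B2U}, which (with $m=1$ and in the exact-computation regime) gives
\begin{align*}
    \frac{CC(U_f)}{2} - 1 \;\leq\; CC(f) \;\leq\; CC(U_f).
\end{align*}
Thus setting $s=CC(U_f)$, we immediately get $s/2-1\leq CC(f)\leq s$. This means: if the input instance of $\class{MQCSP}$ is a YES instance with witness circuit of size at most $s/2-1$, then certainly $CC(f)\leq s/2-1 < s$, consistent with the lower-bound side; and if $CC(f) > s$, then the upper bound $CC(f)\leq CC(U_f)=s$ is violated, a contradiction, so every NO instance (in the sense that no circuit of size $s$ computes $f$) is correctly flagged. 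More precisely, I would phrase the reduction as: the $\UMCSP$ oracle lets us compute $s=CC(U_f)$ exactly, and then $CC(f)\in[s/2-1,\,s]$, so returning the gap instance $\MQCSP[s/2-1,s]$ on input $\T(f)$ is answered correctly — a YES instance of $\MQCSP[s/2-1,s]$ has $CC(f)\le s/2-1$ which is compatible, and a NO instance has $CC(f)>s$ which would contradict $CC(f)\le CC(U_f)=s$. The running time is $\poly(T(n),2^n)$ where $T$ is the running time of the $\UMCSP$ oracle, because the binary search uses $O(n)$ oracle calls each on a $2^{n+1}$-dimensional unitary that we can write down in time $2^{O(n)}$.

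The main subtlety — and where I would spend the most care — is the approximate/error regime and the precision issue flagged in Remark~\ref{rmk:precision}. Lemma~\ref{lemma:B2U} actually gives $\frac{CC(U_f,2\epsilon)}{2}-m\leq CC(f,\epsilon)\leq CC(U_f,\epsilon)$, so to get a clean bound one must thread the error parameters consistently: query $\UMCSP_{1-\epsilon,\,1-2\epsilon}$ (or the appropriate $\alpha,\beta$ pair respecting the gap requirement $\alpha-\beta\geq 2^{-s\cdot\ell_\g}$) and argue that the two precision levels $\epsilon$ and $2\epsilon$ appearing in the lemma are both small enough that the sandwich still pins down $CC(f,\epsilon)$ up to the factor-$2$ (and additive $m$) slack. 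I expect this bookkeeping — making sure the gap in the $\UMCSP$ oracle is wide enough to be a genuine promise problem while still narrow enough that the reduction is meaningful, and that the global-phase and ancilla issues from the proof of Lemma~\ref{lemma:B2U} (the $C_f^\dagger(\mathrm{copy})C_f$ construction) do not degrade the bound — to be the only real obstacle; the rest is a routine composition of the already-established Lemma~\ref{lemma:B2U} with binary search over the $\UMCSP$ oracle.
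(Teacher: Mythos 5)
Your proposal is correct and follows essentially the same route as the paper: construct the trivial encoding $U_f$, use the $\UMCSP$ oracle (with thresholds $1-\epsilon$ and $1-2\epsilon$ matched to the two precision levels in Lemma~\ref{lemma:B2U}) to pin down $CC(U_f)$, and read off the sandwich $\frac{CC(U_f,2\epsilon)}{2}-1\leq CC(f,\epsilon)\leq CC(U_f,\epsilon)$ to decide the gap instance. The paper's own proof is in fact terser than yours; your extra bookkeeping on the binary search, running time, and the $\epsilon$ versus $2\epsilon$ threading is consistent with what the paper implicitly assumes.
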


\begin{proof}
By Lemma~\ref{lemma:B2U}, $CC(f,\epsilon)$ is between $\frac{CC(U_f,2\epsilon)}{2}-1$ and $CC(U_f,\epsilon)$ when $f$ is a Boolean function. To compute $CC(U_f,\epsilon)$, we can use the oracle for $\UMCSP_{1-\epsilon, \beta}$, where $\beta\leq 1-\epsilon-\frac{1}{\poly}$. For $CC(U_f,2\epsilon)$, we use the oracle for $\UMCSP_{1-2\epsilon, \beta'}$, where $\beta'\leq 1-2\epsilon-\frac{1}{\poly}$. This completes the proof. 
\end{proof}


\begin{remark}
One may expect that we can use Algorithm~\ref{alg:B2U} and $\class{NP}$-hardness result about $\class{multiMCSP}$ to prove $\class{NP}$-hardness of $\class{UMCSP}$. However, since the reduction for the multioutput MCSP generates functions with exponential-size output string, it make the first inequality in Lemma~\ref{lemma:B2U} fail. Therefore, whether $\class{UMCSP}$ is $\class{NP}$-hard or not is still open. 
\end{remark}


\subsection{Applications of SMCSP and UMCSP}\label{sec:smcsp_app}

In this part, we give applications of $\UMCSP$ and $\SMCSP$ to other fields in computer science and physics. For $\SMCSP$, we focus on the version with multiple quantum states as inputs.    

\subsubsection{Applications of UMCSP}

A question Aaronson raised in~\cite{aaronson2016complexity} is whether there exists an efficient quantum process that generates a family of unitaries that are indistinguishable from random unitaries given the full description of the unitary. Obviously, if we can solve $\UMCSP$ efficiently, we can distinguish truly random unitaries from unitaries generated from efficient quantum process. 
\begin{theorem}
\label{thm:pru}
If $\UMCSP$ has efficient (quantum) algorithms, then there is no efficient quantum process that generates a family of unitaries indistinguishable from random unitaries given the full description of the unitary.
\end{theorem}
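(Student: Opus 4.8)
The plan is to argue by contraposition: assuming an efficient (i.e.\ $\poly(2^n)$-time) algorithm for $\UMCSP$, I would construct an efficient distinguisher that, given the full $2^n\times 2^n$ matrix of a unitary $V$ on $n$ qubits, decides whether $V$ was produced by the purported generating process or drawn from the Haar measure, simply by testing whether $V$ has a small quantum circuit. Concretely, let the process produce for each $n$ a unitary $V_n$ implemented by a $\poly(n)$-size quantum circuit $Q_n$; by Claim~\ref{claim:gateset2gateset} (Solovay--Kitaev, Theorem~\ref{thm:sk}) I may assume $Q_n$ is written over the fixed gate set $\g$ of $\UMCSP$ at the cost of a $\polylog$ factor, so it still has size $s(n)=\poly(n)$. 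The distinguisher writes down its input unitary with the $2^{O(n)}$-bit precision permitted by Remark~\ref{rmk:precision}, runs the $\UMCSP_{\alpha,\beta}$ algorithm with size parameter $s(n)$, ancilla parameter $t$ equal to the number of ancilla qubits of $Q_n$, and thresholds $\alpha=1-\exp(-2^n)$, $\beta=1-1/\poly(2^n)$ (the regime in which Theorem~\ref{thm:umcspqcma} places $\UMCSP$ in $\class{QCMA}$, and in which the promise gap $\alpha-\beta\ge 1/\poly(2^n)$ holds), and outputs ``pseudorandom'' iff the algorithm accepts. Writing the matrix and invoking the oracle takes time $\poly(2^n)$ plus the oracle's running time, so the distinguisher is efficient.

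For completeness, if $V=V_n$ is output by the process then $Q_n$ itself is a circuit of size $\le s(n)$ over $\g$ with $t$ ancilla qubits implementing $V_n$ (the Solovay--Kitaev error being absorbed into $1-\alpha$); hence $V_n$ is a Yes instance and the distinguisher answers ``pseudorandom'' with probability $\ge 2/3$, in fact $1-\exp(-\poly(2^n))$ after amplification.

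For soundness, I must show that a Haar-random $n$-qubit unitary $U$ is, with overwhelming probability, a No instance for these parameters. This is the crux, and I would prove it by a counting/$\epsilon$-net argument. Over the finite gate set $\g$ with at most $\ell_\g$ bits per entry, the number of distinct circuits of size $\le s(n)$ on $n+t$ qubits is $2^{\poly(n)}$. Fix one such circuit $C$. Using the same analysis as in the proof of Theorem~\ref{thm:umcspqcma} (Lemmas~\ref{lem:basistosuper} and~\ref{lem:g7}), if $C$ satisfies $\|(\bra{\psi}\otimes I_t)(U^{\dag}\otimes I_t)C\ket{\psi,0^t}\|^2>\beta$ for every $\ket{\psi}$, then the $n$-qubit operation induced by $C$ on the input register (relative to a fixed ancilla state) lies within operator-norm distance $O(2^{n/2}(1-\beta)^{1/4})$ of $U$; choosing $1-\beta$ a sufficiently small inverse polynomial, $U$ therefore lies in a ball of radius $r=2^{-\Omega(n)}$ around a fixed unitary depending only on $C$. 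The Haar measure of such a ball in the $4^n$-dimensional group $U(2^n)$ is at most $r^{\Omega(4^n)}=2^{-\Omega(n\cdot 4^n)}$, so a union bound over all $2^{\poly(n)}$ choices of $C$ bounds the probability that some small circuit $\beta$-approximates $U$ by $2^{-\Omega(n\cdot 4^n)}$. Hence, except with doubly-exponentially small probability, $U$ is a No instance and the distinguisher outputs ``Haar-random'' with probability $\ge 2/3$.

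Combining the two cases, the distinguisher separates the two distributions with advantage $\ge 1/3-o(1)$, contradicting the assumed indistinguishability, so no such efficient quantum process exists. The main obstacle is the soundness step, and within it the treatment of ancilla qubits: a small circuit is only asked to implement $U$ ``up to whatever it does on the ancilla,'' possibly entangling output and ancilla registers, so one cannot naively compare $C$'s global unitary with $U$. This is precisely the difficulty handled by the coherency-check machinery behind Theorem~\ref{thm:umcspqcma}, which I would reuse to argue that the $\UMCSP$ closeness condition nonetheless pins $U$ down to a genuinely small ball; the remaining work is bookkeeping of the precision parameters so that $\alpha,\beta$ stay inside the promise regime of $\UMCSP$ (and within reach of the assumed algorithm, cf.\ Remark~\ref{remark:complexity_upperbound}) while the counting loss remains negligible.
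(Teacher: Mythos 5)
Your proposal is correct and follows exactly the route the paper intends: the paper states this theorem without an explicit proof, remarking only that an efficient $\UMCSP$ algorithm "obviously" distinguishes unitaries produced by an efficient process (Yes instances, by Solovay--Kitaev) from Haar-random ones (No instances), and your completeness and soundness cases are precisely the details behind that remark. The one imprecision is in the soundness step: the ancilla state $\ket{\chi_0}$ extracted from $C$ depends on $U$, so rather than placing all of $S_C=\{U: C \text{ $\beta$-approximates } U\}$ in a ball around a unitary "depending only on $C$," you should argue that any two members of $S_C$ are close to each other up to a global phase (which follows from the same coherency lemmas, since the two near-product factorizations $(U\ket{\psi})\otimes\ket{\chi_0}$ and $(U'\ket{\psi})\otimes\ket{\chi_0'}$ approximate the same vector $C\ket{\psi,0^t}$), which yields the same $2^{-\Omega(n)}$ diameter bound and hence the same Haar-measure and union-bound estimate.
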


Besides, some results about $\MQCSP$ in Section~\ref{sec:main-connections} also hold for $\UMCSP$ by Theorem~\ref{thm:b2u} and Algorithm~\ref{alg:B2U}. In the following, we list some results that trivially holds. 

\begin{corollary}
\label{cor:umcsp_owf}
If $\UMCSP\in \class{BQP}$, then there is no $\class{qOWF}$. 
\end{corollary}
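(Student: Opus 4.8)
The plan is to reuse the $\class{qOWF}$-breaking strategy from the proof of Theorem~\ref{thm:mqcsp_qowf} (Algorithm~\ref{alg:break_prg}), but feed it a \emph{gap} version of $\MQCSP$ obtained from the $\UMCSP$ oracle via Theorem~\ref{thm:b2u} rather than an exact $\MQCSP$ solver. The point is that the attack in the proof of Theorem~\ref{thm:mqcsp_qowf} never uses the exact quantum circuit complexity: it only needs to separate the pseudorandom functions $g_x$, of complexity $\widetilde{O}(n^b)$, from truly random functions, of complexity more than $\tfrac{2^m}{(c+1)m}=\widetilde{O}(n^d)$ with $d\gg b$ (Claim~\ref{claim:gatecount}). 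Hence any $\poly$-factor approximation of circuit complexity is enough.

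First I would observe that $\UMCSP\in\class{BQP}$ lets one compute $s:=CC(U_f)$ for any Boolean $f$ on $m$ bits in quantum time $\poly(2^m)$: by Remark~\ref{remark:complexity_upperbound} this value is $2^{O(m)}$, so $O(m)$ queries to the $\UMCSP$ decision oracle (with the precision parameters of Theorem~\ref{thm:b2u}, chosen so the promise gaps stay in the admissible range of Theorem~\ref{thm:umcspqcma}) binary-search it. By Lemma~\ref{lemma:B2U} with $m=1$ we have $CC(f)\le CC(U_f)\le 2\,CC(f)+1$, so $s$ is a multiplicative-$2$ estimate of $CC(f)$. Then I would run the construction of the proof of Theorem~\ref{thm:mqcsp_qowf} verbatim: Lemmas~\ref{qowf2qprg} and~\ref{lem:qprg2lqprg} turn a hypothetical $\class{qOWF}$ $f$ into a local $\class{qPRG}$ $\hat G:\{0,1\}^n\rightarrow\{0,1\}^{2^m}$ with $m=d\log n$ whose outputs, viewed as truth tables of functions $g_x$ on $m$ bits, satisfy $CC(g_x)\le\widetilde{O}(n^b)$, while a uniformly random $h$ on $m$ bits satisfies $CC(h)>\widetilde{O}(n^d)$ with overwhelming probability. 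Choosing $d$ large enough that $\widetilde{O}(n^d)>2\widetilde{O}(n^b)+1$ and a threshold $\theta=\widetilde{O}(n^c)$ with $b<c<d$, the distinguisher builds $U_h$, computes $s=CC(U_h)$ via the oracle, and outputs ``pseudorandom'' iff $s\le\theta$: for $g_x$ we get $s\le 2\widetilde{O}(n^b)+1\le\theta$, and for random $h$ we get $s\ge CC(h)>\widetilde{O}(n^d)>\theta$. This is a $\class{BQP}$ procedure separating $\{g_x\}$ from random with advantage $1-o(1)$, which breaks $\hat G$, hence $G$, hence inverts $f$ quantumly; so no $\class{qOWF}$ exists.

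The main obstacle — really the only delicate point — is precision bookkeeping: Lemma~\ref{lemma:B2U} only relates $CC(f,\epsilon)$ to $CC(U_f,\epsilon)$ and $CC(U_f,2\epsilon)$ for a fixed $\epsilon$, and the $\class{QCMA}$ upper bound of Theorem~\ref{thm:umcspqcma} needs $1-\beta\ge\poly(1/2^n)$; one has to check that taking $\epsilon=\poly(1/2^m)$ keeps all the Yes/No gaps admissible while the $\poly$-factor slack between $\widetilde{O}(n^b)$ and $\widetilde{O}(n^d)$ absorbs the constant-factor approximation loss. Since none of this affects the argument, the corollary follows.
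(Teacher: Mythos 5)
Your proposal is correct and follows essentially the same route as the paper: the paper justifies this corollary by noting that the attack in Theorem~\ref{thm:mqcsp_qowf} only uses the $\MQCSP$ oracle to distinguish truth tables with polynomial quantum circuit complexity from those with complexity $2^{\Omega(m)}$, and that by Theorem~\ref{thm:b2u} a $\UMCSP$ oracle solves the gap problem $\MQCSP[s/2-1,s]$, which suffices. Your write-up simply spells out the binary search, the threshold choice, and the precision bookkeeping that the paper leaves implicit.
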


\begin{corollary}
\label{cor:umcsp_io}
If there exists a quantum-secure $\iO$, then $\UMCSP\in \class{BQP}$ implies $\class{NP} \subseteq \class{coRQP}$. 
\end{corollary}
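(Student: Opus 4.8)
The plan is to derive this corollary by combining the reduction $\textsf{gap-}\MQCSP \le \UMCSP$ from Theorem~\ref{thm:b2u} (equivalently Algorithm~\ref{alg:B2U}) with the argument already carried out in Theorem~\ref{thm:qioimplication}. The key observation is that the proof of Theorem~\ref{thm:qioimplication} never uses $\MQCSP\in\class{BQP}$ directly: it only uses the consequence, established in Theorem~\ref{thm:mqcsp_qowf} and restated as Corollary~\ref{cor:umcsp_owf}, that no efficiently computable function is a $\class{qOWF}$, and in particular that the function $f_C(r):=\iO(C,r)$ can be inverted on a non-negligible fraction of inputs by a quantum polynomial-time algorithm $\A_{inv}$ (whose only super-polynomial-looking component, an $\MQCSP$ oracle, collapses to a genuine $\class{BQP}$ subroutine once we route it through $\UMCSP$). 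So the first step is to spell out that $\UMCSP\in\class{BQP}$ yields such an $\A_{inv}$: feed the local $\class{qPRG}$-to-truth-table construction from the proof of Theorem~\ref{thm:mqcsp_qowf} into the $\textsf{gap-}\MQCSP$ oracle, and then replace each $\textsf{gap-}\MQCSP$ call by the composition of Algorithm~\ref{alg:B2U} with the assumed $\class{BQP}$ algorithm for $\UMCSP_{\alpha,\beta}$. One must check that the size parameter $s=2^m/((c+1)m)$ with $m=O(\log n)$ and the polynomial multiplicative gap used there survive the encoding $f\mapsto U_f$: by Lemma~\ref{lemma:B2U} the gap degrades only mildly (roughly a factor of $2$ in the size and $2\epsilon$ versus $\epsilon$ in the fidelity), which is harmless since we are distinguishing circuit complexity $\widetilde O(n^b)$ from $\widetilde O(n^d)$ with $d\gg b$.

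Given $\A_{inv}$, the second step is a verbatim transcription of the proof of Theorem~\ref{thm:qioimplication}: on input a circuit $C$ of size $s$, compute the canonical unsatisfiable circuit $\bot_s$, sample $\hat C=\iO(C,r)$, run $r'=\A_{inv}(\bot_s,\hat C)$, and reject iff $\hat C=\iO(\bot_s,r')$ (Algorithm~\ref{alg:circuit_sat}). If $C\in\textsf{SAT}$ then $C\not\equiv\bot_s$, so by the functionality of $\iO$ no preimage $r'$ can make $\iO(\bot_s,r')=\iO(C,r)$, and the test always accepts, giving perfect completeness in the $\class{coRQP}$ sense. If $C\in\textsf{UNSAT}$ then $C\equiv\bot_s$; when $C=\bot_s$ the inverter recovers a valid $r'$ with non-negligible probability, and when $C\ne\bot_s$ the indistinguishability of $\iO(C,\cdot)$ and $\iO(\bot_s,\cdot)$ against the quantum polynomial-time algorithm $\A_{inv}$ (legitimate precisely because $\UMCSP\in\class{BQP}$ makes $\A_{inv}$ polynomial time) forces $\A_{inv}$ to still output a good $r'$ with all but negligibly smaller probability; either way we reject with noticeable probability. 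Amplifying by independent repetitions pushes the rejection probability above $1/2$ while preserving perfect completeness, so $\textsf{Circuit-SAT}\in\class{coRQP}$ and hence $\class{NP}\subseteq\class{coRQP}$.

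The only real subtlety, and the step I would be most careful about, is the parameter bookkeeping in the first step: one must confirm that the $\UMCSP$ thresholds produced by Algorithm~\ref{alg:B2U} (namely $\UMCSP_{1-\epsilon,\beta}$ and $\UMCSP_{1-2\epsilon,\beta'}$ with $\beta,\beta'$ a polynomial below $1-\epsilon$, resp.\ $1-2\epsilon$) lie in the regime $\beta\le 1-\poly(1/2^n)$, $\alpha-\beta\ge\poly(1/2^n)$ where our algorithmic handle on $\UMCSP$ is valid, and that the fidelity loss from encoding a Boolean function as $U_f$ does not swamp the complexity gap that the $\class{qPRG}$ argument relies on. Everything else is a direct rerouting of the proofs of Theorem~\ref{thm:mqcsp_qowf} and Theorem~\ref{thm:qioimplication} through $\UMCSP$ in place of $\MQCSP$.
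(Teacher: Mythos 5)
Your proposal is correct and follows essentially the same route as the paper, which justifies this corollary by noting that the $\iO$ argument only ever uses the $\MQCSP$ oracle as a distinguisher between truth tables whose quantum circuit complexities differ by more than the factor-of-two gap preserved by the reduction of Theorem~\ref{thm:b2u}, so a $\class{BQP}$ algorithm for $\UMCSP$ can stand in for the $\MQCSP$ oracle throughout Theorems~\ref{thm:mqcsp_qowf} and~\ref{thm:qioimplication}. Your version is simply a more explicit unwinding of that chain, including the parameter bookkeeping the paper leaves implicit.
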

\begin{corollary}\label{cor:umcsp_amp}
Assume $\class{UMCSP}\in \class{BQP}$. Then, there exists a $\class{BQP}$ algorithm that, given the truth-table of an $n$-variable Boolean function of quantum circuit complexity $2^{\Omega(n)}$, output $2^{\Omega(n)}$ Boolean functions on $m=\Omega(n)$ variables each, such that all of the output functions have quantum circuit complexity greater than $\frac{2^m}{(c+2)m}$ for any $c>0$.
\end{corollary}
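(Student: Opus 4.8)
The plan is to obtain this as a corollary of the hardness amplification theorem for $\MQCSP$ (Theorem~\ref{thm:mqcsp_hardness_amp}) via the reduction from a two-sided gap version of $\MQCSP$ to $\UMCSP$ (Theorem~\ref{thm:b2u}, Algorithm~\ref{alg:B2U}, and Lemma~\ref{lemma:B2U}). The key observation is that the proof of Theorem~\ref{thm:mqcsp_hardness_amp} never uses an \emph{exact} $\MQCSP$ algorithm: it only needs, for each output arity $m$, a $2^{O(m)}$-size quantum circuit $\mathcal{D}_m$ that (i) accepts a $1-o(1)$ fraction of uniformly random $m$-variable truth tables and (ii) whenever it accepts $\T(f)$, certifies that $f$ has quantum circuit complexity above a fixed threshold; with such a $\mathcal{D}_m$ in hand the argument feeds the quantum Impagliazzo--Wigderson generator $G$ of Lemma~\ref{lem:quantum_IW_prg} (seeded by the given $n$-variable hard function) through $\mathcal{D}_m$ and keeps the accepted outputs. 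So it suffices to build such a $\mathcal{D}_m$ from a $\UMCSP$ oracle.

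Assuming $\UMCSP\in\class{BQP}$, Algorithm~\ref{alg:B2U} together with Lemma~\ref{lemma:B2U} (for a single-output function) gives a $\class{BQP}$ procedure that on input $\T(f)$ constructs the unitary $U_f$ and computes $CC(U_f,2\epsilon)$, which by Lemma~\ref{lemma:B2U} satisfies $CC(f,\epsilon)\ge\frac{CC(U_f,2\epsilon)}{2}-O(1)$; moreover $CC(U_f,\delta)\ge CC(f,\delta)$ for every precision $\delta$, since measuring the output of any circuit for $U_f$ yields a circuit for $f$. Fix the constant $c>0$ from the statement and an auxiliary constant $c'\in(0,c/2)$, e.g.\ $c'=c/3$, and set $s_0(m):=\frac{2^m}{(c'+1)m}$; then $\frac{s_0(m)}{2}-O(1)>\frac{2^m}{(c+2)m}$ for all large $m$ because the difference of the two fractions is $\Omega(2^m/m)$. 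Define $\mathcal{D}_m$ to accept $\T(f)$ iff $CC(U_f,2\epsilon)>s_0(m)$, which is decided by a single query to the $\UMCSP$ oracle (with precision parameters in the regime where $\UMCSP\in\class{BQP}$ is assumed, cf.\ Theorem~\ref{thm:umcspqcma}). By the counting argument (Claim~\ref{claim:gatecount}) a uniformly random $\T(f)$ has $CC(f,2\epsilon)>s_0(m)$ with probability $1-o(1)$, hence $CC(U_f,2\epsilon)\ge CC(f,2\epsilon)>s_0(m)$ and $\mathcal{D}_m$ accepts it; conversely, if $\mathcal{D}_m$ accepts $\T(f)$ then $CC(f,\epsilon)\ge\frac{CC(U_f,2\epsilon)}{2}-O(1)>\frac{s_0(m)}{2}-O(1)>\frac{2^m}{(c+2)m}$.

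With $\mathcal{D}_m$ so defined, the remaining steps are exactly as in Theorem~\ref{thm:mqcsp_hardness_amp}: constructing $U_f$ costs $2^{O(m)}$ and the $\UMCSP$ $\class{BQP}$-algorithm runs in $\poly(2^m)$ time, so $\mathcal{D}_m$ has size $2^{O(m)}$; since the generator $G$ of Lemma~\ref{lem:quantum_IW_prg} fools quantum circuits of size $2^{pm}$ for every constant $p$, it fools $\mathcal{D}_m$, so $\mathcal{D}_m$ accepts a $1-o(1)$ fraction of the $2^{\Omega(m)}$ truth tables output by $G$; and by the soundness of $\mathcal{D}_m$ every accepted output is a Boolean function on $m=\Omega(n)$ variables of quantum circuit complexity greater than $\frac{2^m}{(c+2)m}$. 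Running $G$ and then filtering its outputs with $\mathcal{D}_m$ is a $\class{BQP}$ algorithm, which establishes the corollary.

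I expect the only real obstacle to be bookkeeping rather than anything conceptual: the factor $2$ slack between $CC(f)$ and $CC(U_f)$ in Lemma~\ref{lemma:B2U} is precisely what forces the denominator to degrade from $c+1$ in Theorem~\ref{thm:mqcsp_hardness_amp} to $c+2$, and one should check along the way that (a) the precisions $\epsilon$ and $2\epsilon$ required by Lemma~\ref{lemma:B2U} lie inside the parameter window (e.g.\ $\alpha>1-2^{-2n-20}(1-\beta)^4$ and $1-\beta\ge\poly(1/2^n)$, cf.\ Theorem~\ref{thm:umcspqcma}) for which $\UMCSP\in\class{BQP}$ is taken to hold; (b) $U_f$ is a legal promise instance of $\UMCSP_{\alpha,\beta}$ at the size parameter we query whenever we rely on the oracle's answer, exactly as in the proof of Theorem~\ref{thm:b2u}; and (c) the $2^{O(m)}$ (rather than $\poly(m)$) size of $\mathcal{D}_m$ stays within the fooling range of $G$, which it does.
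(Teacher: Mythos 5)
Your proposal is correct and follows essentially the same route as the paper, which justifies this corollary in one line by noting that the proof of Theorem~\ref{thm:mqcsp_hardness_amp} only uses $\MQCSP$ as a distinguisher between low- and high-complexity truth tables, and that a $\UMCSP$ oracle yields such a distinguisher via Theorem~\ref{thm:b2u} (solving $\MQCSP[s/2-1,s]$), with the factor-$2$ slack of Lemma~\ref{lemma:B2U} accounting for the degradation from $(c+1)m$ to $(c+2)m$ in the denominator. Your write-up simply makes explicit the bookkeeping (soundness/completeness of $\mathcal{D}_m$, the size of $\mathcal{D}_m$ versus the fooling range of the generator) that the paper leaves implicit.
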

Corollary~\ref{cor:umcsp_owf}, Corollary~\ref{cor:umcsp_io} and Corollary~\ref{cor:umcsp_amp} hold since we use the $\MQCSP$ oracle as a distinguisher to distinguish functions whose sizes have a large gap, i.e., functions with quantum circuit complexity $\poly(n)$ from functions with quantum circuit complexity $2^{\Omega(n)}$. As the $\UMCSP$ oracle can solve $\class{MQCSP}[\frac{s}{2}-1,s]$, the existence of efficient algorithms for $\UMCSP$ also implies the same results.  

\begin{corollary}\label{cor:umcsp_bqe}
If $\class{UMCSP}\in \class{BQP}$, then $\class{BQE}\not\subset\class{BQC}[n^k]$ for all constant $k\in\mathbb{N}$.
\end{corollary}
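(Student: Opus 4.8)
The plan is to reduce the statement to Theorem~\ref{thm:ckt lb from MQCSP in BQP fixed} by noting that the circuit lower bound proof there uses an $\MQCSP$ oracle only as a \emph{distinguisher} between functions of polynomial and exponential quantum circuit complexity, and that a $\UMCSP$ oracle already implements such a distinguisher through the reduction of Theorem~\ref{thm:b2u}. Concretely, assume $\UMCSP\in\class{BQP}$. Given the truth table $\T(h)$ of a Boolean function $h:\{0,1\}^m\to\{0,1\}$, Algorithm~\ref{alg:B2U} constructs $U_h$ in time $2^{O(m)}$, and binary search over the size parameter of the $\UMCSP$ oracle computes $CC(U_h,\delta)$; by Lemma~\ref{lemma:B2U} this sandwiches $CC(h,\delta)$ between $CC(U_h,2\delta)/2-1$ and $CC(U_h,\delta)$. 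Hence there is a uniform $\class{BQP}$ circuit family $\{C^{\UMCSP}_m\}$ that, on input $\T(h)$ and a threshold $\sigma$ (and precision $\delta$), accepts when $CC(h,\delta)\le\sigma$ and rejects when $CC(h,2\delta)>2\sigma+2$; that is, it solves the gap problem $\MQCSP[\sigma,2\sigma+2,0]$ in quantum polynomial time in the input length $2^m$.

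Next I would run the win-win argument from the proof of Theorem~\ref{thm:ckt lb from MQCSP in BQP fixed} with the exact $\MQCSP$ oracle replaced by this gap algorithm. Case~1 ($\class{PSPACE}\subseteq\class{BQSUBEXP}$) is unconditional and unchanged. In Case~2, the distinguishing circuit $C^{\MQCSP}_m$ used there only needs to separate truth tables of quantum circuit complexity at most $d^k$ (the hypothetically small circuits $\textsf{fnc}(G_n(w))$) from truth tables of random functions, which by Claim~\ref{claim:gatecount} have complexity at least $\frac{2^m}{(c+1)m}$ with probability $1-o(1)$. Taking $m=d\log n$ with $d$ a sufficiently large constant, the quantities $2d^k+2$ and $\frac{2^m}{(c+1)m}=n^d/\Theta(d\log n)$ differ by a super-polynomial factor, so the factor-two gap (together with a negligible precision $\delta=2^{-m}$, well within the precision budget allowed for $\UMCSP$ by Remark~\ref{rmk:precision}) is far more than enough for $C^{\UMCSP}_m$ to play the role of the distinguisher. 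The remainder of the argument — contradicting the pseudorandomness of $G_n$, then diagonalizing to obtain the $\class{BQE}$-language $L^{\cal G}\notin\class{BQC}[n^k]$ — carries over verbatim, giving $\class{BQE}\not\subset\class{BQC}[n^k]$ for every constant $k$. (Equivalently, one can phrase the whole step as: $\UMCSP\in\class{BQP}$ implies the gap version of $\MQCSP$ is in $\class{BQP}$, and then invoke the gap variant of Theorem~\ref{thm:ckt lb from MQCSP in BQP fixed} as a black box, exactly as the discussion following Corollary~\ref{cor:umcsp_amp} already does.)

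The main obstacle I anticipate is bookkeeping the approximation parameters: Theorem~\ref{thm:ckt lb from MQCSP in BQP fixed} is stated for the exact $\MQCSP$, while Lemma~\ref{lemma:B2U} only yields a multiplicative factor-two sandwich and ties the size estimate to the precision $\delta$ of $CC(\cdot,\delta)$. One must check (i) that the PRG-based win-win argument is robust to relaxing ``circuit complexity $\le d^k$'' to a multiplicatively weakened threshold — which it is, since the two regimes being separated are exponentially far apart for $d\gg k$ — and (ii) that choosing $\delta$ negligibly small does not move any truth table from one side of the gap to the other. Once these two points are verified, the corollary is immediate.
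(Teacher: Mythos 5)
Your proposal is correct and follows essentially the same route as the paper: both use the reduction of Theorem~\ref{thm:b2u} to conclude that $\UMCSP\in\class{BQP}$ puts a constant-gap version of $\MQCSP$ in $\class{BQP}$, and both then observe that the win-win/PRG argument of Theorem~\ref{thm:ckt lb from MQCSP in BQP fixed} only ever uses the $\MQCSP$ oracle as a distinguisher between polynomial and near-maximal quantum circuit complexity, for which the gap algorithm (equivalently, a promise $\class{BQP}$-natural property) suffices. Your parameter bookkeeping is in fact more explicit than the paper's one-paragraph justification.
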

Corollary~\ref{cor:umcsp_bqe} holds because for the gap version of $\class{MQCSP}$ with a constant gap, it gives a \emph{promise} $\class{BQP}$-natural property, which is defined in \cite{agg20}. Suppose we have an efficient quantum algorithm for solving $\class{MQCSP}[2^{\epsilon n}/2-1, 2^{\epsilon n}]$ for small constant $\epsilon$, then it will reject any function with quantum circuit complexity less than $2^{\epsilon n}/2$ and will accept another large subset of functions with quantum circuit complexity larger than $2^{\epsilon n}$. Then, we can use the technique in \cite{agg20} to construct the hard language ${\cal L}$ from the quantum $\mathsf{PRG}$ (Theorem~\ref{thm:bqp_PRG}) and promise quantum natural property. The remaining proof of Theorem~\ref{thm:ckt lb from MQCSP in BQP fixed} will work after this adaptation.

\subsubsection{Pseudorandom states} An efficient algorithm for $\SMCSP$ gives an efficient distinguisher for separating states with large circuit complexity from states with small circuit complexity given many copies of the state. Obviously, this gives us a way to distinguish random states from states that are generated from some efficient process.    

\begin{definition}[Pseudorandom states (PRS) (\cite{JLF18})]
Let $\kappa$ be the security parameter. Let $K$ be the key space and $\mathcal{H}$ be the state space both parameterized by $\kappa$. A family of quantum states $\{\ket{\psi_k}\}_{k\in K}\subset \mathcal{H}$ is pseudorandom if the following properties hold. 
\begin{enumerate}
    \item \textbf{Efficiency:} There is a quantum polynomial-time algorithm G that given $k\in K$, can generate $\ket{\psi_k}$. 
    \item \textbf{Indistinguishability:} For all quantum polynomial-time algorithm $\A$ and any $m=\poly(\kappa)$
    \begin{align*}
        |\Pr_k[\A(\ket{\psi_k})=1] - \Pr_{\ket{\psi}\leftarrow \mu}[\A(\ket{\psi})=1]|\leq \negl(\kappa), 
    \end{align*}
    where $\mu$ is the Haar measure on $\mathcal{H}$. 
\end{enumerate}
\end{definition}

\begin{theorem}\label{thm:smcsp_no qOWF}
If $\SMCSP\in \class{BQP}$, then there is no $\class{PRS}$ and $\class{qOWF}$. 
\end{theorem}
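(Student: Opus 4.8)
The plan is to show that an efficient (quantum) algorithm for $\SMCSP$ can be used as a distinguisher that breaks any candidate pseudorandom state family, which by the result of Ji, Liu, and Song~\cite{JLF18} (that $\class{PRS}$ can be built from $\class{qOWF}$) also rules out quantum-secure one-way functions. The key observation is a counting/dimension argument: a Haar-random $n$-qubit state $\ket{\psi}$ has, with overwhelming probability, quantum circuit complexity far larger than any fixed polynomial in $n$, whereas a pseudorandom state $\ket{\psi_k}$ produced by the generator $G$ on a $\poly(\kappa)$-bit key has circuit complexity at most $\poly(\kappa)$ by construction (Efficiency). So the two cases are separated by a circuit-size threshold, and a $\SMCSP$ oracle with an appropriately chosen size parameter $s$ distinguishes them.

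First I would fix the security parameter $\kappa$ and set $n = n(\kappa)$ to be the number of qubits of the state family, which we may assume is $\geq \kappa$ (or at least $\omega(\log\kappa)$; if $n$ is too small the state space is polynomial-dimensional and the family cannot be pseudorandom against algorithms that do tomography, so this case is handled separately or excluded by definition). Next I would invoke a counting argument analogous to Claim~\ref{claim:gatecount}/Remark~\ref{remark:complexity_upperbound}: the number of quantum circuits on $n$ qubits of size at most $s$ (over a fixed $O(1)$-size gate set, with up to $s$ ancillae) is at most $2^{O(s\log s)}$, so the set of $n$-qubit states that can be $\epsilon$-approximated by such circuits forms a set that, under the Haar measure, has measure at most (roughly) $2^{O(s\log s)}\cdot\epsilon^{2^{n}}$ by a standard $\epsilon$-net bound on the unit sphere in $\mathbb{C}^{2^n}$. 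Choosing $s$ to be a sufficiently large polynomial in $\kappa$ (say $s = \kappa^{d}$ for $d$ larger than the exponent of the generator's running time) makes this measure negligible, so a Haar-random state has circuit complexity $> s$ except with negligible probability — i.e., it is a ``No'' instance of $\SMCSP_{\alpha,\beta}$ with parameter $s$ — while $\ket{\psi_k}$ is always a ``Yes'' instance since $G$ runs in time $\poly(\kappa) \leq s$.

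Then I would assemble the distinguisher: given copies of an unknown state (either $\ket{\psi_k}$ for random $k$, or Haar-random $\ket{\psi}$), run the assumed $\class{BQP}$ algorithm for $\SMCSP$ with size parameter $s = \kappa^d$, ancilla parameter $t = s$, and thresholds $\alpha,\beta$ with the required $\poly$-gap, using $\poly(\kappa)$ copies; output $1$ iff it reports ``Yes''. By the above, this accepts $\ket{\psi_k}$ with probability $\geq 2/3$ and accepts Haar-random states with probability $\leq 1/3 + \negl(\kappa)$, contradicting the Indistinguishability property of $\class{PRS}$. Hence no $\class{PRS}$ exists, and since $\class{qOWF}$ implies $\class{PRS}$ by~\cite{JLF18}, no $\class{qOWF}$ exists either. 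The main obstacle I expect is making the Haar-measure/$\epsilon$-net estimate fully rigorous — in particular choosing $\epsilon$ and $s$ consistently so that the net bound $2^{O(s\log s)}\epsilon^{2^n}$ is negligible while $\epsilon$ is still compatible with the $\SMCSP$ gap $\alpha - \beta$ and with the precision needed to call a circuit a valid ``Yes'' witness; this requires $n = \omega(\log\kappa)$ so that $2^n$ dominates $s\log s$, which is why the small-$n$ regime must be dispatched separately (and indeed for $n = O(\log\kappa)$ one can do full tomography in $\poly(\kappa)$ time and break pseudorandomness directly).
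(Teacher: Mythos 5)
Your proposal is correct and follows essentially the same route as the paper: use the $\SMCSP$ algorithm with size parameter set to the generator's complexity as a distinguisher between $\ket{\psi_k}$ and a Haar-random state, then conclude no $\class{qOWF}$ via the contrapositive of the construction in~\cite{JLF18}. The only difference is that you carefully justify the "Haar-random states have superpolynomial complexity" step with an $\epsilon$-net/counting bound (and dispatch the $n=O(\log\kappa)$ regime), whereas the paper simply asserts this fact.
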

\begin{proof}

Let $\ket{\psi}$ be the state and $\A$ be the algorithm to distinguish whether $\ket{\psi}$ is a truely random state or from a particular efficient algorithm. In the definition of $\class{PRS}$, $\A$ knows the algorithm for constructing the $\class{PRS}$ (but it does not know the key.) Therefore, $\A$ also knows the circuit complexity $s$ for generating the $\class{PRS}$ $\ket{\psi}$. Suppose $\ket{\psi}$ is an $n$-qubit $\class{PRS}$ generated by a quantum circuit with size $s$, by solving $\SMCSP$ with size parameter $s$ and $\poly(s)$ copies of $\ket{\psi}$, the adversary can distinguish $\ket{\psi}$ from a Haar random state with high probability since a Haar random state has complexity exponential in $n$. 

Finally, by~\cite{JLF18}, there exist $\class{PRS}$ assuming the existence of $\class{qOWF}$. Since we can break any $\class{PRS}$ scheme by solving $\SMCSP$, we can also invert any $\class{qOWF}$ by solving $\SMCSP$. 

\end{proof}

\subsubsection{Estimating the wormhole volume} 

Integrating general relativity and quantum mechanics into a comprehensive theorem for quantum gravity is one of the most challenging physics problems. The AdS/CFT correspondence plays an important role in this line of research. The AdS/CFT correspondence conjectures the duality between the Anti-de Sitter space (i.e., the bulk) and a conformal field theory (i.e., the boundary). In particular, it conjectures the dictionary maps from wormholes and operators in the bulk to quantum states and operators on the boundary. One fascinating puzzle in Ads/CFT correspondence is about the volume of the wormhole. The volume of the wormhole grows steadily with time; what is the quantity of the corresponding quantum state on the boundary that has this feature? Susskind proposed the \textit{Complexity=Volume Conjecture}~\cite{susskind2014}. It states that the wormhole volume equals the quantum circuit complexity of the corresponding quantum state times some constant $c$. In the following, we give a brief description of the Complexity=Volume Conjecture and related backgrounds. One can see~\cite{susskind2014,BFV19} for detailed discussions.

\paragraph{AdS/CFT Correspondence} AdS/CFT correspondence conjectures a dual map $\Phi$ between wormholes (AdS side) and quantum systems (CFT side). The setting we consider here is wormholes with two-sided blackholes. Under this setting, the CFT side is divided into left and right systems denoted by Hamiltonians $H_L$ and $H_R$, where the left and right CFT systems are on $n$ qubits (compatible with the entropy of the wormhole $2^n$). We denote the whole system (with both left and right systems) as $H = H_L + H_R$. An early model of AdS/CFT goes under the ER=EPR slogan: the wormhole (Einstein-Rosen Bridge) is dual to maximally entangled (EPR) pair. The corresponding state is usually called the thermal field double (TFD) state $\ket{TFD}$~\cite{Maldacena_2013_ER_EPR} 
\begin{align}
\ket{TFD} = \frac{1}{\sqrt{2^n}}\sum_{i} e^{-E_i/\beta} \ket{i}_L\ket{i}_R,
\end{align}
where $\ket{i}_L$ and $\ket{i}_R$ are energy eigenstates of $H_L$ and $H_R$.

The quantum state after time-$t$ evolution is  
\begin{align*}
    \ket{TFD(t)} = e^{-iHt}\ket{TFD}.
\end{align*}
Recall the dual map $\Phi$ between a wormhole (AdS side) and a quantum system (CFT side), one can represent the wormhole after time $t$ as $\Phi(e^{-iHt}\ket{TFD})$ (and view $\Phi(\ket{TFD})$ as the wormhole at time $0$).

The statement of Complexity=Volume Conjecture can be stated as follows:
\begin{conjecture}[Complexity=Volume Conjecture~\cite{susskind2014}]
Consider a wormhole and its corresponding CFT system $H$, for some suitable $\epsilon$, $c$, and $0\leq t\leq O(2^n)$, 
\begin{align*}
CC_{\epsilon}(\ket{TFD}, \ket{TFD(t)}) = c\cdot Volume(\Phi(e^{-iHt}\ket{TFD})),
\end{align*}
where $CC_{\epsilon}(\ket{TFD}, \ket{TFD(t)})$ is the circuit complexity for constructing $\ket{TFD(t)}$ from $\ket{TFD}$ with at most $\epsilon$ error. 
\end{conjecture}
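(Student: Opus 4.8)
Since the Complexity$=$Volume relation is a conjecture imported from the physics literature rather than a mathematical statement that this paper establishes, there is no proof to give in the usual sense; the plan is instead to record why it is a plausible hypothesis and to pin down the exact form in which it will be used downstream. First I would recall the two sides of the claimed equality and their qualitative time dependence. On the gravity side, the volume of the maximal spatial slice through the two-sided eternal black hole (the Einstein--Rosen bridge) grows essentially linearly in $t$ for an exponentially long time, of order $2^n$, and then saturates. On the CFT side, the circuit complexity $CC_\epsilon(\ket{TFD},\ket{TFD(t)})$ of the time-evolved thermofield-double state is expected to follow the same profile: linear growth until it reaches the maximum complexity of order $2^n$ attainable on $n$ qubits, followed by a long plateau. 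The heuristic ``argument'' is then the matching of these two monotone quantities --- same slope after absorbing the dimension-dependent constant $c$, same saturation value of order $2^n$, and same saturation time $\Theta(2^n)$ --- together with the general holographic dictionary $\Phi$, which already identifies the $t=0$ endpoints (the short wormhole with the easily prepared state $\ket{TFD}$).

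Next I would isolate the regime in which the statement is actually invoked: the conjecture is stated only for $0\le t\le O(2^n)$ and for a fixed ``suitable'' error tolerance $\epsilon$ and constant $c$ --- precisely the window in which both sides are in their linear-growth phase, which is the least speculative part of the picture. This is the form that feeds into Theorem~\ref{thm:smcsp_wormwhole}: conditioned on this relation, on the additional complexity-theoretic assumptions used there, and on the dictionary $\Phi$ being efficiently computable, an $\SMCSP$ algorithm can estimate $CC_\epsilon(\ket{TFD},\ket{TFD(t)})$ and hence the wormhole volume.

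The main obstacle --- and the reason this cannot be upgraded to a theorem --- is twofold. A rigorous derivation would require (i) a mathematically precise formulation of AdS/CFT and of the map $\Phi$, which is not currently available, and (ii) a genuine linear lower bound on $CC_\epsilon(\ket{TFD},\ket{TFD(t)})$, ruling out any unforeseen shortcut for preparing $\ket{TFD(t)}$ --- a circuit-complexity lower bound of a strength we cannot prove unconditionally. Accordingly, the right move here is not to attempt a proof but to state the conjecture cleanly, flag the $0\le t\le O(2^n)$ regime of interest, and carry it forward as an explicit hypothesis; I would not claim anything stronger.
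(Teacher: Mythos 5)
You are right that this statement is a conjecture imported from the physics literature: the paper likewise states it without proof and only uses it as an explicit hypothesis (together with the efficiency of the dictionary map and $\SMCSP\in\class{BQP}$) in Theorem~\ref{thm:smcsp_wormwhole}. Your reading of the regime $0\le t\le O(2^n)$ and of the role the conjecture plays downstream matches the paper's treatment, so there is nothing further to supply here.
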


The $\SMCSP$ oracle gives a way to identify the quantum circuit complexity of the given state. This implies that if the dictionary map between the wormhole and the quantum state is efficient, one can estimate the wormhole volume in two ways. 1) Apply the dictionary map to transfer the wormhole to the corresponding state and then apply the $\SMCSP$ oracle for the circuit complexity, which gives the wormhole volume. 2) As it is hard to imagine mapping wormholes to states, one can view the $\SMCSP$ oracle as a POVM and then uses the dictionary map to transfer the POVM to the corresponding operators in the bulk to measure the volume. This gives the following lemma. 
\begin{theorem}\label{thm:smcsp_wormwhole}
Assuming the Volume=Complexity Conjecture, if the dictionary map can be computed in quantum polynomial time and $\SMCSP\in \class{BQP}$, then one can estimate the wormhole volume in quantum polynomial time when the volume is at most polynomially large.  
\end{theorem}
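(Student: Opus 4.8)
The plan is to reduce estimating the wormhole volume to estimating the circuit complexity of the corresponding boundary state, and then to compute that complexity with the $\SMCSP$ oracle by binary search. By the Complexity$=$Volume Conjecture, a wormhole at time $t$ satisfies $Volume(\Phi(e^{-iHt}\ket{TFD})) = \frac1c\, CC_\epsilon(\ket{TFD},\ket{TFD(t)})$ for the appropriate constant $c$ and error $\epsilon$, where $\ket{TFD(t)}=e^{-iHt}\ket{TFD}$. Since the dictionary map is assumed to be computable in quantum polynomial time, I would first use it to obtain polynomially many copies of the boundary states $\ket{TFD}$ and $\ket{TFD(t)}$ from a description of the wormhole; it then remains to estimate $CC_\epsilon(\ket{TFD},\ket{TFD(t)})$ given copies of these two states, and finally to multiply by $c$.

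To estimate this circuit complexity I would binary-search over the size parameter using the $\SMCSP$ procedure. Strictly, Definition~\ref{def:smcsp} asks for a circuit preparing a target state from $\ket{0^{n+t}}$, whereas the conjecture uses the complexity of a circuit taking $\ket{TFD}$ to $\ket{TFD(t)}$; I would therefore use the obvious ``relative'' variant of $\SMCSP$ in which the candidate circuit acts on a copy of the fixed input $\ket{TFD}$ and the verifier applies the swap test between the circuit output and a fresh copy of $\ket{TFD(t)}$. The $\class{QCMA}$ protocol of Theorem~\ref{thm:SQCMA_UMCSP} carries over verbatim, and under the hypothesis $\SMCSP\in\class{BQP}$ — read as: this swap-test-verifiable promise problem has an efficient quantum decider — one gets a $\poly(n)$-time decision procedure for each candidate size $s$. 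Because the volume is at most polynomially large, the conjecture forces $CC_\epsilon \le c\cdot\poly(n)=\poly(n)$, so $s$ ranges over $\{0,1,\dots,\poly(n)\}$ and the binary search uses only $O(\log n)$ oracle calls, each consuming $\poly(s)=\poly(n)$ copies of the states and running in $\poly(n)$ time. This is the first of the two routes described after the theorem statement; the second is dual, observing that the entire $\SMCSP$-based procedure is a bounded-error quantum measurement on copies of $\ket{TFD(t)}$, so one can conjugate it by $\Phi$ to get a POVM acting directly on the bulk geometry that reads off an estimate of the volume — formalizing this amounts to pushing the measurement through the efficiently computable dictionary map.

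A few technical points need care but are routine. The error $\epsilon$ in the conjecture must be matched to the gap $\alpha-\beta = 1/\poly$ tolerated by $\SMCSP$; I would pick the $\SMCSP$ thresholds $\alpha,\beta$ so that ``$\epsilon$-close in fidelity'' lies strictly between them, which is possible whenever the conjecture's $\epsilon$ is at least inverse-polynomial. For the relative-versus-absolute complexity issue, if one prefers to invoke $\SMCSP$ exactly as in Definition~\ref{def:smcsp}, it suffices that $\ket{TFD}$ is itself preparable by a $\poly(n)$-size circuit, in which case the two notions differ by at most $\poly(n)$, again within the target additive accuracy. All uses of copies are benign since the swap test is the only primitive applied to the physical states and only $\poly(n)$ copies are ever needed.

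The genuinely hard part is not in the reduction but in the hypotheses it rests on. The Complexity$=$Volume Conjecture is itself open, and — more pointedly — the assumption that the holographic dictionary map $\Phi$ (equivalently $\Phi^{-1}$) is computable in quantum polynomial time is extremely strong and is widely expected to fail in general; indeed the apparent intractability of the dictionary is part of why complexity enters holography at all. Thus the statement is best read as a conditional: if the bulk/boundary correspondence behaves as conjectured and the dictionary is efficient, then algorithms for $\SMCSP$ suffice to compute polynomially bounded wormhole volumes, and the main obstacle to an unconditional version is exactly the status of these physics assumptions rather than anything in the $\SMCSP$ machinery.
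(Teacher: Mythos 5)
Your proposal is correct and follows essentially the same route as the paper: apply the efficient dictionary map to obtain the boundary states, binary-search the circuit complexity with the $\SMCSP$ procedure (modified to allow $\ket{TFD}$ as the initial state), and use the polynomial volume bound to keep the search range and hence the running time polynomial, with the POVM conjugation as the dual view. You have in fact spelled out the error-matching and relative-vs-absolute complexity details more explicitly than the paper does.
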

Here, we require the volume is at most polynomially large. This follows from the fact that we need a upper bound polynomial in $n$ for doing binary search to find the circuit complexity with an efficient $\SMCSP$ algorithm. If the upper bound is $2^{O(n)}$, the running time of the $\SMCSP$ algorithm can be $\poly(2^n)$. Therefore a quantum polynomial-time algorithm for $\SMCSP$ in this case would not imply a quantum polynomial-time algorithm for estimating the wormhole's volume. 

Besides, recall that the wormhole is initially described by $\ket{TFD}$. So, we also need to modify the definition of $\SMCSP$ to allow such an initial state.


Bouland et al. in~\cite{BFV19} used this correspondence in a reverse way. In particular, they showed that if the dictionary map and simulating the state in the bulk are efficient (i.e., the quantum Extended Church-Turing thesis holds for quantum gravity), then one can efficiently distinguish certain $\class{PRS}$ from Haar random state by mapping the state to the wormhole in the bulk and do the simulation in the bulk to estimate the volume. Following this idea, we can also conclude that if there is a quantum polynomial time algorithm for estimating the wormhole's volume, then one can compute the circuit complexity of the corresponding quantum state efficiently assuming the  the Volume=Complexity Conjecture and that the dictionary map is efficient\footnote{Note that this does not give an efficient algorithm for solving $\SMCSP$ in general since it can only solve $\SMCSP$ for CFT states.}.




\subsubsection{Succinct state tomography} In the following, we show that solving $\SMCSP$ can help to have a succinct answer to state tomography for states which are generated from a polynomial-size circuit without any measurement.   
\begin{definition}[Succinct state tomography]
\label{def:sst}
Let $\ket{\psi}$ be an $n$-qubit quantum state that is generated from a quantum circuit $\Circ$ of size $s$ without using measurement and ancilla qubits. Given $\poly(n)$ copies of $\ket{\psi}$ and an upper bound $s' $ where $s\leq s'\leq \poly(n)$, the problem is to output a succinct description (e.g., $\Circ$) of $\ket{\psi}$.  
\end{definition}
\begin{theorem}\label{thm:smcsp_succinct tomo}
Succinct state tomography in Def.~\ref{def:sst} reduces to $\SMCSP$.
\end{theorem}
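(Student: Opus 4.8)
The plan is to derive Theorem~\ref{thm:smcsp_succinct tomo} as an immediate corollary of the search-to-decision reduction for $\SMCSP$ established in Theorem~\ref{thm:s2d_smcsp}. Concretely, suppose $\SMCSP\in\class{BQP}$, so that $\SMCSP_{\alpha,\beta}$ is solvable in time $T(n)=\poly(n)$ in the parameter regime $\alpha>1-2^{-c_1 n}$, $\alpha-\beta\geq 2^{-c_2 n}$. Given an instance of succinct state tomography --- namely $\poly(n)$ copies of an $n$-qubit state $\ket{\psi}=\Circ\ket{0^n}$, where $\Circ$ has size $s$ and uses neither ancilla qubits nor measurements, together with an upper bound $s'$ with $s\leq s'\leq\poly(n)$ --- I would first observe that this is a valid instance of $\class{SearchSMCSP}_{\epsilon,s'}$ with $t=0$ for every $\epsilon\in(0,1)$: the circuit $\Circ$ itself has size $s\leq s'$, uses no ancilla, and satisfies $\|(\bra{\psi}\otimes I_{n-1})\Circ\ket{0^{n}}\|^2=1\geq 1-\epsilon$, so the promise in Definition~\ref{def:search_smcsp} is fulfilled.

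Next I would fix the target precision $\epsilon=2^{-n}$ (any $\epsilon\geq 2^{-O(n)}$ is admissible for Theorem~\ref{thm:s2d_smcsp}) and invoke that theorem. It provides a quantum algorithm that, using $\poly(n)$ copies of $\ket{\psi}$ and running in time $\poly(T(n),s')=\poly(n)$, outputs a circuit $\Circ'$ of size at most $s'$ using no ancilla qubits such that
\[
    \|(\bra{\psi}\otimes I_{n-1})\Circ'\ket{0^{n}}\|^2\geq 1-\epsilon-2^{-cn}=1-2^{-\Omega(n)}
\]
for every constant $c>0$. Since $\Circ'$ consists of at most $s'\leq\poly(n)$ gates, its classical description is of polynomial size, and $\Circ'\ket{0^n}$ has fidelity $1-2^{-\Omega(n)}$ with $\ket{\psi}$; hence $\Circ'$ is exactly the succinct description required in Definition~\ref{def:sst}. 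This establishes that succinct state tomography reduces to $\SMCSP$.

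Because the argument is a black-box call to Theorem~\ref{thm:s2d_smcsp}, I do not expect a genuine obstacle; the only points needing care are bookkeeping ones. First, the hypothesis $s'\leq\poly(n)$ is what keeps the running time $\poly(n)$ rather than $\poly(2^n)$ --- precisely the subtlety flagged in Remark~\ref{remark:smcsp_input_s}, and the reason the bound $s'$ must be supplied as part of the input. Second, one should check that the gap parameters used \emph{inside} the search-to-decision reduction lie in the regime where the $\SMCSP$ oracle is assumed efficient, which holds because Theorem~\ref{thm:s2d_smcsp} only queries $\SMCSP_{1-\epsilon_i,\,1-\epsilon_i-\Delta}$ with $\epsilon_i,\Delta=2^{-\Theta(n)}$. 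Third, since $\ket{\psi}$ is produced \emph{exactly} by $\Circ$, there is no error in the input, so the achievable precision is limited only by what Theorem~\ref{thm:s2d_smcsp} allows; if one prefers an inverse-polynomial error output, one simply sets $\epsilon$ to the desired value, provided it stays $\geq 2^{-O(n)}$.
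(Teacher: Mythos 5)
Your proposal is correct and follows exactly the paper's route: the paper's proof is the two-sentence observation that succinct state tomography is an instance of the search version of $\SMCSP$ and then invokes the search-to-decision reduction of Theorem~\ref{thm:s2d_smcsp}. Your write-up is simply a more careful version of the same argument, with the parameter and running-time bookkeeping (the role of the bound $s'$, the admissible gap regime for the oracle calls) made explicit.
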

\begin{proof}
Obviously, succinct state tomography reduces to the search version of $\SMCSP$. By the search-to-decision reduction in Theorem~\ref{thm:s2d_smcsp}, we can solve succinct state tomography by solving $\SMCSP$.
\end{proof}

\section{Acknowledgment}
We are grateful to Scott Aaronson and Boaz Barak for helpful discussions and valuable comments on our manuscript. We would like to thank Lijie Chen, Kai-Min Chung, Matthew Coudron, Yanyi Liu, and Fang Song for useful discussions.

\ifdraft
NHC's research is support by
the U.S. Department of Defense and NIST through the Hartree Postdoctoral Fellowship at QuICS and by NSF through IUCRC Planning Grant Indiana University: Center for Quantum Technologies (CQT) under award number 2052730. 

RZ's research is supported by NSF Grant CCF-1648712 and Scott Aaronson's Vannevar Bush Faculty Fellowship from the US Department of Defense.

CNC's research is supported by Boaz Barak's NSF awards CCF 1565264 and CNS 1618026.

JZ's research is supported by Adam Smith's NSF awards 1763786.
\fi

\addcontentsline{toc}{section}{References}
\bibliographystyle{alpha}
\bibliography{qmcsp}

\appendix\label{appendix}

\section{Proof for the hardness of \texorpdfstring{$\MQCSP$}{MQCSP}}

\qcmamqcsp*
\begin{proof}
The certificate is still the classical description of a quantum circuit $\Circ$ that has size at most $s$ and operates on at most $n+t$ qubits. The verifier first implements $\Circ$. Then, the verifier repeats evaluating $\Circ\ket{x,0^t}$ and measuring the first qubit $\ell = \poly(2^n)$ times. We denote the measurement outcomes of the $\ell$ trials as binary random variables $X_1,\dots,X_{\ell}$ which are all independent. Finally, the verifier checks if for all $x\in\{0,1\}^n$, there are at least $\frac{\alpha+\beta}{2}$ of the outcomes are consistent with $f(x)$.   

For the yes instance, we have the promise that $\|(\bra{f(x)}\otimes I_{n+t-1})\Circ\ket{x,0^t}\|^2\geq \alpha$ for all $x\in \{0,1\}^n$. Let $X = \sum_{i=1}^n X_i$. By using the second statement of Chernoff inequality, we have that $\Pr[X\leq \frac{(\alpha+\beta)\ell}{2}] \leq \exp\left(-\frac{(\alpha+\beta)^2\ell}{8\alpha}\right)$. 
By setting $\ell = \poly(2^n)$, we obtain $\Pr[X\leq \frac{(\alpha+\beta)\ell}{2}] \leq e^{-\poly(2^n)}$.
This implies that $\Pr[X\geq \frac{(\alpha+\beta)\ell}{2}\mbox{ for all } x\in\{0,1\}^n] \geq 1 - e^{-\poly(2^n)}$. For the no instance, we can do the similar analysis using Chernoff bound and show that there exists $x\in \{0,1\}^n$ such that $\Pr[X\geq \frac{(\alpha+\beta)\ell}{2}]$ is negligible. 
\end{proof}

\label{sec:proof_szk}
\mcspszkhard*
\begin{proof}[Proof of Theorem~\ref{thm:szk_mcsp}]

Let $(n,C_0,C_1)$ be a $\class{PIID}$ instance, where $C_0, C_1:\{0,1\}^m \rightarrow \{0,1\}^{m'}$ of size $n^k$. For $b=0,1$ and $x\in \{0,1\}^m$, we let $f_b(x) = C_b(x)$. Then, similar to the proof for Theorem~\ref{thm:mqcsp_qowf}, the idea is using $f_b$ to construct a pseudorandom generator $\hat{G}$ and break $\hat{G}$ by applying the $\class{MQCSP}$ oracle. Specifically, the algorithm is as follows: 

\begin{algorithm}[H]
\caption{A PPT algorithm $\A$ for $\class{PIID}$ with $\class{MQCSP}$ oracle}
	\label{alg:szk_mcsp}
    \begin{algorithmic}[1]
	    \Input $C_0,C_1$ of size $n^k$ and $m$-qubit input. 
	    \State Pick $x$ uniformly randomly from $\{0,1\}^m$.
	    \State Compute $f_0(x)$. 
        \State Use $f_0(x)$ to generate a pseudorandom string $G_{f_0(x)}(r)$ as in Lemma~\ref{qowf2qprg}.
        \State Use $G_{f_0(x)}(r)$ to generate the truth table $\T(g) = \hat{G}(r)$ as in Lemma~\ref{lem:qprg2lqprg}. 
        \State Apply the inverting algorithm $\A^{\MQCSP}_{inv}$ with access to function $f_1$ in Theorem~\ref{thm:mqcsp_qowf} to invert $f_1$ for $x'$. Note that the function used in the inverting algorithm is $f_1$ instead of $f_0$.
        \State \Return ``Yes'' if $C_0(x) = C_1(x')$; ``No'' if $C_0(x)\neq C_1(x')$.  
    \end{algorithmic}
\end{algorithm}

In Algorithm~\ref{alg:szk_mcsp}, we do not explicitly describe the inverting algorithms $\A_{inv}$. However, based on Theorem~\ref{thm:mqcsp_qowf}, such algorithms must exist. 

Then, when $(C_0,C_1)$ is a no instance, i.e., $\Pr_{x\in \{0,1\}^{m}}[\exists y\in \mathsf{Im}(C_0)\mbox{ such that }C_1(x) = y]\leq \frac{1}{2^n}$, the probability that there exists $x'$ such that $C_1(x') = C_0(x)$ over $x$ is at most $1/2^n$. In this case, Algorithm~\ref{alg:szk_mcsp} outputs ``Yes'' with probability at most $1/2^n$.  

When $(C_0,C_1)$ is a yes instance, $C_0$ and $C_1$ has statistical distance $1/2^n$ over $x\in \{0,1\}^m$. Then, the success probability of the algorithm $\A$ in Algorithm~\ref{alg:szk_mcsp} is
\begin{align*}
    \Pr[\A(C_0,C_1) = \text{``Yes''}] &= \Pr_x[ f_1(\A^{\MQCSP}_{inv}(f_1,f_0(x)))=f_0(x) ]\\
    &= \sum_{y\in \{0,1\}^{m'}}\Pr_x[f_0(x) = y]\Pr_x[f_1(\A^{\MQCSP}_{inv}(f_1,y))=y| y]
\end{align*}
Note that if we compute $f_1(x)$ (instead of $f_0(x)$) at step 2 in Algorithm~\ref{alg:szk_mcsp}, then the success probability of $\A$ is 
\begin{align*}
    \Pr_x[ f_1(\A^{\MQCSP}_{inv}(f_1,f_1(x)))=f_1(x) ]
    &= \sum_{y\in \{0,1\}^{m'}}\Pr_x[f_1(x) = y]\Pr_x[f_1(\A^{\MQCSP}_{inv}(f_1,y))=y|f_1(x) = y] \\
    &\geq 1/\poly(n). 
\end{align*}
The last inequality follows from Theorem~\ref{thm:mqcsp_qowf}. The $\MQCSP$ oracle can break $\hat{G}$ due to the fact that the construction of $\hat{G}$ is a small classical circuit and thus also a small quantum circuit. Therefore, we can use the $\MQCSP$ oracle to distinguish it from a truely random string.  

The difference between these two probabilities above is 
\begin{align*}
    &\Pr_x[ f_1(\A^{\MQCSP}_{inv}(f_1,f_0(x)))=f_0(x) ] -  \Pr_x[ f_1(\A^{\MQCSP}_{inv}(f_1,f_1(x)))=f_1(x) ] \\
    &= \sum_y\Pr_x[f_1(\A^{\MQCSP}_{inv}(f_1,y))=y|f_1(x) = y](\Pr_x[f_0(x) = y] - \Pr_x[f_1(x) = y])\\
    &\leq \sum_y (\Pr_x[f_0(x) = y] - \Pr_x[f_1(x) = y]) \leq \frac{1}{2^n}. 
\end{align*}
The last inequality follows from the definition of statistical distance. Therefore, Algorithm~\ref{alg:szk_mcsp} succeeds with probability at least $1/\poly(n) - 2^{-n}$ for a ``Yes'' instance. Finally, we can amplify the success probability for the yes instance to $2/3$ by repetition. Thus, $\class{PIID}\in \class{BPP}^{\class{MQCSP}}$. 

\end{proof}
\section{Learning Theory}\label{app:learning}

In this section, we provide the details of Section~\ref{sec:main learning} on the connection between learning theory and $\MQCSP$. 

\subsection{PAC learning}
Let us recall the definition of PAC learning.

\paclearning*

The following theorem shows that efficient PAC-learning for $\class{BQP/poly}$ is equivalent to efficient algorithms for $\MQCSP$. Here, $\class{BQP/poly}$ is defined as $\bigcup_{s\leq \poly(n)} \class{BQC}(s)$ 

\learningpac*

\begin{proof}
\mbox{}
\begin{itemize}
    \item The key ingredient to show $\MQCSP\in\class{BPP}$ implies efficient PAC learning for $\class{BQP/poly}$ is the ``learning from a natural property'' framework by~\cite{CIKK16}. First, note that $\class{BQP/poly}$ is a circuit class that contains $\class{P/poly}$ and hence can implement both the \textit{Nisan Wigderson generator} and the \textit{Direct Product + Goldreich-Levin amplification}. Second, $\MQCSP\in\class{BPP}$ implies there is a $\class{BPP}$-natural property against $\class{BQP/poly}$. Finally, by Theorem 5.1 of~\cite{CIKK16}, there is a randomized algorithm that $(1/\poly(n),\delta)$-PAC learns $f\in\class{BQP/poly}$ under the uniform distribution with membership queries for every $\delta>0$ in quasipolynomial time.
    
    \item Let $ALG$ be a randomized algorithm that $(1/\poly(n),\delta)$-PAC learns $f\in\class{BQP/poly}$ under the uniform distribution with membership queries for some $\delta>0$. We design the following randomized algorithm for $\MQCSP[\poly(n),\omega(\poly(n)),t(n),\tau]$ where $t(n)$ is the number of ancilla bits that will be determined later. For every $\tau>0$, let $\epsilon=\tau/2$.
    
    \begin{algorithm}[H]
	\caption{A quantum algorithm for $\MQCSP[\poly(n),\omega(\poly(n)),t(n),\tau]$}
	\label{alg:DP reconstruction}
    \begin{algorithmic}[1]
	    \Input The truth table $T$ of a $n$-variate Boolean function $f$.
	    \For{$i=1,\dots,10\ceil*{\log1/\delta}$}
	        \State Run $ALG$ and supply the membership query with the truth table $T$. Let $C_i$ be the output of $ALG$.
	        \State Uniformly and independently sample $x_1,\dots,x_\ell\in\{0,1\}^n$ where $\ell=\ceil*{100\log(1/\delta)/\epsilon^2}$.
	        \If{$|\{j\in[\ell]: C_i(x_j)\neq f(x_j)\}|<\frac{\epsilon}{10}\cdot \ell$}
	        \State Break and output ``Yes''.
	        \EndIf
	    \EndFor
	    \State Output ``No''.
    \end{algorithmic}
    \end{algorithm}
    Let us analyze the correctness of the above algorithm. First, if $f$ is an Yes instance, i.e., there exists a polynomial size quantum circuit $C$ that computes $f$, then due to the correctness of $ALG$, $\Pr_{C_i}[|\{x\in\{0,1\}^n: C_i(x)\neq f(x)\}|<2^n/\poly(n)]>\delta$ for each $i$. Namely, with probability at least $9/10$, there exists an $i\in[10\ceil*{\log1/\delta}]$ such that $|\{x\in\{0,1\}^n: C_i(x)\neq f(x)\}|<2^n/\poly(n)$. For this specific $i$, by Chernoff bound, with probability at least $9/10$ the algorithm will go to line 5 and output ``Yes''. That is, the above algorithm accepts an Yes instance with probability at least $2/3$ as desired.

    Next, if $f$ is a No instance, i.e., for every polynomial size quantum circuit $C$, we have $|\{x\in\{0,1\}^n: C(x)\neq f(x)\}|\geq\tau\cdot2^n>\epsilon\cdot 2^n$. For each $i\in[10\ceil*{\log1/\delta}]$, $C_i$ is a polynomial size circuit and hence by Chernoff bound, the algorithm goes to line 5 with probability at most $2^{-\Omega(\epsilon^2\ell)}$. Due to the choice of $\ell$, we know that the algorithm will output ``No'' with probability at least $2/3$. That is, the above algorithm rejects an No instance with probability at least $2/3$ as desired.

    Finally, the running time of the algorithm is $\poly(\text{Time}(ALG),1/\delta,1/\epsilon,n,m)$ where the dependency on $\poly(n,m)$ is for calculating $C_i(x_j)$ using the quantumness. Note that this running time is $\poly(2^n)$ and hence we conclude that $\MQCSP[\poly(n),\omega(\poly(n)),t(n)]\in\class{BQP}$.
    
    When the number of ancilla bits is $O(n)$, note that we can calculate $C_i(x_j)$ in $\poly(2^n)$ time and hence $\MQCSP[\poly(n),\omega(\poly(n)),t(n)]\in\class{BPP}$
\end{itemize}
\end{proof}

\subsection{Quantum learning}
As it could be the case that $\MQCSP$ might have non-trivial quantum algorithm, it is also of interest to study the connection to quantum learning. 

\quantumlearning*

It turns out that efficient quantum learning for a circuit class $\class{C}$ is equivalent to efficient quantum algorithm for its corresponding $\MCSP$, i.e., $\class{C}$-$\class{MCSP}$.


\learningquantum*

\begin{proof}
\mbox{}
\begin{itemize}
    \item 
The key idea is to quantize the ``learning from a natural property'' framework~\cite{CIKK16}. Let us start with three important lemmas from~\cite{agg20}.
    
    \begin{lemma}[Corollary of Lemma~4.3 and Lemma~4.4 in~\cite{agg20}]\label{lem:quantum NW}
    Let $L,s_D:\mathbb{N}\rightarrow\mathbb{N}$ be constructive functions and $\gamma\in(0,1)$ with $1\leq L(n)\leq2^n$ for every $n\in\mathbb{N}$. There exists an algorithm $A_{NW}$ on input $1^n$ and $1^L$ outputs $\textsf{code}(C_{NW})$ for a quantum circuit $C_{NW}$ in time $S(n)=\poly(n,L(n),s_D(n))$ with the following properties. In the following, we abbreviate $L=L(n)$ and $s_D=s_D(n)$.
    
    There exists a constant $c>0$ and an oracle function $NW^{\mathcal{O}}:\{0,1\}^m\rightarrow\{h:\{0,1\}^{\log L}\rightarrow\{0,1\}\}$ where $m=cn^2$ and $\text{size}(NW^{\mathcal{O}}(z))=\poly(n,\text{size}(\mathcal{O}))$ for all $z\in\{0,1\}^m$. Let $g:\{0,1\}^n\rightarrow\{0,1\}$. Suppose there is a quantum circuit $D$ of size at most $s_D$ with
    \[
        \left|\Pr_{z\in\{0,1\}^m,D}[D(NW^g(z))=1]-\Pr_{y\in\{0,1\}^L}[D(y)=1]\right|\geq\gamma\, .
    \]
    Then $C_{NW}$ on input $\textsf{code}(D)$ and with oracle access to $g$, outputs $\textsf{code}(C)$ for a quantum circuit $C$ of size $O(L^2\cdot s_D)$. With probability $\Omega(\gamma/L^2)$ over the output measurement of $C_{NW}$, we have
    \[
        \Pr_{x\in\{0,1\}^n,C}[C(x)=g(x)]\geq\frac{1}{2}+\frac{\gamma}{2L} \, .
    \]
    \end{lemma}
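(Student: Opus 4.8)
The plan is to obtain Lemma~\ref{lem:quantum NW} exactly as advertised: by assembling the two ingredients of Arunachalam et al.~\cite{agg20} cited in the statement, namely their construction of a quantum Nisan--Wigderson generator together with its hybrid-argument reconstruction (Lemma~4.3) and their conversion of a quantum next-bit predictor into a quantum circuit computing the underlying function with noticeable correlation (Lemma~4.4). This is the quantum counterpart of the key reconstruction lemma behind the ``learning from a natural property'' paradigm of \cite{CIKK16}. The first step is to fix the combinatorial design: take an $(n,\log L)$-design on $[m]$ with $m=cn^2$ for a suitable constant $c$, and define $NW^{\mathcal O}\colon\{0,1\}^m\to\{0,1\}^L$ so that its $i$-th output bit is $\mathcal O$ applied to the projection of the $m$-bit seed onto the $i$-th design set. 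Since each output bit is a single call to $\mathcal O$ on a $(\log L)$-bit string, $\mathrm{size}(NW^{\mathcal O}(z))=\poly(n,\mathrm{size}(\mathcal O))$ as claimed, and the uniform algorithm $A_{NW}$ can write down the design and the (quantum) reconstruction-circuit template $C_{NW}$ in time $S(n)=\poly(n,L,s_D)$.

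Next I would run the hybrid argument on the distinguisher $D$: from the hypothesis that $D$ has advantage $\ge\gamma$ in distinguishing $NW^g(z)$ from uniform, the standard hybrid over the $L$ output coordinates produces an index $i\in[L]$ together with a quantum circuit of size $O(s_D)$ predicting the $i$-th bit of $NW^g$ from the preceding ones with advantage $\Omega(\gamma/L)$. The quantum subtlety here---already handled in \cite{agg20}---is that the predictor and the hybrid index are recovered only by measuring an intermediate quantum state, which is where the $\Omega(\gamma/L^2)$ success probability enters (one factor $L$ from guessing/collapsing the hybrid index, one from the hybrid itself). Then I would invoke the reconstruction (\cite{agg20}, Lemma~4.4): the design property---that the $i$-th design set overlaps every other design set in few coordinates---lets $C_{NW}$ hardwire and enumerate the values of $g$ on inputs outside the $i$-th design set (each such value being one oracle call to $g$), turning the predictor into a quantum circuit $C$ of size $O(L^2\cdot s_D)$ with $\Pr_{x,C}[C(x)=g(x)]\ge \tfrac12+\tfrac{\gamma}{2L}$, conditioned on the output measurement of $C_{NW}$ succeeding; the extra factor $2$ is the usual loss in passing from prediction advantage to agreement probability.

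The last step is pure bookkeeping---collecting the size bound $O(L^2s_D)$ on $C$, the running time $S(n)=\poly(n,L,s_D)$ of $A_{NW}$, and the probability bound $\Omega(\gamma/L^2)$---and checking that the parameter regime used in \cite{agg20} (here $m=cn^2$, $1\le L\le 2^n$, $\gamma\in(0,1)$) matches ours. I do not expect a genuinely new idea to be needed; the main obstacle is a verification obstacle, namely making sure that the quantum hybrid argument and the quantum version of Yao's prediction-versus-distinguishing equivalence really yield an \emph{explicit quantum circuit} (and not merely measurement statistics) with the stated size and success probability, and that the composition $D\mapsto\text{predictor}\mapsto C$ stays within size $O(L^2s_D)$. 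Since both of these intermediate facts are proved in \cite{agg20}, the remaining work is to quote them with the right quantifiers and combine the bounds.
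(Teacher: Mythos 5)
Your proposal is correct and matches the paper's treatment: the paper gives no proof of this lemma at all, stating it purely as a corollary of Lemmas~4.3 and~4.4 of~\cite{agg20}, and your sketch (NW design with $m=cn^2$, hybrid argument to a next-bit predictor, quantum prediction-to-agreement conversion, and hardwiring $g$ outside the chosen design set) is a faithful outline of how those two cited lemmas combine. The only caveat is that the precise accounting of the $\Omega(\gamma/L^2)$ measurement-success probability and the $O(L^2 s_D)$ size bound must be taken from~\cite{agg20} itself, exactly as you indicate.
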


    \begin{lemma}[Lemma~4.5 in~\cite{agg20}]\label{lem:quantum DP}
    Let $k,s:\mathbb{N}\rightarrow\mathbb{N}$ be constructive functions and $\gamma>0$.
    There exists an algorithm $A_{GL}$ such that on input $1^n$ and $1^{k(n)}$ outputs a circuit $C_{GL}$ of size $\poly(n,k(n),s(n))$ in time $\poly(n,k(n),s(n))$ with the following properties. In the following, we abbreviate $k=k(n)$ and $s=s(n)$.
    
    Let $f:\{0,1\}^{kn}\rightarrow\{0,1\}^k$. Suppose there is a quantum circuit $C$ of size at most $s$ satisfying
    \[
        \E_{x\in\{0,1\}^{kn}}\E_{r\in\{0,1\}^k}[|(\bra{f(x)\cdot r}\otimes I)C\ket{x,r,0^m}|^2]\geq\frac{1}{2}+\gamma \, .
    \]
    Then $C_{GL}$ on input $\textsf{code}(C)$ outputs $\textsf{code}(G^{\mathcal{O}})$ for a quantum oracle circuit $G^{\mathcal{O}}$ of size $O(kn)$ such that
    \[
        \E_{x,G^C}[|(\bra{f(x)}\otimes I)G^C\ket{x,0^{k+m+1}}|^2]\geq\frac{\gamma^3}{2} \, .
    \]
    \end{lemma}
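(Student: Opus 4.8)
This is a quantum Goldreich--Levin decoding step: from a size-$s$ quantum circuit $C$ that predicts the single bit $f(x)\cdot r$ with average advantage $\gamma$ over $(x,r)$, we must build a small oracle circuit $G^{\mathcal O}$ which, querying $C$, outputs the entire string $f(x)\in\{0,1\}^k$ with probability $\gamma^3/2$ averaged over $x$. The plan is to implement the Adcock--Cleve phase-kickback decoder and then average over $x$ by convexity. Note that $A_{GL}$ itself is essentially trivial: it just prints a fixed template circuit for $G^{\mathcal O}$ (with $k$ and $s$ as parameters), which is clearly doable in $\poly(n,k,s)$ time; all the content is in analysing the success probability of that circuit.

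First I would describe $G^{\mathcal O}$. On input $\ket{x}\ket{0^{k+m+1}}$ it (i) applies $H^{\otimes k}$ to the first $k$ ancillas to create $\tfrac1{\sqrt{2^k}}\sum_r\ket r$; (ii) copies that register into fresh scratch and queries the oracle $\mathcal O=C$ on the copy together with $\ket x$ and $\ket{0^m}$ (this ``make the oracle input-preserving'' copying is what lets the $r$-register that is eventually measured stay untouched); (iii) applies $Z$ to $C$'s output qubit; (iv) queries $\mathcal O^\dagger=C^\dagger$; (v) applies $H^{\otimes k}$ to the $r$-register and measures it, returning the result. If $C$ has intermediate measurements, I defer them first at the usual cost of $O(s)$ ancillas. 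The total is $O(kn)$ elementary gates plus two oracle calls, matching the claimed size $O(kn)$ and ancilla count $k+m+1$.

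The core is a per-$x$ computation. Fix $x$ and set $\delta(x,r):=\bigl|(\bra{f(x)\cdot r}\otimes I)C\ket{x,r,0^m}\bigr|^2-\tfrac12$, so the hypothesis reads $\E_{x,r}[\delta(x,r)]\ge\gamma$. After the WLOG reduction $C$ preserves its $\ket x\ket r$ input, so $C^\dagger Z_{\mathrm{out}}C$ is block diagonal over $(x,r)$ and the phase-kickback identity $\bra{x,r,0^m}C^\dagger Z_{\mathrm{out}}C\ket{x,r,0^m}=(-1)^{f(x)\cdot r}\,2\delta(x,r)$ holds exactly. Propagating the state of $G^{\mathcal O}$ through steps (iv)--(v): after $C^\dagger Z_{\mathrm{out}}C$ the $r$-register and $C$'s ancillas are in $\tfrac1{\sqrt{2^k}}\sum_r\ket r\bigl(2\delta(x,r)(-1)^{f(x)\cdot r}\ket{0^m}+\ket{\mathrm{err}_{x,r}}\bigr)$ with $\ket{\mathrm{err}_{x,r}}\perp\ket{0^m}$, and after $H^{\otimes k}$ the branch with $r$-outcome $f(x)$ carries the (unnormalized) ancilla state $2\delta(x)\ket{0^m}+\ket{\mathrm{junk}_x}$, where $\delta(x):=\E_r[\delta(x,r)]$ and $\ket{\mathrm{junk}_x}\perp\ket{0^m}$; hence $\Pr[G^{\mathcal O}\text{ outputs }f(x)]\ge 4\delta(x)^2$. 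Averaging over $x$ and using convexity of $t\mapsto t^2$, $\E_x[4\delta(x)^2]\ge 4(\E_x[\delta(x)])^2\ge 4\gamma^2\ge\gamma^3/2$ since $\gamma\le 1$, which is the stated bound (in fact slightly stronger; one may alternatively use the reverse-Markov argument isolating the $\gtrsim\gamma$ fraction of $x$ with $\delta(x)\gtrsim\gamma$, which yields the bound in the form $\gamma^3$ directly).

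\textbf{Main obstacle.} The one genuinely delicate point, which I expect to need the most care, is that $C$ is a \emph{dirty} oracle: it may use intermediate measurements, its $m$ ancillas come out entangled with the output qubit, and nothing a priori forces it to leave its $r$-input register alone; if $C$ disturbs that register then $C^\dagger Z_{\mathrm{out}}C$ no longer block-diagonalizes and the exact phase-kickback identity fails. The fix is exactly the pair of WLOG reductions folded into the construction above --- defer $C$'s measurements to make it unitary, and have $G$ query $C$ only on a fresh copy of $r$ --- after which the argument is exact and the overheads ($O(kn)+O(s)$ extra gates and ancillas) do not affect the conclusion. Everything else is routine amplitude bookkeeping.
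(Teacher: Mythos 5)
You should first note that the paper does not prove this lemma at all --- it is imported verbatim as Lemma~4.5 of \cite{agg20} --- so the comparison here is against the source's argument, not the paper's. Your overall architecture (the Adcock--Cleve two-query circuit $H^{\otimes k},\,C,\,Z_{\mathrm{out}},\,C^\dagger,\,H^{\otimes k}$, followed by averaging over $x$) is indeed the right one, and your size accounting and the final Jensen/reverse-Markov step are fine. The problem is that the one step you correctly identify as delicate is handled incorrectly, and the fix you propose provably destroys the algorithm. If $G$ copies $r$ into scratch and queries $C$ on the copy while retaining the original $r$-register for the final Hadamard and measurement, then just before that Hadamard the state is $2^{-k/2}\sum_r\ket{r}_{\mathrm{kept}}\otimes\ket{\chi_{x,r}}$ with $\ket{\chi_{x,r}}:=C^\dagger Z_{\mathrm{out}}C\ket{x,r,0^m}$. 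Since $C^\dagger Z_{\mathrm{out}}C$ is unitary, the vectors $\{\ket{\chi_{x,r}}\}_r$ are orthonormal, so the reduced state of the kept register is exactly $2^{-k}I$: the retained copy is a which-path record that decoheres the superposition, and the measurement returns a uniformly random string. The success probability is exactly $2^{-k}$, not $4\delta(x)^2$. (Keeping the copy and feeding the original to $C$ fails identically.)

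Without the copy, your ``exact phase-kickback identity'' only controls the \emph{diagonal} matrix elements $\bra{x,r,0^m}C^\dagger Z_{\mathrm{out}}C\ket{x,r,0^m}=(-1)^{f(x)\cdot r}2\delta(x,r)$ --- which, incidentally, hold for \emph{any} unitary $C$ and need no block-diagonality assumption. The quantity you actually need to lower bound is
\begin{equation*}
\Bigl|\tfrac{1}{2^k}\textstyle\sum_{r,r'}(-1)^{f(x)\cdot r'}\bra{x,r',0^m}C^\dagger Z_{\mathrm{out}}C\ket{x,r,0^m}\Bigr|^2 ,
\end{equation*}
and the off-diagonal terms $r\neq r'$ are not controlled by the hypothesis: writing $Z_{\mathrm{out}}=2P_0-I$ they equal $-2\langle\mu_{x,r'}^1|\mu_{x,r}^1\rangle$ for the output-$1$ components of $C\ket{x,r,0^m}$, and the pairs with $f(x)\cdot r=f(x)\cdot r'$ contribute real terms of uncontrolled sign that can cancel the diagonal contribution $2\delta(x)$ entirely (one can tune the overlaps $\langle\mu_{x,r'}^1|\mu_{x,r}^1\rangle$ subject only to orthonormality of the $C\ket{x,r,0^m}$). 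So the per-$x$ bound $4\delta(x)^2$, which is the entire technical content of the lemma, is not established by your argument; the ``routine amplitude bookkeeping'' is exactly where the difficulty lives. You should work through how \cite{agg20} (following Adcock--Cleve's treatment of imperfect oracles) actually controls the interference from the bad branches rather than assuming it away.
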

        
    \begin{lemma}[Theorem in~4.28~\cite{agg20}]\label{lem:quantum GL}
    Let $k,s:\mathbb{N}\rightarrow\mathbb{N}$ be constructive functions and $\epsilon,\delta\in(0,1)$.
    There exists a constant $c\geq1$ and an algorithm $A_{IJKW}$ such that on input $1^n$ and $1^{k(n)}$ outputs a circuit $C_{IJKW}$ of size $\poly(n,k(n),s(n),\log1/\delta,1/\epsilon)$ in time $\poly(n,k(n),s(n),\log1/\delta,1/\epsilon)$ with the following properties. In the following, we abbreviate $k=k(n)$ and $s=s(n)$.
    
    Let $g:\{0,1\}^{n}\rightarrow\{0,1\}$. Suppose $k$ is an even integer with
    \[
        k\geq c\cdot\frac{1}{\delta}\left[\log\frac{1}{\delta}+\log\frac{1}{\epsilon}\right] \, ,
    \]
    and suppose $G$ is a quantum circuit of size at most $s$ defined over $S_{n,k}:=\{S\subset\{0,1\}^n:|S|=k\}$ with $k$ output bits with
    \[
        \E_{B\sim S_{n,k},G}[G(B)=g^k(B)]\geq\epsilon \, .
    \]
    Then $C_{IJKW}$ on input $\textsf{code}(G)$ outputs $\textsf{code}(C)$ for a quantum circuit $C$ of size $\poly(n,k,s,\log(1/\delta),1/\epsilon)$ such that
    \[
        \E_{x\sim\{0,1\}^n,C}[C(x)=g(x)]\geq1-\delta \, .
    \]
    \end{lemma}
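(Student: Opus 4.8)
The plan is to prove Lemma~\ref{lem:quantum GL} following~\cite{agg20}, by reducing it to the classical uniform direct product theorem of Impagliazzo, Jaiswal, Kabanets and Wigderson. The key observation is that, for decoding purposes, a size-$s$ quantum circuit $G$ with $\E_{B\sim S_{n,k},G}[G(B)=g^k(B)]\geq\epsilon$ is nothing more than a randomized black-box oracle $\mathcal{O}$ satisfying $\Pr_{B\sim S_{n,k},\,\mathcal{O}}[\mathcal{O}(B)=g^k(B)]\geq\epsilon$: given $\textsf{code}(G)$ we may instantiate arbitrarily many independent copies of $G$, each acting on its own fresh ancilla register initialized to $\ket{0}$, run a copy on a chosen set $B$, and measure its output register in the computational basis; distinct copies act on disjoint registers and are measured separately, so they behave as i.i.d.\ invocations of $\mathcal{O}$. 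Thus no coherence needs to be maintained across queries, and the entire ``quantum-ness'' of $G$ is absorbed into treating it as a randomized oracle. It then suffices to recall that the IJKW decoder uses its oracle only in a black-box, one-query-at-a-time fashion with fresh randomness per query, and to re-package it as a quantum circuit.

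Concretely, I would use the IJKW decoder $\mathrm{Dec}^{\mathcal{O}}$ in the following form. Sample once a reference set $A\in S_{n,k}$ and query $v_A:=\mathcal{O}(A)$; this classical string is then frozen. To decode an input $x$, repeat $T=\poly(\log(1/\delta),1/\epsilon)$ times: sample $B\in S_{n,k}$ subject to $x\in B$ and $|A\cap B|=t$ for the IJKW-prescribed overlap parameter $t$, query $v_B:=\mathcal{O}(B)$, and if $v_A$ and $v_B$ agree on the coordinates indexed by $A\cap B$, record the candidate value $v_B(x)$; finally output the majority of the recorded candidates (an arbitrary bit if none is recorded, or $v_A(x)$ if $x\in A$). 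The IJKW sampling lemma shows that with probability $\Omega(\epsilon)$ over $A$ the reference set is \emph{good}, meaning $v_A$ agrees with $g^k(A)$ on all but a $\gamma$-fraction of $A$; conditioned on a good $A$, any $B$ whose overlap values match $v_A$ agrees with $g$ on $x$ except with probability $O(\gamma)$, so the recorded majority is correct with probability $1-\delta/2$; IJKW's further bookkeeping then assembles a decoder $C$ meeting the stated $1-\delta$ guarantee. Balancing these error terms is exactly what forces $k=\Theta\!\big(\tfrac{1}{\delta}(\log\tfrac1\delta+\log\tfrac1\epsilon)\big)$ with $k$ even; all of this is purely combinatorial and identical to the classical argument.

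Given this decoder, the quantum circuit $C$ is immediate: hard-wire $\textsf{code}(G)$, have $C$ generate its own randomness to pick $A$ and $B_1,\dots,B_T$, run $T+1$ independent instantiations of $G$ on disjoint fresh ancilla blocks, measure each output register in the computational basis to obtain $v_A,v_{B_1},\dots,v_{B_T}$, and then carry out the (classical, reversible-then-measured) consistency checks and majority vote. Its size is $O(T+1)\cdot s+\poly(n,k,\log(1/\delta),1/\epsilon)=\poly(n,k,s,\log(1/\delta),1/\epsilon)$, and a uniform deterministic algorithm $A_{IJKW}$ can write down $\textsf{code}(C)$ from $\textsf{code}(G)$ in time $\poly(n,k,s,\log(1/\delta),1/\epsilon)$, since it only emits fixed glue code around the hard-wired copies of $G$.

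The main obstacle is the one already resolved inside the classical IJKW analysis rather than anything quantum: the sampling/concentration argument that (i) a random reference set $A$ is good with probability $\Omega(\epsilon)$ and, more delicately, (ii) conditioned on goodness, overlap-consistency of $B$ with $A$ certifies correctness of $v_B(x)$ up to the target error, together with the step that upgrades this into the claimed $\geq 1-\delta$ guarantee for $C$. On the quantum side the only point to check is the legitimacy of the ``i.i.d.\ oracle'' reduction, i.e.\ that the $T+1$ measured copies of $G$ are genuinely independent; this holds because they operate on disjoint registers all initialized to $\ket{0}$ and are measured separately, so there is no residual entanglement and no rewinding issue, and the IJKW analysis then applies verbatim with $\mathcal{O}$ realized by ``run a fresh copy of $G$ and measure.''
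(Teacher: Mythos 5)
You are asked to prove a statement that this paper itself does not prove: Lemma~\ref{lem:quantum GL} is imported from \cite{agg20} (their quantum IJKW direct-product theorem), so the comparison is against that proof. Your overall route --- instantiate fresh copies of $G$ on disjoint ancilla registers initialized to $\ket{0}$, measure, treat the result as a probabilistic oracle, and run the IJKW decoder --- is indeed the same strategy used there. But the write-up has a genuine gap at exactly the point where the lemma's technical content lives. The classical IJKW theorem is proved for a \emph{deterministic} circuit computing $g^k$ on an $\epsilon$ fraction of sets; when the circuit is randomized, the standard reduction is to fix a good setting of the coins nonuniformly by averaging and then invoke the deterministic statement. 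That fix is unavailable here: the ``randomness'' of $G$ comes from measurement outcomes and cannot be hardwired, and your hypothesis only gives $\E_{B\sim S_{n,k},G}[G(B)=g^k(B)]\geq\epsilon$ averaged over the circuit's internal measurements, with fresh coins at every invocation. Consequently the ``good $(A,v_A)$'' event is now a random event over the first query's outcome, and conditioning on overlap-consistency of $v_B$ with $v_A$ also conditions on the second query's coins, so the sampling/consistency analysis has to be redone for probabilistic oracles rather than cited. This is precisely what \cite{agg20} do with their machinery for inherently probabilistic circuits; your claim that ``the IJKW analysis applies verbatim'' (and that the obstacle is ``already resolved inside the classical IJKW analysis'') asserts the crux instead of proving it.

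A second, smaller gap: your decoder freezes a single reference pair $(A,v_A)$, which is good only with probability $\Omega(\epsilon)$. As described, the circuit $C$ you output meets the $1-\delta$ agreement bound only with probability $\Omega(\epsilon)$ over the decoder's coins, which is weaker than the stated conclusion; ``IJKW's further bookkeeping'' does not say how this is boosted. Closing it requires either the list-decoding form of IJKW together with a selection step among $O(1/\epsilon)$ candidate circuits --- which needs query access to $g$ and is available in the learning application but not from $\textsf{code}(G)$ alone --- or restating the conclusion with an explicit success probability over $C_{IJKW}$'s internal randomness (as in the companion Lemma~\ref{lem:quantum NW}, and as in \cite{agg20}). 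You should make explicit which of these you intend, since the statement as you prove it does not match the statement as given.
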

    
    Now, we are ready to describe our quantum learning algorithm for $\class{C}$.    
    \begin{algorithm}[H]
	\caption{A quantum learning algorithm for $\class{C}$}\label{alg:quantum learning}
    \begin{algorithmic}[1]
        \Input $1^n$, quantum oracle access to $n$-variate $f\in\class{C}$, and parameters $\delta\in(0,1)$.
        \State Let $L=\poly(n)$, $\epsilon=1/\poly(n)$, and $k=\ceil*{c\cdot\frac{1}{\delta}(\log\frac{1}{\delta}+\log\frac{1}{\epsilon})}$.
        \State $C_{NW}\gets A_{NW}(1^{kn+k})$; $C_{GL}\gets A_{GL}(1^{n},1^k)$; $C_{IJKW}\gets A_{IJKW}(1^{n},1^k)$.
        \State Let $\textsf{code}(D)$ be the description of a quantum circuit solving $\class{C}$-$\class{MCSP}$ with truth table size $L$.
        \State Use the oracle access to $f$ to build an oracle access to $NW^g$ where $g:\{0,1\}^{kn}\times\{0,1\}^k\rightarrow\{0,1\}$ with $g(x_1,\dots,x_k,r_1,\dots,r_k)=\oplus_{i=1}^k(r_i\cdot f(x_i))$ for every $x_1,\dots,x_k\in\{0,1\}^n$ and $r_1,\dots,r_k\in\{0,1\}$.
        \State $\textsf{code}(\tilde{C})\gets C_{NW}^{g}(\textsf{code}(D))$
        \State $\textsf{code}(G^{\mathcal{O}})\gets C_{GL}(\textsf{code}(\tilde{C}))$.
        \State $C\gets C_{IJKW}(\textsf{code}(G^{\tilde{C}}))$.
        \State Output $C$.
    \end{algorithmic}
    \end{algorithm}
    
    Let us analyze the correctness and running time of Algorithm~\ref{alg:quantum learning} simultaneously. Let $f:\{0,1\}^n\rightarrow\{0,1\}\in\class{C}$ be the function we want to learn. Let $g:\{0,1\}^{kn}\times\{0,1\}^k\rightarrow\{0,1\}$ be $g(x_1,\dots,x_k,r_1,\dots,r_k)=\oplus_{i=1}^k(r_i\cdot f(x_i))$ for every $x_1,\dots,x_k\in\{0,1\}^n$ and $r_1,\dots,r_k\in\{0,1\}$. Observe that if $\text{size}(f)=\poly(n)$, then $\text{size}(NW^g)=\poly(n)=\poly(\log L)$.
        
    Next, if $\class{C}$-$\class{MCSP}\in\class{BQP}$, then there exists a quantum algorithm $D$ running in time $\poly(L)$ with
    \[
        \left|\Pr_{z\in\{0,1\}^m,D}[D(NW^g(z))=1]-\Pr_{y\in\{0,1\}^L}[D(y)=1]\right|\geq\frac{1}{3} \, .
    \]
    By Lemma~\ref{lem:quantum NW}, $C^g_{NW}(\textsf{code}(D))$ outputs the description of a quantum circuit $C$ of size $O(L^2\cdot\text{size}(D))=\poly(n)$ in time $\poly(L,\textsf{size}(D))$ such that with probability $\Omega(1/L^2)$, 
    \[
        \Pr_{\substack{x_1,\dots,x_r\in\{0,1\}^n\\r_1,\dots,r_k\in\{0,1\},C}}[C(x_1,\dots,x_k,r_1,\dots,r_k)=g(x_1,\dots,x_k,r_1,\dots,r_k)]\geq\frac{1}{2}+\frac{1}{6L} \, .
    \]
    Next, by Lemma~\ref{lem:quantum DP}, $C_{GL}(\textsf{code}(C))$ outputs the description of an oracle quantum circuit $G^{\mathcal{O}}$ of size $O(kn\cdot\text{size}(C))=\poly(n)$ in time $\poly(n,k)$ such that 
    \[
        \E_{x_1,\dots,x_k,G^C}\left[|(\bra{f^k(x_1,\dots,x_k)}\otimes I)G^C\ket{x,0^{k+m+1}}|^2\right]\geq\Omega\left(\frac{1}{L^3}\right)=\frac{1}{\poly(n)} \, .
    \]
    Finally, by Lemma~\ref{lem:quantum GL}, $C_{IKJW}(\textsf{code}(G))$ outputs the description of a quantum circuit $C$ of size $\poly(n,k,\text{size}(G),\log(1/\delta),1/\epsilon)=\poly(n,1/\delta,1/\epsilon)=\poly(n)$ in time $\poly(n)$ such that
    \[
        \E_{x\sim\{0,1\}^n,C}[C(x)=g(x)]\geq1-\delta \, .
    \]
    We conclude that there is a polynomial time $(1/3,\delta)$-quantum learning algorithm for $\class{C}$.
    
    \item Let $ALG$ be a $(\epsilon,\delta)$-quantum learning algorithm for $\class{C}$ for some $\epsilon,\delta\in(0,1/2)$. We design the following quantum algorithm for $\class{C}$-$\class{MCSP}[\poly(n),\omega(\poly(n)),\tau]$. For every $\tau>0$, let $\epsilon=\tau/4$ and $\epsilon'=\tau/2$.
    
    \begin{algorithm}[H]
	\caption{A quantum algorithm for $\class{C}$-$\class{MCSP}[\poly(n),\omega(\poly(n)),\tau]$}
    \begin{algorithmic}[1]
	    \Input The truth table $T$ of a $n$-variate Boolean function $f$.
	    \For{$i=1,\dots,10\ceil*{\log1/\delta}$}
	        \State Run $ALG$ and supply quantum oracle access to $f$ using the truth table $T$. Let $C_i$ be the output of $ALG$.
	        \State Uniformly and independently sample $x_1,\dots,x_\ell\in\{0,1\}^n$ where $\ell=\ceil*{100\log(1/\delta)/\epsilon^2}$.
	        \If{$\sum_{j\in[\ell]}|(\bra{f(x_j)}\otimes I)U\ket{x,0^m}|^2\geq(1-\frac{\epsilon+\epsilon'}{2})\cdot \ell$}
	        \State Break and output ``Yes''.
	        \EndIf
	    \EndFor
	    \State Output ``No''.
    \end{algorithmic}
    \end{algorithm}
    Let us analyze the correctness of the above algorithm. First, if $f$ is an Yes instance, i.e., there exists a polynomial size quantum circuit $C$ that computes $f$, then due to the correctness of $ALG$, $\Pr_{C_i}[\mathbb{E}_{x\in\{0,1\}^n}[|(\bra{f(x)}\otimes I)U\ket{x,0^m}|^2]>1-\epsilon]>\delta$ for each $i$. Namely, with probability at least $9/10$, there exists an $i\in[10\ceil*{\log1/\delta}]$ such that $\mathbb{E}_{x\in\{0,1\}^n}[|\{x\in\{0,1\}^n: |(\bra{f(x)}\otimes I)U\ket{x,0^m}|^2\}|]\geq1-\epsilon$. For this specific $i$, by Chernoff bound, with probability at least $9/10$ the algorithm will go to line 5 and output ``Yes''. That is, the above algorithm accepts an Yes instance with probability at least $2/3$ as desired.

    Next, if $f$ is an No instance, i.e., for every polynomial size quantum circuit $C$, at least $\tau$ fraction of $x\in\{0,1\}^n$ has $|(\bra{f(x)}\otimes I)U\ket{x,0^m}|^2\leq1/2$. Hence, by the choice of $\epsilon'$, we have    $\mathbb{E}_{x\in\{0,1\}^n}[|(\bra{f(x)}\otimes I)U\ket{x,0^m}|^2]<(1-\epsilon')$. For each $i\in[10\ceil*{\log1/\delta}]$, $C_i$ is a polynomial size circuit and hence by Chernoff bound, the algorithm goes to line 5 with probability at most $2^{-\Omega(\epsilon^2m)}$. Due to the choice of $m$, we know that the algorithm will output ``No'' with probability at least $2/3$. That is, the above algorithm rejects an No instance with probability at least $2/3$ as desired.

    Finally, the running time of the algorithm is $\poly(\text{Time}(ALG),1/\delta,1/\epsilon,n,m)$ where the dependency on $\poly(n,m)$ is for calculating $C_i(x_j)$ using the quantumness. Note that this running time is polynomial in the size of the truth table and hence we conclude that $\class{C}$-$\class{MQCSP}[\poly(n),\omega(\poly(n)),\tau]\in\class{BQP}$.
\end{itemize}
\end{proof}

\section{Proofs in Section~\ref{sec:main circuit lbs}}\label{sec:ckt lb}
In this section, we provide some missing proofs in Section~\ref{sec:main circuit lbs}. 

\subsection{Proof for Theorem~\ref{thm:ckt lb from MQCSP in BQP max}}\label{sec:mqcsp_bpe_qcma}

The goal of this section is to prove Theorem~\ref{thm:ckt lb from MQCSP in BQP max}.

\cktlbmax*

\begin{proof}
We follow the proof of a classical result in \cite[Theorem 10]{KC00}.

We first determine the maximum quantum circuit complexity for all Boolean functions using an $\class{MQCSP}$ oracle. For each $s=2^{O(n)}, 2^{O(n)}-1,\cdots$, decide if there exists a function $f_s$ such that $\mathrm{qCC}(f_s)\geq s$. The first $s$ we meet such that $f_s$ exists is the maximum quantum circuit complexity. It can be achieved by a $\class{QCMA}$ algorithm with input $1^s$, by the assumption $\class{MQCSP}\in \class{BQP}$. Hence, in classical $2^{O(n)}$ time with query access to a $\class{QCMA}$ oracle, we can find the maximum quantum circuit complexity $s_\star$ with high probability.

Then, we can construct the truth table by guessing bit-by-bit. We start from the empty truth table $T=\emptyset$. We first try to choose the first bit $T_1 = 0$ and decide if $T$ can be extended to a truth table with quantum circuit complexity $s_\star$, which can be done by a $\class{QCMA}$ oracle query. If the answer is ``No'', we set $T_1=1$. Then, we iterate over all bits of $T$. It is easy to see that in $O(2^n)$ time we can construct $T$ with high probability.

Therefore, we get a $\class{BPE^{QCMA}}$ algorithm for the maximum quantum circuit complexity problem, which immediately gives a $\class{BPE^{QCMA}}$ algorithm for computing such hard functions. By Claim~\ref{claim:gatecount}, this function has quantum circuit complexity at least $\Omega(2^{n}/n)$. Hence, by a padding argument for quantum circuits, we obtain a polynomial lower bound for $\class{BQP^{QCMA}}$. 
\end{proof}

\subsection{Proof of Quantum Antichecker Lemma}\label{sec:magnify}
The goal of this section is to prove Lemma~\ref{lem:antichecker}.

\antichecker*


\begin{proof}
The proof follows \cite{chopr20}. 

Let  $\lambda\in (0,1)$ and $f$ be a Boolean function with $n$ input bits that is hard for $2^{n^\lambda}$-size quantum circuits.

For $i\geq 0$ and $s\in [0,1]$, define the predicate: 
\begin{align*}
    &P_f(y_1,\dots,y_i)[s] = 1 ~~\Longleftrightarrow\\
    &\leq s\text{ fraction of all quantum circuits of size $2^{n^\lambda}/2n$ compute}~f~\text{correctly on}~y_1,\dots,y_i.
\end{align*}
We also define the function:
\begin{align*}
    R_f(y_1,\dots,y_i):=\# \left\{\text{quantum circuits of size}
    ~2^{n^\lambda}/2n~\text{compute}~f~\text{correctly on}~y_1,\dots,y_i\right\}.
\end{align*}

Then, we construct $y_1,\dots,y_{2^{O(n^\lambda)}}$ iteratively. It is easy to see that $P_f(\cdot)[1]=1$. Suppose we already have $y_1,\dots,y_{i-1}$ such that $P_f(y_1,\dots,y_{i-1})[(1-1/4n)^{i-1}]=1$ holds. We want to find $y_i$ such that $P_f(y_1,\dots,y_{i})[(1-1/4n)^{i}]=1$. We will construct a formula $F$ of size $2^{O(n^\lambda)}$ such that if $R_f(y_1,\dots,y_{i-1})\geq 2n^2$, then
\begin{align*}
    & P_f(y_1,\dots,y_{i-1})\left[(1-1/4n)^{i-1}\right] = 1\\
    \Rightarrow~&\exists y_i~F(y_1,\dots,y_i,f(y_1),\dots,f(y_i))=1\\
    \Rightarrow~&P_f(y_1,\dots,y_i)\left[(1-1/4n)^i\right]=1.
\end{align*}

We first show how to find $y_i$ given this formula $F$. The idea is to use Valiant-Vazirani Isolation Lemma. Let $r$ be uniformly chosen from $\{2,n+1\}$ and let $h:\{0,1\}^n\rightarrow \{0,1\}^r$ be uniformly chosen from a pairwise independent hash family $\mathcal{H}_{n,r}$. Consider the following predicate
\begin{align*}
    &~F^{r,h}(y_1,\dots,y_{i-1},z,f(y_1),\dots,f(y_{i-1}),f(z)):=\\
    &~F(y_1,\dots,y_{i-1},z,f(y_1),\dots,f(y_{i-1}),f(z)) \wedge h(z)=0^r.
\end{align*}
The quantum circuit size of $F^{r,h}$ is $2^{O(n^\lambda)}$.

By the Isolation Lemma, for fixed $y_1,\dots,y_{i-1}$, with probability at least $1/8n$, there is a unique $z$ such that
\begin{align*}
    F^{r,h}(y_1,\dots,y_{i-1},z,f(y_1),\dots,f(y_{i-1}),f(z))=1.
\end{align*}

If we sample $2^{O(n^\lambda)}$ many tuples of $(r,h)$, then the probability that none of those $(r,h)$ will lead to unique solution of $F^{r,h}$ is less than  $2^{-2^{O(n^\lambda)}/8n}\leq 2^{-2^{O(n^\lambda)}}$ by choosing proper constant. On the other hand, the total number of all possible $y_1,\dots,y_{i-1},f(y_1),\dots,f(y_{i-1})$ is at most $2^{2^{O(n^\lambda)}}$. It means that there exists a set ${\cal R}$ of $2^{O(n^\lambda)}$ tuples of $(r,h)$ such that for any $y_1,\dots,y_{i-1},f(y_1),\dots,f(y_{i-1})$, there exists an $(r,h)\in {\cal R}$ that makes $F^{r,h}$ have unique solution. Note that ${\cal R}$ can be hard-wired into the circuit $C_n$. Hence, the $j$-th bit of the antichecker $y_i$ can be computed by the following formula of size $2^{n+O(n^\lambda)}$:
\begin{align}\label{eq:formula_y_i}
    \bigvee_{z\in \{0,1\}^n} z_j \wedge F^{r,h}(y_1,\dots,y_{i-1},z,f(y_1),\dots,f(y_{i-1}),f(z)).
\end{align}
Then, we need to select an $(r,h)$ from ${\cal R}$ that gives the unique $y_i$. This task is in $\class{QCMA}$, and by assumption, $\class{QCMA}\subseteq \class{BQC}[\poly]$. So, we just need to apply a $2^{O(n^\lambda)}$-size quantum circuit. 
Once we have $y_i$, $f(y_i)$ can be obtained from $\textsf{tt}(f)$ via an Address function, which can be implemented by a circuit of size $2^{n+O(\log n)}$.

By repeating this process, we can get $y_1,\dots,y_{2^{O(n^\lambda)}}$ and $f(y_1),\dots,f(y_{2^{O(n^\lambda)}})$ by a $2^{n+O(n^\lambda)}$ circuit. Then, we need to check $R_f(y_1,\dots,y_{2^{O(n^\lambda)}})\geq 2n^2$. Deciding whether $R_f(y_1,\dots,y_i)\geq 2n^2$ is in $\class{QCMA}\subseteq \class{BQC}[\poly]$ with input $(y_1,\dots,y_i,f(y_1),\dots,f(y_i), 1^{2^{O(n^\lambda)}})$ since the witness is $2n^2$ quantum circuits each of size $2^{n^{\lambda}}/2n$, which can be represented by a $2^{O(n^\lambda)}$ binary string. The witness can be checked by simulating the quantum circuits. Therefore, there exists a $2^{O(n^\lambda)}$ quantum circuit for it. When $R_f(y_1,\dots,y_i)\leq 2n^2$, the $2n^2$ circuits of size $2^{n^\lambda}/2n$ can be generated by an $\class{QCMA}^{\class{coQCMA}}$ algorithm. And since $\class{QCMA}\subseteq \class{BQC}[\poly]$, by uncomputing the garbage, we can show that $\class{QCMA}^{\class{coQCMA}}\subseteq \class{BQC}[\poly]$ and this step can be done by a $2^{O(n^\lambda)}$ quantum circuit. For each circuit, by exhaustively searching, we can find an $n$-bit string that witness the error. The circuit size of this step is $2^{n+O(n^\lambda)}$.

In order to construct $F$, we use a result in \cite{ops19} (Lemma 23) showing that if $P_f(y_1,\dots,y_{i-1})[(1-1/4n)^{i-1}]=1$ and $R_f(y_1,\dots,y_{i-1})\geq 2n^2$, then
\begin{align}\label{eq:ops_lem}
    \exists y_i~P_f(y_1,\dots,y_{i})\left[(1-1/4n)^{i-1}(1-1/2n)\right]=1.
\end{align}
The proof is by a standard counting argument, and by examining the proof, we find that it also holds for quantum circuits. 

By Eq.~\eqref{eq:ops_lem}, we know that there exists a $y_i$ such that $\leq(1-1/4n)^{i-1}(1-1/2n)<(1-1/4n)^i$ fraction of circuits of size $2^{n^\lambda}/2n$ that can compute $f$ on $y_1,\dots,y_i$. The remaining task is to find a witness (which is $F$) that can certify $P_f(y_1,\dots,y_{i})\left[(1-1/4n)^{i}\right]=1$. We can use an approximate counting with linear hash functions to construct $F$. More specifically, by \cite{jer09}, the witness is a set of matrices $A_1,\dots,A_{2^{O(n^\lambda)}}$ defining an injective map from the Cartesian power of the set of all circuits of size $2^{n^\lambda}/2n$ that compute $f$ on $y_1,\dots,y_i$ to the same Cartesian power of $(1-1/4n)^i$ fraction of the set of all circuits of size $2^{n^\lambda}/2n$. The existence of these matrices can be decided by an $\class{QCMA}^{\class{coQCMA}}$ algorithm,
which can also be decided by a $2^{O(n^\lambda)}$ quantum circuit, by our assumption. 
\end{proof}

\subsection{Quantum Impagliazzo-Wigderson generator}\label{sec:mqcsp_amp}
The goal of this section is to prove Lemma~\ref{lem:quantum_IW_prg}.

\quantumiwprg*


Before giving the proof, we first recall some necessary definitions and lemmas in the previous work.

\begin{lemma}[A variant of Lemma 4.29 in \cite{agg20}]\label{lem:random_reducible}
Let $L:\{0,1\}^*\rightarrow \{0,1\}$ be a language that is randomly reducible to the language $L'$. For every $n$, suppose we have the description of a quantum circuit $U$ such that 
  \begin{align*}
    \E_{x \in \{0,1\}^n} \left[ \|\Pi_{L'(x)} U\ket{x,0^ q} \|^2 \right]\geq
    1-\frac{1}{n^{k}},
  \end{align*}
  for some $k \geq 2b + a$.
  
There is a $O(|U| \cdot \poly(n))$-size quantum circuit $\widetilde{U}$ that satisfies
\begin{align*}
    \|\widetilde{\Pi}_{x} \widetilde{U}\ket{0,x,0^{q^*}}\|^2 \geq 1- 2^{-2n+1} \qquad \text{ for every } x\in \{0,1\}^n,
\end{align*}
where $\widetilde{\Pi}_{x} = \ket{L(x)}\bra{L(x)} \otimes \ket{x}\bra{x} \otimes \ket{0^{q^*}}\bra{0^{q^*}}$ and $q^* = \poly(n)$.
\end{lemma}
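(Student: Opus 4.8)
The plan is to upgrade the average-case correctness of $U$ to worst-case correctness by running it on the (marginally uniform) queries produced by the random reduction, then to amplify the success probability to $1-2^{-2n+1}$ by independent repetition and a majority vote, and finally to make the whole construction a clean reversible circuit so that it really has the form demanded by $\widetilde\Pi_x=\ket{L(x)}\bra{L(x)}\otimes\ket{x}\bra{x}\otimes\ket{0^{q^*}}\bra{0^{q^*}}$. First I would unpack ``$L$ is randomly reducible to $L'$'': there is a $\poly(n)$-time randomized reduction that on input $x\in\{0,1\}^n$ with randomness $r$ outputs $m=\poly(n)$ queries $q_1(x,r),\dots,q_m(x,r)$, each \emph{marginally} uniform over $\{0,1\}^{n'}$ with $n'=\poly(n)$, together with a $\poly(n)$-size reconstruction procedure $D$ that recovers $L(x)$ from $(x,r,L'(q_1),\dots,L'(q_m))$; for the reductions relevant to the quantum hardness amplification one has the stronger guarantee that $D$ outputs $L(x)$ \emph{exactly} whenever all oracle answers are correct, and I would use it in that form.

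Next I would assemble a single ``trial'' circuit acting on $\ket{0,x,0^{q^*}}$ that: (i) creates a uniform $r$ in an ancilla register via Hadamards; (ii) reversibly computes the queries $q_i(x,r)$ into work registers; (iii) for each $i$, runs $U$ on $\ket{q_i(x,r),0^q}$ with fresh ancillas and applies a \fCNOT\ copying the output bit into an answer register $a_i$; (iv) applies $D$ coherently to $(x,r,a_1,\dots,a_m)$ to produce a candidate bit. The key estimate is that this trial errs with probability at most $O(m/n^k)$ on \emph{every} fixed $x$: since each $q_i(x,r)$ is marginally uniform, the hypothesis gives $\E_r\!\big[\,1-\|\Pi_{L'(q_i(x,r))}U\ket{q_i(x,r),0^q}\|^2\,\big]\le 1/n^k$, so by linearity the expected number of answer bits that would measure to a wrong value is at most $m/n^k$; because the $U$-invocations use disjoint ancillas they are independent, and because $D$ maps computational-basis states to computational-basis states, the final measured output equals $L(x)$ except with probability $\le m/n^k$ over the trial's internal randomness. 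The hypothesis $k\ge 2b+a$ (with $m\le n^b$ and the reduction's residual failure probability controlled through $n^{-a}$) is exactly what makes this per-trial error bounded by a constant below $1/2$.

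Then I would amplify: run $T=O(n)$ independent trials (each with its own fresh $r$ and workspace), write the majority of the candidate bits into the designated answer qubit, and uncompute all $T$ trials by running their inverses, returning the workspace to $\ket{0^{q^*}}$ (standard uncomputation, costing only a constant factor in size and a negligible loss in the error parameter that is absorbed by taking $T$ slightly larger). By a Chernoff bound the majority disagrees with $L(x)$ on at most a $2^{-\Omega(T)}\le 2^{-2n+1}$ fraction of the amplitude weight, i.e.\ $\|\widetilde\Pi_x\widetilde U\ket{0,x,0^{q^*}}\|^2\ge 1-2^{-2n+1}$. The circuit $\widetilde U$ consists of $O(Tm)$ copies of $U$ together with $\poly(n)$ additional gates, doubled for the uncomputation, so $|\widetilde U|=O(|U|\cdot\poly(n))$ and it uses $q^*=\poly(n)$ ancillas.

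The main obstacle I expect is not the high-level structure but the care required to compose the \emph{quantum}, bounded-error subroutine $U$ inside $\widetilde U$: $U$ need not compute $L'$ correctly even on ``good'' inputs, and in $\widetilde U$ the queries live in superposition over the ancilla register holding $r$, so the event ``all answer registers are correct'' must be argued at the level of measurement probabilities (using independence of the disjoint ancilla blocks) rather than pointwise, and one must be scrupulous that the final circuit is reversible so that the ancilla genuinely returns to $\ket{0^{q^*}}$, which forces an explicit copy-and-uncompute step and a mild worsening of the intermediate error that must be tracked. A secondary, more routine nuisance is pinning down the query count and the reconstruction's failure probability of the specific random reduction being invoked so that the accumulated error obeys the quantitative bound implied by $k\ge 2b+a$.
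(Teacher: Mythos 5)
The paper does not actually prove this lemma: it is stated as an import (``a variant of Lemma 4.29 in \cite{agg20}'') and used as a black box in the proof of Lemma~\ref{lem:quantum_IW_prg}, so there is no in-paper proof to compare against. Your proposal reconstructs exactly the argument that the cited lemma uses — run the random reduction, bound the per-query error by marginal uniformity of the queries plus a union bound, amplify by $O(n)$ independent repetitions with a coherent majority vote, and restore the ancillas by copy-and-uncompute — and it is sound; in particular you correctly flag the one quantitative subtlety, namely the square-root degradation of the error under uncomputation, which is absorbed by a constant factor in the number of repetitions.
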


\begin{definition}[Expander walks]\label{def:expander_walk}
Let ${\cal G}$ be a graph with vertex set $\{0,1\}^n$ and degree $16$.  Let the expander walk generator $\mathsf{EW}:\{0,1\}^n\times [16]^k\rightarrow \{0,1\}^{nk}$ such that $\mathsf{EW}(v,d):=(v_1,\dots,v_k)$, where $v_1=v$ and $v_{i+1}$ is the $d_i$-th neighbor of $v_i$ in ${\cal G}$. 
\end{definition}

\begin{definition}[Nearly disjoint subsets]\label{def:nds}
Let $\Sigma=\{S_1,\dots,S_k\}$ be a family of subsets of $[m]$ of size $n$. We say $\Sigma$ is $\gamma$-disjoint if $|S_i\cap S_j|\leq \gamma n$ for any $i\ne j$. 

For $r\in \{0,1\}^m$, $S\subseteq [m]$, let $r|_S$ be the restriction of $r$ to $S$. Then, for a $\gamma$-disjoint $\Sigma$, $\mathsf{ND}^\Sigma:\{0,1\}^m\rightarrow \{0,1\}^{nk}$ is defined by $\mathsf{ND}^\Sigma (r):=r|_{S_1},\dots,r|_{S_k}$.
\end{definition}

\begin{definition}[$M$-restrictable]\label{def:restrictable}
We say $G_n:\{0,1\}^m\rightarrow \{0,1\}^{nk}$ is $M$-restrictible if there exists a polynomial-time computable function $h:[n]\times \{0,1\}^n\times \{0,1\}^m\rightarrow \{0,1\}^m$ such that 
\begin{itemize}
    \item For any $i\in [n],x\sim \{0,1\}^n, \alpha\sim \{0,1\}^m$, $h(i,x,\alpha)$ is uniformly distributed.
    \item For any $i,x,\alpha$, let $G(h(i,x,\alpha)):=x_1,\dots,x_k$. Then, we have $x_i=x$.
    \item For any $i, j\ne i$, for any $\alpha$, there exists a set $S\subseteq \{0,1\}^n$, $|S|\leq M$ such that for any $x$, $x_j\in S$.
\end{itemize}
\end{definition}

\begin{definition}[$(k',q,\delta)$-hitting]\label{def:hitting}
We say $G_n:\{0,1\}^m\rightarrow \{0,1\}^{nk}$ is $(k',q,\delta)$-hitting if for any sets $H_1,\dots,H_k\subseteq\{0,1\}^n$, $|H_i|\geq \delta 2^n$, we have 
\begin{align*}
    \Pr[|\{i:x_i\in H_i\}|<k']<q.
\end{align*}
\end{definition}

\begin{proof}[Proof of Lemma~\ref{lem:quantum_IW_prg}]
We follow the proof in \cite{iw97}. We first assume that there exists a function $f_0:\{0,1\}^n\rightarrow \{0,1\}$ such that the quantum circuit complexity of $f_0$ is $2^{\Omega(n)}$. We may assume that $f_0\in \class{BQE}$. Then, encoding the truth table of $f_0$ by a locally list-decodable code, we obtain a function $f_1:\{0,1\}^{O(n)}\rightarrow \{0,1\}$ such that $f_1\in \class{BQE}$, and for any quantum circuit ${\cal B}_1$ of size less than $2^{\Omega(n)}$,
    \begin{align*}
        \E_{x\sim \{0,1\}^{O(n)},{\cal B}_1}[B_1(x)=f_1(x)]:=\E_{x\sim \{0,1\}^{O(n)},{\cal B}_1}[\|\Pi_{f_1(x)}{\cal B}_1\ket{x,0}\|]\leq 1-n^{-O(1)}.
    \end{align*}
The properties of $f_1$ can be proved by Lemma~\ref{lem:random_reducible}.

Then, by Lemma~\ref{lem:quantum GL} with $k=\poly(n), \epsilon=O(1), \delta=\frac{1}{\poly(n)}$, we have a function $f_2=f_1^{\otimes k} : \{0,1\}^{kn}\rightarrow \{0,1\}^k$ such that for any quantum circuit ${\cal B}_2$ of size less than $2^{\Omega(n)}$, 
\begin{align*}
    \E_{x\in \{0,1\}^{nk},{\cal B}_2}[{\cal B}_2 (x) = f_2(x)]\leq O(1).
\end{align*}

We can apply the quantum  Goldreich-Levin Theorem (Lemma~\ref{lem:quantum DP}) to $f_2$ and get a function $f_3:\{0,1\}^{n}\rightarrow \{0,1\}$ (scaling the input size) such that for any quantum circuit ${\cal B}_3$ of size less than $2^{\Omega(n)}$, 
\begin{align*}
    \E_{x\in \{0,1\}^{n},{\cal B}_3}[{\cal B}_3 (x) = f_2(x)]\leq \frac{2}{3}.
\end{align*}

The remaining thing is to ``quantize'' the direct-product generator defined by \cite{iw97} using $f_3$. More specifically, we say $G$ is a $(s,s',\epsilon,\delta)$ \emph{quantum direct-product generator} if $G:\{0,1\}^{m}\rightarrow \{0,1\}^{nk}$ such that for every Boolean function $g$ that is $\delta$-hard for any quantum circuit of size $s$, we have $g^{\otimes }\circ G$ is $\epsilon$-hard for any quantum circuit of size $s'$. The main result of \cite{iw97} is the construction of $(2^{\Omega(n)},2^{\Omega(n)},2^{-\Omega(n)}, \frac{1}{3})$ direct-product generator. We first briefly describe the construction and then show that it also works for quantum circuits.

The direct-product generator in \cite{iw97} is constructed from the expander random walks (Definition~\ref{def:expander_walk}) and nearly disjoint subsets (Definition~\ref{def:nds}). They defined the direct-product generator $\mathsf{XG}(r,r',v,d):=\mathsf{EW}(v,d)\oplus \mathsf{ND}^\Sigma(r')$, where $\Sigma\subseteq[m]$ is selected by $r$ such that $|r|=O(n), |r'|=m=O(n), |v|=n, |d|=O(n)$. They proved that $\mathsf{XG}$ is $2^{\Omega(n)}$-restrictible and $(O(n),2^{-\Omega(n)},1/3)$-hitting. It's easy to see that the restrictible and hitting properties are pure combinatorial and circuit independent, which means that they also hold for quantum circuits. Then, they proved that these combinatorial properties imply $\mathsf{XG}$ is also a direct product generator. This step, however, need to be reproved for quantum circuits.

\begin{claim}\label{clm:H_prob}
Let $s>0$, $G(r):\{0,1\}^m\rightarrow \{0,1\}^{nk}$ be a $(\rho k, q, \delta)$-hitting, $M$-restrictible pseudo-random generator, where $q>2^{-\rho k/3}, s>2Mnk$. Then, $G$ is a $(s, \Omega(sq^2 n^{-O(1)}), O(q), \delta)$-quantum direct product generator.
\end{claim}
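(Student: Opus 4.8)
The plan is to prove the contrapositive in the style of Impagliazzo--Wigderson~\cite{iw97}, replacing each ``hard-wire and restrict'' step of their argument by a quantum-friendly ``hard-wire, run, measure and verify'' step. I would fix a Boolean function $g:\{0,1\}^n\to\{0,1\}$ that is $\delta$-hard for quantum circuits of size $s$ --- every size-$s$ quantum circuit errs (outputs the wrong bit with probability more than $1/3$ after the final measurement) on at least a $\delta$-fraction of inputs --- and assume towards a contradiction that some quantum circuit $\Circ$ of size $s'=\Omega(sq^2n^{-O(1)})$ satisfies
\begin{align*}
    \Pr_{r,\,\Circ}\big[\Circ(G(r))=g^{\otimes k}(G(r))\big]\ \geq\ \varepsilon,
\end{align*}
with $\varepsilon=O(q)$ and correctness meaning that all $k$ output blocks are measured correctly. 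The goal is to extract from $\Circ$ a quantum circuit of size at most $s$ computing $g$ on more than a $(1-\delta)$-fraction of inputs, contradicting the hardness of $g$.

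The first point I would make is that the combinatorial ingredients of $G$ --- its construction from expander walks and nearly-disjoint subsets, its $M$-restrictibility (Definition~\ref{def:restrictable}), and its $(\rho k,q,\delta)$-hitting property (Definition~\ref{def:hitting}) --- are statements purely about the distribution of $G(r)$ and are independent of the computational model, so they carry over verbatim from~\cite{iw97}. Concretely I would keep the reconstruction function $h:[k]\times\{0,1\}^n\times\{0,1\}^m\to\{0,1\}^m$ that plants a chosen $x$ into a chosen block $i$ of $G(h(i,x,\alpha))$ while keeping $h(i,x,\alpha)$ uniform for random $x,\alpha$, and, for each fixed $i,\alpha$, the set $S_{i,\alpha}\subseteq\{0,1\}^n$ of size at most $Mk$ containing all the $j\neq i$ blocks of $G(h(i,x,\alpha))$ over all $x$.

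The reconstruction I would set up, following~\cite{iw97}, defines for each index $i$ and advice $\alpha$ a candidate circuit $\Circ_{i,\alpha}$ that, on input $x$: forms $r=h(i,x,\alpha)$, runs $\Circ$ on $G(r)$, uses a hard-wired table of $\{g(y):y\in S_{i,\alpha}\}$ to verify that $\Circ$'s output is consistent with $g$ on the $j\neq i$ blocks, and, if the check passes, outputs block $i$ of $\Circ$'s output (a default value otherwise). Since $h(i,x,\alpha)$ is uniform, whenever $\Circ$ is fully correct on $G(r)$ the check passes and block $i$ is correct; the $(\rho k,q,\delta)$-hitting property is what bounds the fraction of ``bad'' $(i,\alpha)$ for which the planted-elsewhere blocks fail to hit the error set of $g$ (density $\geq\delta$) often enough, exactly as in~\cite{iw97}. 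I would then average over $i$ and $\alpha$ --- now over the joint distribution that also includes the internal randomness of $\Circ$ and $\Circ_{i,\alpha}$, with Markov's inequality used to turn the expectation into a fixed good choice --- to obtain $(i^\star,\alpha^\star)$ for which $\Circ_{i^\star,\alpha^\star}$, after a constant number of independent repetitions, computes $g$ with $2/3$-confidence on more than a $(1-\delta)$-fraction of inputs. For the size bound I would argue $|\Circ_{i^\star,\alpha^\star}|\leq|\Circ|+\widetilde O(Mk)+\poly(n)$: the $\widetilde O(Mk)$ is the hard-wired $g$-table (hard-wiring a classical lookup table into a quantum circuit costs only near-linear gates), the $\poly(n)$ accounts for $h$, $G$, the consistency check and the repetitions, intermediate measurements are deferred to the end via the principle recalled in Section~\ref{sec:main-prelim}, and the hypotheses $s>2Mnk$ and $s'=\Omega(sq^2n^{-O(1)})$ are exactly what makes this total at most $s$.

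The step I expect to be the main obstacle is the point where~\cite{iw97} ``restricts'' the distinguisher to the event that the non-$i$ blocks take their planted values: classically this is a clean conditioning argument, but a quantum computation cannot be conditioned on such an event for free. My fix is the ``run--measure--verify--repeat'' device above, which is legitimate only because $M$-restrictibility forces the planted-elsewhere coordinates into a set small enough to hard-wire $g$ on, so the verification is genuinely computable within the budget; I would then have to check that the extra $\poly(n)$ and constant-repetition factors do not destroy the quantitative relations --- in particular that $q>2^{-\rho k/3}$ still leaves $\varepsilon=O(q)$ a usable advantage after amplification. This is where I would spend the real effort, leaning on the fact that the amplification primitives of~\cite{agg20} (Lemmas~\ref{lem:random_reducible}, \ref{lem:quantum DP}, \ref{lem:quantum GL}) are already stated in terms of \emph{expected} success probability over quantum randomness and thus compose cleanly with the averaging here.
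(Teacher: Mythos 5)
Your high-level framing (contrapositive, plant $y$ into block $i$ via the restriction function $h$, hard-wire $g$ on the small set $S_{i,\alpha}$ guaranteed by $M$-restrictibility, invoke the hitting property, and handle quantumness via the amplification toolkit of~\cite{agg20}) matches the paper's intent, but two of your core steps do not work as described. First, your reconstruction uses a \emph{hard} consistency check: verify $\Circ$'s output against the table on all blocks $j\neq i$, and output block $i$ if and only if the check passes (default otherwise). Writing $A=\Pr[\text{all $k$ blocks correct}]$ and $B=\Pr[\text{blocks $j\neq i$ correct but block $i$ wrong}]$, the advantage of this rule over a coin flip is $(A-B)/2$, and the hypothesis only lower-bounds $A\geq\varepsilon=O(q)$ while saying nothing about $B$; an adversarial $\Circ$ that is correct on all blocks with probability $\varepsilon$ and correct on exactly the blocks $j\neq i$ otherwise makes your reconstruction anti-correlated with $g$. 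This is precisely why the paper (following \cite{iw97}) instead counts the number $t$ of discrepancies on the $j\neq i$ blocks and outputs $c_i$ only \emph{with probability $2^{-t}$} (else a random bit); the resulting soft weighting is what Theorem~3.2 of \cite{iw97} analyzes, and it is the step your proposal replaces with something unjustified. Relatedly, the paper's analysis is carried out over every $\delta$-dense set $H$ (showing advantage $\tfrac{1}{2}+\tfrac{q}{2}$ on each such $H$, using the change-of-measure identity relating the distribution of $G(h(i,y,\alpha))$ for $y\sim H$ to that of $G(r)$ for uniform $r$), not directly over a $(1-\delta)$ fraction of inputs; the dense-set formulation is what licenses the final amplification.

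Second, your quantitative accounting cannot produce the claim as stated. You amplify with ``a constant number of independent repetitions'' and bound the final size by $|\Circ|+\widetilde O(Mk)+\poly(n)$. But the advantage obtained from the reconstruction is only $\Theta(q)$, and $q$ may be as small as $2^{-\rho k/3}$, so constant repetition cannot reach $2/3$ confidence; the paper takes the majority of $O(n/q^2)$ independent copies, giving a final size $O((|{\cal C}|+kM)\cdot n/q^2)$. This $n/q^2$ blow-up is exactly why the claim requires the starting circuit to have size $\Omega(sq^2n^{-O(1)})$ — the $q^2n^{-O(1)}$ factor you quote but never actually use. As written, your size bound would prove a stronger (and false, given your reconstruction) statement with $s'\approx s$. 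Finally, the quantum-specific device the paper uses is to first run the analysis for ``inherently probabilistic'' circuits and then convert back via Lemma~2.7 of~\cite{agg20}; deferring measurements alone does not address how the probabilistic output of $\Circ$ composes with the probabilistic acceptance rule, so you would need to make this conversion explicit.
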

\begin{proof}
Let $\epsilon = (4\delta / \rho + 1) q$. Suppose there is a quantum circuit ${\cal C}$ such that 
\begin{align*}
    \E_{x\sim \{0,1\}^m, {\cal C}}\left[{\cal C}(x) = g^{\otimes k}\circ G(x)\right]\geq \epsilon.
\end{align*}

Then, we construct a quantum circuit ${\cal F}$ of size $O(|C|+kMn)$ such that for any $H\subseteq \{0,1\}^n,|H|\geq \delta 2^n$, 
\begin{align*}
    \E_{y\sim H, {\cal F}}\left[{\cal F}(y) = g(y)\right]\geq \frac{1}{2}+\frac{q}{2}.
\end{align*}
We use the same construction as \cite{iw97}. Let $i\sim [k], \alpha_0\sim \{0,1\}^m$. Let $x_1,\dots,x_k$ be the output of $G(h(i, x,\alpha_0))$. For each $j\ne i$, we non-uniformly construct a table of $g(x_j)$ for any $x_j$ that is a possible output of $G(h(i,x,\alpha_0))$ for different $x$. Since $G$ is $M$-restrictible, each table has at most $M$ values. Then, on input $y\in \{0,1\}^n$, the circuit ${\cal F}$ simulates ${\cal C}$ on $h(i, y, \alpha_0)$ and let $c_1,\dots,c_k$ be the output. Then, for $y_1,\dots,y_k:=G(h(i,y,\alpha))$, ${\cal F}$ counts the number of indices $j\ne i$ such that $c_i\ne g(y_i)$ using the tables. Let $t$ be the number. Then, with probability $2^{-t}$, $\cal F$ outputs $c_i$; otherwise, $\cal F$ outputs a random bit.

For analysis of quantum circuits, as in \cite{agg20}, we first consider ${\cal C}$ being an inherently probabilistic circuit. For any $H\subseteq \{0,1\}^n$, let $y\sim H$ uniformly at random. Then, for any $y_1,\dots,y_k\in (\{0,1\}^n)^k$, 
\begin{align}\label{eq:prob_F_G}
    \Pr_{y\sim H} [y_1,\dots,y_k~\text{generated by }{\cal F}]=\frac{u}{\delta k} \cdot \Pr_{r\sim \{0,1\}^m}[y_1,\dots,y_k~\text{generated by }G(r)].
\end{align}
where $u$ is the number of $y_i\in H$. Since $\E_{r, {\cal C}}[{\cal C}(r)=g^{\otimes k}(G(r))]\geq \epsilon$, for a random $r$, the probability that $u\geq \rho k$ and ${\cal C}(r)=g^{\otimes k}(G(r))$ is at least $\epsilon - q$, by the hitting property of $G$. 
Hence, the probability that $u\geq \rho k$ and ${\cal C}$ succeeds for $y_1,\dots,y_k$ generated by $F$ on a random $x\in H$ is $(\epsilon - q)\cdot \rho/\delta = 4q$, since each $(y_1,\dots,y_k)$ has at least $\rho/\delta$ of its probability under $G(r)$ by Eq.~\eqref{eq:prob_F_G}. Then, we can compute the expected success probability of ${\cal F}$ on $y\in H$ given $u\geq \rho k$ by Theorem 3.2 in \cite{iw97}, which is
\begin{align*}
    \E_{y\sim H}[{\cal F}(y)=g(y) ~|~ u\geq \rho k] \geq \frac{1}{2} + q.
\end{align*}
Since $u\geq \rho k$ has probability at least $1-q$, the overall success probability is at least $(1+q)/2$.
Finally, by Lemma 2.7 in \cite{agg20}, we can change the inherently probabilistic circuit by a quantum circuit and the result still holds. 

Hence, ${\cal F}$ has expected probability $(1+q)/2$ on $1-\delta$ fraction of inputs. Then, we can take $O(n/q^2)$ copies and take the majority of them, which gives a circuit of size $O((|{\cal C}|+kM)n/q^2)\leq s$ if $|{\cal C}|=\Omega(sq^2 n^{-O(1)})$, and has success probability at least $1-\delta$. The Claim is then proved.
\end{proof}
By Claim~\ref{clm:H_prob}, we know that $\mathsf{XG}:\{0,1\}^{O(n)}\rightarrow \{0,1\}^{n^2}$ is a $(2^{\Omega(n)}, 2^{\Omega(n)}, 2^{-\Omega(n)}, 1/3)$-quantum direct product generator.

Finally, feeding the output of $\mathsf{XG}$ to the quantum Nisan-Wigderson generator (Lemma~\ref{lem:quantum NW}) $C_{NW}$ gives the desired quantum pseudo-random generator, which completes the proof of the lemma. 
\end{proof}
\section{Quantum fine-grained hardness based on \textsf{QETH}}\label{sec:fine grained}
In this section, we will show that $2n\times 2n$ bipartite permutation independent set problem is hard under $\textsf{QETH}$. 
\qethperm*

More specifically, We ``quantize'' the fine-grained reduction in \cite{lms11}. The reduction chain is as follows:
\begin{align*}
    & \textsf{3-SAT} \leq_{FG} \textsf{3-Coloring} \leq_{FG} n\times n ~\textsf{Clique} \leq_{FG} n \times n ~\textsf{Permutation Clique}\\
    &\leq_{FG} n\times n ~\textsf{Permutation Independent Set}\\
    & \leq_{FG} 2n\times 2n ~\textsf{Bipartite Permutation Independent Set}
\end{align*}

We first define some intermediate fine-grained problems.
\begin{definition}[$n\times n$ \textsf{Clique} problem]
Given a graph on the vertex set $[n]\times [n]$, decide if there exists $i_1,\dots,i_n \in [n]$ such that the subgraph on $(1,i_1),\dots,(n, i_n)$ forms an $n$-clique.
\end{definition}

\begin{definition}[$n\times n$ \textsf{Permutation Clique/Independent Set} problem]
Given a graph on the vertex set $[n]\times [n]$, decide if there exists a permutation $\pi\in {\cal S}_n$ such that the subgraph on $(1,\pi(1)),\dots,(n, \pi(n))$ forms an $n$-clique/independent set.
\end{definition}

The following claims shows that the aforementioned reductions work for quantum lower bounds.
\begin{claim}\label{clm:qeth_to_3col}
Under \textsf{QETH}, there is no $2^{o(n)}$-time quantum algorithm for \textsf{3-Coloring}, where $n$ is the number of vertices in the input graph. 
\end{claim}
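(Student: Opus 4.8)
The plan is to derive this from \textsf{QETH} by composing two classical ingredients — the sparsification lemma and the textbook linear-size reduction from $3$-\textsf{SAT} to \textsf{3-Coloring} — and observing that a hypothetical fast quantum algorithm for \textsf{3-Coloring} can absorb polynomial-time classical preprocessing for free. This is exactly the quantum analogue of the standard argument that \textsf{ETH} implies $2^{o(N)}$-hardness of \textsf{3-Coloring}.

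First I would note that \textsf{QETH} is robust under sparsification. By the sparsification lemma of Impagliazzo, Paturi, and Zane, any $3$-\textsf{SAT} instance on $n$ variables can be rewritten, in time $2^{o(n)}$, as an OR of $2^{o(n)}$ many $3$-\textsf{SAT} instances each with only $O(n)$ clauses. Consequently, a $2^{o(n)}$-time quantum algorithm for the linear-clause version of $3$-\textsf{SAT} would give a $2^{o(n)}$-time quantum algorithm for general $3$-\textsf{SAT}, contradicting \textsf{QETH}. So it suffices to rule out a $2^{o(N)}$-time quantum algorithm for \textsf{3-Coloring}, assuming the sparsified form of \textsf{QETH}. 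Next I would invoke the classical many-one reduction from $3$-\textsf{SAT} to \textsf{3-Coloring}: given a $3$-CNF formula $\phi$ with $n$ variables and $m$ clauses, one builds in deterministic polynomial time a graph $G_\phi$ on $O(n+m)$ vertices (a constant-size palette gadget, one variable gadget per variable, one clause gadget per clause) such that $\phi$ is satisfiable iff $G_\phi$ is $3$-colorable.

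Putting these together: suppose toward a contradiction that some quantum algorithm decides \textsf{3-Coloring} on $N$-vertex graphs in time $2^{o(N)}$. Given a $3$-\textsf{SAT} instance on $n$ variables and $m = O(n)$ clauses, we classically construct $G_\phi$ in $\mathrm{poly}(n)$ time; it has $N = O(n)$ vertices. Running the assumed quantum algorithm on $G_\phi$ then decides satisfiability in time $2^{o(N)} + \mathrm{poly}(n) = 2^{o(n)}$, contradicting the sparsified form of \textsf{QETH} established in the first step. I expect the argument to be essentially routine; the only points requiring care are (i) that the $3$-\textsf{SAT}-to-\textsf{3-Coloring} reduction is genuinely linear in $n + m$, so that after sparsification one really gets $N = O(n)$ vertices, and (ii) that the quantum model permits absorbing the deterministic polynomial-time preprocessing with no asymptotic cost — both of which are standard. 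The same pattern (classical linear-size fine-grained reduction $+$ quantum algorithm absorbs polynomial preprocessing) will then be reused for each subsequent link in the reduction chain leading to the $2n \times 2n$ \textsf{Bipartite Permutation Independent Set} problem.
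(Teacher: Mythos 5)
Your proof is correct and follows the same core route as the paper's: the textbook linear-size reduction from $3$-\textsf{SAT} to \textsf{3-Coloring}, with the observation that polynomial classical preprocessing is absorbed into a $2^{o(n)}$-time quantum algorithm. The one substantive difference is that you invoke the sparsification lemma first, and this is not a cosmetic addition: the paper's one-line proof reduces a formula with $n$ variables and $m$ clauses to a graph on $N=O(n+m)$ vertices and immediately concludes, but since a $3$-CNF can have $m=\Theta(n^3)$ clauses, a $2^{o(N)}$-time algorithm only yields $2^{o(n^3)}$ time for $3$-\textsf{SAT}, which does not contradict \textsf{QETH} as stated (which is parameterized by the number of variables). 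Your sparsification step, which reduces to $2^{o(n)}$ many instances with $m=O(n)$ clauses each, is exactly what is needed to make $N=O(n)$ and close this gap; the sparsification lemma is a deterministic classical procedure, so it composes with a downstream quantum algorithm without issue. In short, your write-up is the careful version of the argument the paper gestures at.
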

\begin{proof}
By the NP-complete proof of \textsf{3-Coloring}, we know that a 3-CNF formula with $n$ variables and $m$ clauses can be reduced to a \textsf{3-Coloring} instance in time $O(n+m)$. Hence, a $2^{o(n)}$-time quantum algorithm for \textsf{3-Coloring} implies a $2^{o(n)}$-time quantum algorithm for \textsf{3-SAT}, which implies that \textsf{QETH} fails.
\end{proof}

\begin{claim}\label{clm:3col_to_kclique}
If $n\times n$ \textsf{Clique} can be solved in $2^{o(n\log n)}$ time quantumly, then 3-\textsf{Coloring} can be solved in $2^{o(n)}$ time quantumly.
\end{claim}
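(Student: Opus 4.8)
The plan is to give a purely classical many-one reduction from \textsf{3-Coloring} on $N$ vertices to $n\times n$ \textsf{Clique} with $n=\Theta(N/\log N)$, and then observe that composing this reduction with a hypothetical $2^{o(n\log n)}$-time quantum algorithm for $n\times n$ \textsf{Clique} yields a $2^{o(N)}$-time quantum algorithm for \textsf{3-Coloring}. Since the reduction itself is classical and runs in $\poly(N)$ time, the quantumness of the \textsf{Clique} algorithm plays no role except in the final composition step.

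First I would fix the block size. Given a graph $H$ on vertex set $[N]$, choose $n=\Theta(N/\log N)$ and $t=\lfloor\log_3 n\rfloor$ so that $n\cdot t=\Theta(N)$, and partition $[N]$ into $n$ blocks $V_1,\dots,V_n$, each of size at most $t$. For each block $V_i$, enumerate all proper $3$-colorings of the induced subgraph $H[V_i]$; there are at most $3^t\le n$ of them, so I would pad this list (by repetition) to exactly $n$ colorings $c_{i,1},\dots,c_{i,n}$. This enumeration takes $2^{O(t)}\cdot\poly(N)=\poly(N)$ time since $2^t\le n$.

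Next I would build the graph $G$ on vertex set $[n]\times[n]$: the vertex $(i,j)$ represents the event ``block $V_i$ is colored according to $c_{i,j}$''. For $i\neq i'$, put an edge between $(i,j)$ and $(i',j')$ iff the combined partial coloring $c_{i,j}\cup c_{i',j'}$ is proper on every edge of $H$ between $V_i$ and $V_{i'}$; put no edges within a row. Then a tuple $(1,j_1),\dots,(n,j_n)$ induces an $n$-clique in $G$ if and only if the colorings $c_{1,j_1},\dots,c_{n,j_n}$ are pairwise compatible, which — since each $c_{i,j_i}$ is already proper inside $V_i$ — is equivalent to their union being a proper $3$-coloring of $H$. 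Hence $H$ is $3$-colorable iff $G$ is a ``yes'' instance of $n\times n$ \textsf{Clique}, and $G$ has $\poly(n)$ size and is computable from $H$ in $\poly(N)$ time.

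Finally I would compose the two algorithms: on input $H$, build $G$ in $\poly(N)$ time and run the assumed $2^{o(n\log n)}$-time quantum algorithm for $n\times n$ \textsf{Clique} on $G$; the total running time is $\poly(N)+2^{o(n\log n)}=2^{o(N)}$ because $n\log n=\Theta(N)$. I do not expect a substantive obstacle here — the only points requiring care are the parameter choice guaranteeing $3^t\le n$ while keeping $n\log n=\Theta(N)$, and the padding of the color lists so that each row has exactly $n$ vertices as demanded by the $n\times n$ \textsf{Clique} formulation; both are routine. Combined with Claim~\ref{clm:qeth_to_3col}, this yields the $2^{o(n\log n)}$ quantum hardness of $n\times n$ \textsf{Clique} under \textsf{QETH}, which then propagates along the remaining links of the reduction chain toward Lemma~\ref{lem:qeth_perm_ind_set}.
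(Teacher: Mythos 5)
Your proposal is correct and follows essentially the same route as the paper, which simply invokes the Lokshtanov--Marx--Saurabh grouping reduction (producing a $k\times k$ \textsf{Clique} instance with $N\le k\log_3 k-k$) as a classical black box and composes it with the assumed quantum algorithm; you have merely written that reduction out explicitly, with the correct parameter bookkeeping $n\log n=\Theta(N)$. The only cosmetic point is the degenerate case where some block admits no proper $3$-coloring, in which case one outputs a fixed No-instance rather than padding an empty list.
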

\begin{proof}
We use the reduction given by \cite{lms11}. Let $G$ be an instance of 3-\textsf{Coloring} with $n$ vertices. The reduction can produce a graph $H$ with vertices $[k]\times [k]$ such that $n\leq k\log_3 k-k$. Then, $G$ is 3-colorable if and only if $H$ is a ``Yes'' instance of $k\times k$ \textsf{Clique}. The reduction takes $\poly(k)$-time classically.

Hence, if there exists a quantum algorithm for $k\times k$ \textsf{Clique} in time $2^{o(k\log k)}$, then it gives a quantum algorithm for 3-\textsf{Coloring} that runs in time $2^{o(n)}$.
\end{proof}

\begin{claim}\label{clm:cli_to_perm_cli}
If $n\times n$ \textsf{Permutation Clique/Independent Set} can be solved in $2^{o(n\log n)}$ time quantumly, then $n\times n$ \textsf{Clique} can also be solved in $2^{o(n\log n)}$ time quantumly.
\end{claim}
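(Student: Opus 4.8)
\textbf{Proof plan for Claim~\ref{clm:cli_to_perm_cli}.}
The plan is to reduce an instance of $n\times n$ \textsf{Clique} to an instance of $n\times n$ \textsf{Permutation Clique} (and similarly \textsf{Permutation Independent Set}) with only polynomial blow-up, so that a $2^{o(n\log n)}$-time quantum algorithm for the permutation variant transfers back. The key observation is that the permutation constraint is more restrictive than the ordinary \textsf{Clique} constraint only in that it forbids two chosen vertices from lying in the same column; the plain \textsf{Clique} problem already forces one vertex per row. So the reduction must ``spread out'' the columns used. First I would take the given graph $G$ on vertex set $[n]\times[n]$ and build a graph $G'$ on vertex set $[n]\times[n']$ for a slightly larger $n' = \poly(n)$ (say $n' = n^2$ or even $n'=2n$ suffices with the right gadget), where each original column $j\in[n]$ is replaced by a block of $n'/n$ fresh copies. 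Two copied vertices $(i, j_{\mathrm{copy}})$ and $(i', j'_{\mathrm{copy}})$ are adjacent in $G'$ iff $(i,j)$ and $(i',j')$ were adjacent in $G$ \emph{and} the copies are distinct column indices. Then a clique in $G$ using columns $j_1,\dots,j_n$ (with repetition allowed) lifts to a permutation clique in $G'$ by assigning each row a distinct column copy, and conversely any permutation clique in $G'$ projects down to a clique in $G$. The number of vertices grows only polynomially, so $2^{o(n'\log n')} = 2^{o(n\log n)}$, giving the desired contrapositive.

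The main steps, in order, are: (1) specify the blow-up parameter $n'$ and the column-copying construction precisely, making sure the adjacency rule in $G'$ correctly encodes both the original edge relation and the ``distinct column'' requirement; (2) prove the forward direction — a clique of the required shape in $G$ yields a permutation clique in $G'$ — which amounts to a counting/pigeonhole argument that there are enough column copies to assign distinct ones to all $n$ rows; (3) prove the reverse direction — a permutation clique in $G'$ collapses to a clique in $G$ — which is immediate by projecting column copies back to their originals and checking the edges survive; (4) verify the reduction runs in classical polynomial time and that $n' = \poly(n)$, so a hypothetical $2^{o(n\log n)}$ quantum algorithm for $n'\times n'$ \textsf{Permutation Clique} composes with the reduction to give a $2^{o(n\log n)}$ quantum algorithm for $n\times n$ \textsf{Clique}. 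The same construction works verbatim for \textsf{Independent Set} by complementing the edge relation (within each row-pair), since independent sets are cliques in the complement and the permutation/column structure is untouched.

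I would handle the \textsf{Permutation Independent Set} case by the standard device: define $G'$ so that the \emph{non-edges} of $G'$ encode the edges of $G$ restricted to distinct columns, i.e. run the clique construction on the ``bipartite-style complement'' that keeps the one-vertex-per-row bipartition intact. Care is needed here because the complement of a graph on $[n]\times[n']$ is not the right object — we only want to complement the relation between vertices in different rows, not introduce within-row edges — but this is exactly the convention already used in \cite{lms11}, so I would cite that and state the construction in the matching form. The reduction direction and time bound are then identical to the clique case.

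The step I expect to be the main obstacle is not the combinatorics (which is essentially bookkeeping lifted from \cite{lms11}) but rather confirming that the classical reduction of \cite{lms11} is genuinely \emph{fine-grained} in the sense needed for a \emph{quantum} lower bound: the reduction must be computable in time polynomial in the input size (not merely polynomial in some parameter), and the instance-size blow-up must be at most linear up to logarithmic factors so that $2^{o(n\log n)}$ is preserved. Since the reduction here is a simple local replacement of vertices and edges, this should go through — I would explicitly bound the construction time by $\poly(n)$ and the new vertex count by $O(n^2)$ (hence $n'\log n' = O(n\log n)$ after noting $n' \le n^2$ gives $\log n' = O(\log n)$), and remark that composing polynomial-time classical reductions with a $2^{o(n\log n)}$-time quantum algorithm stays within $2^{o(n\log n)}$ quantum time, which is the quantum fine-grained reduction framework of \cite{aclwz20} we are invoking throughout this section.
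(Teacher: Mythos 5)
Your proposed reduction has a fatal parameter blow-up that defeats the purpose of the claim. The relevant quantity for these grid problems is $n\log n$, so a fine-grained reduction must keep the new grid dimension $n'$ at $O(n)$ in order for a $2^{o(n'\log n')}$-time algorithm to yield a $2^{o(n\log n)}$-time algorithm; your final paragraph states this requirement and then violates it. With $n'=n^2$ you get $n'\log n' = 2n^2\log n = \omega(n\log n)$, so the hypothesis applied to the blown-up instance gives only a $2^{o(n^2\log n)}$ bound, which is useless (the inference ``$n'\le n^2$ gives $\log n'=O(\log n)$ hence $n'\log n'=O(n\log n)$'' is simply false). On the other hand, $n'=O(n)$ cannot work with the column-copying gadget: a clique in $G$ may select the \emph{same} column in all $n$ rows, so in the worst case you need $n$ distinct copies of a single column, forcing $n'\ge n^2$. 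Nor can you cheaply guess how rows share columns — the number of such patterns is $2^{\Theta(n\log n)}$, exactly the budget you cannot spend. This tension is precisely why reducing \textsf{Clique} to \textsf{Permutation Clique} without losing the parameter is nontrivial.

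The paper's proof sidesteps all of this by invoking the reduction of \cite{lms11} as a black box: that reduction keeps the grid at $n\times n$ but is not a polynomial-time many-one reduction — it runs in time $2^{O(n\log\log n)}$ (roughly, it relabels the columns of each row via random functions $[n]\rightarrow[n]$ and argues that a clique with repeated columns survives as a permutation clique, via a system of distinct representatives among the preimages, with probability large enough that $2^{O(n\log\log n)}$ trials suffice). Since $2^{O(n\log\log n)}=2^{o(n\log n)}$, composing it with a $2^{o(n\log n)}$-time quantum algorithm for \textsf{Permutation Clique} stays within $2^{o(n\log n)}$; the only observation the paper adds is that this classical reduction composes with a quantum algorithm. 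The \textsf{Independent Set} variant is then handled by complementing the graph, an honest polynomial-time equivalence between \textsf{Permutation Clique} and \textsf{Permutation Independent Set}, so no separate gadget is needed there.
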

\begin{proof}
By \cite{lms11}, there is a reduction from $n\times n$ \textsf{Clique} to $n\times n$ \textsf{Permutation Clique} that takes $2^{O(n\log\log n)}=2^{o(n\log n )}$ time classically.  Hence, the reduction also works for quantum $2^{o(n\log n)}$-time lower bound. 

Note that $n\times n$ \textsf{Permutation Clique} and $n\times n$ \textsf{Permutation Independent Set} are equivalent problem, since we can reduce them by taking the complement graph.
\end{proof}

\begin{claim}\label{clm:ind_to_perm_ind}
If $2n\times 2n$ \textsf{Bipartite Permutation Independent Set} can be solved in $2^{o(n\log n)}$ quantumly, then $n\times n$ \textsf{Permutation Independent Set} can be solved in $2^{o(n\log n)}$ time quantumly.
\end{claim}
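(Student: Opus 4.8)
\textbf{Proof plan for Claim~\ref{clm:ind_to_perm_ind}.}
The plan is to show that the $n\times n$ \textsf{Permutation Independent Set} problem on a directed graph $G=([n]\times[n],E)$ reduces, in classical time $2^{o(n\log n)}$ (in fact polynomial time), to an instance of the $2n\times 2n$ \textsf{Bipartite Permutation Independent Set} problem, preserving answers. Recall that in the target problem we look for a permutation $\pi\in\mathcal{S}_{2n}$ that preserves the two blocks $[n]$ and $\{n+1,\dots,2n\}$ and avoids certain forbidden pairs of edges; the ``bipartite'' structure essentially encodes a pair of independent permutations $\pi_1,\pi_2\in\mathcal{S}_n$. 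So the first step is to observe that a block-preserving $\pi\in\mathcal{S}_{2n}$ is the same data as a pair $(\pi_1,\pi_2)$ where $\pi_1$ acts on $[n]$ and $\pi_2$ acts on $\{n+1,\dots,2n\}$ (reindexed to $[n]$). The idea is then to force $\pi_1=\pi_2$ by adding a ``diagonal gadget'' of forbidden edges, so that any valid solution to the bipartite instance has $\pi_1=\pi_2=:\pi$, and to choose the remaining forbidden edges to encode exactly the constraint ``$(j,\pi(j))$ and $(j',\pi(j'))$ are not both chosen when $((j,\pi(j)),(j',\pi(j')))$ is an edge of $G$,'' i.e.\ that $\{(i,\pi(i)):i\in[n]\}$ is an independent set in $G$.

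The key steps, in order, are: (i) define the vertex set of the bipartite instance as $[n]\times[n]$ with the two blocks of rows being $[n]$ and $\{n+1,\dots,2n\}$, matching the format in the definition of the $2n\times 2n$ problem; (ii) add forbidden edge-pairs enforcing $\pi([n])=[n]$ and $\pi(\{n+1,\dots,2n\})=\{n+1,\dots,2n\}$ for free from the definition, and then add a synchronization gadget so that $\pi(i)$ and $\pi(n+i)$ are ``the same'' permutation value (e.g.\ forbid configurations where the first block selects $(i,k)$ and the second block selects $(i,k')$ with $k\ne k'$); (iii) for every directed edge $((j,k),(j',k'))\in E(G)$, put the corresponding forbidden pair $(\overline{e_ke_{k'}},0^{2n},e_je_{j'})$-style constraint into the bipartite instance so that choosing both row-$j$-value-$k$ and row-$j'$-value-$k'$ is blocked; (iv) verify the two directions of correctness — a permutation independent set in $G$ yields a valid $\pi$ in the bipartite instance by taking $\pi_1=\pi_2=\pi$, and conversely any valid $\pi$ for the bipartite instance has $\pi_1=\pi_2$ by the synchronization gadget and then gives an independent set in $G$; (v) note the reduction is polynomial-time, hence a $2^{o(n\log n)}$ quantum algorithm for the target problem yields one for $n\times n$ \textsf{Permutation Independent Set} by running the reduction and then the algorithm. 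I would lift the exact gadget from \cite{lms11}, which already performs the analogous classical reduction, and simply observe that it is a many-one reduction computable in time $2^{o(n\log n)}$, so it is automatically a quantum fine-grained reduction in the sense of \cite{aclwz20}.

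The main obstacle I anticipate is purely a bookkeeping one: making sure the forbidden-edge encoding of the bipartite problem (which, per the definition quoted in the excerpt, is phrased in terms of pairs $((j,k),(j',k'))\in E$ and conditions like ``$\pi(j)\ne k$ or $\pi(n+j')\ne \pi(n+k')$'') is flexible enough to simultaneously express both the synchronization constraint $\pi_1=\pi_2$ and the independence constraint from $G$, and that no spurious solutions are introduced. Once one checks that the constraint language of the $2n\times 2n$ problem is rich enough — which it is, since it is exactly the language used in \cite{lms11} — the correctness argument is the same case analysis as in the classical proof, and the quantum running-time transfer is immediate because the reduction itself is deterministic polynomial time and composes with the quantum algorithm. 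Combining Claims~\ref{clm:qeth_to_3col}, \ref{clm:3col_to_kclique}, \ref{clm:cli_to_perm_cli}, and \ref{clm:ind_to_perm_ind} then yields Lemma~\ref{lem:qeth_perm_ind_set}.
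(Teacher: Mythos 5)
Your proposal is correct and takes essentially the same route as the paper: the paper's entire proof of this claim is to invoke the classical $O(n^2)$-time reduction of \cite{lms11} and observe that, being a deterministic polynomial-time many-one reduction, it composes with any $2^{o(n\log n)}$-time quantum algorithm for the bipartite problem. Your additional unpacking of the gadget (block splitting, synchronization of the two half-permutations, and encoding of the independence constraints) is a faithful elaboration of what \cite{lms11} does, but the paper does not reproduce it and simply cites the reduction as a black box.
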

\begin{proof}
By \cite{lms11}, the classical reduction takes time $O(n^2)$. Hence, it also works for quantum algorithms.
\end{proof}

Finally, we can prove the \textsf{QETH}-hardness of $2n\times 2n$ bipartite permutation independent set problem:

\begin{proof}[Proof of Lemma~\ref{lem:qeth_perm_ind_set}]
It follows from Claim~\ref{clm:qeth_to_3col}, \ref{clm:3col_to_kclique}, \ref{clm:cli_to_perm_cli} and \ref{clm:ind_to_perm_ind}.
\end{proof}

\section{Proofs for Corollary~\ref{cor:SMCSP_QCMA}}
\label{appx:UandS}

\smcspqcma*

\begin{lemma}
\label{lem:c2q_state}
Given $v = [v_0,\dots,v_{2^n-1}]$ for $v_i \in \mathbb{C}$ for $i=0,\dots,2^n-1$, there exists a quantum circuit such that the state $\ket{v}$ can be computed in time $\poly(2^n)$ with $\ipro{i}{v} = v_i$. 
\end{lemma}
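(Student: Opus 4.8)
The plan is to use the standard recursive ``binary-tree'' state-preparation construction (see, e.g.,~\cite{NC00}), adapted to track both magnitudes and phases. Assume without loss of generality that $\sum_i|v_i|^2=1$, and write $\ket{v}=\sum_{i\in\{0,1\}^n}v_i\ket{i}$. Split off the first qubit: $\ket{v}=\ket{0}\otimes\ket{v^{(0)}}+\ket{1}\otimes\ket{v^{(1)}}$, where $\ket{v^{(b)}}=\sum_{j\in\{0,1\}^{n-1}}v_{bj}\ket{j}$ is an unnormalized vector on $n-1$ qubits, and set $p_b=\|\ket{v^{(b)}}\|^2$ so that $p_0+p_1=1$. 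First I would apply a single-qubit gate $R$ to the first qubit taking $\ket{0}$ to $\sqrt{p_0}\,\ket{0}+\sqrt{p_1}\,e^{i\phi}\ket{1}$ for an appropriate phase $\phi$. Then, controlled on the first qubit being in state $\ket{b}$, I would recursively prepare the normalized state $\ket{v^{(b)}}/\sqrt{p_b}$ on the remaining $n-1$ qubits (skipping the branch when $p_b=0$), maintaining the inductive invariant that the $(n-1)$-qubit sub-circuit prepares $\ket{v^{(b)}}/\sqrt{p_b}$ up to a global phase that is absorbed into the level-$1$ rotation. Unwinding the recursion yields exactly $\ket{v}$, hence $\ipro{i}{v}=v_i$ as required.

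Next I would bound the resources. The recursion unfolds into a binary tree with at most $2^n-1$ internal nodes; each node contributes one single-qubit gate controlled on up to $n-1$ qubits, and such a multiply-controlled single-qubit gate can be implemented with $O(n)$ elementary gates using $O(n)$ ancilla qubits. Hence the resulting circuit has $2^{O(n)}$ gates and acts on $n+\poly(n)$ qubits. All rotation angles and phases are determined by the partial sums $\sum_j|v_{bj}|^2$ and by the arguments of the entries $v_i$, each computable classically in time $2^{O(n)}$ from the input vector. Therefore the entire circuit description is produced by a classical algorithm in time $\poly(2^n)$, which establishes the stated running time.

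Finally I would address precision. Since the entries $v_i$ are specified with finitely many bits, and the target universal gate set $\g$ need not contain every single-qubit rotation, one computes each angle to precision $\delta=2^{-\poly(n)}$ and compiles each rotation to within $\delta$ of the ideal gate via the Solovay--Kitaev theorem (Theorem~\ref{thm:sk}), at a multiplicative cost $\polylog(1/\delta)=\poly(n)$; summing the $2^{O(n)}$ per-gate errors by the triangle inequality for unitaries keeps the overall deviation of the prepared state from $\ket{v}$ below any desired inverse-exponential bound, which suffices for the use of this lemma in the proof of Corollary~\ref{cor:SMCSP_QCMA}. I expect the only mildly delicate point to be this precision and phase bookkeeping (propagating the global-phase invariant correctly through the recursion); the combinatorial decomposition and gate counting are entirely routine.
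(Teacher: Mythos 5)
Your proof is correct and follows essentially the same route as the paper's: the standard binary-tree decomposition with a cascade of (multiply-)controlled single-qubit rotations, yielding $2^{O(n)}$ gates and a $\poly(2^n)$ classical construction time. The only cosmetic difference is that you fold the phases of the $v_i$ into the rotation at each tree node, whereas the paper first prepares the magnitude state $\ket{|v|}$ and then applies index-controlled phase rotations on an extra qubit; both are fine, and your explicit Solovay--Kitaev precision accounting is a welcome addition the paper leaves implicit.
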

\begin{proof}
We show that one can use single-qubit rotations to construct $\ket{v}$. 

We first prepare $\ket{0^{n+1}}$. Then, we do a single-qubit rotation on the first qubit such that 
\begin{align*}
    \ket{0^{n+1}} \rightarrow 
    \sqrt{\frac{\sum_{i=0}^{2^{n-1}-1}|v_i|^2}{\sum_{i=0}^{2^{n}-1} |v_i|^2}} \ket{0}\ket{0^{n}} + \sqrt{\frac{\sum_{i=2^{n-1}}^{2^{n}-1}|v_i|^2}{\sum_{i=0}^{2^{n}-1} |v_i|^2}} \ket{1}\ket{0^{n}}. 
\end{align*}

Then, let the first qubit be the control qubit and apply the controlled rotation to rotate the second qubit to be 
\begin{align*}
    &\sqrt{\frac{\sum_{i=0}^{2^{n-2}-1}|v_i|^2}{\sum_{i=0}^{2^{n-1}-1} |v_i|^2}} \ket{0} + \sqrt{\frac{\sum_{i=2^{n-2}}^{2^{n-1}-1}|v_i|^2}{\sum_{i=0}^{2^{n-1}-1} |v_i|^2}} \ket{1}, \mbox{if the first qubit is } \ket{0}, \\
    &\sqrt{\frac{\sum_{i=2^{n-1}}^{2^{n-1}+2^{n-2}-1}|v_i|^2}{\sum_{i=2^{n-1}}^{2^{n}-1} |v_i|^2}} \ket{0} + \sqrt{\frac{\sum_{i=2^{n-1}+2^{n-2}}^{2^{n}-1}|v_i|^2}{\sum_{i=2^{n-1}}^{2^{n}-1} |v_i|^2}} \ket{1}, \mbox{if the first qubit is } \ket{1}.
\end{align*}

By doing these controlled rotations in sequence, we can obtain 
$\ket{|v|}$ where $\ipro{v}{i} = |v_i|$ for all $i$. Let $v_j = e^{-i\theta_j}|v_j|$ without loss of generality. Then, condition on $j$, we do the following rotation on the ($n+1$)-th qubit: 
\begin{align*}
    \ket{0} \rightarrow e^{-i\theta_j}\ket{0}
\end{align*}
for all $j$. This gives $\ket{v}$. 

Finally, we use at most $2^{O(n)}$ (control) rotations. By Remark~\ref{remark:complexity_upperbound}, each controlled rotation can be implemented with at most $2^{O(n)}$ overhead. Hence, the verifier can construct $\ket{v}$ in time $\poly(2^n)$.

\end{proof}

\begin{proof}[Proof of Corollary~\ref{cor:SMCSP_QCMA}]

Following Lemma~\ref{lem:c2q_state}, we can make $\poly(n,s,t)$ copies of the state in polynomial time. Then, following the proof for Theorem~\ref{thm:SQCMA_UMCSP}, the problem is in $\class{QCMA}$.

\end{proof}
\section{Quantum Circuit Class}

In this section, we will show some properties of the quantum circuit $\class{QC}(s,\g)$. Note that $\g$ considered in this paper are universal gate set with constant fan-in. So, the results here are also for constant fan-in universal gate sets.

\begin{claim}\label{claim:gatecount}
For $n\in \N$, there exists a constant $c$ such that a random Boolean function $f:\{0,1\}^n\rightarrow \{0,1\}$ has quantum circuit complexity greater $\frac{2^n}{(c+1)n}$ with probability at least $1-2^{\frac{2^n}{c+1}}$.    
\end{claim}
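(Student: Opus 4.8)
The plan is to use a standard counting argument comparing the number of small quantum circuits against the total number of Boolean functions on $n$ bits. First I would fix a universal gate set $\g$ consisting of one- and two-qubit gates with $|\g| = O(1)$, and allow circuits on $n + t$ qubits with $t \leq s$ ancilla qubits (so at most $2n$ wires when $s$ is linear, but in general I only need $t \le s$ and $s \le 2^n/((c+1)n)$, giving $n+t \le 2^{n}$-ish, which is fine since the per-gate wire-choice factor is absorbed into the constant $c$). I would count: a circuit of size at most $s' := \frac{2^n}{(c+1)n}$ is specified by a sequence of at most $s'$ gates, where each gate is one of $|\g| = O(1)$ types acting on an ordered choice of at most $2$ out of $n+t \le 2^{O(n)}$ wires. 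Hence the number of distinct circuits of size at most $s'$ is at most $\bigl(|\g| \cdot (n+t)^2\bigr)^{s'} = 2^{O(s' \cdot n)}$ (using $\log(n+t) = O(n)$). Choosing the constant $c$ large enough — so that $O(s' n) = O\!\left(\frac{2^n}{(c+1)}\right) \le \frac{2^n}{c+1}$ absorbs the hidden constant from the gate/wire count — the number of circuits of size at most $s'$ is at most $2^{2^n/(c+1)}$.

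Next I would recall that the total number of Boolean functions $f : \{0,1\}^n \to \{0,1\}$ is $2^{2^n}$. Each quantum circuit of size at most $s'$ "computes" (in the sense of $\MQCSP$, i.e.\ outputs $f(x)$ with probability $\ge 2/3$ on every input $x$) at most one Boolean function — or, to be safe about the promise-problem subtleties, at most one function in the relevant sense, and in any case the set of functions computable by \emph{some} circuit of size $\le s'$ has size at most the number of such circuits, $2^{2^n/(c+1)}$. Therefore the fraction of Boolean functions admitting a quantum circuit of size at most $s'$ is at most
\[
\frac{2^{2^n/(c+1)}}{2^{2^n}} = 2^{-2^n(1 - 1/(c+1))} \le 2^{-2^n/(c+1)}
\]
(the last inequality being very loose — one really gets doubly the exponent, but the stated bound suffices and matches the claim's $1 - 2^{2^n/(c+1)}$, which I read as $1 - 2^{-2^n/(c+1)}$ up to a typo in the sign of the exponent). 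Consequently a uniformly random $f$ has quantum circuit complexity $> s' = \frac{2^n}{(c+1)n}$ with probability at least $1 - 2^{-2^n/(c+1)}$.

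The main obstacle — really the only subtle point — is the bookkeeping of constants: one must verify that the combined contribution of $|\g|$, the wire-choice factor $(n+t)^2$, and the number of ancilla wires can all be swept into a single universal constant $c$ in the exponent, uniformly over all admissible $t \le s \le 2^n/((c+1)n)$. This is where I would be careful to take $\log(n+t) = O(n)$ (valid since $t \le s = 2^{O(n)}$) and then pick $c$ after seeing the implied constant, rather than before. A secondary, genuinely minor point is to make sure the notion "a circuit computes at most one Boolean function" is used correctly given that $\MQCSP$ is a promise problem: since we only need an \emph{upper} bound on the number of functions with small circuits, it is enough that each small circuit is counted once, which it is. I expect no real difficulty beyond these constant-chasing details, so the proof should be short.
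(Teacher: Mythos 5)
Your proposal is correct and takes essentially the same approach as the paper: both count the number of size-$s'$ circuits as $2^{O(s'\log s')}=2^{O(2^n/(c+1))}$ and compare against the $2^{2^n}$ Boolean functions (the paper phrases the final step as a union bound, over all small circuits, of the event that a fixed circuit agrees with a random $f$ on every input, which is equivalent to your observation that each circuit computes at most one function in the $\ge 2/3$ sense). The only nit is that your intermediate count should be $2^{c\cdot 2^n/(c+1)}$ rather than $2^{2^n/(c+1)}$ — enlarging $c$ cannot push the hidden $O(\cdot)$ constant below $1$, only below $c$ — but the resulting ratio is still $2^{-2^n/(c+1)}$, matching the claim (whose stated exponent indeed has the sign typo you identify).
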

\begin{proof}
For any $s$-gate and ($n+t$)-qubit quantum circuit (where $n+t\leq s$), there are at most
\begin{align*}
    \binom{n+qs+t}{q}^s |\g|^s \leq 2^{cs\log s}
\end{align*}
possible circuits for some constant $c$ large enough, where $\g$ is the quantum gate set, and $q$ is the maximum number of qubits for any gate in $\g$ can operate on. Let $s = \frac{2^n}{(c+1)n}$. Then the number of circuits of size $s$ is at most $2^{cs\log s} < 2^{\frac{c}{c+1}\cdot 2^n}$. 

There are $2^{2^n}$ Boolean functions from $\{0,1\}^n$ to $\{0,1\}$. Suppose we pick one function uniformly randomly, then for every fixed quantum circuit $\Circ$ and input $x\in \{0,1\}^n$, the probability that $\|(\bra{f(x)}\otimes I_{n+t-1})\Circ\ket{x,0^t}\|\geq \frac{1}{2}$ is $\frac{1}{2}$. Therefore, the probability that a fixed quantum circuit can compute $f(x)$ for all $x\in \{0,1\}^n$ is at most $\frac{1}{2^{2^n}}$. By using union bound, the probability that there exists $\Circ$ of size $\frac{2^n}{(c+1)n}$ that can compute $f$ is at most $\frac{2^{\frac{c}{c+1}\cdot 2^n}}{2^{2^n}} = 2^{\frac{2^n}{c+1}}$. 
\end{proof}







\begin{claim}\label{clm:bqc_in_dspace}
For $s=\poly(n)$ and $\g$ a gate set that contains only constant fan-in gates, $\class{BQC}(s,\g)$ is in $\class{DSPACE}(O(s^2))/O(s^2)$. 
\end{claim}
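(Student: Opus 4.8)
The claim is that $\class{BQC}(s,\g)$, the class of promise problems decidable by $s$-size quantum circuits (with $O(s)$ ancilla qubits and constant fan-in gates from $\g$), can be simulated by a deterministic Turing machine using $O(s^2)$ workspace, given $O(s^2)$ bits of advice. The plan is to take the advice to be the description of the circuit family $\{\Circ_n\}$: since each $\Circ_n$ has $s = s(n)$ gates on $n + O(s) = O(s)$ qubits, and each gate in $\g$ is a constant-dimension unitary whose entries can be written down to sufficient precision with $O(1)$ bits (or, to be safe, $\poly(s)$ bits since we only need enough precision that the total accumulated error over $s$ gates stays below the $1/3$ vs $2/3$ gap), the full description of $\Circ_n$ fits in $O(s^2)$ bits. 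Actually, writing gate locations takes $O(\log s)$ bits per gate, so $O(s \log s) = O(s^2)$ bits total; this is the advice string.

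\textbf{Key steps.} First I would recall the standard fact that a quantum circuit acting on $m$ qubits with $s$ gates can be simulated deterministically in space $O(\poly(m) + s)$ by the Feynman path-sum method: to compute the amplitude $\langle y | \Circ_n | x, 0^t\rangle$ one sums over all computational-basis paths $x_0 = (x,0^t), x_1, \dots, x_s = y$, where consecutive configurations differ only on the (constant-size) set of wires touched by the relevant gate. Each term in the sum is a product of $s$ gate-entries, computable one at a time; the running sum and the current path index require $O(m + s) = O(s)$ bits, and we iterate over at most a constant number of branching choices per gate (since gates have constant fan-in, each gate contributes only $O(1)$ new basis-configurations to branch over). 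Maintaining fixed-precision complex arithmetic to $\poly(s)$ bits of precision costs $O(\poly(s))$ — here I should be a little careful to argue that $O(s^2)$ precision suffices, which follows because each gate is a contraction up to the recorded precision and errors add at most linearly in $s$. Second, to decide membership of $x$, I would compute $\Pr[M_1 \circ \Circ_n \ket{x,0^t} = 1] = \sum_{y : y_1 = 1} |\langle y | \Circ_n | x,0^t\rangle|^2$ by iterating over all $y$ with $y_1 = 1$ (an outer loop using $O(m) = O(s)$ bits for the counter), for each invoking the path-sum subroutine, and accumulating; then accept iff the result exceeds $1/2$. By Definition~\ref{def:main_bqc} this correctly decides the promise problem. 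Total space: $O(s^2)$ for arithmetic precision plus $O(s)$ for loop counters, i.e. $O(s^2)$, with the circuit description supplied as $O(s^2)$ bits of advice. I would also note that since $s = \poly(n)$ we could even take the precision to be $O(\log s)$ bits and still be fine, but $O(s^2)$ is a comfortable bound that matches the statement.

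\textbf{Main obstacle.} The one genuinely delicate point is the precision/error analysis: we must confirm that representing each gate entry and each partial amplitude to $O(s^2)$ bits (equivalently $2^{-O(s^2)}$ additive precision) guarantees that the simulated acceptance probability is within, say, $1/12$ of the true value, so that the $1/3$–$2/3$ promise gap is preserved and the comparison to $1/2$ is correct. This is routine — the number of path-sum terms is $2^{O(s)}$, each term is a product of $s$ numbers of modulus $\le 1$, and rounding each of the $O(s)$ multiplications introduces error $2^{-O(s^2)}$, so by the triangle inequality the total amplitude error is $2^{O(s)} \cdot s \cdot 2^{-O(s^2)} = 2^{-\Omega(s^2)}$ for a suitable constant in the exponent — but it is the step that needs to be written out rather than waved through. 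Everything else is a direct instantiation of the textbook deterministic-space simulation of quantum circuits, adapted to carry the nonuniform circuit as advice.
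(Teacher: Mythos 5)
Your proposal is correct and follows essentially the same route as the paper: encode the circuit as $O(s^2)$ bits of advice, evaluate the acceptance probability by a Feynman path sum over computational-basis configurations between gates (reusing $O(s)$ space per branch), and argue that fixed-precision arithmetic preserves the promise gap. If anything, your precision analysis is slightly more careful than the paper's, which sets the per-entry precision to $\epsilon = 1/(c2^s)$ and bounds the total error by $O(2^{s+n}\epsilon)$ — a quantity that is only small if one reads the constant in the exponent generously — whereas your $2^{-O(s^2)}$ choice makes the error bound unambiguous while still fitting in the allotted space.
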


\begin{proof}
The proof follows from the idea of showing $\class{BQP}\subset \class{PSPACE}$. Let $L\in \class{BQC}(s,\g)$ and $\{\Circ_n\}$ be the quantum circuit family in $\class{QC}(s,\g)$ that can solve $L$. Then, we show that there is a $O(s^2)$-space TM $T$ with $O(s\log s)$-bit advice that can simulates $\Circ_n$. 

Let $\Circ_n$ be the advice to $T$. We first calculate the number of bits needed to represent $s$-gate circuit. For each gate, we need $O(\log s)$ to specify its wires and $2^a$ register to record the corresponding unitary, where $a$ is the maximum fan-in of gates in $\g$.  Note that a unitary $U$ may has entries that cannot be written down in bounded bits. Therefore, we let the precision to every entry in $U$ be $\epsilon = \frac{1}{c2^s}$ for some constant $c$ large enough, which requires number of bits $\log \frac{1}{\epsilon} = O(s)$. The total number of bits required for each gate is $O(s)$. and thus the number bits for the circuit is $O(s^2)$.    

Now, suppose $\Circ_n = U_sU_{s-1}\cdots U_{1}$. For any $x\in \{0,1\}^n$ the probability that $\Circ_n$ accepts is 
\begin{align*}
    \sum_{y\in A}|\bra{y}U_sU_{s-1}\cdots U_{1}\ket{x}|^2, 
\end{align*}
where $A:=\{y: y\mbox{ has the first bit as }1 \}$. Then, the TM $T$ computes each branch one-by-one. for any $y\in A$
\begin{align}
    \bra{y}U_sU_{s-1}\cdots U_{1}\ket{x} &= \sum_{z_1,\dots,z_{s-1}\in \{0,1\}}\bra{y}U_s\opro{z_{s-1}}{z_{s-1}}U_{s-1}\opro{z_{s-2}}{z_{s-2}}\cdots \opro{z_{1}}{z_{1}}U_{1}\ket{x}.\label{eq:branch}
\end{align}
Note that $U_i$ is a constant-dimensional unitary and $x$ and $z_j$'s are vectors with exactly one non-zero entry. So, computing $\bra{z_{j}}U_j\ket{z_{j-1}}$ only requires $O(s)$ (since the entries in $U$ takes $O(s)$ space for the precision). Then, since we can also compute $\bra{z_{j}}U_j\ket{z_{j-1}}$ one by one, the space required for each branch in Eq.~\eqref{eq:branch} is just $O(s)$. Therefore, the space we need is at most $O(s^2)$ (including the space for the advice). 

Note that our calculation in Eq.~\eqref{eq:branch} will have error since our precision to each entry in the unitary is $\epsilon = \frac{1}{c2^s}$. Let $\tilde{U}_s\tilde{U}_{s-1}\cdots \tilde{U}_{1}$ be what we really compute. Then, 
\begin{align*}
    \sum_{y\in A}|\bra{y}U_sU_{s-1}\cdots U_{1}\ket{x}|^2 - \sum_{y\in A}|\bra{y}\tilde{U}_s\tilde{U}_{s-1}\cdots \tilde{U}_{1}\ket{x}|^2 \leq O(2^{s+n}\epsilon). 
\end{align*}
By setting $\epsilon = \frac{1}{c2^s}$ for some constant $c$ large enough, $T$ can solve $L$ with probability at least $2/3$ by having an amplified version of $\Circ_n$ at first (e.g., parallel repetition).  

\end{proof}

\begin{claim}[Diagonalization for quantum circuits]\label{clm:qc_diag}
For every $k\in \N_+$, there exists a language $L_k\in \class{PSPACE}$ but $L_k\notin \class{BQC}[n^k]$ for sufficiently large $n$.
\end{claim}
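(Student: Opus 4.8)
The plan is a straightforward diagonalization against the collection of quantum circuits of size $n^k$, made space‑efficient by \emph{replaying} the construction instead of storing it, and relying on the poly‑space simulability of quantum circuits already established in Claim~\ref{clm:bqc_in_dspace}. First I would fix the universal gate set $\g$ (finite, with constant‑dimensional gates). For input length $n$, any circuit in $\class{QC}(n^k,O(n^k),\g)$ is specified by a list of its $n^k$ gates, each given by a choice of gate in $\g$ together with the constantly many wires it acts on among $n+O(n^k)$ qubits; hence, exactly as in the counting argument of Claim~\ref{claim:gatecount}, the number $N(n)$ of such circuits is at most $2^{c\,n^k\log n}$ for a constant $c=c(k,\g)$. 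Set $D(n):=\lceil\log_2 N(n)\rceil+2=O(n^k\log n)$. For all but finitely many $n$ we have $D(n)<2^n$; for the finitely many exceptional lengths I would simply declare $L_k\cap\{0,1\}^n=\emptyset$, which does not affect the asymptotic statement.

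Next, fix the lexicographic enumerations $C_1,\dots,C_{N(n)}$ of the circuits above and $x_1<x_2<\cdots$ of $\{0,1\}^n$. Process $i=1,\dots,D(n)$, maintaining a set $S$ of ``surviving'' circuits, initialized to all $N(n)$ of them. At step $i$, for each surviving $C$ compute an approximation $\widetilde p$ of $\Pr[C(x_i)=1]$ to additive error $1/100$ — doable in space $\poly(n^k)$ by the simulation in the proof of Claim~\ref{clm:bqc_in_dspace} — and assign $C$ the nominal answer $1$ if $\widetilde p\ge 0.6$, the nominal answer $0$ if $\widetilde p\le 0.4$, and declare it ``undecided'' otherwise. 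Remove the undecided circuits from $S$ (each already fails the $2/3$–$1/3$ promise on $x_i$, since its true acceptance probability lies in $(1/3,2/3)$); then set $L_k(x_i)$ to the \emph{minority} nominal answer among the remaining circuits, and remove from $S$ every circuit whose nominal answer equals the majority value. At least half of the survivors are discarded at every step, so $|S|<1$ after $D(n)$ steps: every size‑$n^k$ circuit on $n$‑bit inputs has been certified to violate the promise for $L_k$ on one of $x_1,\dots,x_{D(n)}$. For inputs $x$ of length $n$ lexicographically beyond $x_{D(n)}$, set $L_k(x):=0$.

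It then remains to verify the two properties. For the lower bound, given any family $\{\Circ_n\}\in\class{QC}(n^k,O(n^k),\g)$, at every large $n$ the circuit $\Circ_n$ occurs in the enumeration and is removed from $S$ at some step $i$, i.e.\ $\Circ_n$ does not satisfy the $2/3$–$1/3$ gap for $L_k$ on $x_i$; hence no such family decides $L_k$, so $L_k\notin\class{BQC}[n^k]$. For the space bound, to compute $L_k(x)$ with $|x|=n$: output $0$ if $n$ is exceptional or $x$ comes after $x_{D(n)}$; otherwise $x=x_i$ and I would replay steps $1,\dots,i$ \emph{iteratively}, keeping only (a) an $O(n^k\log n)$‑bit counter ranging over the circuit enumeration, (b) $O(n^k\log n)$‑bit counters for the group sizes, (c) the $i-1=O(n^k\log n)$ already‑decided bits $L_k(x_1),\dots,L_k(x_{i-1})$, and (d) $\poly(n^k)$ scratch space to simulate one circuit at a time and to recheck on the fly whether a circuit survived the earlier steps (re‑deriving its nominal answers at $x_1,\dots,x_{i-1}$ and comparing them with the stored $L_k$ values). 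This runs in $\poly(n^k)$ space, so $L_k\in\class{DSPACE}(\poly(n^k))\subseteq\class{PSPACE}$, and this is precisely what Case~1 of Theorem~\ref{thm:ckt lb from MQCSP in BQP fixed} invokes (via $\class{PSPACE}\subseteq\class{BQSUBEXP}\subseteq\class{BQE}$).

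The one genuinely delicate point — and the part I expect to need the most care — is the interaction between the real‑valued acceptance probabilities and the thresholds. A circuit whose true probability $\Pr[C(x_i)=1]$ lies in $(1/3,2/3)$ cannot be ``killed'' by the choice of $L_k(x_i)$, but such a circuit already violates the bounded‑error promise at $x_i$, so it is legitimate to discard it from $S$ as soon as the $1/100$‑approximation shows it is not comfortably on one side; the slack between the working thresholds $0.4,0.6$ and the promise thresholds $1/3,2/3$ is exactly what makes the approximate simulation sound for this purpose, so no exact computation of irrational amplitudes is ever required. Two purely bookkeeping points remain: checking $D(n)<2^n$ (i.e.\ $n^k\log n<2^n$ for large $n$) so that there are enough inputs of length $n$ to carry out all $D(n)$ halving steps, and noting that the count $N(n)$ — and hence the whole construction — already accounts for the $O(n^k)$ ancilla qubits allowed in $\class{BQC}[n^k]$. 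Everything else is standard diagonalization‑with‑replay.
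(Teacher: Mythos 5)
Your proof is correct, but it takes a genuinely different route from the paper's. The paper dispatches Claim~\ref{clm:qc_diag} in two lines: it uses Claim~\ref{clm:bqc_in_dspace} to place $\class{BQC}[n^k]$ inside the nonuniform space class $\class{DSPACE}[n^{2k}]/n^{2k}$ and then invokes an almost-everywhere nonuniform space hierarchy theorem (Lemma~11 of~\cite{os16}) to separate $\class{DSPACE}[n^{3k}]\subseteq\class{PSPACE}$ from that class. You instead carry out the diagonalization by hand: enumerate all $2^{O(n^k\log n)}$ circuits, repeatedly discard on $O(n^k\log n)$ inputs first the promise-violating circuits and then the majority-answer group, so that every size-$n^k$ circuit is certified to fail somewhere, and keep the whole construction in $\poly(n)$ space by replaying it rather than storing it. Both arguments rest on the same ingredient --- the $\poly(s)$-space simulation of size-$s$ quantum circuits from (the proof of) Claim~\ref{clm:bqc_in_dspace} --- but yours is self-contained (no appeal to an external hierarchy theorem) and makes explicit how the bounded-error promise is handled: a circuit is eliminated either because its acceptance probability provably lies in the forbidden band $(1/3,2/3)$ or because it sits comfortably on the wrong side of the chosen value of $L_k(x_i)$, and the slack between your working thresholds $0.4,0.6$ and the promise thresholds $1/3,2/3$ is exactly what makes the $1/100$-precision deterministic simulation sound. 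The paper's route is shorter but hides this promise-handling inside the advice-class containment. The only bookkeeping left to pin down in your write-up is deterministic tie-breaking when the decided survivors split evenly (removing either half still halves $S$); with that, the argument is complete.
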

\begin{proof}
By Claim~\ref{clm:bqc_in_dspace}, we know that $\class{BQC}[n^k]$ is contained in $\class{DSPACE}[n^{2k}]/n^{2k}$. By a nonuniform almost everywhere hierarchy for space complexity (Lemma 11 in \cite{os16}), we know that $\class{DSPACE}[n^{3k}]\not\subset \class{DSPACE}[n^{2k}]/n^{2k}$ for sufficiently large $n$. Hence, we can find a language $L_k\notin \class{BQC}[n^k]$.
\end{proof}

\begin{claim}[$\class{BQC}$ size hierarchy]
For $n> 0$, let $s(n)=o(\frac{2^n}{n})$. Then, there exists a Boolean function $f$ in $\class{BQC}[s(n)]\backslash \class{BQC}[s(n)-O(n)]$, i.e., $f$ can be computed by an $s(n)$-size quantum circuit but not computed by any $(s(n)-O(n))$-size quantum circuit.
\end{claim}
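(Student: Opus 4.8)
The plan is to mimic the classical circuit-size hierarchy argument (counting plus a ``generic completion'' trick), adapted to the quantum circuit model $\class{BQC}$. First I would fix the target: I want a Boolean function $f$ on $n$ bits that admits an $s(n)$-gate quantum circuit but no $(s(n)-cn)$-gate quantum circuit for a suitable constant $c$ depending only on the gate set $\g$. Since $s(n)=o(2^n/n)$, Claim~\ref{claim:gatecount}-style counting still leaves room: the number of quantum circuits of size at most $s(n)$ on $n+t$ qubits (with $t\le s(n)$) is at most $2^{c's(n)\log s(n)}$ for some constant $c'$, which is strictly smaller than $2^{2^n}$, so not every Boolean function has an $s(n)$-size circuit, and in fact the ``threshold'' for circuit complexity is monotone-ish along the scale $s(n)$. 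The first step is therefore to set up this counting bound carefully, recording the precise dependence on $n$ of the number of distinct circuits of a given size (using that $\g$ has constant fan-in and $|\g|=O(1)$, with the usual precision caveat from Remark~\ref{rmk:precision}), so that I can compare the count at size $s(n)$ versus $s(n)-O(n)$.

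Next I would carry out the ``one more gate buys one bit'' construction. Consider the set $\mathcal{S}_{s}$ of Boolean functions computable by $\class{BQC}$ circuits of size exactly (at most) $s$. The idea: take any function $g$ that lies just outside $\mathcal{S}_{s(n)-c n}$ but such that $g$ differs from some function in $\mathcal{S}_{s(n)-cn}$ on only a single input $x_0$; then $g\in\mathcal{S}_{s(n)}$ because correcting one input value of a quantum circuit can be done with $O(n)$ additional gates — compute the predicate $[x=x_0]$ into an ancilla with $O(n)$ Toffoli/CNOT-type gates, controlled-flip the output qubit, then uncompute, which costs $O(n)$ gates total with the constant absorbed into $c$. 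Concretely I would argue: starting from any $h\in\mathcal{S}_{s(n)-cn}$ and flipping its value on a single well-chosen input produces a function outside $\mathcal{S}_{s(n)-cn}$ (such an input must exist, since otherwise every single-input modification of $h$ stays inside $\mathcal{S}_{s(n)-cn}$, and iterating single-bit flips would force $\mathcal{S}_{s(n)-cn}$ to be all of $\{0,1\}^{\{0,1\}^n}$, contradicting the counting bound once $s(n)=o(2^n/n)$). This $f$ then witnesses $f\in\class{BQC}[s(n)]\setminus\class{BQC}[s(n)-O(n)]$.

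The main obstacle — and the step I would spend the most care on — is that $\class{BQC}$ is a \emph{bounded-error promise} class, not an exact-computation class, so ``flipping the value on one input'' and ``the complexity strictly drops'' need the thresholds to be handled honestly. A function being in $\class{BQC}[s]$ means there is an $s$-gate circuit that on each $x$ outputs $f(x)$ with probability $\ge 2/3$; modifying behavior on one input must respect this two-sided bound, and I must make sure the $O(n)$-gate ``patch'' drives the success probability on $x_0$ from (possibly) below $1/3$ to above $2/3$ coherently, which is fine because the patch is a reversible classical-reversible operation conditioned on $x=x_0$ applied to a fresh ancilla, so it does not disturb the success probability on $x\ne x_0$. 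The subtler point is the \emph{lower bound} direction: I need that $f\notin\class{BQC}[s(n)-cn]$, i.e.\ that \emph{no} $(s(n)-cn)$-gate bounded-error circuit computes $f$, and the counting argument only bounds the number of circuits, not directly the number of functions they can $2/3$-compute. I would resolve this exactly as in the proof of Claim~\ref{claim:gatecount}: for a fixed circuit the set of functions it can bounded-error-compute is a single function (its ``rounded'' truth table), so the number of functions in $\class{BQC}[s]$ is still at most the number of circuits of size $s$, and the generic-completion argument then goes through verbatim. Assembling these pieces gives the claim; I expect the only genuinely delicate bookkeeping to be the precision/ancilla accounting in the counting bound and the verification that the $O(n)$-gate patch is the right additive constant for the stated $\class{BQC}[s(n)-O(n)]$.
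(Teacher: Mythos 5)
Your proposal is correct and follows essentially the same route as the paper's proof: both rest on the counting bound (Claim~\ref{claim:gatecount}) together with the observation that patching a quantum circuit's value on a single input costs only $O(n)$ extra gates, the paper extracting the witness by walking a chain of functions from the all-zeros function up to a hard one while you extract it via non-closure of $\class{BQC}[s(n)-cn]$ under single-input flips. Your explicit treatment of the bounded-error threshold and of the circuit-versus-function counting is a welcome addition that the paper's terser argument leaves implicit.
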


\begin{proof}
The proof is very similar to the argument for classical circuits. By Claim~\ref{claim:gatecount}, we can find a function $g$ that requires quantum circuit of size $2^n/cn$ for some $c>1$. Suppose there are $t$ inputs $x_1,\dots,x_t$ such that $g(x_i)=1$ for $i\in [t]$. Then, we construct a series of functions $g_i$ for $i=0,1,\cdots,t$ such that $g_i(x)=1$ if and only if $x\in \{x_1,\dots,x_i\}$. It's easy to see that the following properties are satisfied:
\begin{itemize}
    \item $g_0\in \class{BQC}[0]$ and $g_t \in \class{BQC}[2^n/cn]$.
    \item For $0\leq i<t$, the difference of the quantum circuits size of $g_i$ and $g_{i+1}$ is at most $O(n)$. It follows since $g_i$ and $g_{i+1}$ are only different at $x_i$.
\end{itemize}
Hence, there exists an $i>0$ such that the quantum circuit size of $g_i$ is at most $s(n)$ but lager than $s(n)-O(n)$, since $s(n)=o(2^n/cn)$.

\end{proof}
\end{document}